\title{
  Fast quantum state reconstruction \\ via accelerated non-convex programming}
\author{\large Junhyung Lyle Kim\textsuperscript{\rm 1}, George Kollias\textsuperscript{\rm 2}, Amir Kalev\textsuperscript{\rm 3}, Ken X. Wei\textsuperscript{\rm 2}, Anastasios Kyrillidis\textsuperscript{\rm 1} \\
\large \textsuperscript{\rm 1} Computer Science, Rice University, Houston, TX 77098, USA \\ 
\large \textsuperscript{\rm 2} IBM Quantum, IBM T.J. Watson Research Center, Yorktown Heights, NY 10598, USA\\ 
\large \textsuperscript{\rm 3 }Information Sciences Institute, University of Southern California, Arlington, VA 22203, USA 
}
\theoremstyle{plain}
\newtheorem{corollary}{Corollary}
\newtheorem{definition}{Definition}
\DeclareMathOperator{\trace}{\textsc{Tr}}
\newcommand{\ket}[1]{ \left| #1 \right\rangle}
\newcommand{\bra}[1]{ \left\langle #1\right|}
\newcommand{\tr}{\texttt{Tr}}
\newtheorem{theorem}{Theorem}
\newtheorem{lemma}{Lemma}
\begin{document}

\maketitle

\begin{abstract}

We propose a new quantum state reconstruction method that combines ideas from compressed sensing, non-convex optimization, and acceleration methods. 
The algorithm, called Momentum-Inspired Factored Gradient Descent (\texttt{MiFGD}), extends the applicability of quantum tomography for larger systems.
Despite being a non-convex method, \texttt{MiFGD} converges \emph{provably} close to the true density matrix at an accelerated linear rate, in the absence of experimental and statistical noise, and under common assumptions.
With this manuscript, we present the method, prove its convergence property and provide Frobenius norm bound guarantees with respect to the true density matrix. 
From a practical point of view, we benchmark the algorithm performance with respect to other existing methods, in both synthetic and real experiments performed on an IBM's quantum processing unit. 
We find that the proposed algorithm performs orders of magnitude faster than state of the art approaches, with the same or better accuracy. 
In both synthetic and real experiments, we observed accurate and robust reconstruction, despite experimental and statistical noise in the tomographic data.
Finally, we provide a ready-to-use code for state tomography of multi-qubit systems. 
\end{abstract}

\section{Introduction}
Quantum tomography is one of the main procedures to identify the nature of imperfections and deviations in quantum processing unit (QPU) implementation \cite{altepeter20044, eisert2019quantum}. 
Generally, quantum tomography is composed of two main parts: $i)$ measuring the quantum system, and $ii)$ analyzing the measurement data to obtain an estimation of the density matrix (in the case of state tomography \cite{altepeter20044}), or of the quantum process (in the case of process tomography \cite{mohseni2008quantum}). 
In this manuscript, we focus on the case of state tomography.

As the number of free parameters that define quantum states and processes scale exponentially with the number of subsystems, generally quantum tomography is a non-scalable protocol \cite{gross2010quantum}. 
In particular, quantum state tomography (QST) suffers from two bottlenecks related to its two main parts. 
The first concerns with the large data one needs to collect to perform tomography; the second concerns with numerically searching in an exponentially large space for a density matrix that is consistent with the data.  

There have been various approaches over the years to improve the scalability of QST, as compared to full QST \cite{vogel1989determination, jevzek2003quantum, banaszek2013focus}. 
Focusing on the data collection bottleneck, to reduce the resources required, prior information about the unknown quantum state is often assumed. 
For example, in compressed sensing QST \cite{gross2010quantum, kalev2015quantum}, it is assumed that the density matrix of the system is low-rank. 
In neural network QST \cite{Torlai2018Neural, beach2019qucumber, torlai2019machine}, one assumes real and positive wavefunctions, which occupy a restricted place in the landscape of quantum states.
Extensions of neural networks to complex wave-functions, or the ability to represent density matrices of mixed states, have been further considered in the literature, after proper reparameterization of the Restricted Boltzmann machines \cite{Torlai2018Neural}.
The prior information considered in these cases is that they are characterized by structured quantum states, which is the reason for the very high performances of neural network QST  \cite{Torlai2018Neural}.\footnote{\cite{Torlai2018Neural} considers also the case of a completely unstructured case and test the limitation of this technique, which does not perform as expected due to lack of structure.}
Similarly, in matrix-product-state tomography \cite{Cramer2010Efficient, lanyon2017efficient}, one assumes that the state-to-be-estimated can be represented with low bond-dimension matrix-product state. 

Focusing on the computational bottleneck, several works introduce sophisticated numerical methods to improve the efficiency of QST. 
Particularly, variations of gradient descent convex solvers---e.g., \cite{gonccalves2016projected, bolduc2017projected, shang2017superfast, HU2019RECONSTRUCTING}---are time-efficient 
in idealized (synthetic) scenarios \cite{HU2019RECONSTRUCTING}, and only after a proper distributed system design \cite{hou2016full}.
The problem is that achieving such results seems to require utilizing special-purpose hardware (like GPUs).
Thus, going beyond current capabilities requires novel methods that efficiently search in the space of density matrices under more realistic scenarios. 
Importantly, such numerical methods should come with  guarantees on their performance and convergence. 


The setup we consider here is that of an $n$-qubit state, under the prior assumption that the state is close to a pure state, and thus its density matrix is of low-rank.
This assumption is justified by state-of-the-art experiments, where our aim is to manipulate the pure states by unitary maps. 
From a theoretical perspective, the low-rank assumption means that we can use compressed sensing techniques, which allow the recovery of the density matrix from relatively few measurement data \cite{liu2011universal}. 

Indeed, by now, compressed sensing QST is widely used for estimating highly-pure quantum states, e.g., \cite{riofrio2017experimental, kliesch2019guaranteed, flammia2012quantum, gross2010quantum}. 
However, compressed sensing QST usually relies on convex optimization for the estimation part \cite{kalev2015quantum}; this limits the applicability to relatively small system sizes \cite{gross2010quantum}. 
On the other hand, non-convex optimization can preform much faster than its convex counterpart \cite{kyrillidis2017provable}. 
Although non-convex optimization typically lacks convergence guarantees, it was recently shown that one can formulate compressed sensing QST as a non-convex problem and solve it with rigorous convergence guarantees (under certain but generic conditions), allowing state estimation of larger system sizes \cite{kyrillidis2017provable}. 


Following the non-convex path, we introduce a new algorithm to the toolbox of QST---the Momentum-Inspired Factored  Gradient Descent (\texttt{MiFGD}). 
Our approach combines ideas from compressed sensing, non-convex optimization, and acceleration techniques, to allow pushing QST beyond current capabilities. 
\texttt{MiFGD} includes acceleration motions per iteration, that non-trivially complicate theoretical convergence analysis. Nevertheless, we justify the efficacy of the algorithm both in theory -- by achieving an accelerated linear rate -- and in practice.

The contributions of the paper are summarized as follows: \vspace{-0.0cm}
\begin{itemize}[leftmargin=0.7cm]
\item [$i)$] We prove that the non-convex \texttt{MiFGD} algorithm has indeed {\it accelerated linear} convergence rate, in terms of iterate distance, in the noiseless measurement data case and under common assumptions. \vspace{-0.2cm}
\item [$ii)$] We provide QST results based on real data from IBM's quantum computers up to 8-qubits, contributing to recent efforts on testing QST algorithms in real quantum data \cite{riofrio2017experimental}. Our synthetic examples scale up to 12-qubits effortlessly, leaving the space for an efficient, hardware-aware implementation open for future work. \vspace{-0.2cm}
\item [$iii)$] We show in practice that \texttt{MiFGD} allows faster estimation of quantum states compared to state-of-the-art convex and non-convex algorithms, including recent deep learning approaches \cite{Torlai2018Neural, beach2019qucumber, torlai2019machine, gao2017efficient}, even in the presence of statistical noise in the measurement data. \vspace{-0.2cm}
\item [$iv)$] We exploit parallel computations in \texttt{MiFGD} by extending its implementation to enable efficient, parallel execution over shared and distributed memory systems. This way, we experimentally showcase the scalability of this work, which is particularly critical for tackling larger quantum system sizes. \vspace{-0.2cm}
\item [$v)$] We provide implementation of our approach, compatible to the open-source software Qiskit \cite{qiskit}, at \href{https://github.com/gidiko/MiFGD}{https://github.com/gidiko/MiFGD}. \vspace{-0.2cm}
\end{itemize}

\section{Results}
\subsection{Setup}

We consider the estimation of a low-rank density matrix $\rho^\star \in \mathbb{C}^{d \times d}$ on an $n$-qubit Hilbert space, $d=2^n$, through the following $\ell_2$-norm reconstruction objective: 
\begin{equation}\label{eq:obj}
\begin{aligned}
& \min_{\rho \in \mathbb{C}^{d \times d}}
& & f(\rho) := \tfrac{1}{2} \|\mathcal{A}(\rho) - y\|_2^2 \\
& \text{subject to}
& & \rho \succeq 0, ~\texttt{rank}(\rho) \leq r.
\end{aligned}
\end{equation}
Here, $y \in \mathbb{R}^m$ is the measured data\footnote{Specific description on how $y$ is generated and what it represents will follow.} (observations), and $\mathcal{A}(\cdot): \mathbb{C}^{d \times d} \rightarrow \mathbb{R}^m$ is the linear sensing map, where $m \ll d^2$. 
The sensing map relates the density matrix $\rho^\star$ to (expected, noiseless) observations through the Born rule,  $\left(\mathcal{A}(\rho) \right)_i = \texttt{Tr}(A_i \rho)$, where $A_i \in \mathbb{C}^{d \times d}$, $i=1,\dots,m$, are matrices closely related to the measured observable or the POVM elements of appropriate dimensions. 
For concreteness, we focus on the least squares objective function.\footnote{Our results rely on standard optimization assumptions (restricted smoothness and restricted strong convexity assumptions \cite{negahban2012restricted}).}
The constraint that a density matrix is a non-negative matrix, $\rho \succeq 0$, is a convex constraint. In contrast, the constraint on its rank, $\texttt{rank}(\rho) \leq r$, is a non-convex constraint that promotes a low-rank solution.
Following compressed sensing QST results~\cite{kalev2015quantum}, the constraint $\tr(\rho)=1$ (that should be satisfied, by definition, by any density matrix)  can be ignored, without affecting the scaling of the precision of the final estimation.

A pivotal assumption is that the linear map $\mathcal{A}$ satisfies the \emph{restricted isometry property}:
\begin{definition}[Restricted Isometry Property (RIP) \cite{recht2010guaranteed}]\label{def:rip}
A linear operator $\mathcal{A} :~\mathbb{C}^{d \times d} \rightarrow \mathbb{R}^m$ satisfies the RIP on rank-$r$ matrices, with parameter  $\delta_{r} \in (0, 1)$, if the following holds for any rank-$r$ matrix $X\in \mathbb{C}^{d \times d}$, with high probability:
\begin{align}
(1 - \delta_r) \cdot \|X\|_F^2 \leq \|\mathcal{A}(X)\|_2^2 \leq (1 + \delta_r) 
\cdot \|X\|_F^2.
\end{align}
\end{definition}
Such maps (almost) preserve the Frobenius norm of low-rank matrices, and, as an extension, of low-rank Hermitian matrices.
The intuition behind RIP is that $\mathcal{A}(\cdot)$ behaves as almost a bijection between the subspaces $\mathbb{C}^{d \times d}$ and $\mathbb{R}^m$, when we focus on low rank matrices.
  
Following recent works~\cite{kyrillidis2017provable},  instead of solving Eq.~(\ref{eq:obj}), we propose to solve a factorized version of it:
\begin{equation}\label{eq:factobj}
\min_{U \in \mathbb{C}^{d \times r}} ~\tfrac{1}{2} \|\mathcal{A}(UU^\dagger) - y\|_2^2,
\end{equation}
where $U^\dagger \in \mathbb{C}^{r \times d}$ denotes the adjoint of $U$.
The motivation for this reformulation is as follows: instead of representing  the density matrix $\rho$ as a ${d \times d}$ Hermitian matrix, and imposing the low-rank constraint as in Eq.~(\ref{eq:obj}), we work in a space where low-rank density matrices are 
represented through factors $U \in \mathbb{C}^{d \times r}$. 
The low-rankness of $\rho$ is enforced through the factorization of the density matrix into a outer product of such a rectangular matrix representation $U \in \mathbb{C}^{d \times r}$ with its Hermitian conjugate, where $d \gg r$.
By rewriting $\rho = UU^\dagger$, for $U \in \mathbb{C}^{d \times r}$, both the PSD constraint ($\rho \succeq 0$) and the low-rankness constraint ($\texttt{rank}(\rho) \leq r$) are directly satisfied, leading to the non-convex formulation~\eqref{eq:factobj}. 
Working in the factored space was shown~\cite{kyrillidis2017provable, park2016finding, park2016provable, tu2016low, zhao2015nonconvex, zheng2015convergent} to improve time and space complexities. 

A common approach to solve \eqref{eq:factobj} is to use gradient descent on the parameter $U$, with iterates generated by the rule:\footnote{We assume cases where $\nabla f(\cdot) = \nabla f(\cdot)^\dagger$. 
If this does not hold, the theory still holds by carrying around 
$\nabla f(\cdot) + \nabla f(\cdot)^\dagger$ instead of just $\nabla f(\cdot)$,
after proper scaling.}
\begin{align}
U_{i+1} &= U_{i} - \eta \nabla f(U_i U_i^\dagger) \cdot U_i \label{eq:fgd} \\ &= U_{i} - \eta \mathcal{A}^\dagger \left(\mathcal{A}(U_i U_i^\dagger) - y\right) \cdot U_i.
\end{align}
Here, $U_i \in \mathbb{C}^{d \times r}, ~\forall i$.
The operator $\mathcal{A}^\dagger :~\mathbb{R}^{m} \rightarrow \mathbb{C}^{d \times d}$
is the adjoint of $\mathcal{A}$, defined as $\mathcal{A}^\dagger(x) = \sum_{i = 1}^m x_i A_i$, for $x \in \mathbb{R}^m$. 
The hyperparameter $\eta>0$ is the step size.
This algorithm has been studied in \cite{bhojanapalli2016dropping,zheng2015convergent,tu2016low,park2016non,ge2017no,hsieh2017non}. 
We will refer to the above iteration as the \emph{Factored Gradient Descent} (FGD) algorithm, as in \cite{park2016finding}. 
In what follows, we will study the \texttt{MiFGD} algorithm, a momentum-inspired factored gradient descent. 

\subsection{Momentum-Inspired Factored  Gradient Descent and Main Results}
The \texttt{MiFGD} algorithm is given in the Methods section. It is a two-step variant of FGD:
\begin{align}
U_{i+1} &= Z_{i} - \eta \mathcal{A}^\dagger \left(\mathcal{A}(Z_i Z_i^\dagger) - y\right) \cdot Z_i, \label{eq:MiFGD1}\\
Z_{i+1} &= U_{i+1} + \mu \left(U_{i+1} - U_i\right). \label{eq:MiFGD2}
\end{align}
Here, $Z_i$ is a rectangular matrix (with the same dimension as $U_i$) that accumulates the ``momentum" of the iterates $U_i$. 
$\mu$ is the momentum parameter that weighs how the previous estimates $U_i$ will be mixed with the current estimate $U_{i+1}$ to generate $Z_{i+1}$. The above iteration is an adaptation of Nesterov's accelerated first-order method for convex problems \cite{nesterov1983method}. 
We borrow this momentum formulation, and we study how constant $\mu$ selections behave in non-convex problem formulations, such as in \eqref{eq:factobj}.
\emph{We note that the theory and algorithmic configurations in \cite{nesterov1983method} do not generalize to non-convex problems,} which is one of the contributions of this work.
Albeit being a non-convex problem, we show that \texttt{MiFGD} converges at an accelerated linear rate around a neighborhood of the optimal value, akin to convex optimization problems \cite{nesterov1983method}.

An important observation is that the factorization $\rho = UU^\dagger$ is not unique. For instance, suppose that $U^\star$ is an optimal solution for~\eqref{eq:factobj}; then, for any rotation matrix $R\in\mathbb{C}^{r\times{r}}$ satisfying $RR^\dagger=I$, the matrix $\widehat{U}=U^\star{R}$ is also optimal for \eqref{eq:factobj}. 
\footnote{To see this, observe that $\rho^\star = U^\star U^{\star\dagger} = U^\star I U^{\star\dagger} = U^\star R R^\dagger U^{\star\dagger} = \widehat{U}\widehat{U}^\dagger$.} 
To resolve this ambiguity, we use the distance between a pair of matrices as the minimum distance $\min_{R \in \mathcal{O}} \left \| U  - U^\star R \right\|_F$ up to rotations, where $\mathcal{O}=\{R\in\mathbb{C}^{r\times{r}}\;|\;RR^\dagger=I\}$.
In words, we want to track how close an estimate $U$ is to $U^\star$, up to the minimizing rotation matrix.

We are now ready to state the main theorem regarding the \texttt{MiFGD} algorithm: 
%
\begin{theorem}[\texttt{MiFGD} convergence rate (Informal)]{\label{thm_inf:00}}
\textbf{Assume that $\mathcal{A}$ satisfies the RIP for some constant $0 < \delta_{2r} < 1$.
Let $y = \mathcal{A}(\rho^\star)$ denote a data set obtained by measuring a quantum system in a state $\rho^\star$. 
Given a good initialization point $U_0$ and setting step size $\eta$ and momentum $\mu$ appropriately,
\texttt{MiFGD} converges with an accelerated linear rate to a region---with radius that depends on $O(\mu)$---around $\rho^\star$.
}
\end{theorem}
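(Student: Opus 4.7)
The plan is to adapt the Lyapunov/potential-function framework for Nesterov-style acceleration to the non-convex factored setting of~\eqref{eq:factobj}. Throughout, track the rotation-invariant distance $\dist(U, U^\star) := \min_{R \in \mathcal{O}} \normf{U - U^\star R}$, and let $R_i$ denote the rotation attaining the minimum for $U_i$ (analogously for $Z_i$). The first task is to convert the RIP assumption into local geometric properties of the factored loss $g(U) := \tfrac{1}{2}\normt{\mathcal{A}(UU^\dagger) - y}^2$. Because $\rank(UU^\dagger - U^\star U^{\star\dagger}) \leq 2r$ for any $U \in \mathbb{C}^{d \times r}$, Definition~\ref{def:rip} translates into restricted smoothness (constant $L = 1+\delta_{2r}$) and restricted strong convexity (constant $m = 1-\delta_{2r}$) of $f$ on rank-$2r$ Hermitian matrices. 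These lift to a factored descent lemma of the form $\ip{\nabla f(ZZ^\dagger)\, Z}{Z - U^\star R} \geq c_1 \sigma_r(\rho^\star) \dist(Z,U^\star)^2 - c_2 \cdot (\text{h.o.t.})$, valid as long as $\dist(Z, U^\star)$ stays in a basin around $U^\star$ that is guaranteed by the good-initialization hypothesis $\dist(U_0, U^\star) \lesssim \sigma_r(U^\star)$ and then preserved inductively.

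The next step is the one-step analysis of the gradient substep $U_{i+1} = Z_i - \eta \nabla f(Z_i Z_i^\dagger) Z_i$. Pick a convenient rotation $R$ (for instance, the minimizer for $Z_i$) and expand $\normf{U_{i+1} - U^\star R}^2$. The inner-product cross term is controlled by the factored descent lemma, while $\normf{\nabla f(Z_i Z_i^\dagger)\, Z_i}^2$ is bounded via restricted smoothness together with the spectral control $\sigma_1(Z_i)^2 \leq 2 \sigma_1(\rho^\star)$ that follows from the basin hypothesis. Choosing $\eta = \Theta(1/\sigma_1(\rho^\star))$ should then yield the one-step contraction
\[
\dist(U_{i+1}, U^\star)^2 \;\leq\; \left(1 - c\,\eta\,m\,\sigma_r(\rho^\star)\right) \dist(Z_i, U^\star)^2 \;+\; O(\eta^2) \cdot (\text{h.o.t.}),
\]
which recovers the unaccelerated FGD rate \cite{park2016finding,tu2016low,bhojanapalli2016dropping}, but with $Z_i$ in place of $U_i$.

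To extract the \emph{accelerated} rate I would couple the above with the momentum update via a potential
\[
\Phi_i \;:=\; \dist(U_i, U^\star)^2 \;+\; \gamma \normf{U_i - U_{i-1}}^2,
\]
choosing $\gamma$ so that the extrapolation $Z_i = U_i + \mu(U_i - U_{i-1})$ inflates $\dist(Z_i, U^\star)^2$ only by an amount controlled by $\Phi_{i-1}$. Balancing $\mu$ against $\eta$ as in the convex heavy-ball analysis should produce $\Phi_{i+1} \leq \bigl(1 - c\sqrt{\eta\, m\, \sigma_r(\rho^\star)}\bigr)\,\Phi_i + O(\mu^2)\cdot(\text{residual})$, i.e.\ the announced accelerated linear rate to a neighbourhood whose radius scales with $\mu$.

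The main obstacle is precisely this coupling: standard acceleration proofs rest on a convex-combination inequality for the objective, which the quartic factored loss $g(U)$ does not satisfy. I expect to have to restrict $\mu$ strictly below the convex-optimal value $(\sqrt{L}-\sqrt{m})/(\sqrt{L}+\sqrt{m})$, so that the cubic/quartic remainders around $U^\star$ can be absorbed into the linear contraction, at the cost of the $O(\mu)$-radius convergence neighbourhood appearing in the theorem statement (rather than exact convergence). A secondary but delicate bookkeeping point is synchronizing the three rotations $R_{i-1}, R_i, R_{i+1}$ that appear in $\Phi_{i+1}$; the standard remedy is to fix the rotation at a reference iterate and show that the neighbouring rotations differ by $O(\dist)$, which is just enough for the cross terms between $U_i - U_{i-1}$ and $U_i - U^\star R_i$ to telescope.
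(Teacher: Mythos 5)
Your treatment of the gradient substep matches the paper's: RIP is converted into restricted smoothness and strong convexity over rank-$2r$ differences, a one-step contraction $\min_{R\in\mathcal{O}}\|U_{i+1}-U^\star R\|_F\le \xi\cdot\min_{R\in\mathcal{O}}\|Z_i-U^\star R\|_F$ is proved with the iterate measured against the extrapolated point $Z_i$, and a basin hypothesis on $U_0,U_{-1}$ keeps $\sigma_1(Z_i),\sigma_r(Z_i)$ under control. The gap is in how the momentum step is folded in. You assert that "balancing $\mu$ against $\eta$ as in the convex heavy-ball analysis" will turn the potential $\Phi_i=\|U_i-U^\star R_i\|_F^2+\gamma\|U_i-U_{i-1}\|_F^2$ into a $\bigl(1-c\sqrt{\eta m\sigma_r(\rho^\star)}\bigr)$-contraction. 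That balancing step is precisely what cannot be imported: in the convex argument the $\sqrt{\cdot}$ factor is extracted from function-value inequalities (convexity evaluated at the extrapolated point), which you have already conceded are unavailable for the quartic factored loss. The only estimates your framework actually supplies are (a) the squared-distance contraction of $U_{i+1}$ relative to $Z_i$ and (b) a triangle-type bound $\min_R\|Z_i-U^\star R\|_F\le|1+\mu|\min_R\|U_i-U^\star R\|_F+|\mu|\min_R\|U_{i-1}-U^\star R\|_F+2|\mu|\, r\,\sigma_1(\rho^\star)^{1/2}$. Feeding (a) and (b) into any quadratic potential in the pair of distances yields a linear two-dimensional recursion whose contraction factor, for the small admissible $\mu$, is essentially the one-step factor $\xi$ itself; the momentum coupling alone does not manufacture a square-root improvement.

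The paper closes this loop differently. It writes the recursion as a $2\times2$ linear dynamical system with contraction matrix $A=\left[\begin{smallmatrix}\xi|1+\mu| & \xi|\mu|\\ 1 & 0\end{smallmatrix}\right]$, bounds its largest eigenvalue by $(\sqrt{2}+1)\sqrt{\xi}$, and then \emph{forces} $(\sqrt{2}+1)\sqrt{\xi}\le 1-\sqrt{(1-\delta_{2r})/(1+\delta_{2r})}$ by imposing a \emph{lower} bound on $\eta$ in addition to the usual upper bound (the two-sided step-size window in Theorem~\ref{thm:00}); the accelerated factor is thus engineered by making $\xi=\sqrt{1-\tfrac{4\eta\sigma_r(\rho^\star)(1-\delta_{2r})}{10}}$ small through the step size, while $\mu=\varepsilon/(2\cdot 10^3 r\,\tau(\rho^\star)\sqrt{\kappa})$ is kept far below the convex-optimal momentum you propose to approach. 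Two further corrections to your sketch: the residual is $O(\mu)$, not $O(\mu^2)$ --- it is first order in $\mu$ because of the additive term $2|\mu|\, r\,\sigma_1(\rho^\star)^{1/2}$ generated by the rotation mismatch $\|U^\star(R_{U}-R_{U_-})\|_F$ in (b), and this is exactly why the theorem only promises convergence to an $O(\mu)$ neighbourhood; and the rotation bookkeeping does not telescope --- the paper simply absorbs the mismatch into that additive constant rather than cancelling it.
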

``Accelerated linear rate'' intuitively means that \texttt{MiFGD} (provably) enjoys smaller contraction factor compared to that of vanilla FGD.
We refer to Theorem~\ref{thm:00} of the Methods section for a formal statement. 
There, we state the conditions under which the simple \texttt{MiFGD} recursion in Eqs. \eqref{eq:MiFGD1}-\eqref{eq:MiFGD2} has an accelerated linear convergence rate in iterate distance, up to a constant error level proportional to the  momentum parameter $\mu$. 
The theorem assumes that the observations are noiseless; that is, the observed  data is $y = \mathcal{A}(\rho^\star)$, where $\rho^\star$ is the state of the system. 
Nevertheless, our 
experiments suggest that \texttt{MiFGD} is robust to statistical errors and noise in the data. 
The formal analysis of robustness to noisy data can be derived from our analysis and considered future work; here, for clarity, we consider this work as the basis for that analysis.

\subsection{Experimental setup}

\subsection{$\rho^\star$ density matrices and quantum circuits\footnote{The content in this subsection is implemented in the \texttt{states.py} component of our complementary software package.}}

In our numerical and real experiments, we have considered (different subsets of) the following $n$-qubit pure quantum states: \vspace{-0.2cm}
\begin{enumerate}
    \item The (generalized) GHZ state: \vspace{-0.2cm}
    \begin{align*}
    |\texttt{GHZ}(n) \rangle = \frac{|0\rangle^{\otimes n} + |1\rangle^{\otimes n}}{\sqrt{2}}, ~~n > 2.
\end{align*} \vspace{-0.2cm}
\item The (generalized) GHZ-minus state: \vspace{-0.2cm}
\begin{align*}
    |\texttt{GHZ}_{-}(n) \rangle = \frac{|0\rangle^{\otimes n} - |1\rangle^{\otimes n}}{\sqrt{2}}, ~~n > 2.
\end{align*} \vspace{-0.2cm}
    \item The Hadamard state:
    \begin{align*}
    |\texttt{Hadamard}(n) \rangle = \left(\frac{|0\rangle + |1\rangle}{\sqrt{2}}\right)^{\otimes n}.
\end{align*}  \vspace{-0.2cm}
\item A random state $|\texttt{Random}(n) \rangle$. \vspace{-0.2cm}
\end{enumerate}
We have implemented these states (on the IBM quantum simulator and/or the IBM's QPU) using the following circuits. The GHZ state $|\texttt{GHZ}(n) \rangle$ is generated by applying the Hadamard gate to one of the qubits, and then applying  $n-1$ CNOT gates between this qubit (as a control) and the remaining $n-1$ qubits (as targets). 
The GHZ-minus state $|\texttt{GHZ}_{-}(n) \rangle$ is generated by applying the $X$ gate to one of the qubits (e.g., the first qubit) and the Hadamard gate to the remaining $n-1$ qubits, followed by applying  $n-1$ CNOT gates between the first qubit (as a target) and the other $n-1$ qubits (as controls). Finally, we apply the Hadamard gate to all of the qubits. 
The Hadamard state $|\texttt{Hadamard}(n) \rangle$ is a separable state, and it is generated by applying the Hadamard gate to all of the qubits. 
The random state $|\texttt{Random}(n) \rangle$ is generated by a random quantum gate selection: In particular, for a given circuit depth, we uniformly select among generic single-qubit rotation gates with 3 Euler angles, and controlled-X gates, for every step in the circuit sequence. 
For the rotation gates, the qubits involved are selected uniformly at random, as well as the angles from the range $[0, 1]$.
For the controlled-X gates, the source and target qubits are also selected uniformly at random.

We generically denote the density matrix that correspond to pure state $\ket{\psi}$ as $\rho^\star=\ket{\psi}\bra{\psi}$. 
For clarity, we will drop the bra-ket notation when we refer to $|\texttt{GHZ}(n) \rangle$, $|\texttt{GHZ}_{-}(n) \rangle$, $|\texttt{Hadamard}(n) \rangle$ and $|\texttt{Random}(n) \rangle$. 
While the density matrices of the $\texttt{GHZ}(n)$ and $\texttt{GHZ}_{-}(n)$ are sparse in the $\{\ket{0},\ket{1}\}^n$ basis, the density matrix of $\texttt{Hadamard}(n)$ state is fully-dense in this basis, and the sparsity of the density matrix that of $\texttt{Random}(n)$ may be different form one state to another.

\subsection{Measuring quantum states\footnote{The content in this subsection is implemented in the \texttt{measurements.py} component of our complementary software package.}}

\textit{The quantum measurement model.} 
In our experiments (both synthetic and real) we measure the qubits in the Pauli basis \cite{achuthan1958general}.\footnote{This is the non-commutative analogue of the Fourier basis, for the case of sparse vectors \cite{rudelson2008sparse, candes2006near}.}
A Pauli basis measurement on an $n$-qubit system has $d=2^n$ possible outcomes. The Pauli basis measurement is uniquely defined by the {\it measurement setting}. A Pauli measurement is a string of $n$ letters $\alpha:=(\alpha_1,\alpha_2,\ldots,\alpha_n)$ such that $\alpha_k\in \{x,y,z\}$ for all $k\in[n]$. 
Note that there are at most $3^n$ distinct Pauli strings. To define the Pauli basis measurement that associated with a given measurement string $\alpha$, we first define the the following three bases on $\mathbb{C}^{2 \times 2}$:
\begin{align*}
   {\cal B}_x &= \left\{\ket{x,0}:=\frac1{\sqrt{2}}(\ket{0}+\ket 1), \;\ket{x,1}:=\frac1{\sqrt{2}}(\ket{0}-\ket 1)\right\},\\
   {\cal B}_y &= \left\{\ket{y,0}:=\frac1{\sqrt{2}}(\ket{0}+ i\ket 1),\;\ket{y,1}:=\frac1{\sqrt{2}}(\ket{0}- i\ket 1)\right\},\\ 
   {\cal B}_z &= \left\{\ket{z,0}:=\ket{0},\;\ket{z,1}:=\ket{1}\right\}.
\end{align*}
These are the eigenbases of the single-qubit Pauli operators,  $\sigma_x,\sigma_y$, and $\sigma_z$, whose $2\times2$ matrix representation is given by:
\begin{align*}
   \sigma_x = \begin{bmatrix} 0 & 1 \\ 1 & 0 \end{bmatrix}, \quad \sigma_y = \begin{bmatrix} 0 & -i \\ i & 0 \end{bmatrix}, \quad \sigma_z = \begin{bmatrix} 1 & 0 \\ 0 & -1 \end{bmatrix}.
\end{align*}
Given a Pauli setting $\alpha$, the Pauli basis measurement $\Pi_\alpha$ is defined by the $2^n$ projectors:
\begin{align*}
   \Pi_\alpha = \left\{\ket{v_\ell^{(\alpha)}}\bra{v_\ell^{(\alpha)}}=\bigotimes_{k=1}^n\ket{\alpha_k,\ell_k}\bra{\alpha_k,\ell_k}: \ell_k\in\{0,1\} \;\forall k\in[1,n] \right\},
\end{align*}
where $\ell$ denotes the bit string $(\ell_{k_1}, \ell_{k_2},\ldots, \ell_{k_n})$.
Since there are $3^n$ distinct Pauli measurement settings, there are the same number of possible Pauli basis measurements.

Technically, this set forms a positive operator-valued measure (POVM). The projectors that form $\Pi_\alpha$ are the measurement outcomes (or POVM elements) and the probability to obtain an outcome $|v_\ell^{(\alpha)}\rangle \langle v_\ell^{(\alpha)}|$ --when the state of the system is $\rho^\star$-- is given by the Born rule: $\langle v_\ell^{(\alpha)} | \rho^\star | v_\ell^{(\alpha)}\rangle = \texttt{Tr}\big(|v_\ell^{(\alpha)}\rangle \langle v_\ell^{(\alpha)}| \cdot \rho^\star \big )$. 

\medskip
\noindent \textit{The RIP and expectation values of Pauli observables.} 
Starting with the requirements of our algorithm, the sensing mapping $\mathcal{A}(\cdot): \mathbb{C}^{d \times d} \rightarrow \mathbb{R}^m$ we consider is comprised of a collection of $A_i \in \mathbb{C}^{d \times d}, ~i = 1, \dots, m$ matrices, such that $y_i = \texttt{Tr}(A_i \rho^\star)$. 
We denote the vector $(y_1,\ldots,y_m)$ by $y$.  

When no prior information about the quantum state is assumed, to ensure its (robust) recovery, one must choose a set $m$ sensing matrices $A_i$, so that $d^2$ of them are linearly independent.
One example of such choice is the POVM elements of the $3^n$ Pauli basis measurements.

Yet, when it is known that the state-to-be-reconstructed is of low-rank, theory on low-rank recovery problems suggests that $A_i$ could just be ``incoherent'' enough with respect to $\rho^\star$ \cite{gross2011recovering}, so that recovery is possible from a limited set of measurements, i.e., with $m\ll d^2$. In particular, it is known~\cite{liu2011universal, gross2011recovering, gross2010quantum} that if the sensing matrices correspond to random {\it Pauli monomials}, then $m = O\left(r \cdot d \cdot \text{poly}\left(\log d\right)\right)$ $A_i$'s are sufficient for a successful recovery of $\rho^\star$, using convex solvers for \eqref{eq:obj}.\footnote{The main difference between \cite{gross2011recovering, gross2010quantum} and \cite{liu2011universal} is that the former guarantees recovery for almost all choices of 
$m = O\left(r \cdot d \cdot \text{poly} \left(\log d\right)\right)$ random Pauli monomials, while the latter proves that there exists a \emph{universal} set of $m = O\left(r \cdot d \cdot \text{poly} \left(\log d\right)\right)$ Pauli monomials $A_i$ that guarantees successful recovery.}  A Pauli monomial $P_i$ is an operator in the set $P_i\in\{\mathbb{1},\sigma_x,\sigma_y,\sigma_z\}^{\otimes n}$, that is, an $n$-fold tensor product of single-qubit Pauli operators (including the identity operator). For convenience we relabel the single-qubit Pauli operators as $\sigma_0:=\mathbb{1},\sigma_1:=\sigma_x,\sigma_2:=\sigma_y$, and $\sigma_3:=\sigma_z$, so that we can also write $P_i=\bigotimes_{k=1}^n\sigma_{i_k}$ with $i_k\in \{0,\dots,3\}$ for all $k\in[n]$.  These results~\cite{liu2011universal, gross2011recovering, gross2010quantum} are feasible since the Pauli-monomial-based sensing map $\mathcal{A}(\cdot)$ obeys the RIP property, as in Definition \ref{def:rip}.\footnote{In particular, the RIP is satisfied for the sensing mechanisms that obeys $\left(\mathcal{A}(\rho^\star)\right)_i = \tfrac{d}{\sqrt{m}} \texttt{Tr}(A_i^* \rho^\star)$, $i = 1, \dots, m$. Further, the case considered in \cite{liu2011universal} holds for a slightly larger set than the set of rank-$r$ density matrices: for all $\rho \in \mathbb{C}^{d \times d}$ such that $\|\rho\|_* \leq \sqrt{r} \|\rho\|_F$.}
For the rest of the text, we will use the term ``\emph{Pauli expectation value}'' to denote $\texttt{Tr}(A_i \rho^\star) = \texttt{Tr}(P_i \rho^\star)$.

\medskip
\noindent \textit{From Pauli basis measurements to Pauli expectation values.}
While the theory for compressed sensing was proven for Pauli expectation values, in real QPUs, experimental data is obtained from Pauli basis measurements. 
Therefore, to make sure we are respecting the compressed sensing requirements on the sensing map, we follow this protocol: \vspace{-0.2cm}
\begin{enumerate}
    \item[$i)$] We sample $m = O\left(r \cdot d \cdot \text{poly}\left(\log d\right)\right)$ or $m = \texttt{measpc} \cdot d^2$ Pauli monomials uniformly over $\{\sigma_i\}^{\otimes n}$ with $i \in \{0, \dots, 3\}$, where $\texttt{measpc} \in [0, 1]$ represents the percentage of measurements out of full tomography. \vspace{-0.2cm}
    \item[$ii)$] For every monomial, $P_i$, in the generated set, we identify an experimental setting $\alpha(i)$ that corresponds to the monomial. There, qubits, for which their Pauli operator in $P_i$ is the identity operator, are measured, without loss of generality, in the $\sigma_3$ basis. For example, for $n=3$ and  $P_i=\sigma_0\otimes\sigma_1\otimes\sigma_1$, we identify the measurement setting $\alpha(i)=(z,x,x)$. \vspace{-0.2cm}
    \item[$iii)$] We measure the quantum state in the Pauli basis that corresponds to  $\alpha(i)$, and record the outcomes.
\end{enumerate}
To connect the measurement outcomes to the expectation value of the Pauli monomial, we use the relation:
\begin{align}\label{eq:basis2monomial}
\texttt{Tr}(P_i \rho^\star) = \sum_{\ell \in \{0,1\}^n } (-1)^{\chi_{_{f(\ell)}}} \cdot \texttt{Tr}\left(|v_\ell^{(\alpha(i))}\rangle \langle v_\ell^{(\alpha(i))}| \cdot \rho^\star \right),
\end{align}
where $f(\ell):\{0,1\}^n\rightarrow\{0,1\}^n$ is a mapping that takes a bit string $\ell$ and returns a new bit string $\tilde{\ell}$ (of the same size) such that $\tilde{\ell}_k=0$ for all $k$'s for which $i_k=0$ (that is, the locations of the  identity operators in $P_i$), and  $\chi_{\tilde{\ell}}$ is the parity of the bit string $\tilde{\ell}$.

\subsection{Algorithmic setup   }



In our implementation, we explore a number of control parameters, including the maximum number of iterations \texttt{maxiters}, the learning rate $\eta$, the relative error from successive state iterates \texttt{reltol}, the acceleration parameter $\mu$, the percentage  of the complete set of measurements (i.e. over all possible Pauli monomials) \texttt{measpc}, and the \texttt{seed}. 
In the sequel experiments we set $\texttt{maxiters}=1000$, 
$\eta=10^{-3}$, 
$\texttt{reltol}=5\times 10^{-4}$ unless stated differently.
Regarding acceleration, $\mu=0$ when acceleration is muted; we experiment over the range of values $\mu \in \{ \tfrac{1}{8}, \tfrac{1}{4}, \tfrac{1}{3}, \tfrac{3}{4} \}$ when investigating the acceleration effect, beyond the theoretically suggested $\mu^\star$. 
In order to explore the dependence of our approach on the number of measurements available, \texttt{measpc} varies over the set of $\{5\%, 10\%, 15\%, 20\%, 40\%, 60\%\}$; \texttt{seed} is used for differentiating repeating runs with all other parameters kept fixed.\footnote{\texttt{maxiters} is \texttt{num\_iterations} in the code; also \texttt{reltol} is \texttt{relative\_error\_tolerance},
 \texttt{measpc} is \texttt{complete\_measurements\_percentage}.}

Denoting $\widehat{\rho}$ the estimate of $\rho^\star$ by \texttt{MiFGD}, we report on outputs including: \vspace{-0.2cm}
\begin{itemize}[leftmargin=0.6cm]
  \item
    The evolution with respect to the distance between $\widehat{\rho}$  and $\rho^\star$:
    $\| \widehat{\rho} - \rho^\star \|_F$, for various $\mu$'s. \vspace{-0.2cm}
  \item
    The number of iterations to reach $\texttt{reltol}$ to $\rho^\star$ for various $\mu$'s. \vspace{-0.2cm}
  \item
    The fidelity of $\widehat{\rho}$, defined as $\text{Tr}\big(\rho^\star \widehat{\rho} \big)$ (for rank-1 $\rho^\star$), 
    as a function of the acceleration parameter $\mu$ in the default set. \vspace{-.2cm} 
  \end{itemize}
  In our plots, we sweep over our default set of \texttt{measpc} values, repeat $5$ times for each individual setup, varying supplied seed, and depict their $25$-, $50$- and $75$-percentiles.

\subsection{Experimental setup on quantum processing unit (QPU)}
We show empirical results on 6- and 8-qubit real data, obtained on the 20-qubit IBM QPU {\texttt{ibmq\_boeblingen}}. 
The layout/connectivity of the device is shown in Figure \ref{fig:real_circuit}. 
The 6-qubit data was from qubits $[0,1,2,3,8,9]$, and the 8-qubit data was from $[0,1,2,3,8,9,6,4]$. 
The $T_1$ coherence times are $[39.1, 75.7, 66.7, 100.0, 120.3, 39.2, 70.7, 132.3]$ $\mu s$, and $T_2$ coherence times are $[86.8,94.8,106.8,63.6,156.5,66.7,104.5,134.8]$ $\mu s$. The circuit for generating 6-qubit and 8-qubit GHZ  states are shown in Fig \ref{fig:real_circuit}. 
The typical two qubit gate errors measured from randomized benchmarking (RB) for relevant qubits are summarized in Table \ref{tab:my_label}.


\begin{table}[hb!]
    \centering
    \begin{tabular}{ccccccc}
    \hline
         $C_0 X_1$ & $C_1 X_2$ & $C_2 X_3$ & $C_3 X_8$ & $C_8 X_9$ & $C_3 X_4$ & $C_1 X_6$ \\
    \hline
        0.0072 & 0.0062 & 0.0087 & 0.0077 & 0.0152 & 0.0167 & 0.0133
    \end{tabular}
    \caption{Two qubit error rates for the relevant gates used in generating 6-qubit and 8-qubit GHZ states on \texttt{ibmq\_boeblingen}.}
    \label{tab:my_label}
\end{table}

\begin{figure}[!htp]
\vspace{-0.3cm}
\centering
    \begin{minipage}{0.32\linewidth}
    \centering
    \includegraphics[width=0.7\linewidth]{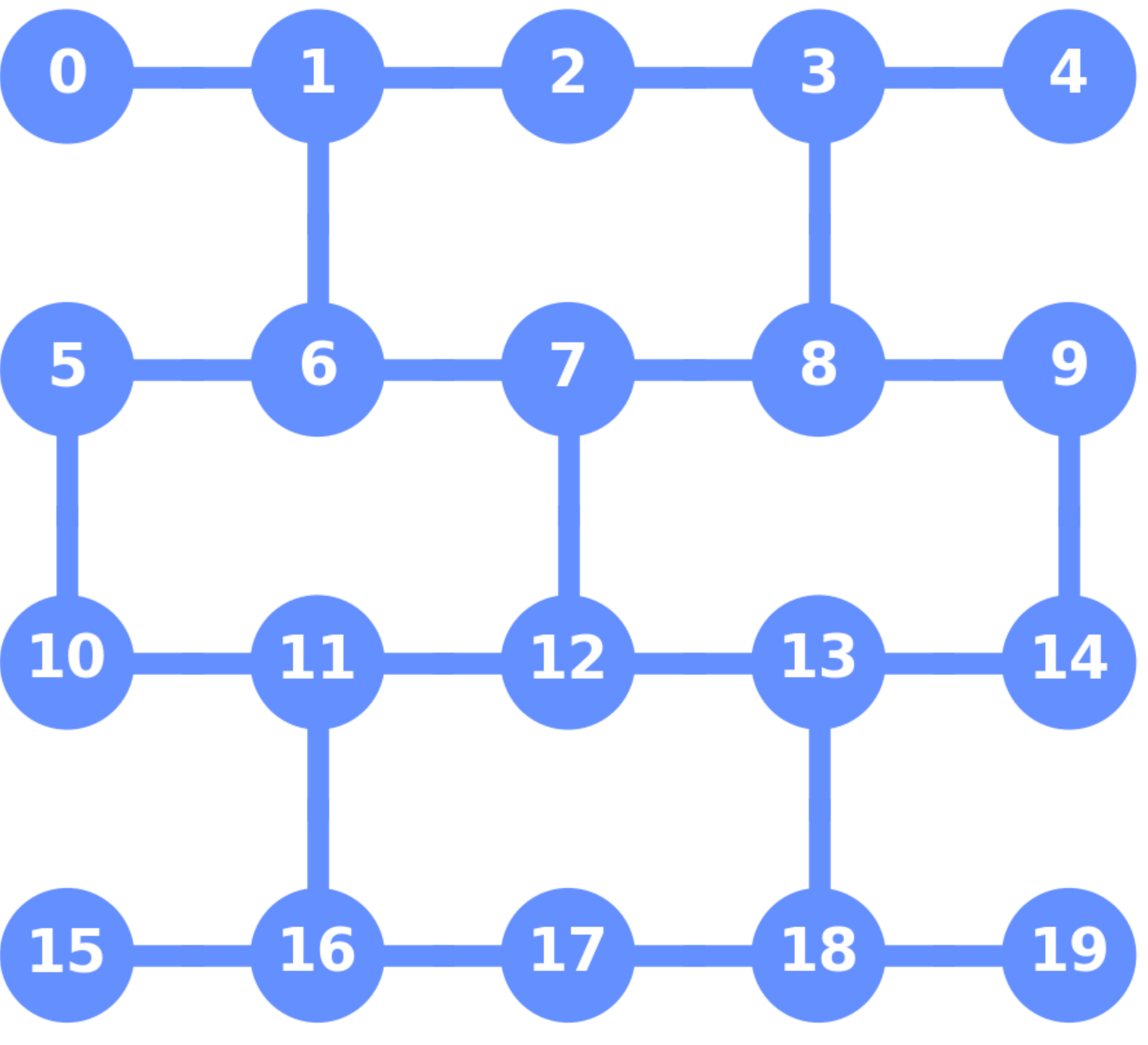}
    \label{fig:real_circuit}
    \end{minipage}
    \begin{minipage}{0.32\linewidth}
    \centering
    \includegraphics[width=1.05\linewidth]{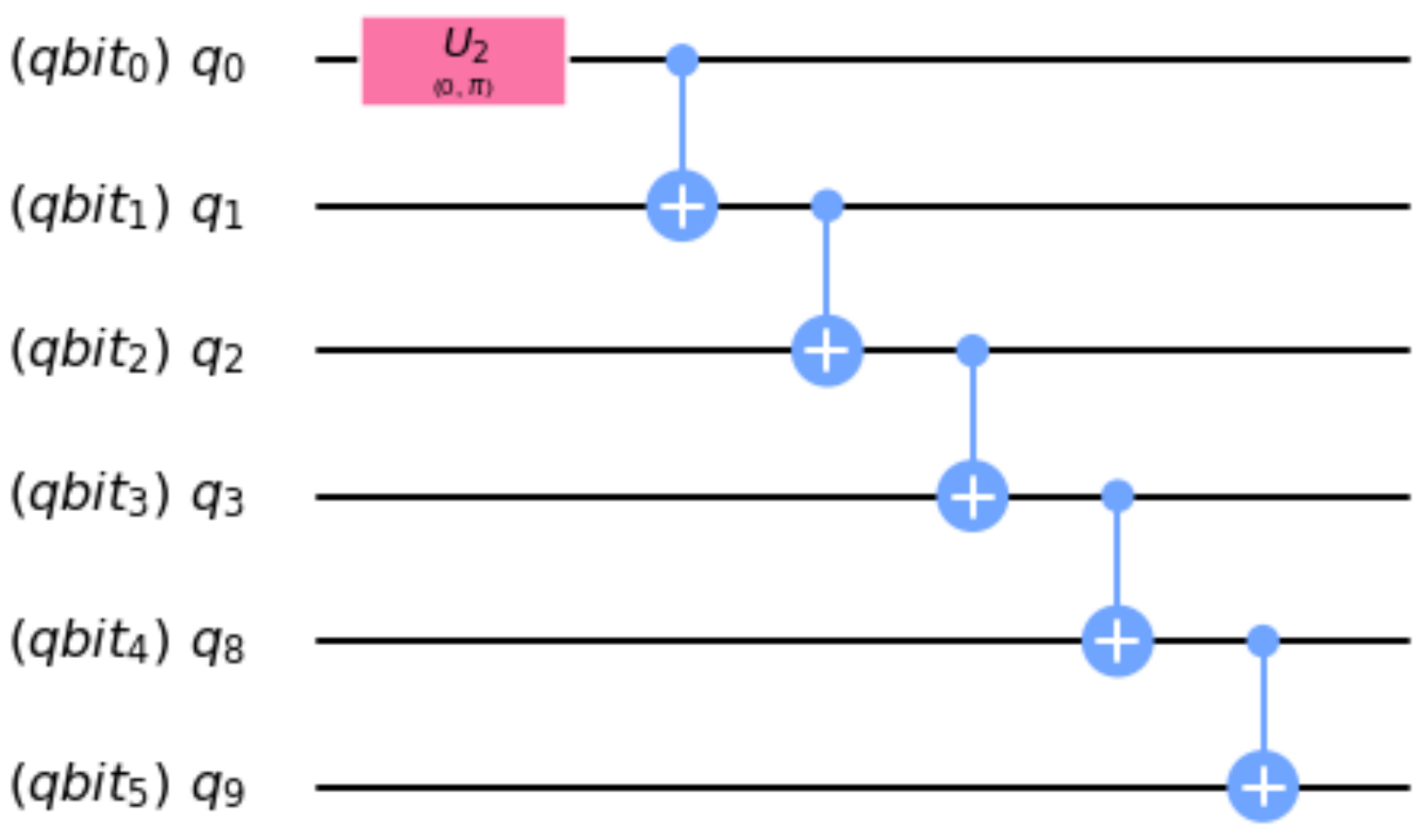}
    \end{minipage}
    \hspace{0.4cm}
    \begin{minipage}{0.32\linewidth}
    \centering
    \includegraphics[width=0.85\linewidth]{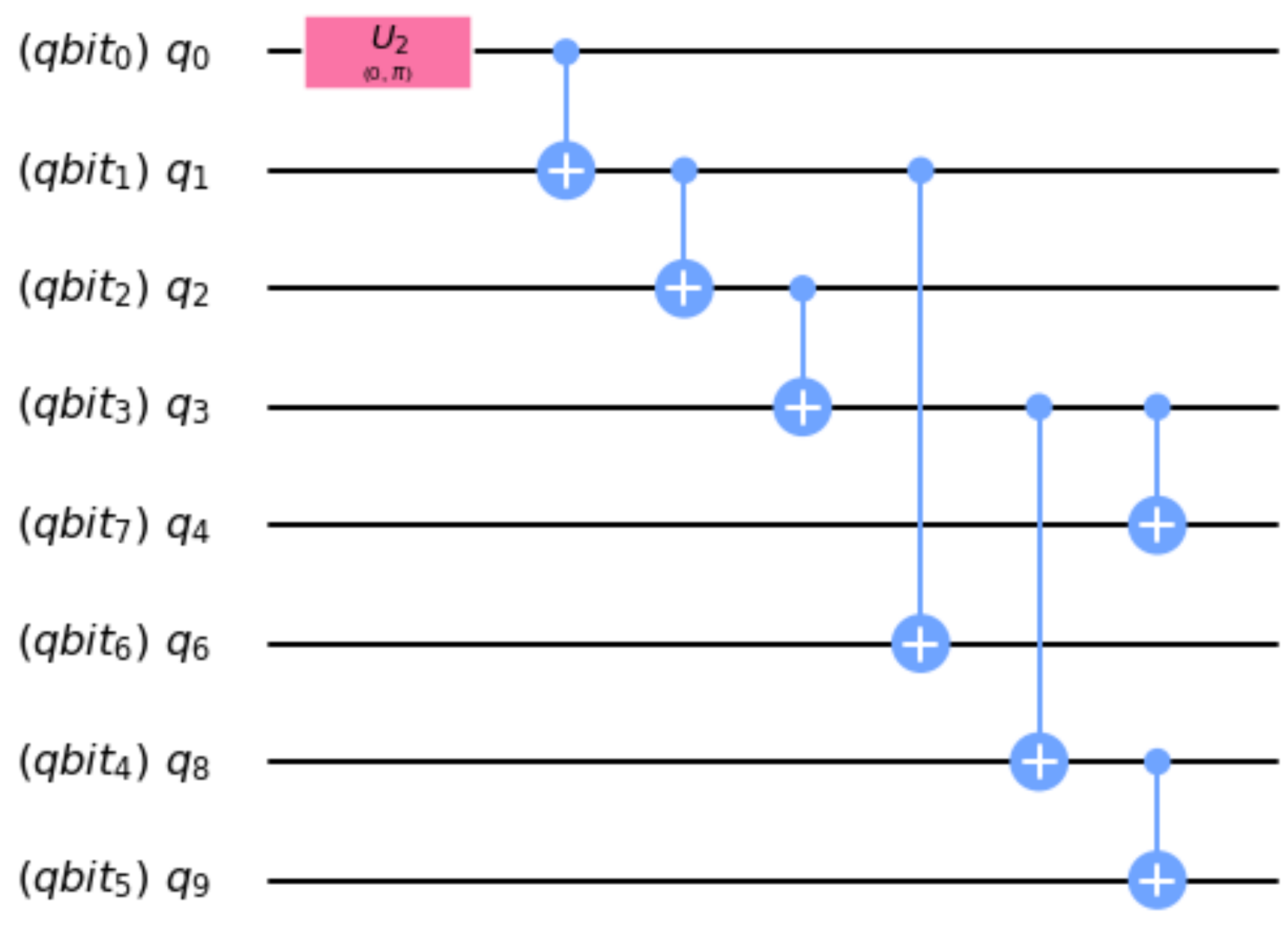}
    \end{minipage}
    \vspace{-0.2cm}
    \caption{\textbf{Left panel:} Layout connectivity of IBM backend \texttt{ibmq\_boeblingen}; \textbf{Middle and right panels:} Circuits used to generate 6-qubit state (left) and 8-qubit GHZ state (right). $qbit$ refers to the quantum registers used in qiskit, and $q$ corresponds to qubits on the real device.}
    \label{fig:real_circuit}
\end{figure}
The QST circuits were generated using the tomography module in \texttt{qiskit-ignis}.\footnote{\url{https://github.com/Qiskit/qiskit-ignis}.}
For complete QST of a $n$-qubits state $3^n$ circuits are needed. 
The result of each circuit is averaged over 8192, 4096 or 2048, for different $n$-qubit scenarios.
To mitigate for readout errors, we prepare and measure all of the $2^n$ computational basis states in the computation basis 
to construct a calibration matrix $C$. 
$C$ has dimension $2^n$ by $2^n$, where each column vector corresponds to the measured outcome of a prepared basis state. 
In the ideal case of no readout error, $C$ is an identity matrix. 
We use $C$ to correct for the measured outcomes of the experiment by minimizing the function:
\begin{equation}
\begin{aligned}
& \min_{v^{\text{cal}} \in \mathbb{R}^d}
& & \|C v^{\text{cal}} - v^{\text{meas}}\|^2
& \text{subject to}
& & \sum_i v^{\text{cal}}_i = 1, ~v^{\text{cal}}_i \geq 0, \forall i = 1, \dots, d
\end{aligned}
\end{equation}
Here $v^{\text{meas}}$ and $v^{\text{cal}}$ are the measured and calibrated outputs, respectively. 
The minimization problem is then formulated as a convex optimization problem and solved by quadratic programming using the package \texttt{cvxopt} \cite{vandenberghe2010cvxopt}.


\subsection{\texttt{MiFGD} on 6- and 8-qubit real quantum data}

We realize two types of quantum states on IBM QPUs, parameterized by the number of qubits $n$ for each case: these are the $\texttt{GHZ}_{-}(n)$ and $\texttt{Hadamard}(n)$ circuits. 
We collected measurements over all possible Pauli settings  by repeating the experiment for each setting  a number of times: these are the number of \texttt{shots} for each setting. 
The (circuit, number of \texttt{shots}) measurement configurations from IBM Quantum devices are summarized in Table \ref{tab:device_measurement_setups}.

\begin{wraptable}{r}{50mm}
    \vspace{-0.2cm}
    \begin{center}
    \begin{tabular}{cc}
      \toprule
      Circuit   &  \# \texttt{shots} \\
      \midrule
      $\texttt{GHZ}_{-}(6)$ &    2048 \\
      $\texttt{GHZ}_{-}(6)$ &    8192 \\
      $\texttt{GHZ}_{-}(8)$ &    2048 \\
      $\texttt{GHZ}_{-}(8)$ &    4096 \\
      $\texttt{Hadamard}(6)$ &    8192 \\
      $\texttt{Hadamard}(8)$ &    4096 \\
      \bottomrule
\end{tabular}
\end{center}
\vspace{-0.3cm}
\caption{QPU settings.}
\label{tab:device_measurement_setups}
\vspace{-0.3cm}
\end{wraptable}
%
%
In the Appendix, we provide target error list 
plots for the evolution of $\| \widehat{\rho} - \rho^\star \|_F^2$ for reconstructing all the settings in Table \ref{tab:device_measurement_setups}, both for real data and for simulated scenarios.
Further, we provide plots that relate the effect of momentum acceleration on the final fidelity observed for these cases. 
For clarity, in Figure \ref{fig:target_error_list_ibmq-device_measpc_20}, we summarize the efficiency of momentum acceleration, by showing the reconstruction error only for the following settings: $\texttt{maxiters}=1000$, $\eta=10^{-3}$, $\texttt{reltol}=5\times 10^{-4}$, and $\texttt{measpc} = 20\%$. 
In the plots, $\mu=0$ corresponds to the FGD algorithm in \cite{park2016finding}, $\mu^\star$ corresponds to the value obtained through our theory, while we use $\mu \in \left\{ \tfrac{1}{8}, \tfrac{1}{4}, \tfrac{1}{3}, \tfrac{3}{4} \right\}$ to  study the acceleration effect.
For $\mu^\star$, per our theory, we follow the rule $\mu^\star \approx \varepsilon / 2211$ for $\varepsilon \in (0, 1]$; see also the Methods section for more details.\footnote{For this application, $\sigma_r(\rho^\star) = 1$, $\tau(\rho^\star) = 1$, and $r=1$ by construction; we also approximated $\kappa = 1.223$, which, for user-defined $\varepsilon = 1$, results in $\mu^\star = 4.5 \cdot 10^{-4}$. Note that smaller $\varepsilon$ values result into a smaller radius of the convergence region; however, more pessimistic $\varepsilon$ values result into small $\mu$, with no practical effect in accelerating the algorithm.}
Note that, in most of the cases, the curve corresponding to $\mu=0$ is hidden behind the curve corresponding to $\mu \approx \mu^\star$.
We run each QST experiment for $5$ times for random initializations. 
We record the evolution of the $\|\widehat{\rho} - \rho^\star\|_F^2$ error at each step, and stop when the relative error of successive iterates gets smaller than $\texttt{reltol}$ or the number of iterations exceeds $\texttt{maxiters}$ (whichever happens first). To implement $\texttt{measpc} = 20\%$, we follow the description given above Eq.~\eqref{eq:basis2monomial} with $m = \texttt{measpc} \cdot d^2$. 

To highlight the level of noise existing in real quantum data, in Figure \ref{fig:target_error_list_ibmq-simulator_measpc_20}, we repeat the same setting using the QASM simulator in \texttt{qiskit-aer}. This is a parallel, high performance quantum circuit simulator written in C++ that can support a variety of realistic circuit level noise models.
 

\begin{figure*}[!h]
  \begin{center}
    \includegraphics[width=0.32\textwidth]{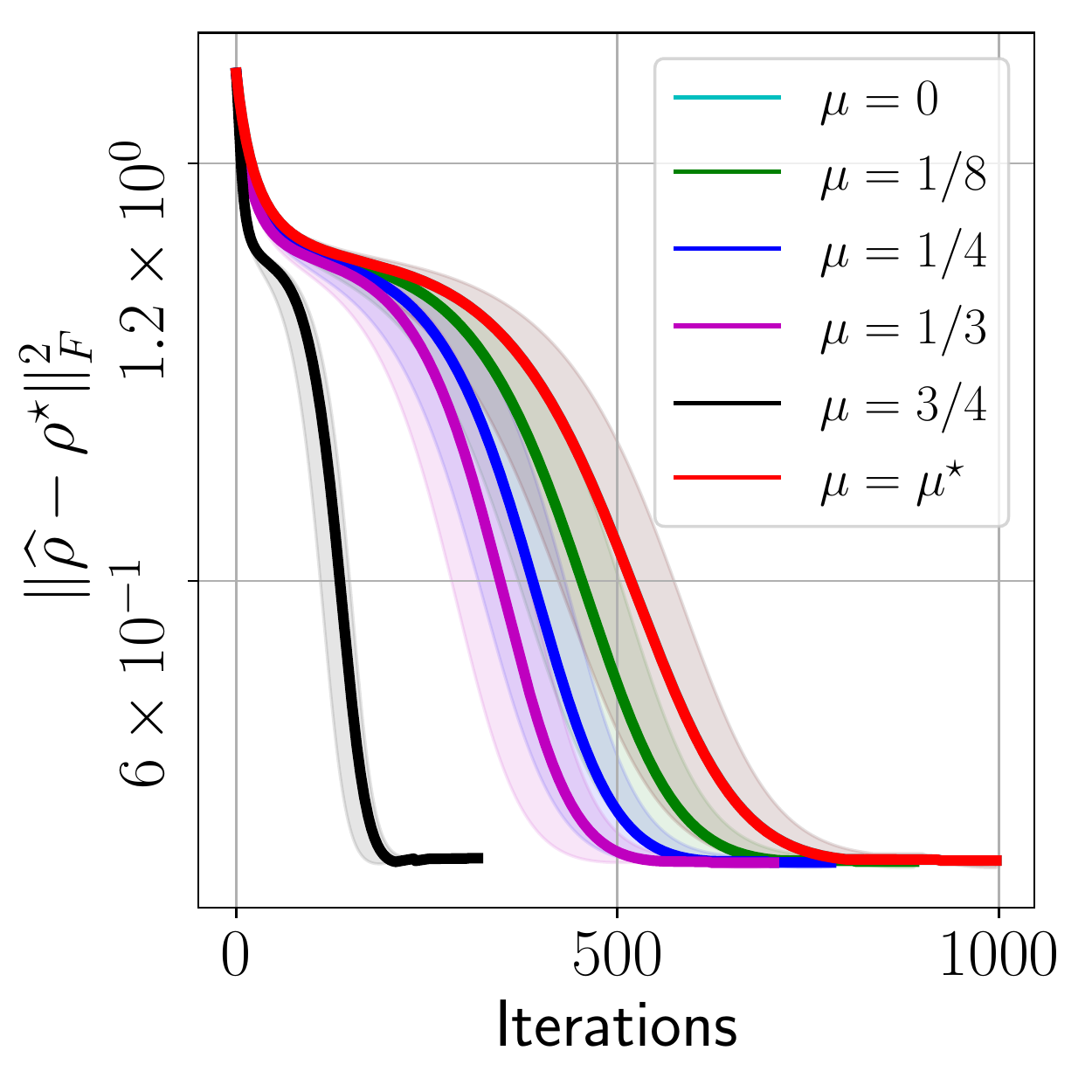}
    \includegraphics[width=0.32\textwidth]{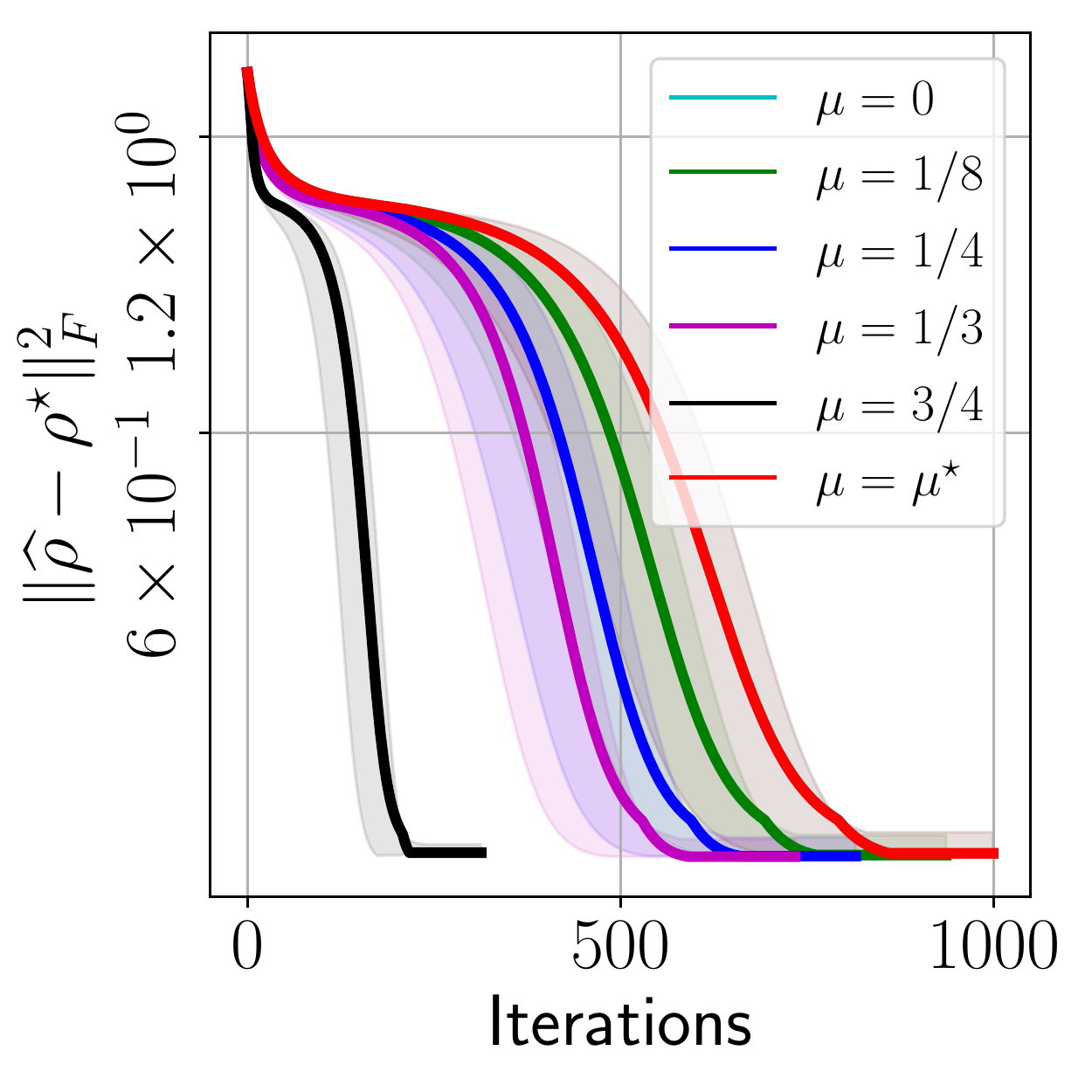} 
    \includegraphics[width=0.32\textwidth]{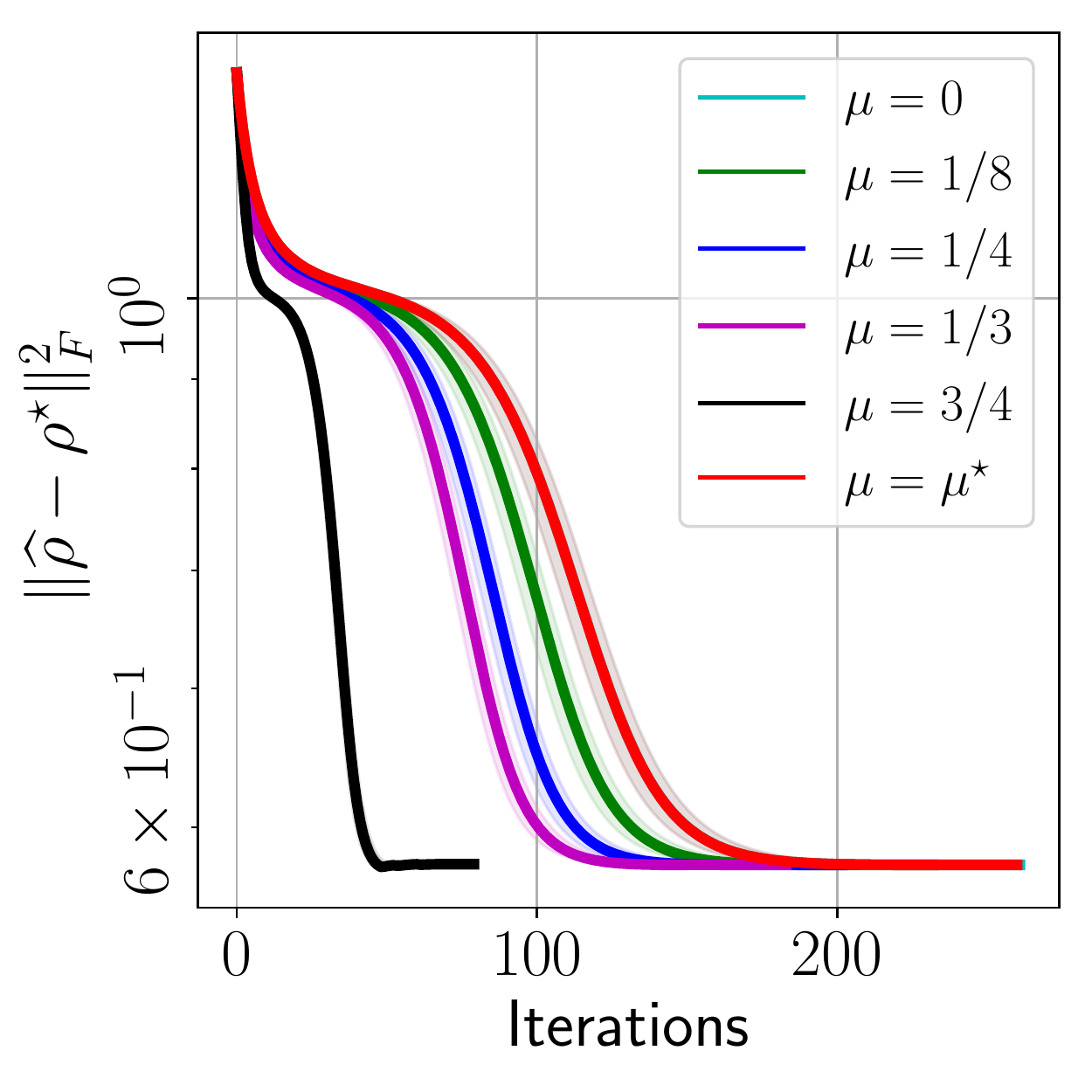}  
    \includegraphics[width=0.32\textwidth]{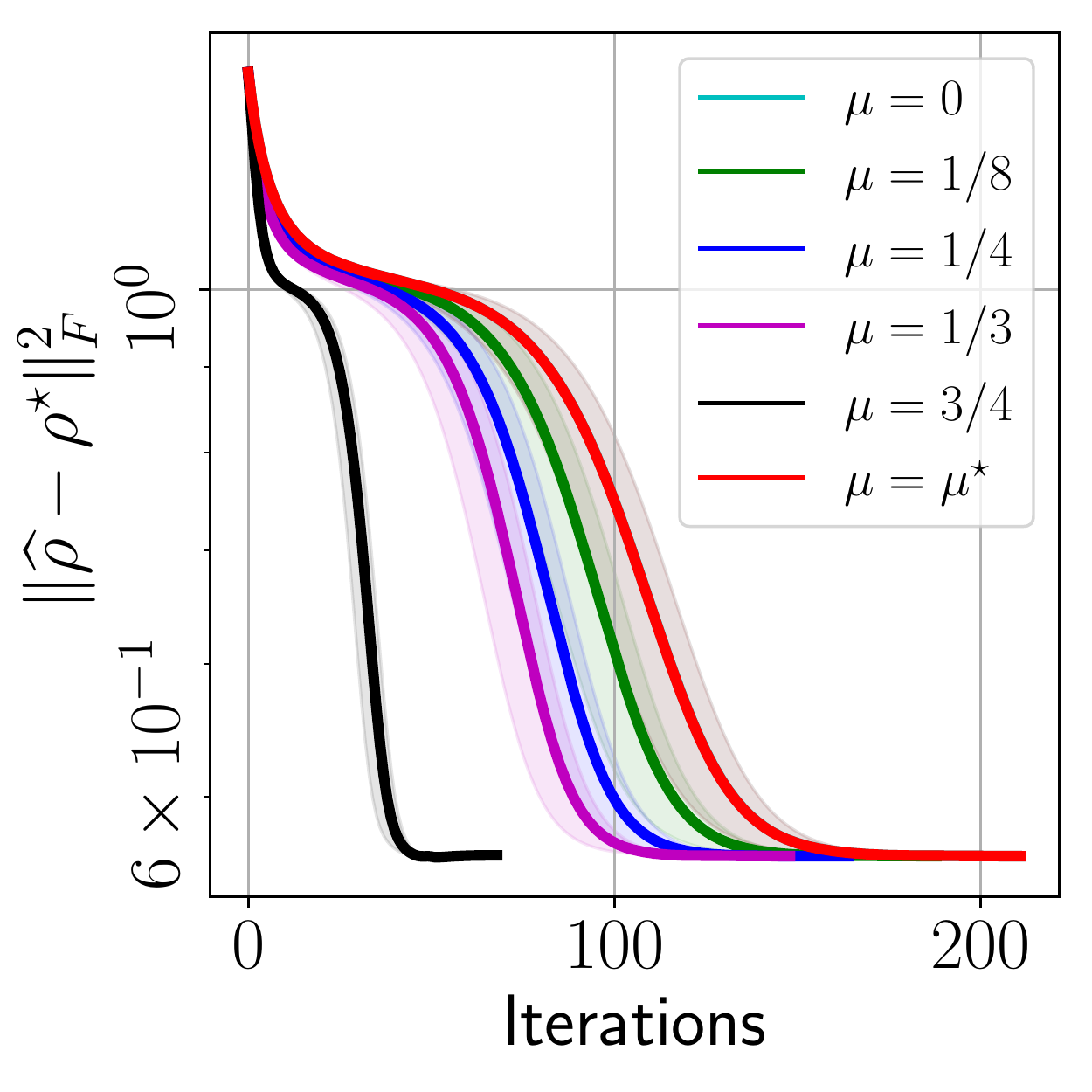}  
    \includegraphics[width=0.32\textwidth]{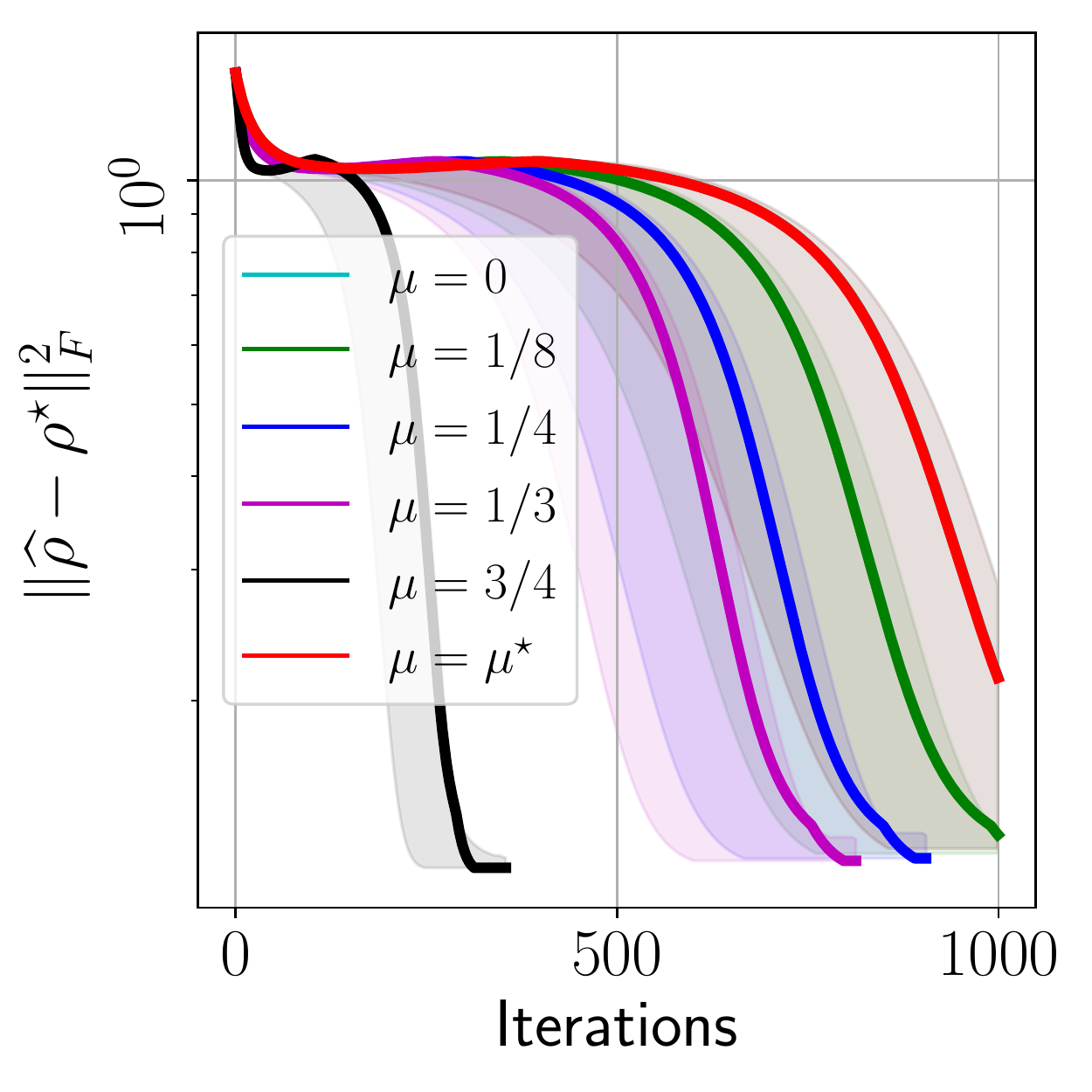}  
    \includegraphics[width=0.32\textwidth]{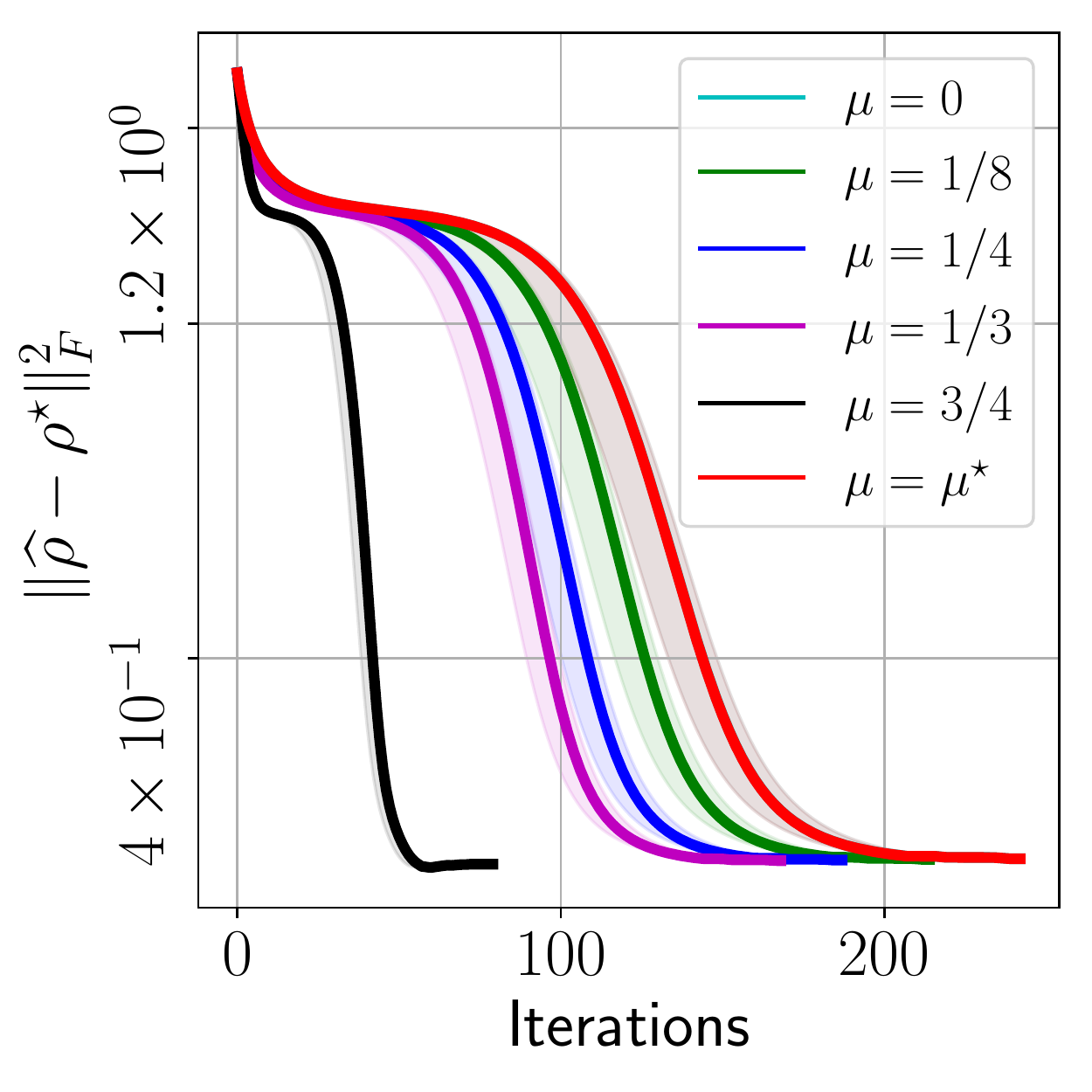}  
    \caption{
      Target error list plots $\| \widehat{\rho} - \rho^\star \|_F^2$ versus method iterations using real IBM QPU data. \textbf{Top-left}: $\texttt{GHZ}_{-}(6)$ with 2048 shots; \textbf{Top-middle}: $\texttt{GHZ}_{-}(6)$ with 8192 shots;  
      \textbf{Top-right}: $\texttt{GHZ}_{-}(8)$ with 2048 shots; \textbf{Bottom-left}: $\texttt{GHZ}_{-}(8)$ with 4096 shots/copies of $\rho^\star$; 
      \textbf{Bottom-middle}: $\texttt{Hadamard}(6)$ with 8192 shots;  
      \textbf{Bottom-right}: $\texttt{Hadamard}(8)$ with 4096 shots.
      All cases have $\texttt{measpc} = 20\%$.
      Shaded area denotes standard deviation around the mean over repeated runs in all cases. 
      }
    \label{fig:target_error_list_ibmq-device_measpc_20}
  \end{center}
\end{figure*}

Figure \ref{fig:target_error_list_ibmq-device_measpc_20} summarizes the performance of our proposal on different $\rho^\star$, and for different $\mu$ values on real IBM QPU data.
All plots show the evolution of $\| \widehat{\rho} - \rho^\star \|_F$ across iterations, featuring a steep dive to convergence for the largest value of $\mu$ we tested: we report that we also tested $\mu = 0$, which shows only slight worse performances than $\mu^\star$.
Figure \ref{fig:target_error_list_ibmq-device_measpc_20} highlights the universality of our approach: its performance is oblivious to the quantum state reconstructed, as long as it satisfies purity or it is close to a pure state. 
Our method does not require any additional structure assumptions in the quantum state. 

To highlight the effect of real noise on the performance of \texttt{MiFGD}, we further plot its performance on the same settings but using measurements coming from an idealized quantum simulator. 
Figure \ref{fig:target_error_list_ibmq-simulator_measpc_20} considers the exact same settings as in Figure \ref{fig:target_error_list_ibmq-device_measpc_20}.
It is obvious that \texttt{MiFGD} can achieve better reconstruction performance when data are less erroneous.
This also highlights that, in real noisy scenarios, the radius of the convergence region of \texttt{MiFGD} around $\rho^\star$ is controlled mostly by the the noise level, rather than by the inclusion of momentum acceleration.

  
\begin{figure*}[!h]
  \begin{center}
    \includegraphics[width=0.32\textwidth]{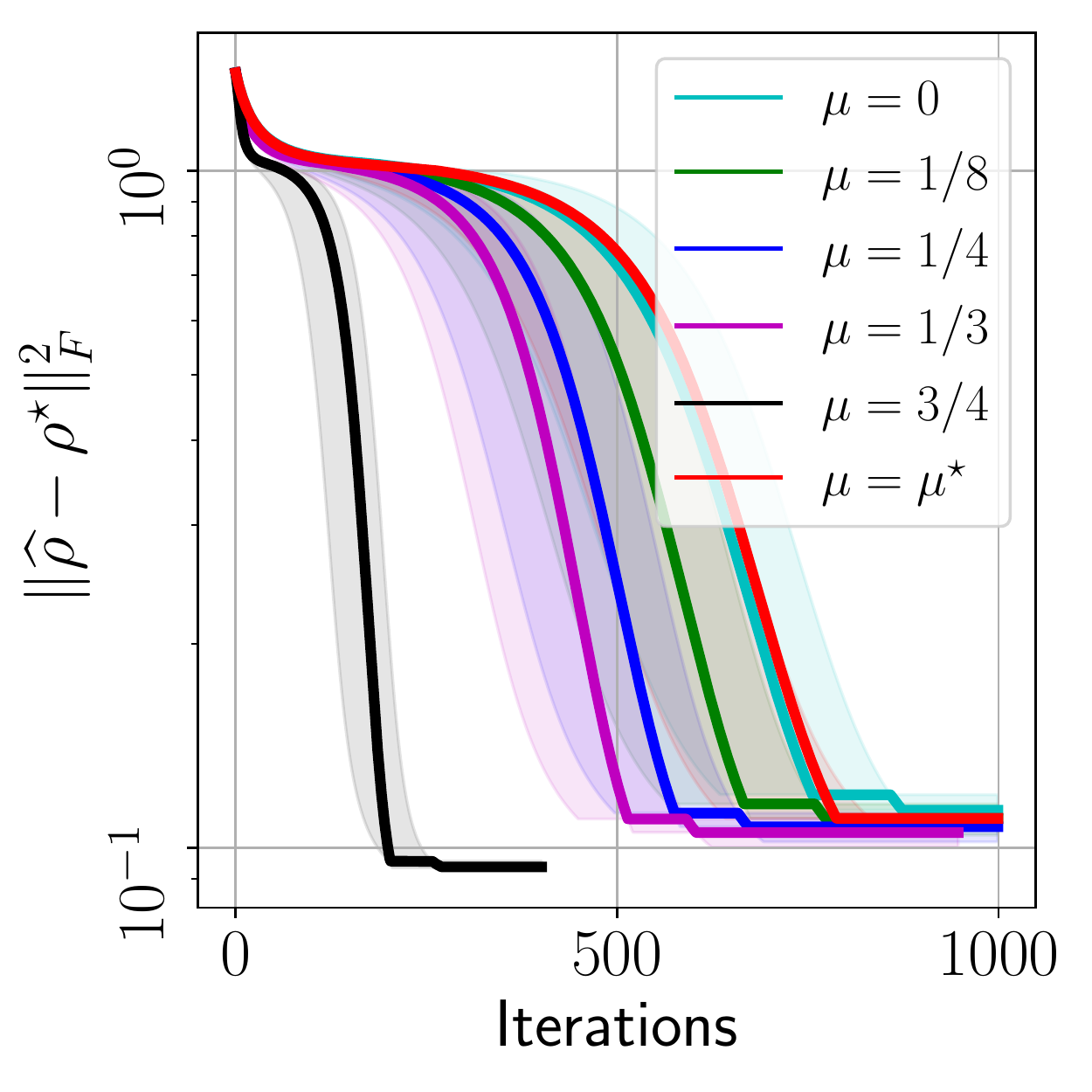}
    \includegraphics[width=0.32\textwidth]{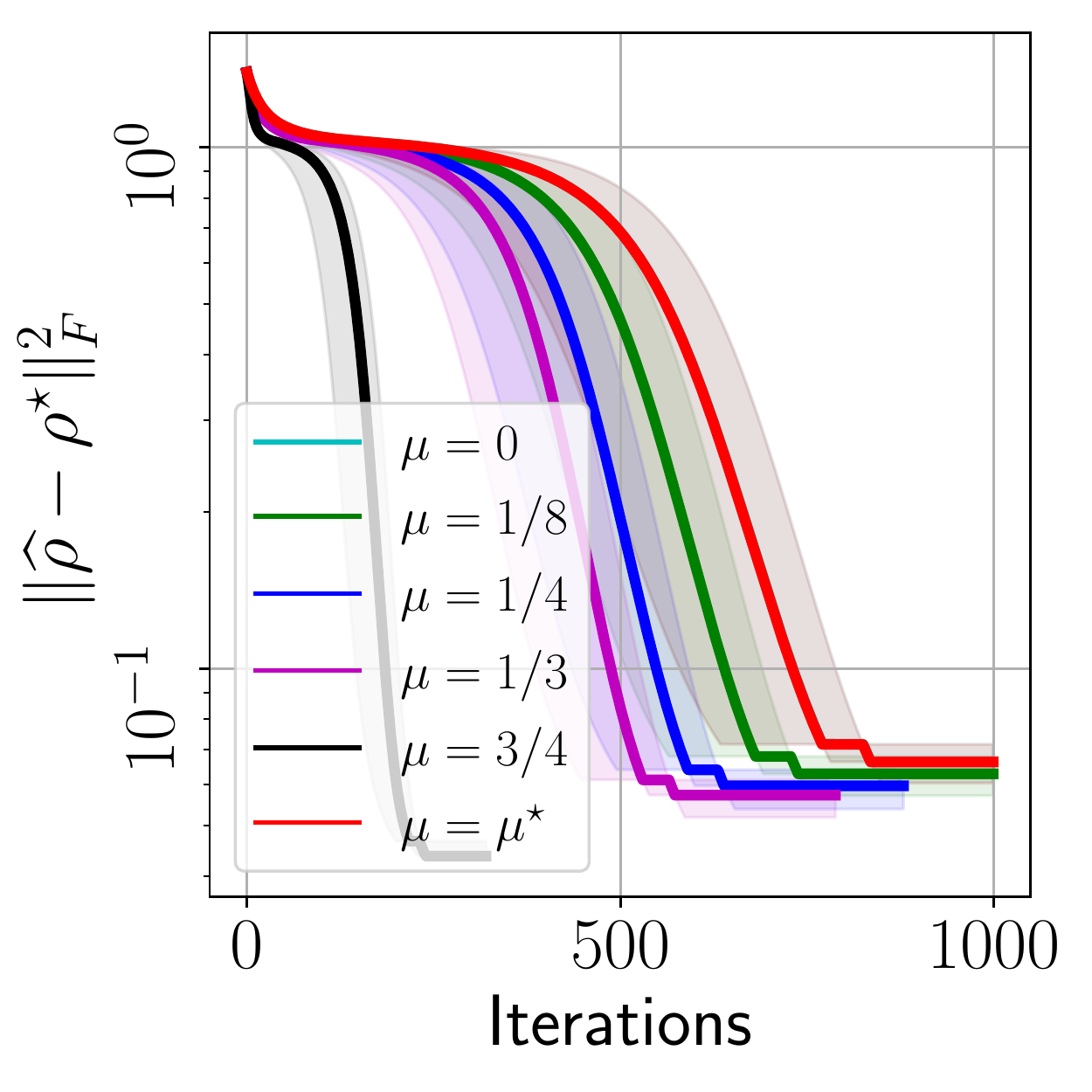}
    \includegraphics[width=0.32\textwidth]{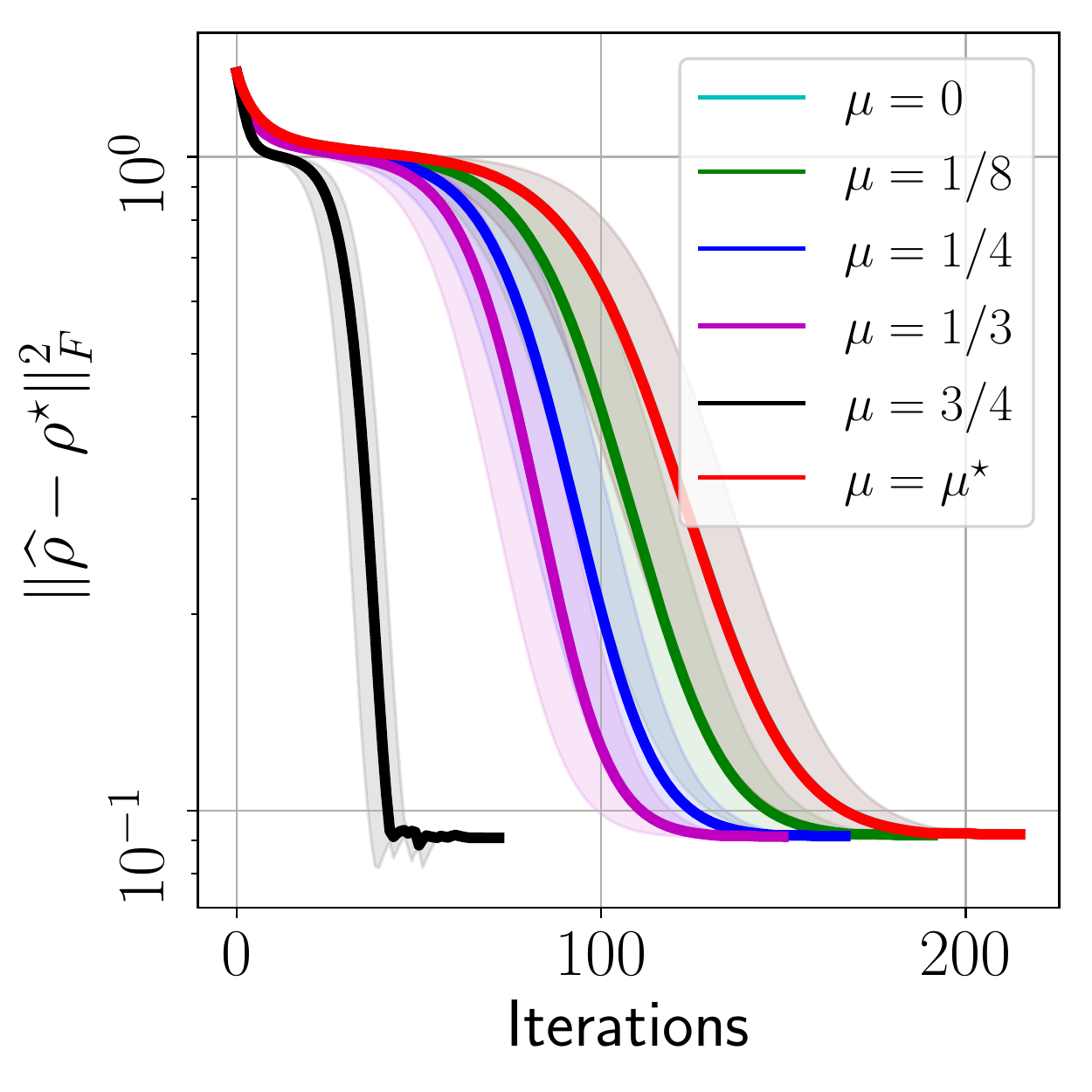}
    \includegraphics[width=0.32\textwidth]{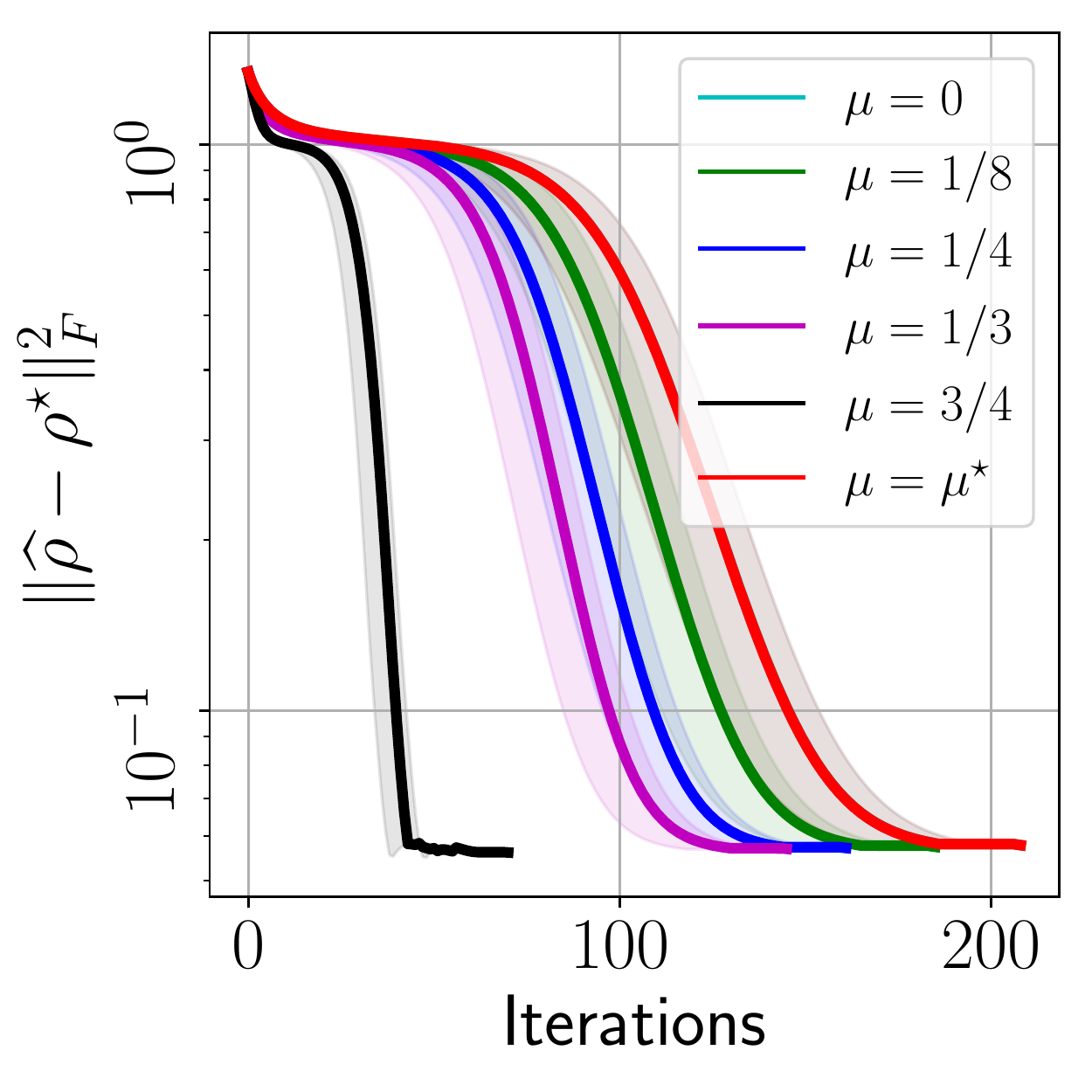} \includegraphics[width=0.32\textwidth]{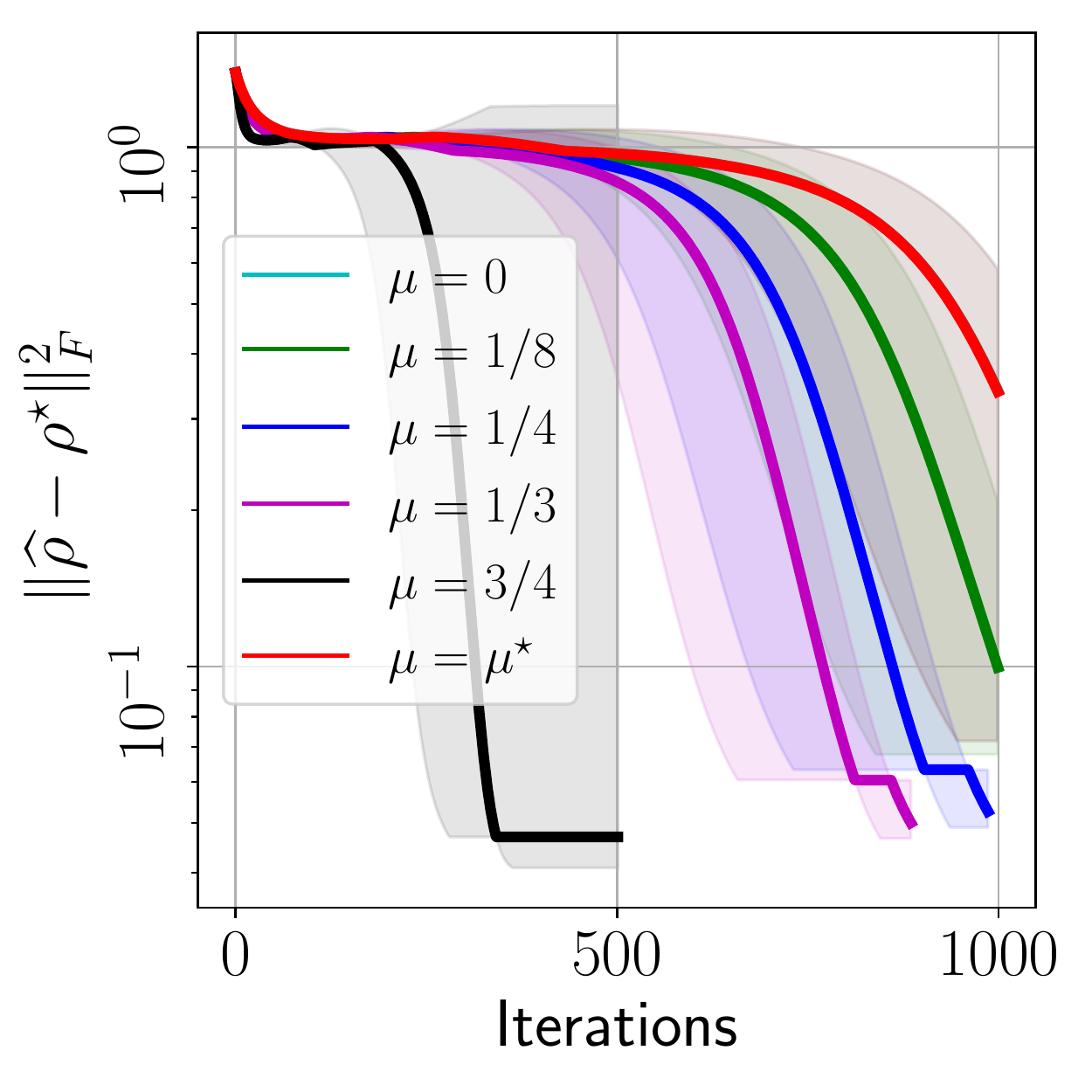} \includegraphics[width=0.32\textwidth]{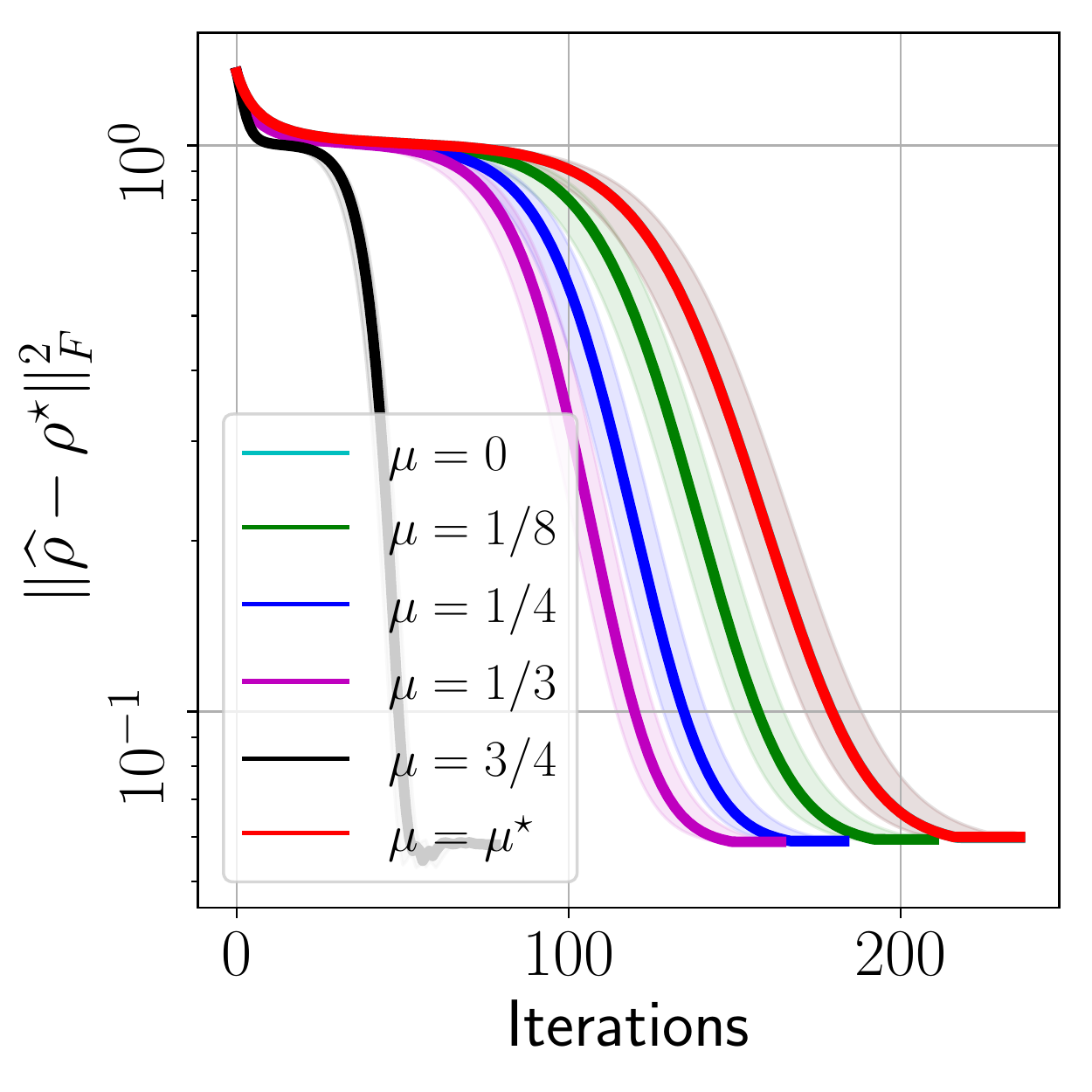}
    \caption{
      Target error list plots $\| \widehat{\rho} - \rho^\star \|_F^2$ versus method iteration using synthetic IBM's quantum simulator data. \textbf{Top-left}: $\texttt{GHZ}_{-}(6)$ with 2048 shots; \textbf{Top-middle}: $\texttt{GHZ}_{-}(6)$ with 8192 shots;  
      \textbf{Top-right}: $\texttt{GHZ}_{-}(8)$ with 2048 shots; \textbf{Bottom-left}: $\texttt{GHZ}_{-}(8)$ with 4096 shots; 
      \textbf{Bottom-middle}: $\texttt{Hadamard}(6)$ with 8192 shots;  
      \textbf{Bottom-right}: $\texttt{Hadamard}(8)$ with 4096 shots.
      All cases have $\texttt{measpc} = 20\%$.
      Shaded area denotes standard deviation around the mean over repeated runs in all cases.
      }
    \label{fig:target_error_list_ibmq-simulator_measpc_20}
  \end{center}
\end{figure*}

Finally, in Figure \ref{fig:fidelity_list_measpc_20_all}, we depict the fidelity of $\widehat{\rho}$ achieved using \texttt{MiFGD}, defined as $\text{Tr}\big(\rho^\star \widehat{\rho}\big)$, versus various $\mu$ values and for different circuits $(\rho^\star)$. 
Shaded area denotes standard deviation around the mean over repeated runs in all cases.
The plots show the significant gap in performance when using real quantum data versus using synthetic simulated data within a controlled environment.

\begin{figure*}[!h]
  \begin{center}
    \includegraphics[width=1\textwidth]{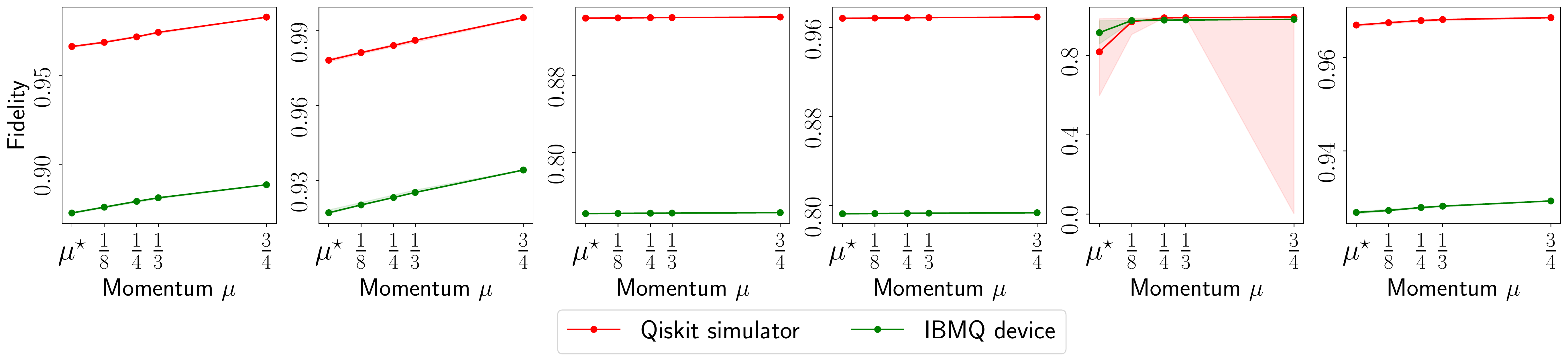}  
    \caption{
      Fidelity list plots where we depict the fidelity of $\widehat{\rho}$ to $\rho^\star$. From left to right: $i)$ $\texttt{GHZ}_{-}(6)$ with 2048 shots; $ii)$ $\texttt{GHZ}_{-}(6)$ with 8192 shots;  
      $iii)$ $\texttt{GHZ}_{-}(8)$ with 2048 shots; $iv)$ $\texttt{GHZ}_{-}(8)$ with 4096 shots; 
      $v)$ $\texttt{Hadamard}(6)$ with 8192 shots;  
      $vi)$ $\texttt{Hadamard}(8)$ with 4096 shots.
      All cases have $\texttt{measpc} = 20\%$.
      Shaded area denotes standard deviation around the mean over repeated runs in all cases. 
      }
    \label{fig:fidelity_list_measpc_20_all}
  \end{center}
\end{figure*}

\subsection{Performance comparison with full tomography methods in Qiskit} 
We compare \texttt{MiFGD} with publicly available implementations for QST reconstruction. 
Two common techniques for QST, included in the \texttt{qiskit-ignis} distribution \cite{qiskit}, are:
$i)$ the \texttt{CVXPY} fitter method, that uses the \texttt{CVXPY} convex optimization package \cite{diamond2016cvxpy, agrawal2018rewriting}; and $ii)$ the \texttt{lstsq} method, that uses least-squares fitting \cite{smolin2012efficient}.
Both methods solve \emph{the full tomography problem}\footnote{In Ref.~\cite{kalev2015quantum} it was sown that the minimization program~\eqref{eq:obj1} yields a robust estimation of low-rank states in the compressed sensing. Thus, one can use \texttt{CVXPY} fitter method to solve \eqref{eq:obj1} with  $m\ll d^2$ Pauli expectation value to obtain a robust reconstruction of $\rho^\star$.} 
according to the following expression:
\begin{equation}\label{eq:obj1}
\begin{aligned}
& \min_{\rho \in \mathbb{C}^{d \times d}}
& & f(\rho) := \tfrac{1}{2} \|\mathcal{A}(\rho) - y\|_2^2 \\
& \text{subject to}
& & \rho \succeq 0, \texttt{Tr}(\rho) = 1.
\end{aligned}
\end{equation}
We note that \texttt{MiFGD} is not restricted to ``tall'' $U$ scenarios to encode PSD and rank constraints: even without rank constraints, one could still exploit the matrix decomposition $\rho = UU^\dagger$ to avoid the PSD projection, $\rho \succeq 0$, where $U \in \mathbb{C}^{d \times d}$.
For the \texttt{lstsq} fitter method, the putative estimate $\widehat{\rho}$ is rescaled using the method proposed in \cite{smolin2012efficient}. 
For \texttt{CVXPY}, the convex constraint makes the optimization problem a semidefinite programming (SDP) instance. 
By default, \texttt{CVXPY} calls the \texttt{SCS} solver that can handle all problems (including SDPs) \cite{ocpb:16, scs}. 
Further comparison results with matrix factorization techniques from the machine learning community is provided in the Appendix for $n = 12$. 

The settings we consider for full tomography are the following: 
$\texttt{GHZ}(n)$, $\texttt{Hadamard}(n)$ and $\texttt{Random}(n)$ quantum states (for $n=3, \dots, 8$).
We focus on fidelity of reconstruction and computation timings performance between $\texttt{CVXPY}$, $\texttt{lstsq}$ and $\texttt{MiFGD}$.
We use 100\% of the measurements. 
We experimented with states simulated in QASM and measured taking $2,048$ shots. 
For \texttt{MiFGD}, we set $\eta = 0.001$, $\mu = \tfrac{3}{4}$, and stopping criterion/tolerance $\texttt{reltol} = 10^{-5}$.
All experiments are run on a Macbook Pro with 2.3 GHz Quad-Core Intel Core i7CPU and 32GB RAM.

The results are shown in Figure \ref{fig:fidelity_time_qiskit}; higher-dimensional cases are provided in Table \ref{table:qiskit_fidelity_time_2048}. 
Some notable remarks: $i)$ For small-scale scenarios ($n=3, 4$), \texttt{CVXPY} and \texttt{lstsq} attain almost perfect fidelity, while being comparable or faster than \texttt{MiFGD}. $ii)$ The difference in performance becomes apparent from $n = 6$ and on: while \texttt{MiFGD} attains 98\% fidelity in $<5 $ seconds, \texttt{CVXPY} and \texttt{lstsq} require up to hundreds of seconds to find a good solution. 
$iii)$ Finally, while \texttt{MiFGD} gets to high-fidelity solutions in seconds for $n = 7, 8$, \texttt{CVXPY} and \texttt{lstsq} methods could not finish tomography as their memory usage exceeded the system's available memory. 

It is noteworthy that the reported fidelities for \texttt{MiFGD} are the fidelities at the last iteration, before the stopping criterion is activated, or the maximum number of iterations is exceeded. 
However, the reported fidelity is not necessarily the best one during the whole execution: for all cases, we observe that \texttt{MiFGD} finds intermediate solutions with fidelity $>99\%$. 
Though, it is not realistic to assume that the iteration with the best fidelity is known a priori, and this is the reason we report only the final iteration fidelity.

\begin{table}[!htp]
\centering
\begin{tabular}{llrr}
\hline
   Circuit   & Method   &   Fidelity &   Time (secs) \\
\hline
          $\texttt{GHZ}(7)$ & \texttt{MiFGD}               &   0.969397 &                10.6709   \\
          $\texttt{Hadamard}(7)$ & \texttt{MiFGD}               &   0.969397 &                10.4926   \\
          $\texttt{Random}(7)$ & \texttt{MiFGD}               &   0.968553 &                 9.59607  \\
          All above   & \texttt{lstsq}, \texttt{CVXPY}             &   \multicolumn{2}{c}{Memory limit exceeded} \\    
          \hline
          $\texttt{GHZ}(8)$       & \texttt{MiFGD}               &   0.940389 &                35.0666   \\
          $\texttt{Hadamard}(8)$  & \texttt{MiFGD}               &   0.940390  &                37.5331   \\
          $\texttt{Random}(8)$    & \texttt{MiFGD}               &   0.942815 &                36.3251   \\
          All above   & \texttt{lstsq}, \texttt{CVXPY}             &   \multicolumn{2}{c}{Memory limit exceeded} \\    
\hline
\end{tabular}
\caption{
Fidelity of reconstruction and computation timings using 100\% of the complete measurements. Rows correspond to combinations of number of qubits (7 $\sim$ 8), synthetic circuit, and tomographic method (\texttt{MiFGD}, Qiskit's \texttt{lstsq} and \texttt{CVXPY} fitters. 2048 shots per measurement circuit. For \texttt{MiFGD}, $\eta=0.001$, $\mu=\tfrac{3}{4}$, $\texttt{reltol}=10^{-5}$. 
All experiments are run on a 13'' Macbook Pro with 2.3 GHz Quad-Core Intel Core i7 CPU and $32$ GB RAM.}
\label{table:qiskit_fidelity_time_2048}
\end{table}



\begin{figure*}[!h]
  \begin{center}
    \includegraphics[width=0.32\textwidth]{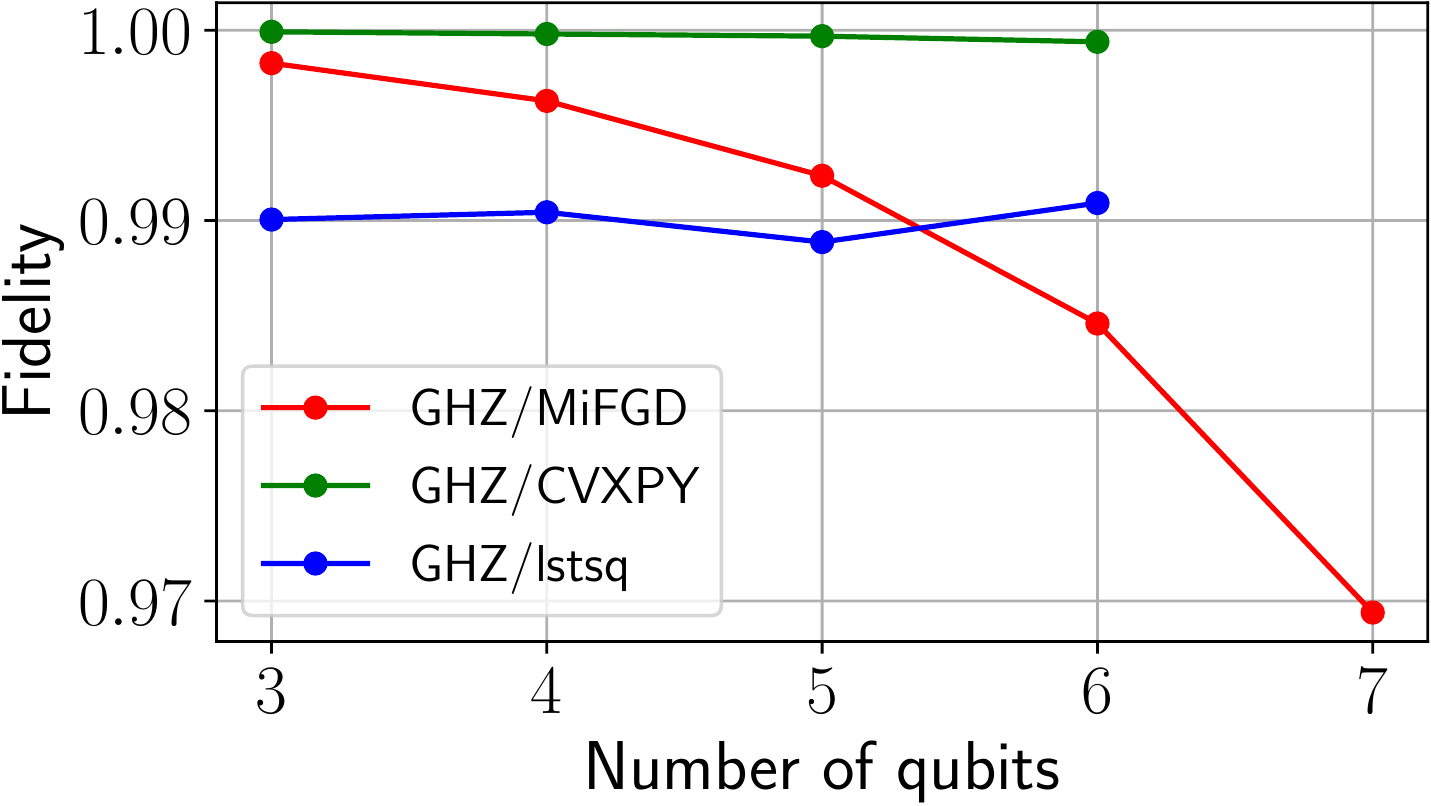}
    \includegraphics[width=0.32\textwidth]{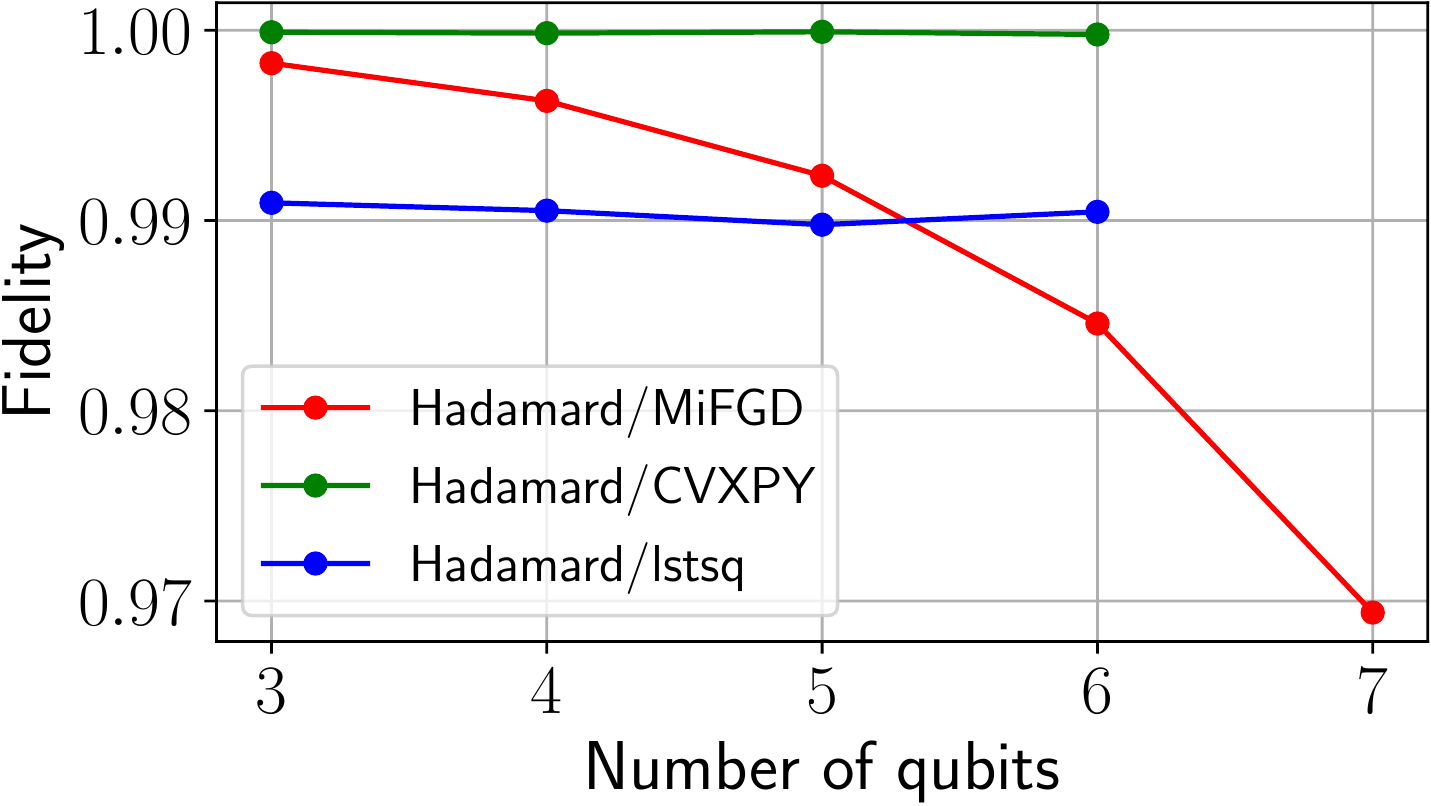}
    \includegraphics[width=0.32\textwidth]{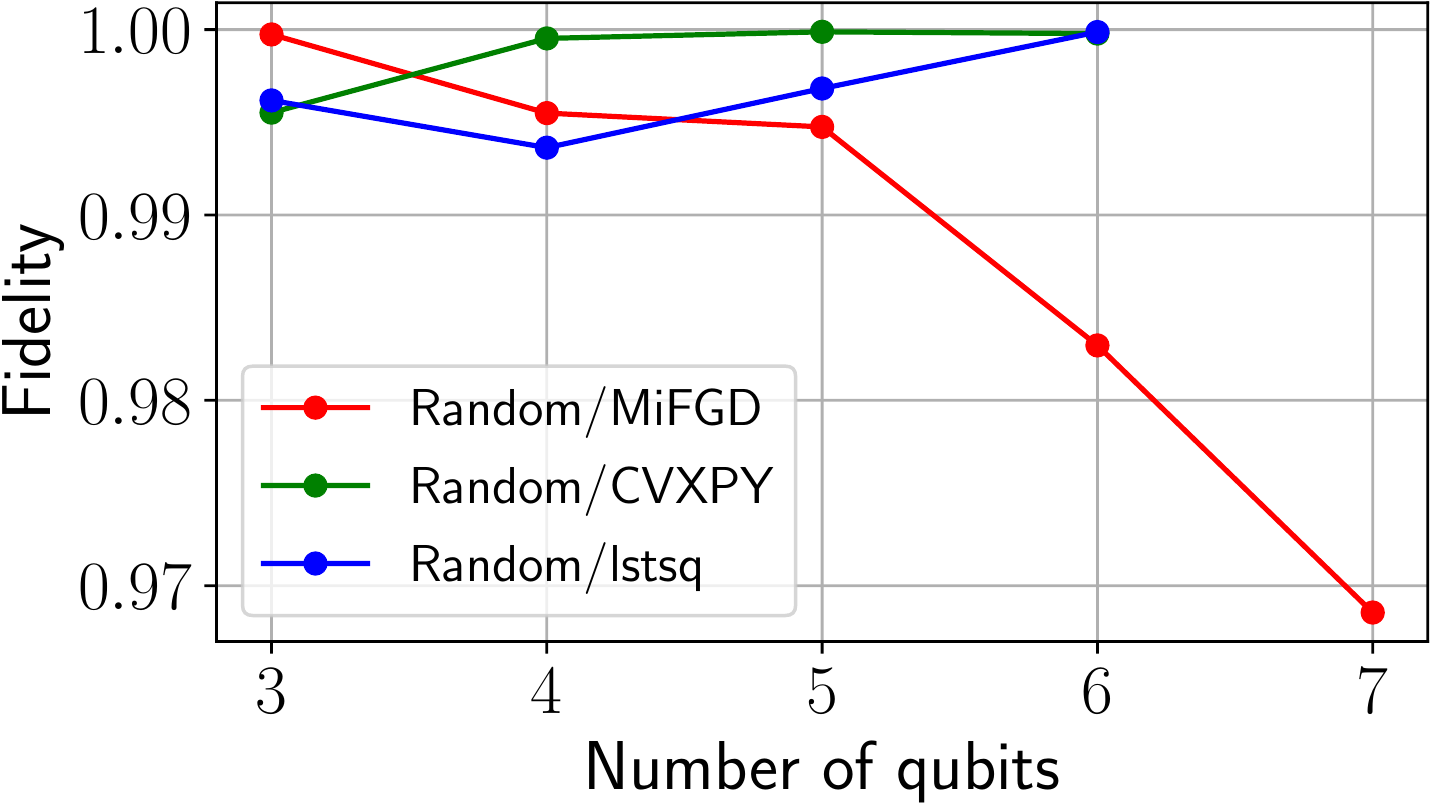}
    \includegraphics[width=0.32\textwidth]{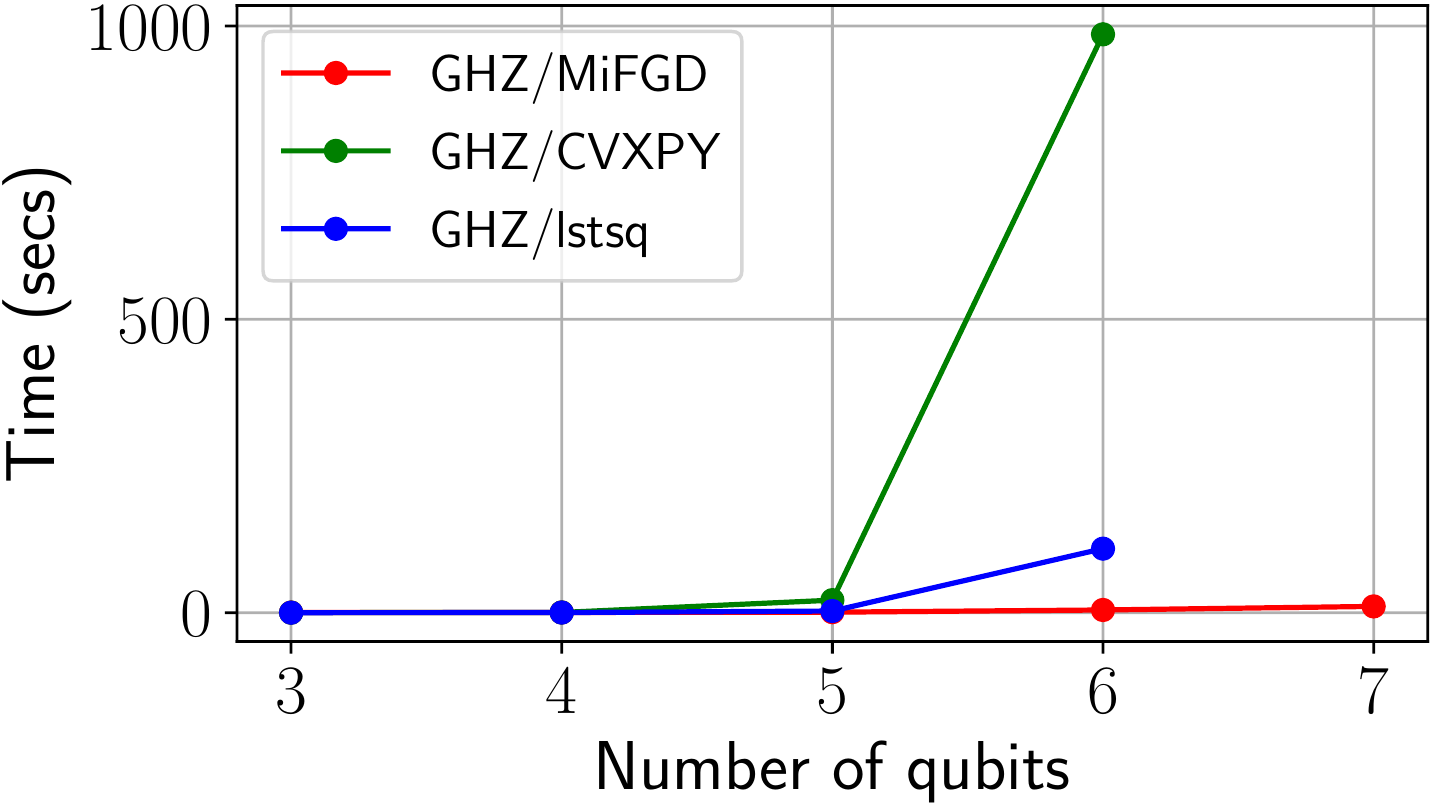}
    \includegraphics[width=0.32\textwidth]{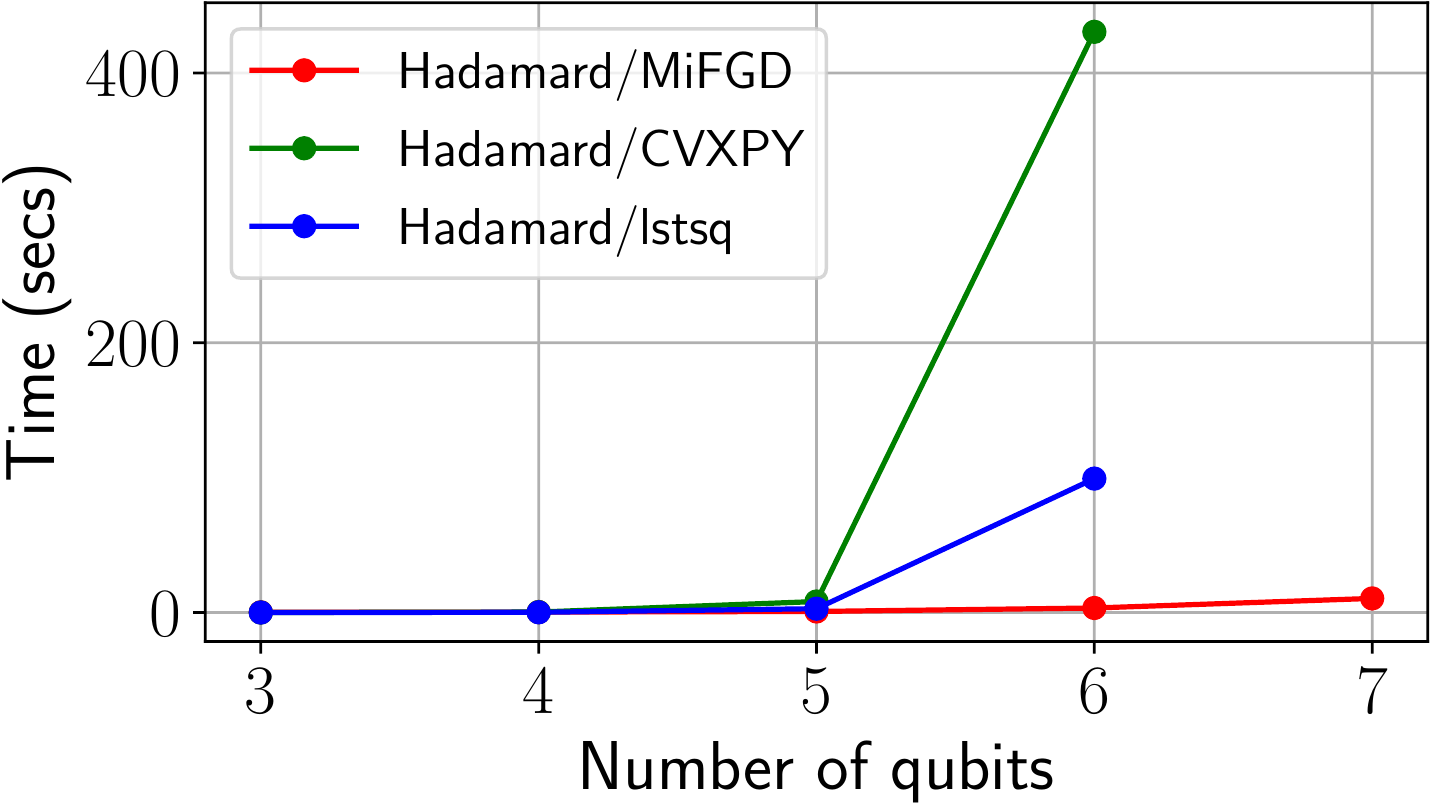}
    \includegraphics[width=0.32\textwidth]{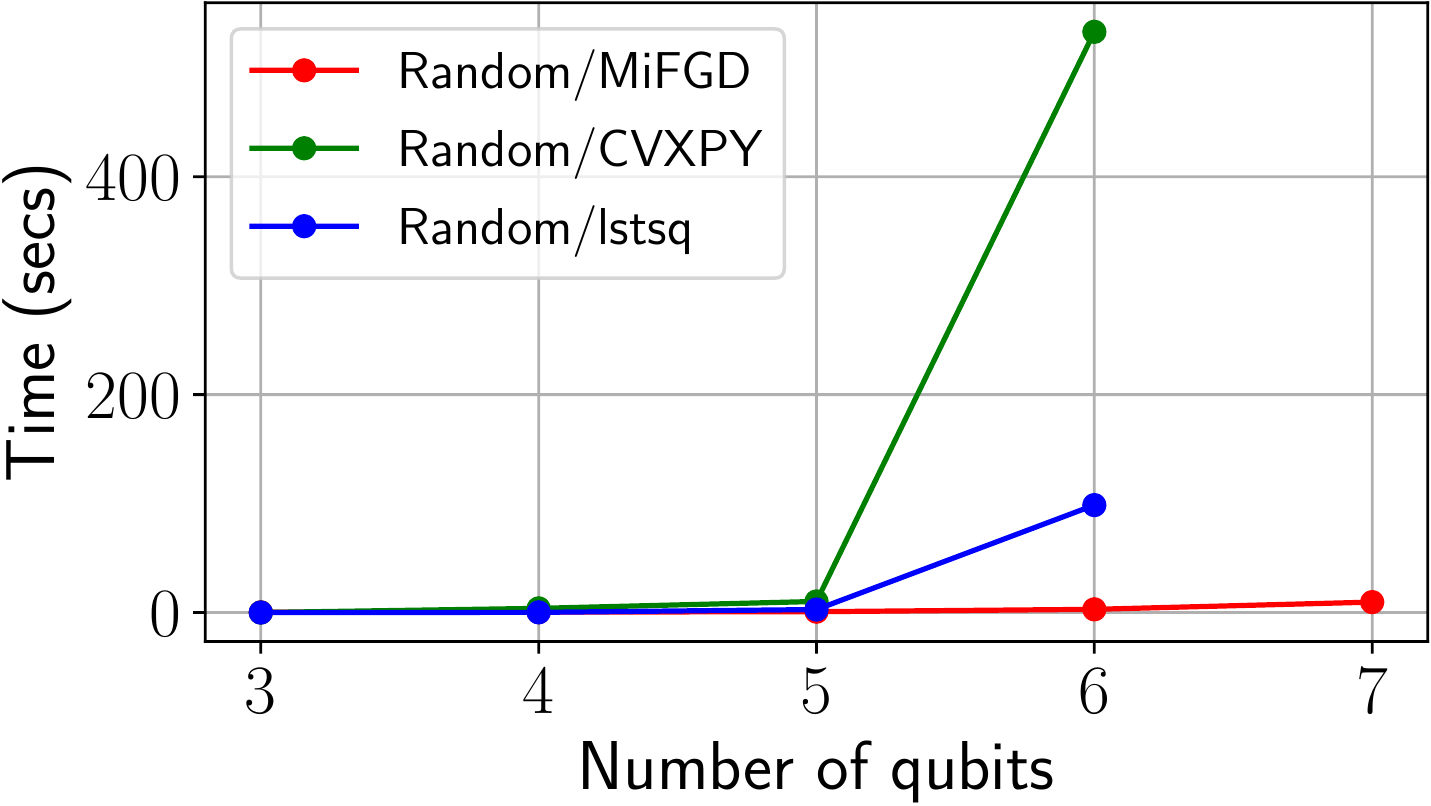}
    \caption{
      Fidelity versus time plots using synthetic IBM's quantum simulator data. \textbf{Left panel}: $\texttt{GHZ}_{-}(n)$ for $n = 3, 4$; \textbf{Middle panel}: $\texttt{Hadamard}_{-}(n)$ for $n = 3, 4$; 
      \textbf{Right panel}: $\texttt{Random}_{-}(n)$ for $n = 3, 4$.}
    \label{fig:fidelity_time_qiskit}
  \end{center}
\end{figure*}

\subsection{Performance comparison of \texttt{MiFGD} with neural-network quantum state tomography}


We compare the performance of \texttt{MiFGD} with neural network approaches.
Per \cite{Torlai2018Neural, beach2019qucumber, torlai2019machine, gao2017efficient}, we model a quantum state with a two-layer Restricted Boltzmann Machine (RBM). 
RBMs are stochastic neural networks, where each layer contains a number of binary stochastic variables: the size of the visible layer corresponds to the number of input qubits, while the size of the hidden layer is a hyperparameter controlling the representation error. 
We experiment with three types of RBMs for reconstructing either the positive-real wave function, the complex wave function, or the density matrix of the quantum state.
In the first two cases the state is assumed pure while in the last, general mixed quantum states can be represented. 
We leverage the implementation in QuCumber \cite{beach2019qucumber},
\texttt{PositiveRealWaveFunction} (\texttt{PRWF}), \texttt{ComplexWaveFunction} (\texttt{CWF}), and \texttt{DensityMatrix} (\texttt{DM}), respectively.

We reconstruct $\texttt{GHZ}(n)$, $\texttt{Hadamard}(n)$ and $\texttt{Random}(n)$ quantum states (for $n=3, \dots, 8$), by training \texttt{PRWF}, \texttt{CWF}, and \texttt{DM} neural networks\footnote{We utilize GPU (NVidia GeForce GTX 1080 TI,11GB RAM) for faster training of the neural networks.} with measurements collected by the QASM Simulator. 



For our setting, we consider  \texttt{measpc} = 50\% and \texttt{shots} = 2048. 
The set of measurements is presented to the RBM implementation, along with the target positive-real wave function (for \texttt{PRWF}), complex wavefunction (for \texttt{CWF}) or the target density matrix (for \texttt{DM}) in a suitable format for training. 
We train \texttt{Hadamard} and \texttt{Random} states with 20 epochs, and \texttt{GHZ} state with 100 epochs.\footnote{We experimented higher number of epochs (up to 500) for all cases, but after the reported number of epochs, Qucumber methods did not improve, if not worsened.}
We set the number of hidden variables (and also of additional auxilliary variables for \texttt{DM}) to be equal to the number of input variables $n$ and we use $100$ data points for both the positive and the negative phase of the gradient (as per the recommendation for the defaults). 
We choose $k=10$ contrastive divergence steps and fixed the learning rate to $10$ (per hyperparameter tuning). 
Lastly, we limit the fitting time of Qucumber methods (excluding data preparation time) to be three hours.
To compare to the RBM results, we run \texttt{MiFGD} with $\eta= 0.001$, $\mu=\tfrac{3}{4}$, $\texttt{reltol}=10^{-5}$ and using \texttt{measpc} = 50\%, keeping previously chosen values for all other hyperparameters.

We report the fidelity of the reconstruction as a function of elapsed training time for $n = 3, 4$ in Figure \ref{fig:fidelity_time_neural} for \texttt{PRWF}, \texttt{CWF}, and \texttt{DM}. 
We observe that for all cases, Qucumber methods are orders of magnitude slower than \texttt{MiFGD}.
E.g., for $n=8$, for all three states, \texttt{CWF} and \texttt{DM} did not finish a single epoch in 3 hours, while \texttt{MiFG} achieves high fidelity in less than 30 seconds. 
For the $\texttt{Hadamard}(n)$ and $\texttt{Random}(n)$, reaching reasonable fidelities is significantly slower for both \texttt{CWF} and \texttt{DM}, while \texttt{PRWF} hardly improves its performance throughout the training.
For the \texttt{GHZ} case, \texttt{CWF} and \texttt{DM} also shows \emph{non-monotonic} behaviors: even after a few \emph{thousands of seconds}, fidelities have not ``stabilized'',
while \texttt{PRWF} stabilizes in very low fidelities.
In comparison \texttt{MiFGD} is several orders of magnitude faster than both \texttt{CWF} and \texttt{DM} and fidelity smoothly increases to comparable or higher values. 
Further, in Table \ref{table:qucumber_fidelity_time_2048}, we report \emph{final} fidelities (within the 3 hour time window), and reported times.

\begin{figure*}[!h]
  \begin{center}
    \includegraphics[width=0.32\textwidth]{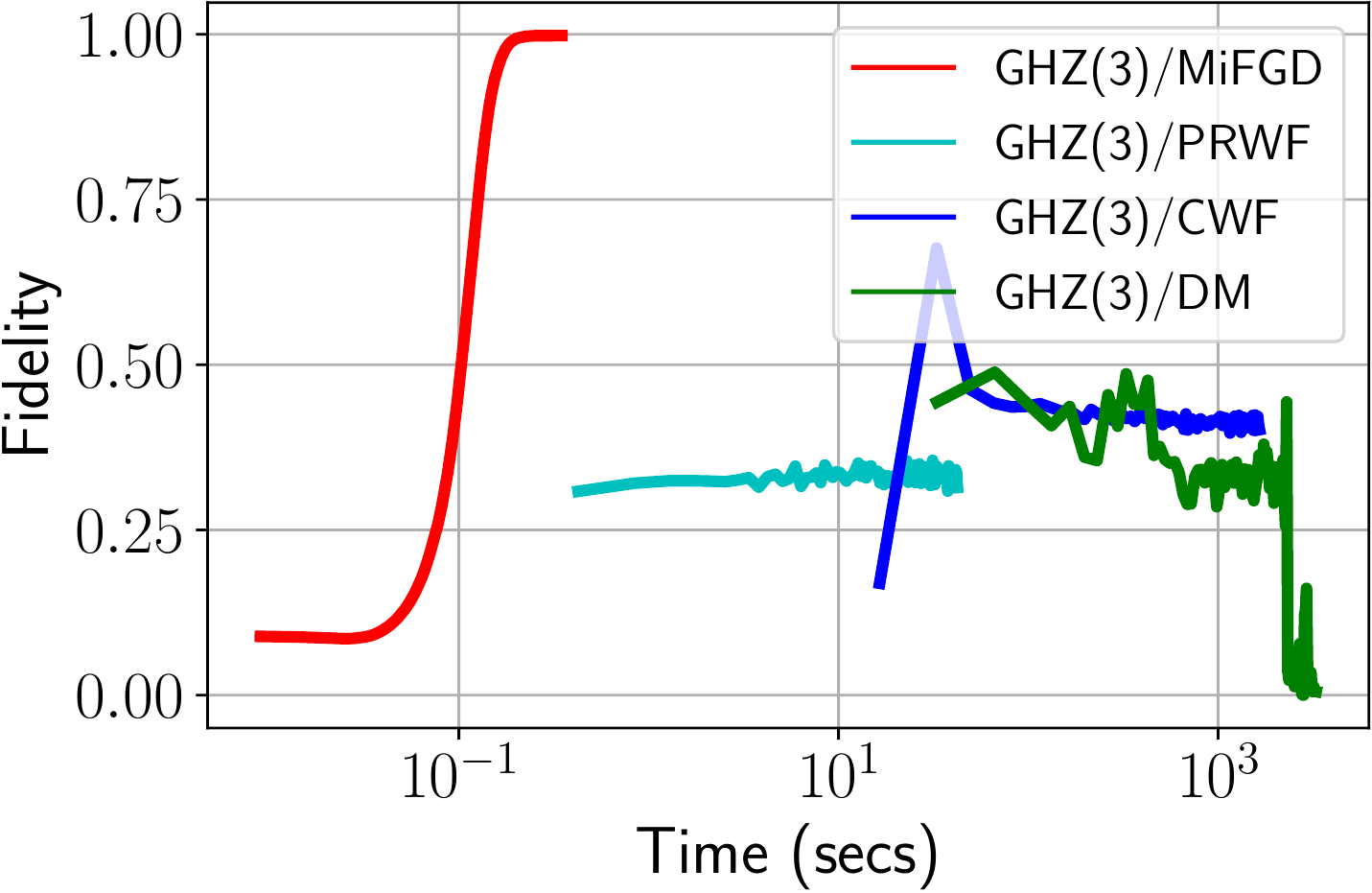}
    \includegraphics[width=0.32\textwidth]{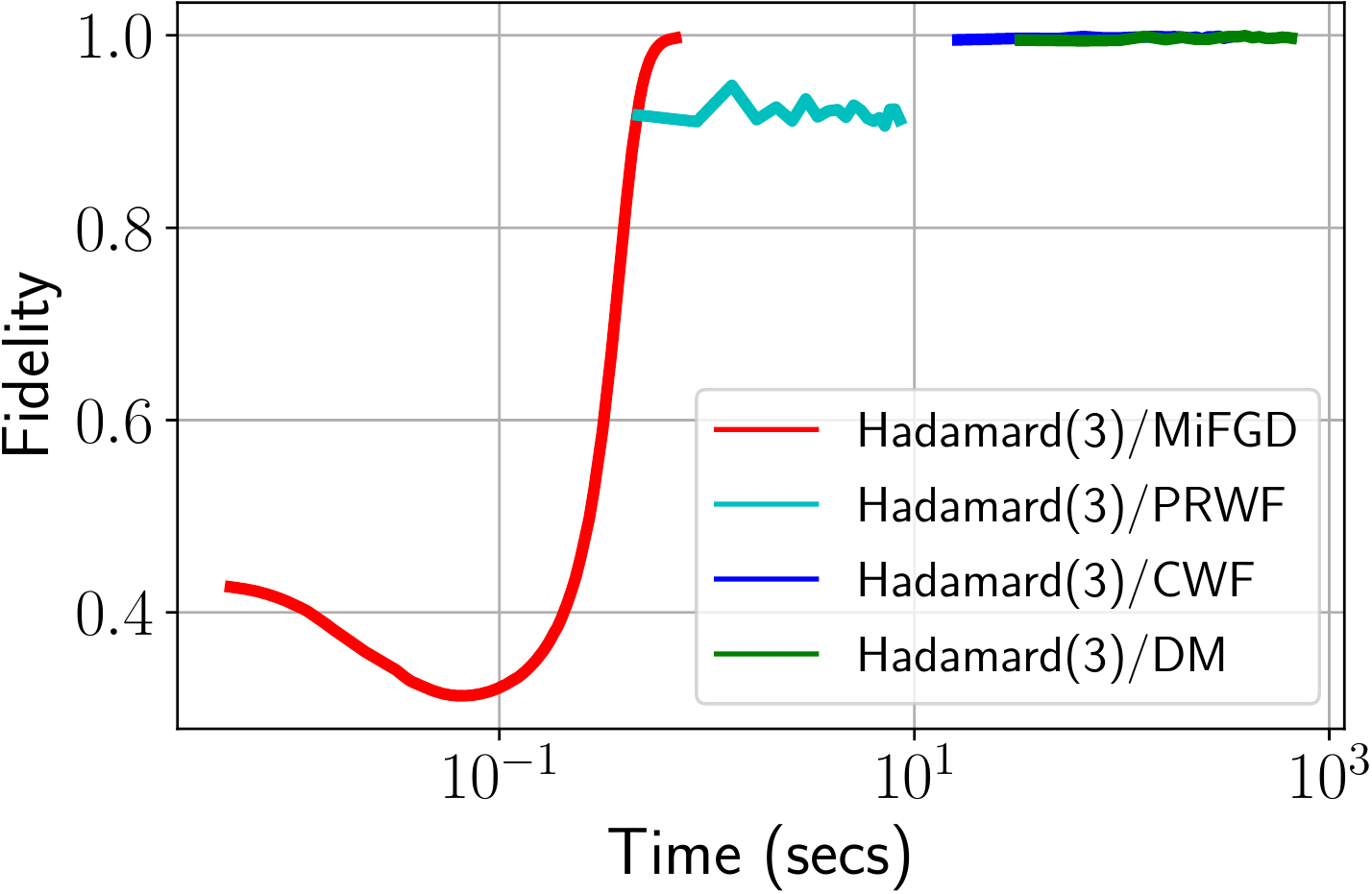}
    \includegraphics[width=0.32\textwidth]{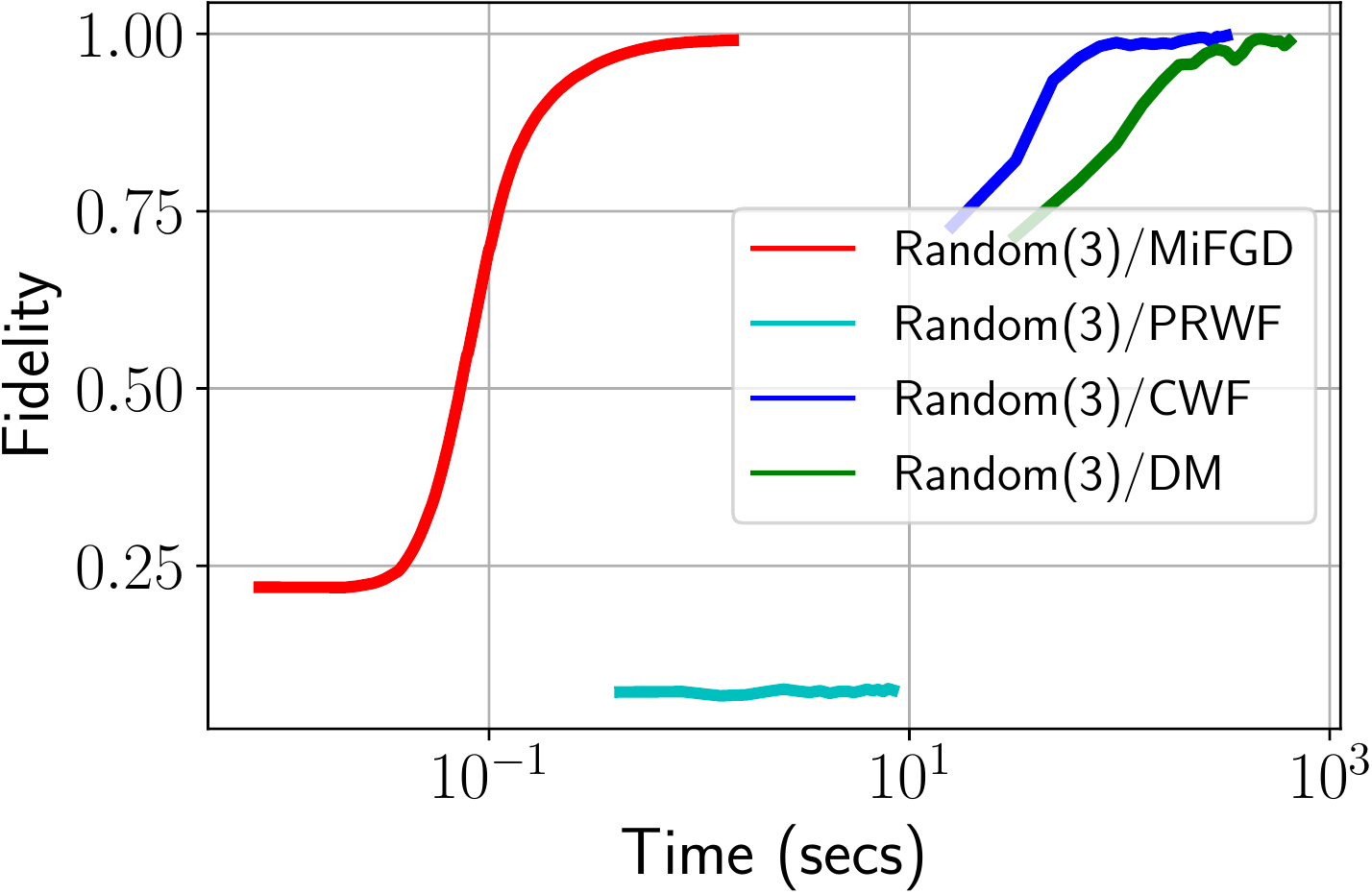}
    \includegraphics[width=0.32\textwidth]{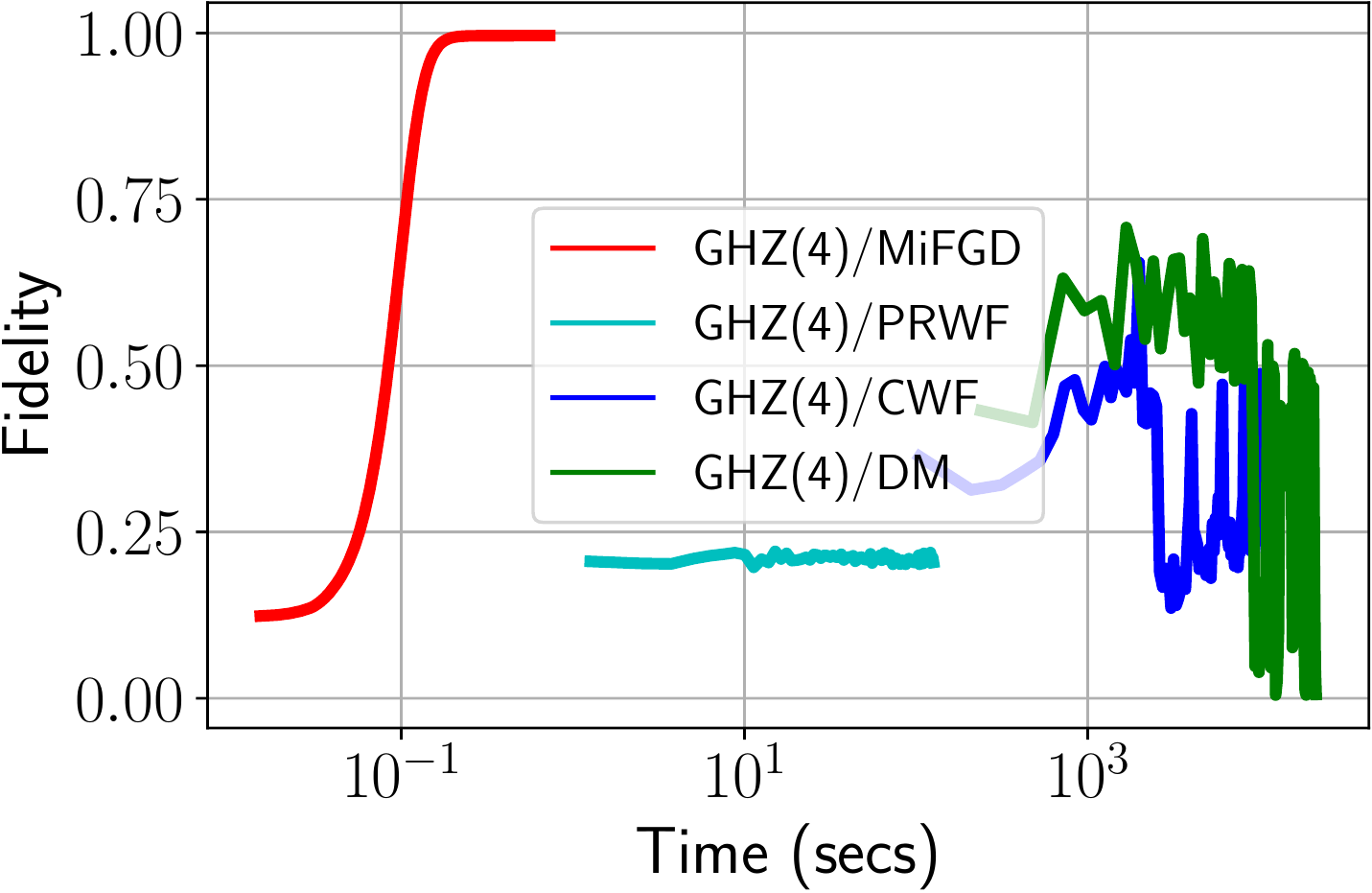}
    \includegraphics[width=0.32\textwidth]{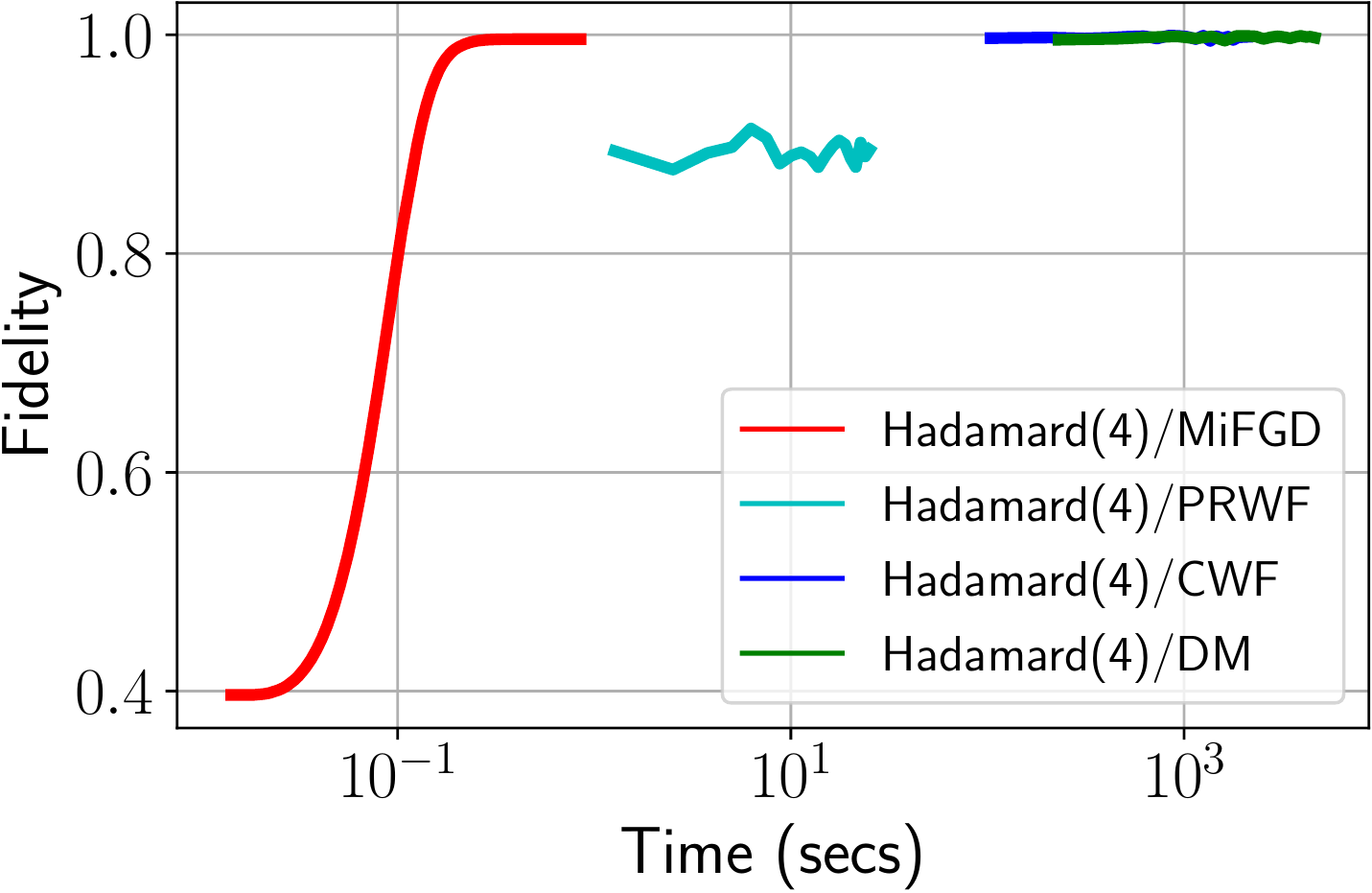}
    \includegraphics[width=0.32\textwidth]{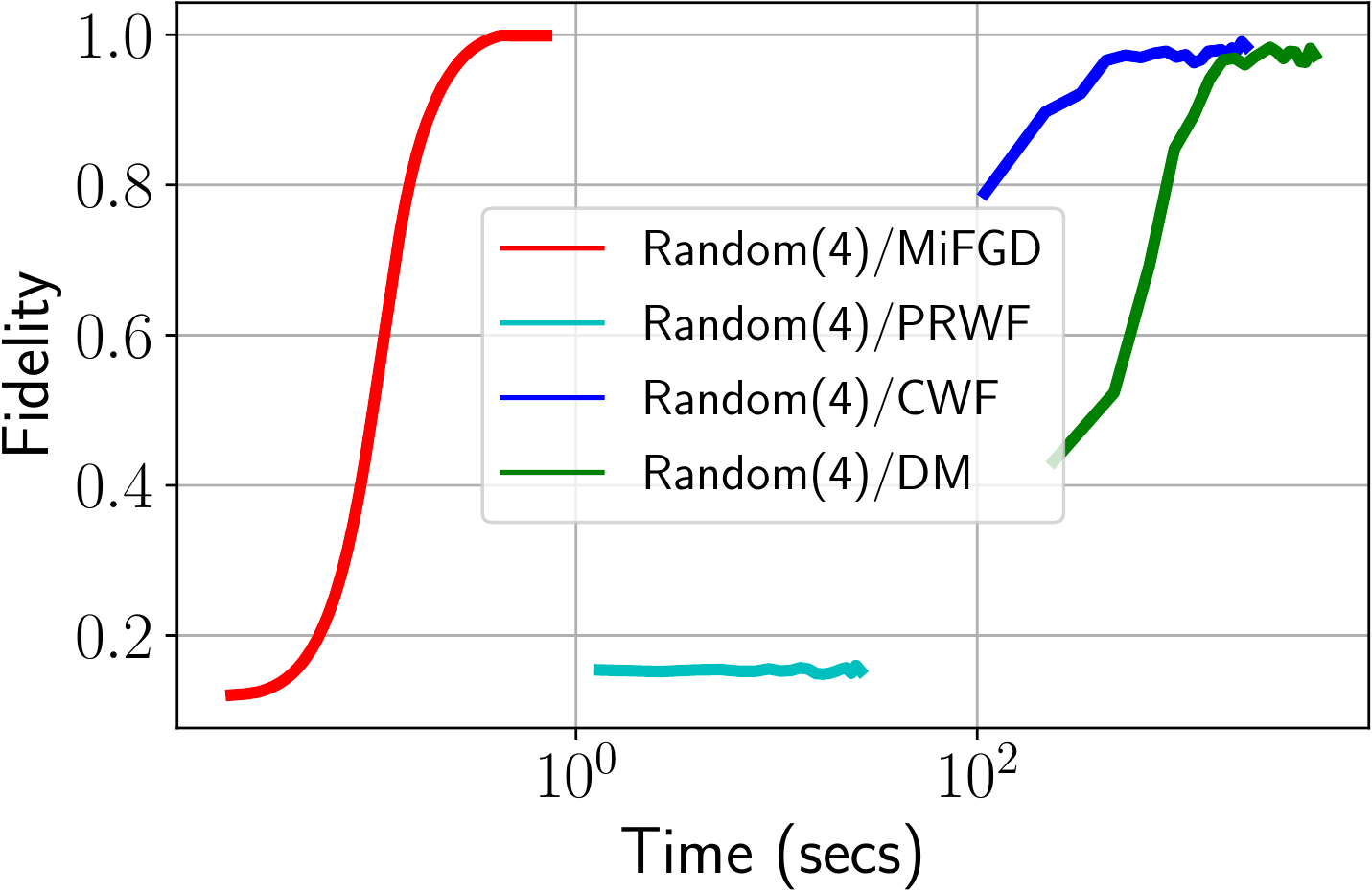}
    \caption{
      Fidelity versus time plots on \texttt{MiFGD}, \texttt{PRWF}, \texttt{CWF}, and \texttt{DM}, using synthetic IBM's quantum simulator data. \textbf{Left panel}: $\texttt{GHZ}(n)$ for $n = 3, 4$; \textbf{Middle panel}: $\texttt{Hadamard}(n)$ for $n = 3, 4$; 
      \textbf{Right panel}: $\texttt{Random}(n)$ for $n = 3, 4$. 
      }
    \label{fig:fidelity_time_neural}
  \end{center}
\end{figure*}

\begin{table}[!htp]
\centering
\begin{tabular}{ccccccc}
    \hline
    Circuit &  & \multicolumn{5}{c}{Method}  \\ 
    \hline
     &  & \texttt{MiFGD} & \texttt{FGD} & \texttt{PRWF} & \texttt{CWF}  & \texttt{DM} \\
    \hline
    \multirow{2}{*}{$\texttt{GHZ}(3)$} & Fidelity  & 0.997922 & 0.997857 & 0.314167 & 0.401737  & 0.005389 \\
                                       & Time (secs)  & 0.348652 & 1.061421 & 42.27607 & 1649.224  & 3279.118 \\ 
    \hline
    \multirow{2}{*}{$\texttt{Hadamard}(3)$} & Fidelity  & 0.997229 & 0.994191 & 0.912268 & 0.997914  & 0.997222 \\
                                       & Time (secs)  & 0.706872 & 2.399405 & 8.492405 & 325.7040  & 656.6696 \\ 
    \hline
    \multirow{2}{*}{$\texttt{Random}(3)$} & Fidelity  & 0.991063 & 0.988746 & 0.074774 & 0.997493  & 0.989754 \\
                                       & Time (secs)  & 1.447057 & 3.431218 & 8.345135 & 322.4730  & 640.8185 \\ 
    \hline
    \multirow{2}{*}{$\texttt{GHZ}(4)$} & Fidelity  & 0.996029 & 0.996041 & 0.204313 & 0.276491 & 0.138459 \\
                                       & Time (secs) & 0.733128 & 2.081035 & 126.2749 & 10756.87  & $>$ 3h \\ 
    \hline
    \multirow{2}{*}{$\texttt{Hadamard}(4)$} & Fidelity  & 0.996078 & 0.996083 & 0.894883 & 0.998071  & 0.997389 \\
                                       & Time (secs)  & 0.852895 & 2.368223 & 25.15520 & 2087.540  & 4613.964 \\ 
    \hline
    \multirow{2}{*}{$\texttt{Random}(4)$} & Fidelity  & 0.998850 & 0.998876 & 0.152971 & 0.984164  & 0.972877 \\
                                       & Time (secs)  & 0.713302 & 2.380326 & 26.18863 & 2185.091  & 4802.495 \\ 
    \hline
    \multirow{2}{*}{$\texttt{GHZ}(5)$} & Fidelity  & 0.992105 & 0.992106 & 0.132725 & 0.274665  & 0.005138 \\
                                       & Time (secs)  & 0.946350 & 3.287358 & 395.3379 & $>$ 3h & $>$ 3h \\ 
    \hline
    \multirow{2}{*}{$\texttt{Hadamard}(5)$} & Fidelity  & 0.992102 & 0.992100 & 0.869603 & 0.998246  & 0.996516 \\
                                       & Time (secs)  & 1.183290 & 3.895312 & 79.39444 & 9319.140  & $>$ 3h \\ 
    \hline
    \multirow{2}{*}{$\texttt{Random}(5)$} & Fidelity  & 0.995126 & 0.995109 & 0.015913 & 0.623273  & 0.086777 \\
                                       & Time (secs)  & 0.988173 & 3.407487 & 79.22450 & 9275.836  & $>$ 3h \\ 
    \hline
    \multirow{2}{*}{$\texttt{GHZ}(6)$} & Fidelity  & 0.984352 & 0.984340 & 0.089355 & 0.437323  & 0.310067 \\
                                       & Time (secs)  & 3.829866 & 13.306954 & 1167.985 & $>$ 3h  & $>$ 3h \\ 
    \hline
    \multirow{2}{*}{$\texttt{Hadamard}(6)$} & Fidelity  & 0.984384 & 0.984377 & 0.842515 & 0.990849 & 0.998077 \\
                                       & Time (secs)  & 2.500354 & 8.661999 & 246.0011 & $>$ 3h  & $>$ 3h \\ 
    \hline
    \multirow{2}{*}{$\texttt{Random}(6)$} & Fidelity  & 0.989543 & 0.989536 & 0.143145 & 0.784873 & 0.302534 \\
                                       & Time (secs)  & 1.991154 & 7.604232 & 237.7037 & $>$ 3h  & $>$ 3h \\ 
    \hline
    \multirow{2}{*}{$\texttt{GHZ}(7)$} & Fidelity  & 0.969174 & 0.969168 & 0.058387 & 0.080648  & N/A \\
                                       & Time (secs)  & 6.174129 & 15.895504 & 3633.082 & $>$ 3h  & $>$ 3h \\ 
    \hline
    \multirow{2}{*}{$\texttt{Hadamard}(7)$} & Fidelity  & 0.969156 & 0.969156 & 0.818174 & 0.996586  & N/A \\
                                       & Time (secs)  & 6.324469 & 16.283301 & 713.9404 & $>$ 3h  & $>$ 3h \\ 
    \hline
    \multirow{2}{*}{$\texttt{Random}(7)$} & Fidelity  & 0.967640 & 0.967619 & 0.141745 & 0.06568  & N/A \\
                                       & Time (secs)  & 6.802577 & 16.594162 & 746.2630 & $>$ 3h  & $>$ 3h \\ 
    \hline
    \multirow{2}{*}{$\texttt{GHZ}(8)$} & Fidelity  & 0.940601 & 0.940600 & 0.0400391 & N/A  & N/A \\
                                       & Time (secs)  & 21.16011 & 36.892739 & $>$ 3h & $>$ 3h  & $>$ 3h \\ 
    \hline
    \multirow{2}{*}{$\texttt{Hadamard}(8)$} & Fidelity  & 0.940638 & 0.940638 & 0.794892 & N/A  & N/A \\
                                       & Time (secs)  & 22.30246 & 41.472961 & 2344.796 & $>$ 3h  & $>$ 3h \\ 
    \hline
    \multirow{2}{*}{$\texttt{Random}(8)$} & Fidelity  & 0.939418 & 0.939416 & 0.050521 & N/A  & N/A \\
                                       & Time (secs)  & 22.81059 & 41.193810 & 2196.259 & $>$ 3h  & $>$ 3h \\ 
    \hline
\end{tabular}
\caption{
Fidelity of reconstruction and computation timings using $\texttt{measpc}=50\%$ and $\texttt{shots}=2048$. Rows correspond to combinations of number of qubits (3 $\sim$ 8), final fidelity within the 3h time limit, and computation time. 
For \texttt{MiFGD}, $\eta=0.001$, $\mu=\tfrac{3}{4}$, $\texttt{tol}=10^{-5}$. 
For \texttt{FGD}, $\eta=0.001$, $\texttt{tol}=10^{-5}$. 
``N/A'' indicates that the method could not complete a single epoch in 3 hour training time limit, and thus could not provide any fidelity result.  
All experiments are run on a NVidia GeForce GTX 1080 TI, $11$GB RAM. 
}
\label{table:qucumber_fidelity_time_2048}
\end{table}

\subsection{The effect of parallelization}

We study the effect of parallelization in running \texttt{MiFGD}.
We parallelize the iteration step across a number of processes, that can be either distributed and network connected, or sharing memory in a multicore environment. 
Our approach is based on Message Passing Interface (MPI) specification \cite{10.1145/169627.169855}, which is the lingua franca for interprocess communication in high performance parallel and supercomputing applications. 
A MPI implementation provides facilities for launching processes organized in a virtual topology and highly tuned primitives for point-to-point and collective communication between them.

We assign to each process a subset of the measurement labels consumed by the parallel computation. 
At each step, a process first computes the local gradient-based corrections due only to its assigned Pauli monomials and corresponding measurements. 
These local gradient-based corrections will then $(i)$ need to be communicated, so that they can be added, and $(ii)$ finally, their sum will be shared across all processes to produce a global update for $U$ for next step. 
We accomplish this structure in MPI using \texttt{MPI\_Allreduce} collective communication primitive with \texttt{MPI\_SUM} as its reduction operator: the underlying implementation will ensure minimum communication complexity for the operation (e.g. $\log p$ steps for $p$ processes organized in a communication ring) and thus maximum performance.\footnote{This communication pattern can alternatively be realized in two stages, as naturally suggested in its structure: $(i)$ first invoke MPI's \texttt{MPI\_Reduce} primitive, with \texttt{MPI\_SUM} as its reduction operator, which results in the element-wise accumulation of local corrections (vector sum) at a single, designated \emph{root} process, and $(ii)$ finally, send a ``copy'' of this sum from \emph{root} process to each process participating in the parallel computation (broadcasting); \texttt{MPI\_Bcast} primitive can be utilized for this latter stage. However, \texttt{MPI\_Allreduce} is typically faster, since its actual implementation is not constrained by the requirement to have the sum available at a specific, \emph{root} process, at an intermediate time point - as the two-stage approach implies.}
We leverage \texttt{mpi4py} \cite{dalcin2011parallel} bindings to issue MPI calls in our parallel Python code.

We conducted our parallel experiments on a server equipped with 4 x E7-4850 v2 CPUs @ 2.30GHz (48/96 physical/virtual cores), 256 GB RAM, using shared memory multiprocessing over multiple cores. 
We experimented with states simulated in QASM and measured taking $8,192$ \texttt{shots};  parallel \texttt{MiFGD} runs with default parameters and using all measurements (\texttt{measpc}=100\%). 
Reported times are wall-clock computation time.
These exclude initialization time for all processes to load Pauli monomials and measurements: we here target parallelizing computation proper in \texttt{MiFGD}.

In our first round of experiments, we investigate the scalability of our approach. We vary the number $p$ of parallel processes ($p=1, 2, 4, 8, 16, 32, 48, 64, 80, 96$), collect timings for reconstructing $\texttt{GHZ}(4)$, $\texttt{Random}(6)$ and $\texttt{GHZ}_{-}(8)$ states and report speedups $T_p/T_1$ we gain from \texttt{MiFGD} in Figure \ref{fig:parallel}(Left panel). 
We observe that the benefits of parallelization are pronounced for bigger problems (here: $n=8$ qubits) and maximum scalability results when we use all physical cores ($48$ in our platform).

Further, we move to larger problems ($n=10$ qubits: reporting on reconstructing $\texttt{Hadamard}(10)$ state) and focus on the effect parallelization to achieving a given level of fidelity in reconstruction. 
In Figure \ref{fig:parallel}(Middle panel) we illustrate the fidelity as a function of the time spent in the iteration loop of \texttt{MiFGD} for ($p=8, 16, 32, 48, 64$): we observe the smooth path to convergence in all $p$ counts which again minimizes compute time for $p=48$. Note that in this case we use \texttt{measpc} = 10\% and $\mu=\frac{1}{4}$.

Finally, in Figure \ref{fig:parallel}(Right panel), we fix the number of processes to $p=48$, in order to minimize compute time and increase the percentage of used measurements to $20\%$ of the total available for $\texttt{Hadamard}(10)$. 
We vary the acceleration parameter, $\mu=0$ (no acceleration) to $\mu=\frac{1}{4}$ and confirm that we indeed get faster convergence times in the latter case while the fidelity value remains the same (i.e. coinciding upper plateau value in the plots). 
We can also compare with the previous fidelity versus time plot, where the same $\mu$ but half the measurements are consumed: more measurements translate to faster convergence times (plateau is reached roughly $25\%$ faster; compare the green line with the yellow line in the previous plot).

\begin{figure*}[ht]
\centering
\includegraphics[width=0.32\textwidth]{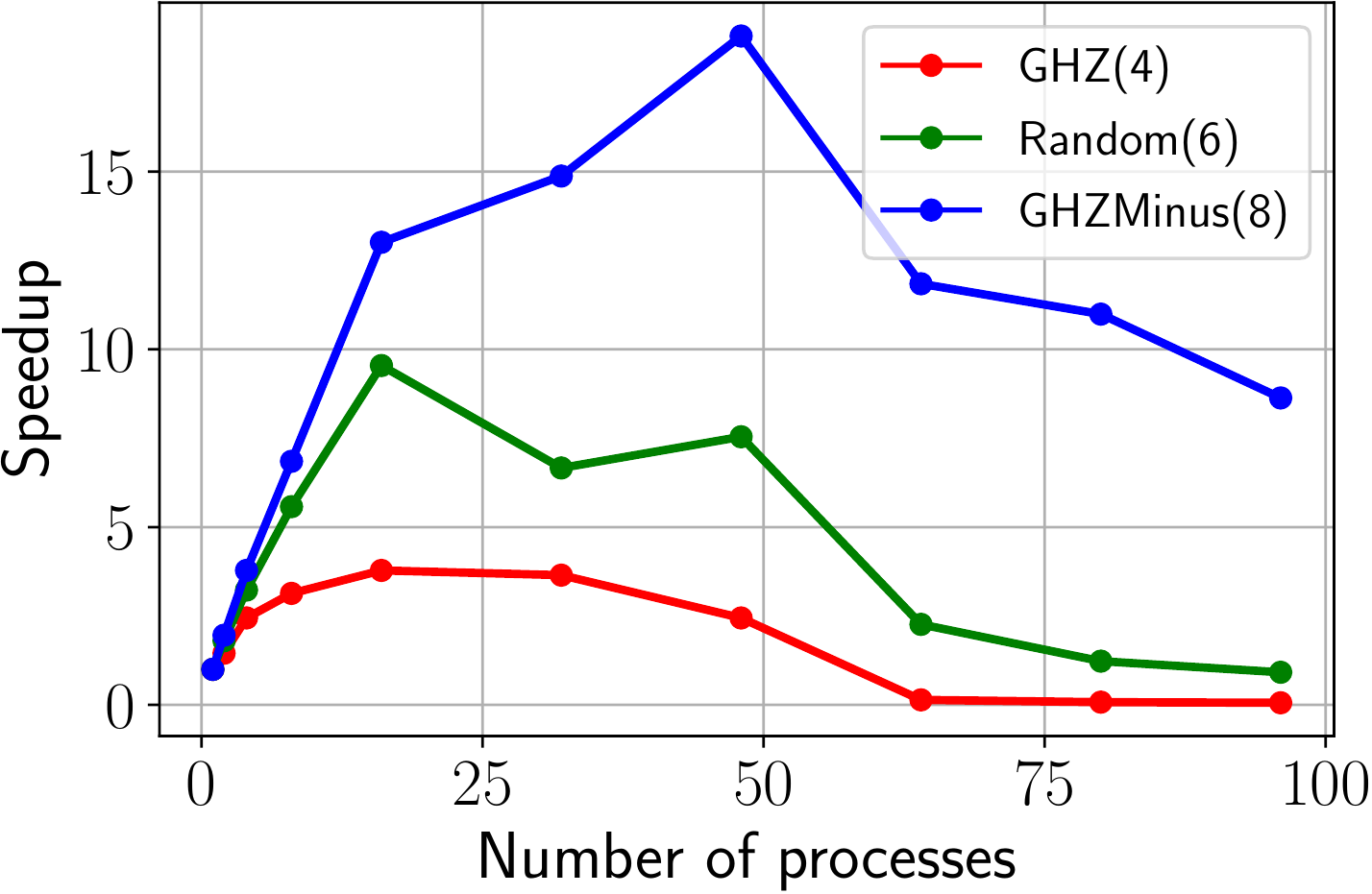} \includegraphics[width=0.31\textwidth]{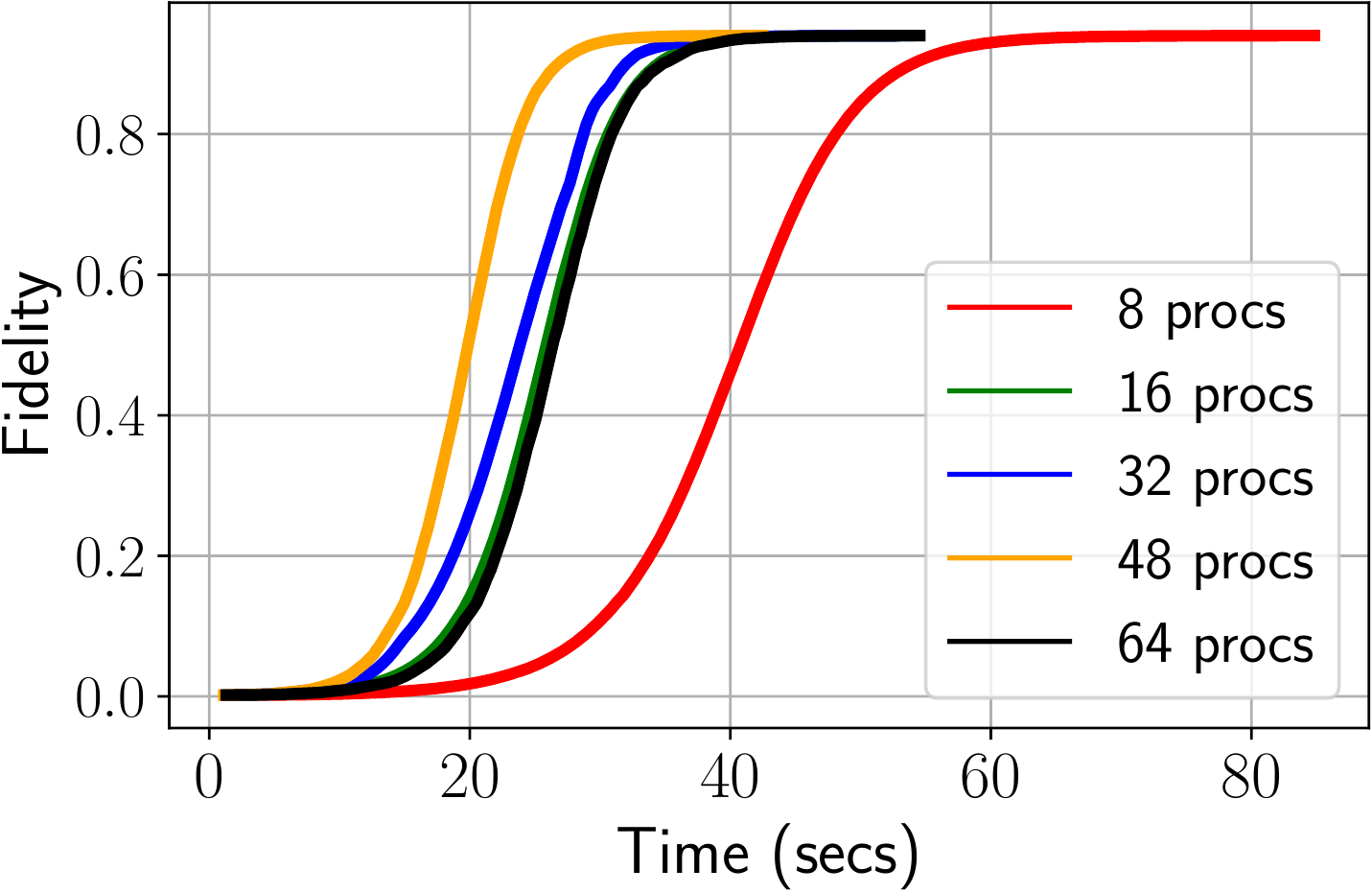}  
\includegraphics[width=0.31\textwidth]{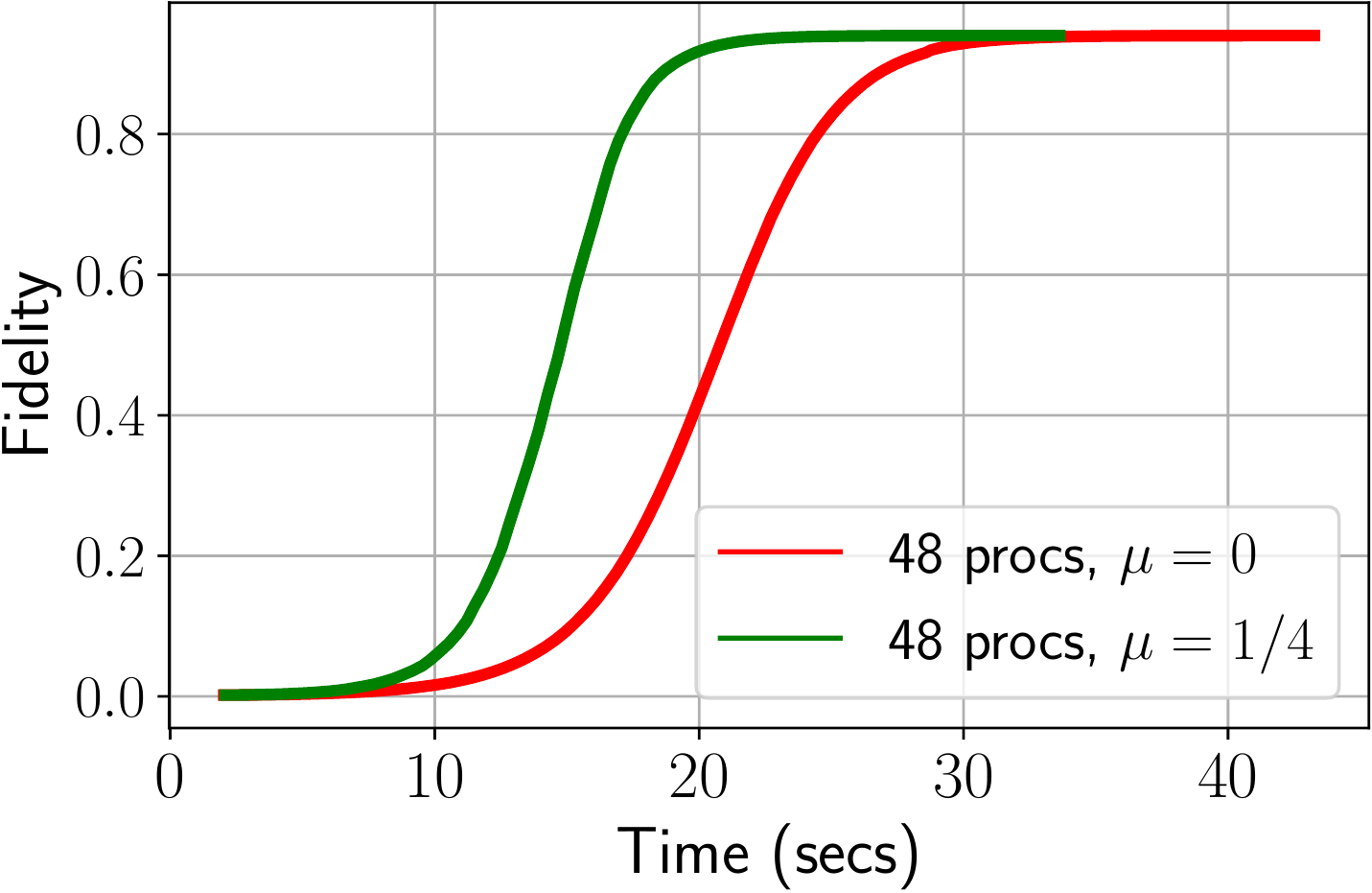}
 \caption{\textbf{Left panel}: Scalability of our approach as we vary the number $p$ of parallel processes. \textbf{Middle panel:} Fidelity function versus time consumed for different number of processes $p$. \textbf{Right panel:} The effect of momentum for a fixed scenario with $\texttt{Hadamard}(10)$ state, $p=48$, and varying momentum from $\mu = 0$ to $\mu = \tfrac{1}{4}$.}
 \label{fig:parallel}
\end{figure*}



 
\section{Discussion}
We have introduced the \texttt{MiFGD} algorithm for the factorized form of the low-rank QST problems. 
We proved that, under certain assumptions on the problem parameters, \texttt{MiFGD} converges linearly to a neighborhood of the optimal solution, whose size depends on the momentum parameter $\mu$, while using acceleration motions in a non-convex setting.
We demonstrate empirically, using both simulated and real data, that \texttt{MiFGD} outperforms non-accelerated methods on both the original problem domain and the factorized space, contributing to recent efforts on testing QST algorithms in real quantum data \cite{riofrio2017experimental}.
These results expand on existing work in the literature illustrating the promise of factorized methods for certain low-rank matrix problems. 
Finally, we provide a publicly available implementation of our approach, compatible to the open-source software Qiskit \cite{qiskit}, where we further exploit parallel computations in \texttt{MiFGD} by extending its implementation to enable efficient, parallel execution over shared and distributed memory systems.

Despite our theory does not apply to the Pauli basis measurement  directly (i.e., \emph{using randomly selected Pauli bases $\Pi_\alpha$, does not lead to the $\ell_2$-norm RIP}), 
using the data from random Pauli basis measurements directly could provide excellent tomographic reconstruction with \texttt{MiFGD}.
Preliminary results suggest that only $O(r \cdot \log d)$ random Pauli bases should be taken for a reconstruction, with the same level of accuracy as with $O(r \cdot d \cdot \log d)$ expectation values of random Pauli matrices.
We leave the analysis of our algorithm in this case for future work, along with detailed experiments.

\subsection{Related Work}\label{sec:related}
{\it Matrix sensing.} The problem of low-rank matrix reconstruction from few samples was first studied within the paradigm of convex optimization, using the nuclear norm minimization \cite{recht2010guaranteed, lee2009guaranteed, liu2009interior}. 
The use of non-convex approaches for low-rank matrix recovery---by imposing rank-constraints---has been proposed in \cite{jain2010guaranteed,lee2010admira,kyrillidis2014matrix}. 
In all these works, the convex and non-convex algorithms involve a full, or at least a truncated, singular value decomposition (SVD) per algorithm iteration. 
Since SVD can be prohibitive, these methods are limited to relatively small system sizes. 

Momentum acceleration methods are used regularly in the convex setting, as well as in machine learning practical scenarios  \cite{kingma2014adam, tieleman2012lecture, kingma2014adam, beck2009fast, o2015adaptive, bubeck2015geometric, goh2017why}. 
While momentum acceleration was previously studied in non-convex programming setups, it mostly involve non-convex constraints with a convex objective function \cite{kyrillidis2011recipes, kyrillidis2014matrix, khanna2017iht, xu2018accelerated}; and generic non-convex settings but only considering with the question of whether momentum acceleration leads to fast convergence to a saddle point or to a local minimum, rather than to a global optimum \cite{ghadimi2013stochastic, lee2016gradient, carmon2016accelerated, agarwal2016finding}.

The factorized version for semi-definite programming was popularized in \cite{burer2003nonlinear}. 
Effectively the factorization of a the set of PSD matrices to a product of rectangular matrices results in a non-convex setting. 
This approach have been heavily studied recently, due to  computational and space complexity 
advantages \cite{jain2013provable, chen2015fast, zhao2015nonconvex, zheng2015convergent, tu2016low, park2016provable, park2016non, sun2016guaranteed, bhojanapalli2016dropping, bhojanapalli2016global, park2016finding, ge2017no, hsieh2017non, kyrillidis2017provable}.
None of the works above consider the inclusion and analysis of momentum. 
Moreover, the \emph{Procrustes Flow} approach \cite{zheng2015convergent, tu2016low} uses certain initializations techniques, and thus relies on multiple SVD computations. 
Our approach on the other hand uses a single, unique, top-$r$ SVD computation. 
Comparison results beyond QST are provided in the appendix.

\medskip
\noindent
{\it Compressed sensing QST using non-convex optimization.} 
There are only few works that study non-convex optimization in the context of compressed sensing QST. 
The authors of \cite{shang2017superfast} propose a hybrid algorithm that $i)$ starts with a conjugate-gradient (CG) algorithm in the factored space, in order to get initial rapid descent, and $ii)$ switch over to accelerated first-order methods in the original $\rho$ space, provided one can determine the switch-over point cheaply. 
Using the multinomial maximum likelihood  objective, in the initial CG phase, the Hessian of the objective is computed per iteration (\emph{i.e.}, a $4^n{\times}4^n$ matrix), along with its eigenvalue decomposition. 
Such an operation is costly, even for moderate values of qubit number $n$, and heuristics are proposed for its completion. 
From a theoretical perspective, \cite{shang2017superfast} provide no convergence or convergence rate guarantees.

From a different perspective, \cite{o2016efficient} relies on spectrum estimation techniques \cite{hayashi2002quantum, christandl2006spectra} and the Empirical Young Diagram algorithm \cite{alicki1988symmetry, keyl2005estimating} to prove that $O(r d / \varepsilon)$ copies suffice to obtain an estimate $\hat{\rho}$ that satisfies $\|\hat{\rho} - \rho^\star\|_F^2 \leq \varepsilon$; however, to the best of our knowledge, there is no concrete implementation of this technique to compare with respect to scalability.

Ref. \cite{toth2010permutationally} proposes an efficient quantum tomography protocol by determining the permutationally invariant part of the quantum state. 
The authors determine the minimal number of local measurement  settings, which scales quadratically with the number of qubits. 
The paper determines which quantities have to be measured in order to get the smallest uncertainty possible.
See \cite{moroder2012permutationally} for a more recent work on permutationally invariant tomography.
The method has been tested in a six-qubit experiment in \cite{schwemmer2014experimental}.

Ref. \cite{riofrio2017experimental} presented an experimental implementation of compressed sensing QST of a $n = 7$ qubit system, where only $127$ Pauli basis measurements are available. 
To achieve recovery in practice, the authors proposed a computationally efficient estimator, based on gradient descent method  in the factorized space. 
The authors of~\cite{riofrio2017experimental}  focus on the experimental efficiency of the method, and provide no specific results on the optimization efficiency, neither convergence guarantees of the algorithm. Further, there is no available implementation publicly available.

Similar to~\cite{riofrio2017experimental}, Ref.~\cite{kyrillidis2017provable} also proposes a non-convex projected gradient decent algorithm that works on the factorized space in the QST setting. 
The authors prove a rigorous convergence analysis and show that,  under proper initialization and step-size, the algorithm is guaranteed to converge to the global minimum of the problem, thus  ensuring a provable tomography procedure. 
\emph{Our results extend these results by including acceleration techniques in the factorized space}. 
The key contribution of our work is proving convergence of the proposed algorithm in a {\it linear} rate to the global minimum of the problem, under common assumptions. 
Proving our results required developing a whole set of new techniques, which are not based on a mere extension of existing results.

\medskip
\noindent {\it Compressed sensing QST using convex optimization.}
The original formulation of compressed sensing QST~\cite{gross2010quantum} is based on  convex optimization methods, solving the trace-norm minimization, to obtain an estimation of the low-rank state.
It was later shown \cite{kalev2015quantum} that essentially any convex optimization program can be used to robustly estimate the state. 
In general, there are two drawbacks in using convex optimization optimization in QST. 
Firstly, as the dimension of density matrices grow exponentially in the number of qubits, the search space in convex optimization grows exponentially in the number of qubits. 
Secondly, the optimization requires projection onto the PSD cone at every iteration, which becomes exponentially hard in the number of qubits. 
\emph{We avoid these two drawbacks by working in the factorized space}. 
Using this factorization results in a search space that is substantially smaller than its convex counterpart, and moreover, in a single use of top-$r$ SVD during the entire execution algorithm.
Bypassing  these drawbacks, together with accelerating motions, allows us to estimate quantum states of larger qubit systems than state-of-the-art algorithms.

\medskip
\noindent {\it Full QST using non-convex optimization.} 
The use of non-convex algorithms in QST was studied in the context of full tomography as well. 
By ``full tomography'' we refer to the situation where an informationally complete measurement is performed, so that the input data to the algorithm is of size $4^n$. 
The exponential scaling of the data size restrict the applicability of full tomography to relatively small system sizes. 
In this setting non-convex algorithms which work in the factored space were studied \cite{banaszek1999maximum, paris2001maximum, rehacek2007diluted, gonccalves2012local, siah2013informationally}.
Except of the work \cite{gonccalves2012local}, we are not aware of  theoretical results on the convergence of the proposed algorithm due to the presence of spurious local minima. 
The authors of~\cite{gonccalves2012local}  characterize the local vs. the global behavior of the objective function under the factorization $\rho = UU^\dagger$ and discuss how existing methods fail due to improper stopping criteria or due to the lack of algorithmic convergence results. 
Their work highlights the lack of rigorous convergence results of non-convex algorithms used in full quantum state tomography.
There is no available implementation publicly available for these methods as well.

\medskip
\noindent {\it Full QST using convex optimization.}
Despite the non-scalability of full QST, and the limitation of convex optimization, a lot of research was devoted to this topic. 
Here, we mention only a few notable results that extend  the applicability of full QST using specific techniques in convex optimization. 
Ref~\cite{smolin2012efficient} shows that for given measurement schemes the solution for the maximum likelihood is given by a linear inversion scheme, followed by a projection onto the set of density matrices. 
More recently, the authors of~\cite{hou2016full} used a combination of the techniques of~\cite{smolin2012efficient} with the sparsity of the Pauli matrices and the use of GPUs to perform a full QST of 14 qubits. 
While pushing the limit of full QST using convex optimization, obtaining full tomographic {\it experimental} data for more than a dozen qubits is significantly time-intensive. 
Moreover, this approach is highly centralized, in comparison to our approach that can be distributed. Using the sparsity pattern property of the Pauli matrices and GPUs is an excellent candidate approach to further enhance the performance of non-convex compressed sensing QST.


\medskip
\noindent {\it QST using neural networks.}
Deep neural networks are ubiquitous, with many applications to science and industry.  
Recently, \cite{Torlai2018Neural, beach2019qucumber, torlai2019machine, gao2017efficient} show how machine learning and neural networks can be used to perform QST, driven by experimental data. 
The neural network architecture used is based on restricted Boltzmann machines (RBMs) \cite{sutskever2009recurrent}, which feature a visible and a hidden layer of stochastic binary neurons, fully connected with weighted edges. 
Test cases considered include reconstruction of W state, magnetic observables of local Hamiltonians, the unitary dynamics induced by Hamiltonian evolution. 
Comparison results are provided in the Main Results section. 
Alternative approaches include conditional generative adversarial networks (CGANs) \cite{ahmed2020quantum, ahmed2020classification}: in this case, two dueling neural networks, a generator and a discriminator, learn to generate and identify multi-modal models from data. 

\medskip
\noindent {\it QST for Matrix Product States (MPS).}
This is the case of highly structured quantum states where the state is well-approximated by a MPS of low bond dimension \cite{Cramer2010Efficient, lanyon2017efficient}.
The idea behind this approach is, in order to overcome exponential bottlenecks in the general QST case, we require highly structured subsets of states, similar to the assumptions made in compressed sensing QST. 
MPS QST is considered an alternative approach to reduce the computational and storage complexity of QST.

\medskip
\noindent {\it Direct fidelity estimation.} Rather than focusing on entrywise estimation of density matrices, the direct fidelity estimation procedure focuses on checking how close is the state of the system to a target state, where closeness is quantified by the fidelity metric. Classic techniques require up to $2^n / \epsilon^4$ number of samples, where $\epsilon$ denotes the accuracy of the fidelity term, when considering a general quantum state \cite{paini2019approximate,huang2020predicting}, but can be reduced to almost dimensionality-free $1/\epsilon^2$ number of samples for specific cases, such as stabilizer states \cite{flammia2011direct, da2011practical, kalev2019validating}. Shadow tomography is considered as an alternative and generalization of this technique \cite{aaronson2018shadow, aaronson2019gentle}; however, as noted in \cite{huang2020predicting}, the procedure in \cite{aaronson2018shadow, aaronson2019gentle} requires exponentially long quantum circuits that act collectively on all the copies of the unknown state stored in a quantum memory, and thus has not been implemented fully on real quantum machines.
A recent neural network-based implementation of such indirect QST learning methods is provided here \cite{smith2020efficient}. 

The work in \cite{paini2019approximate,huang2020predicting}, goes beyond simple fidelity estimation, and utilizes random single qubit rotations to learn a minimal sketch of the unknown quantum state by which one that can predict arbitrary linear function of the state. Such methods constitute a favorable alternative to QST as they do not require number of samples that scale polynomially with the dimension; however, this, in turn, implies that these methods cannot be used in general to estimate the density matrix itself.

\section{Methods}
\noindent \textbf{\texttt{MiFGD} algorithm.}
\begin{algorithm}
\begin{algorithmic}
   \STATE {\bfseries Input:} $\mathcal{A}$, $y$,  $r$, $\mu$, and  $\#$ iterations $J$.
   \STATE Set $U_0$ randomly or as in ~\eqref{lem:init}.
\STATE Set  $Z_0=U_0$.
 \STATE Set $\eta$ as in~\eqref{eq:step}.
\FOR{$i=0$ to $J-1$}
\STATE  $U_{i+1} = Z_i  - \eta  \mathcal{A}^\dagger \left( \mathcal{A}(ZZ^\dagger) - y \right) \cdot Z_i$ \\
\STATE   $Z_{i+1} = U_{i+1} + \mu \left(U_{i+1} - U_i\right)$
\ENDFOR
 \STATE {\bfseries Output:} $\rho=U_{J} U_{J}^\dagger$
\end{algorithmic}
\caption{Momentum-Inspired Factored Gradient Descent} \label{alg:algo1}
\vspace{-0.1cm}
\end{algorithm}
Algorithm~\ref{alg:algo1} contains the details of the Momentum-Inspired Factored Gradient Descent. 
The initial point $U_0$ is either randomly selected \cite{bhojanapalli2016global, park2016non}, or set according to Lemma~4 in~\cite{kyrillidis2017provable} to:
\begin{align}\label{lem:init}
\rho_0 = U_0 U_0^\dagger = \Pi_{\mathcal{C}}\big(\tfrac{-1}{1 + \delta_{2r}} \cdot \nabla f(0) \big)=\tfrac{1}{1 + \delta_{2r}}\Pi_{\mathcal{C}}\bigg(\sum_{i=1}^m y_i A_i\bigg)
\end{align}
where $\Pi_{\mathcal{C}}(\cdot)$ is the projection onto the set of PSD matrices, $\delta_{2r}\in(0,1)$ is the RIP constant and $\nabla f(0)$ denoted the gradient of $f$ evaluated at the all-zero matrix. 
Since computing the RIP constants is NP-hard, in practice we compute  $U_0$ through $\rho_0 = \tfrac{-1}{ \widehat{L}} \Pi_{\mathcal{C}}\big(\sum_{i=1}^m y_i A_i\big)$, where $\widehat{L} \in (1,\sfrac{11}{10}]$, see Theorem~\ref{thm:00} below. 
Compared to randomly selecting $U_0$, Eq.~ \ref{lem:init} involves a gradient descent computation and a top-$r$ eigenvalue calculation. 
As for the step size in algorithm~\ref{alg:algo1}, following Lemma~\ref{lem:equiveta} below,  it is set to
\begin{align}
\eta = \tfrac{1}{4 \left( (1 + \delta_{2r})\|Z_0Z_0^\dagger\|_2 +\|\mathcal{A}^\dagger \left(\mathcal{A}(Z_0Z_0^\dagger) - y \right)\|_2 \right)}, \label{eq:step}
\end{align}
where $Z_0 = U_0$. Here as well, in practice we replace $\delta_{2r}$ by $ \widehat{L}$. The step size $\eta$ remains constant at every iteration step of the algorithm, and requires only two top-eigenvalue computations to calculate the spectral norms $\|Z_0Z_0^\dagger\|_2$ and $\|\mathcal{A}^\dagger \big( \mathcal{A}(Z_0 Z_0^\dagger - y\big)\|_2$.  These computations can be efficiently implemented by any off-the-shelf eigenvalue solver, such as the Power Method or the Lanczos method.\\
\noindent
\medskip
We now present the formal convergence theorem, where under certain conditions, \texttt{MiFGD} achieves an accelerated linear rate.

\begin{theorem}[Accelerated convergence rate]{\label{thm:00}}
Assume that $\mathcal{A}$ satisfies the RIP with constant $\delta_{2r} \leq \sfrac{1}{10}$.
Let $U_0$ and $U_{-1}$ be such that \begin{small}$\min_{R \in \mathcal{O}} \|U_0 -U^\star R\|_F, ~\min_{R \in \mathcal{O}} \|U_{-1} -U^\star R\|_F \leq \tfrac{\sqrt{\sigma_r(\rho^\star)}}{10^3 \sqrt{\kappa\tau(\rho^\star)}}$\end{small},
where \begin{small}$\kappa := \tfrac{1 + \delta_{2r}}{1 - \delta_{2r}}$\end{small}, \begin{small}$\tau(\rho) := \tfrac{\sigma_1(\rho)}{\sigma_r(\rho)}$\end{small} for rank-$r$ $\rho$, and \begin{small}$\sigma_i(\rho)$\end{small} is the $i$th singular value of $\rho$. 
Set step size $\eta$ such that 
\begin{align*}
    \left[ 1 - \left( \tfrac{ \sqrt{1+\delta_{2r}} - \sqrt{1-\delta_{2r}} }{(\sqrt{2}+1) \sqrt{1+\delta_{2r}}} \right)^4 \right] \cdot \tfrac{10}{4 \sigma_r (\rho^\star) (1-\delta_{2r})} \leq \eta \leq \tfrac{10}{4 \sigma_r (\rho^\star) (1-\delta_{2r})},
\end{align*}
and the momentum parameter \begin{small}$\mu = \frac{\varepsilon}{2\cdot10^3 r  \tau(\rho^\star) \sqrt{\kappa}},$\end{small} for user-defined $\varepsilon \in (0, 1]$.
For $y = \mathcal{A}(\rho^\star)$ where \texttt{rank}$(\rho^\star)=r,$ \texttt{MiFGD} returns a solution such that
\begin{small}
\begin{align}
\min_{R \in \mathcal{O}} \|U_{J+1} - U^\star R\|_F &\leq
        \left(1-\sqrt{ \tfrac{1-\delta_{2r}}{1+\delta_{2r}} }\right)^{J+1} \left( \min_{R \in \mathcal{O}} \|U_0 - U^\star R\|_F^2 + \min_{R \in \mathcal{O}} \|U_{-1} - U^\star R\|_F^2 \right)^{1/2} \nonumber \\
        &\hspace{1.4cm} +
\xi \cdot |\mu| \cdot \sigma_1(\rho^\star)^{1/2} \cdot r \cdot \left( 1 - \left(1-\sqrt{ \tfrac{1-\delta_{2r}}{1+\delta_{2r}} }\right)^{J+1} \right) \left(1-\sqrt{ \tfrac{1-\delta_{2r}}{1+\delta_{2r}} }\right)^{-1}  \nonumber \\
&\lessapprox \left(1-\sqrt{ \tfrac{1-\delta_{2r}}{1+\delta_{2r}} }\right)^{J+1} \left( \min_{R \in \mathcal{O}} \|U_0 - U^\star R\|_F^2 + \min_{R \in \mathcal{O}} \|U_{-1} - U^\star R\|_F^2 \right)^{1/2} + O(\mu), \label{eq:recursion}
\end{align}
\end{small}
where
$\xi = \sqrt{1 - \tfrac{4\eta \sigma_r(\rho^\star)(1 - \delta_{2r})}{10}}$.
That is, the algorithm has an accelerated linear convergence rate in iterate distances up to a constant proportional to the  momentum parameter $\mu$. 
\end{theorem}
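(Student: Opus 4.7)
The plan is to convert the two-step \texttt{MiFGD} recursion into a contraction on the rotation-invariant distance $D_i := \min_{R \in \mathcal{O}} \|U_i - U^\star R\|_F$, using RIP-based restricted strong convexity/smoothness in the factored space, and then to unroll this contraction so that the accelerated rate and the $O(\mu)$ perturbation term appear as in equation~\eqref{eq:recursion}.

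First, I would fix a minimizing rotation $R_i$ for each $U_i$ and let $e_i := U_i - U^\star R_i$ and $E_i := Z_i - U^\star R_i$. Expanding the update $U_{i+1} = Z_i - \eta \nabla f(Z_i Z_i^\dagger) Z_i$ and using $y = \mathcal{A}(\rho^\star)$ with $\rho^\star = U^\star U^{\star\dagger}$, I would bound $\|U_{i+1} - U^\star R_i\|_F^2$ by opening the square and grouping the cross-term $-2\eta \langle \nabla f(Z_i Z_i^\dagger) Z_i, E_i\rangle$. Here I would invoke a local regularity-type inequality analogous to those in \cite{tu2016low, park2016provable, kyrillidis2017provable}: provided $\|Z_i - U^\star R_i\|_F \lesssim \sqrt{\sigma_r(\rho^\star)}$, the RIP delivers
\begin{equation*}
\langle \nabla f(Z_i Z_i^\dagger) Z_i, E_i \rangle \geq c_1 \, \sigma_r(\rho^\star)(1-\delta_{2r}) \|E_i\|_F^2,
\end{equation*}
together with a smoothness bound $\|\nabla f(Z_i Z_i^\dagger) Z_i\|_F \lesssim (1+\delta_{2r}) \sigma_1(\rho^\star)^{1/2} \|E_i\|_F$. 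Choosing $\eta$ in the stated range then produces a one-step contraction of the form $\|U_{i+1} - U^\star R_i\|_F^2 \leq \bigl(1-c\,\eta\,\sigma_r(\rho^\star)(1-\delta_{2r})\bigr) \|E_i\|_F^2$, and minimizing over $R$ on the left bounds $D_{i+1}^2$.

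Next I would translate this into a coupled recursion involving $D_i$ and $D_{i-1}$: since $Z_i = (1+\mu)U_i - \mu U_{i-1}$, a careful rotation-alignment argument (which has to reconcile the different minimizers $R_i, R_{i-1}$) yields $\|E_i\|_F \leq (1+\mu)D_i + \mu D_{i-1}$ up to cross-corrections absorbable into the $O(\mu)$ term. Combining with the previous display gives a two-term inequality $D_{i+1}^2 \leq \varrho \cdot \bigl((1+\mu)D_i + \mu D_{i-1}\bigr)^2 + O\bigl(\mu^2 \sigma_1(\rho^\star) r^2\bigr)$ with $\varrho < 1$. The associated homogeneous linear recurrence $x_{i+1} = \varrho^{1/2}\bigl((1+\mu)x_i + \mu x_{i-1}\bigr)$ has a characteristic root equal to $1-\sqrt{(1-\delta_{2r})/(1+\delta_{2r})}$ at the tuned $\mu^\star$ and $\eta$ prescribed in the theorem, exactly the promised accelerated contraction. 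Unrolling the recursion and summing the geometric tail of $O(\mu)$ contributions delivers the right-hand side of~\eqref{eq:recursion}.

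The main obstacle will be maintaining an inductive invariant $D_i \leq c\sqrt{\sigma_r(\rho^\star)}$ throughout the run: the momentum step can inflate $\|Z_i - U^\star R\|_F$ relative to $D_i$ and push $Z_i$ out of the basin where restricted strong convexity holds. The initialization bound $D_0, D_{-1} \leq \sqrt{\sigma_r(\rho^\star)}/(10^3\sqrt{\kappa\tau(\rho^\star)})$ and the scaling $\mu \propto \varepsilon/(r\,\tau(\rho^\star)\sqrt{\kappa})$ are calibrated precisely so that this invariant closes under induction; verifying this interplay is the delicate part, since, unlike convex Nesterov acceleration, we cannot appeal to estimate sequences or a monotone potential function, and the $O(\mu)$ residual must remain dominated by the contraction gain at each step. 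Once the invariant is in place, the local RIP-based inequalities apply uniformly and the coupled recurrence argument above closes the proof.
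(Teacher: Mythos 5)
Your proposal follows essentially the same route as the paper's proof: a one-step RIP-based contraction from $Z_i$ to $U_{i+1}$ in the rotation-invariant distance (the paper realizes your ``local regularity inequality'' through the decomposition of the cross term into a restricted-strong-convexity part, bounded via Lemma 5.4 of Tu et al., and a perturbation part), followed by a triangle-inequality reduction of $\min_R\|Z_i - U^\star R\|_F$ to the distances at iterations $i$ and $i-1$ with the rotation-mismatch term $\|U^\star(R_U - R_{U_-})\|_F \leq 2\sigma_1(\rho^\star)^{1/2}r$ producing the $O(\mu)$ additive error, a two-dimensional linear system whose contraction matrix has spectral radius bounded by $(\sqrt{2}+1)\sqrt{\xi} \leq 1-\sqrt{(1-\delta_{2r})/(1+\delta_{2r})}$ under the stated step-size window, and a geometric unrolling. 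The delicate points you flag (reconciling the distinct minimizing rotations and maintaining the basin invariant under the momentum-inflated iterate $Z_i$) are exactly the ones the paper handles via its supporting lemmata, so the plan is sound and matches the paper's argument.
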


The interpretation of the theorem is that the right hand side of Eq.~\eqref{eq:recursion} depends on the initial distances $\min_{R \in \mathcal{O}} \|U_0 - U^\star R\|_F$ and $\min_{R \in \mathcal{O}} \|U_{-1} - U^\star R\|_F$ as in convex optimization, where $\left(1-\sqrt{ \tfrac{1-\delta_{2r}}{1+\delta_{2r}} }\right)$ appear as a contraction constant. In contrast, the contraction factor of vanilla FGD \cite{kyrillidis2017provable} is of the form $\left( 1 - \frac{1-\delta_{2r}}{1+\delta_{2r}}\right)$, ignoring some constants.

As we assume the sensing map $\mathcal{A}$ satisfies RIP, the condition number of $f$ depends on the RIP constants $\delta$ such that $\frac{L}{\mu} \propto \frac{1+\delta}{1-\delta},$ since the eigenvalues of the Hessian of $f$,  $\mathcal{A}^\dagger A(\cdot),$ lie between $1-\delta$ and $1+\delta,$ when restricted to low-rank matrices.  
Hence, \texttt{MiFGD} has better dependency on the (inverse) condition number of $f$ than FGD. 
Such improvement of the dependency on the condition number is called ``acceleration'' in convex optimization \cite{nesterov2013introductory, carmon2016accelerated}.

Thus, assuming that the initial points $U_0$ and $U_{-1}$ are close enough to the optimum, as stated in the theorem, \texttt{MiFGD} decreases its distance to $U^\star$ with an accelerated linear rate, up to an ``error'' level that depends on the momentum parameter $\mu$ and it is bounded by \begin{small}$\frac{1}{2\cdot10^3 r  \tau(\rho^\star) \sqrt{\kappa}}$.\end{small}

Theorem~\ref{thm:00} requires a strong assumption on the momentum parameter $\mu$, which depends on quantities that might not be known \emph{a priori} for general problems. 
However, we note that for the special case of QST, we know these quantities exactly: 
$r$ is the rank of density matrix---thus, for pure states this value is equal to $r=1$; $\tau(\rho^\star)$ is the (rank-restricted) condition number of the density matrix $\rho$---for pure states, $\tau(\rho^\star) = \tfrac{\sigma_1(\rho)}{\sigma_r(\rho)} = \tfrac{\sigma_1(\rho)}{\sigma_1(\rho)} = 1$; and, finally, $\kappa$ is the condition number of the sensing map, where, given the constraint $\delta_{2r} \leq \sfrac{1}{10}$, leads to the following bound: $\kappa \leq \sfrac{11}{9}$.
The above lead to a momentum value $\mu \approx \sfrac{\varepsilon}{2211}$.
However, as we show in the numerical experiments, the theory is conservative; much larger values of $\mu$ lead to stable, improved performance.
Finally, the bound on the condition number in Theorem \ref{thm:00} is not strict, and comes out of the analysis we follow; we point the reader to similar assumptions made where $\tau(\rho^\star)$ is assumed constant $O(1)$ \cite{li2019nonconvex}. 



The detailed proof is provided in the supplementary material.
To the best of our knowledge, this is the first proof for momentum-inspired factorization technique, under common assumptions: both regarding the problem setting, and the assumptions made for its completion. 
The proof differs from state of the art proofs for non-accelerated factored gradient descent: due to the inclusion of the memory term, three different terms --$U_{i+1}, U_i, U_{i-1}$-- need to be handled simultaneously. 
Further, the proof differs from recent proofs on non-convex, but non-factored, gradient descent methods, as in \cite{khanna2017iht}: the distance metric over rotations $\min_{R \in \mathcal{O}} \|Z_i - U^\star R\|_F$, where $Z_i$ includes estimates from two steps in history, is not amenable to simple triangle inequality bounds, and a careful analysis is required.
The analysis requires the design of two-dimensional dynamical systems, where we require to characterize and bound the eigenvalues of a $2\times 2$ contraction matrix.

\section{Acknowledgements}
Anastasios Kyrillidis and Amir Kalev acknowledge funding by the NSF (CCF-1907936).
Anastasios Kyrillidis thanks Mads Mikkelsen for his performance in ``Jagten'' and ``Druk''.

\section{Data availability}
The empirical results were obtained via synthetic and real experiments; the algorithm's implementation is available at \href{https://github.com/gidiko/MiFGD}{https://github.com/gidiko/MiFGD}.
\section{Competing interests}
The authors declare no competing financial or non-financial interests.
\section{Author contribution}
All authors have made substantial contributions to the paper: design of the work, drafting the manuscript, final approval and accountability for all aspects of the work.

\bibliographystyle{plain}
\bibliography{biblio}

\clearpage
\onecolumn

\newpage
\section{Appendix}

\subsection{IBM Quantum system experiments: $\texttt{GHZ}_{-}(6)$ circuit, 2048 \texttt{shots}}

\begin{figure*}[!h]
  \begin{center}
    \includegraphics[width=1\textwidth]{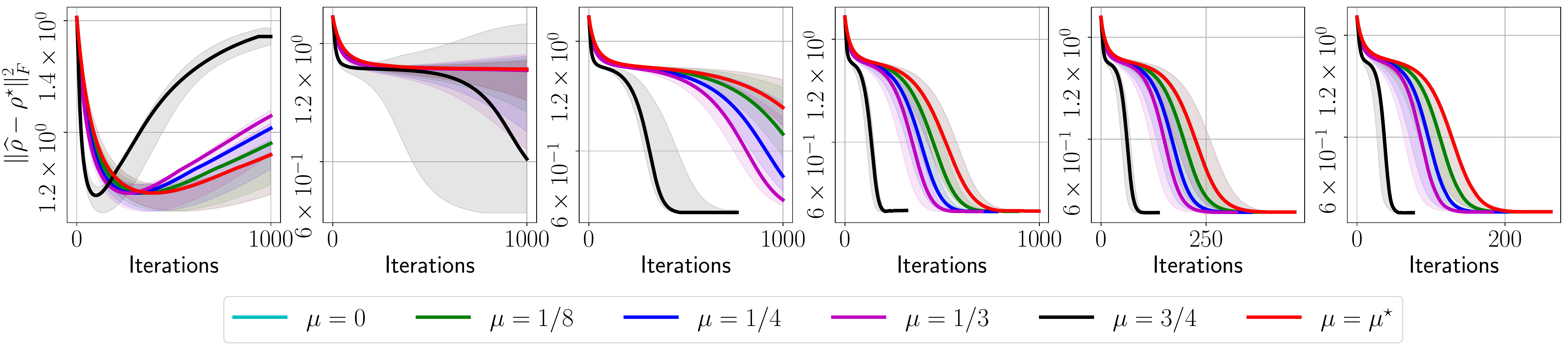}  \vspace{-0.9cm}
    \caption{
      Target error list plots for reconstructing $\texttt{GHZ}_{-}(6)$ circuit using real measurements from IBM Quantum system experiments.
      }
    \label{temp}
  \end{center}
\end{figure*}
\vspace{-0.9cm}
\begin{figure*}[!h]
  \begin{center}
    \includegraphics[width=1\textwidth]{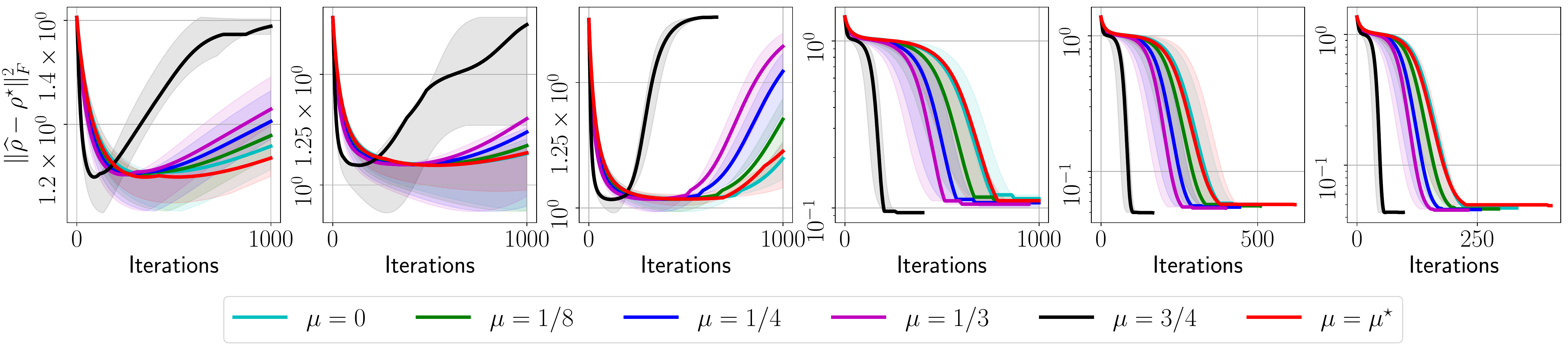}   \vspace{-0.9cm}
    \caption{
      Target error list plots for reconstructing $\texttt{GHZ}_{-}(6)$ circuit using synthetic measurements from IBM's quantum simulator.
      }
    \label{temp}
  \end{center}
\end{figure*}
\vspace{-0.9cm}
\begin{figure*}[!h]
  \begin{center}
    \includegraphics[width=1\textwidth]{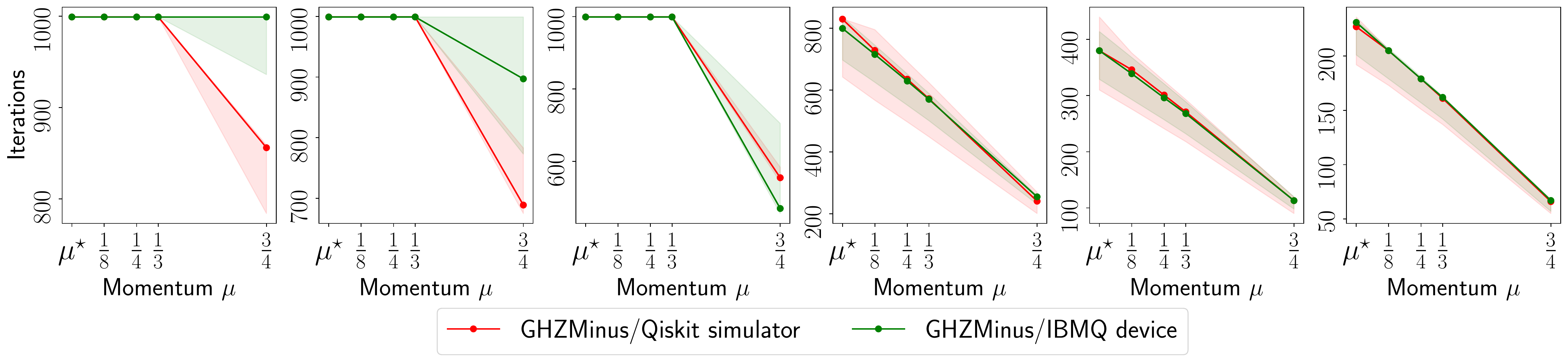}  
    \vspace{-0.9cm}
    \caption{
      Convergence iteration plots for reconstructing $\texttt{GHZ}_{-}(6)$ circuit using using real measurements from IBM Quantum system experiments and synthetic measurements from Qiskit simulation experiments.
      }
    \label{fig:convergence_iteration_6_ghzminus_2048}
  \end{center}
\end{figure*}
\vspace{-0.9cm}
\begin{figure*}[!h]
  \begin{center}
    \includegraphics[width=1\textwidth]{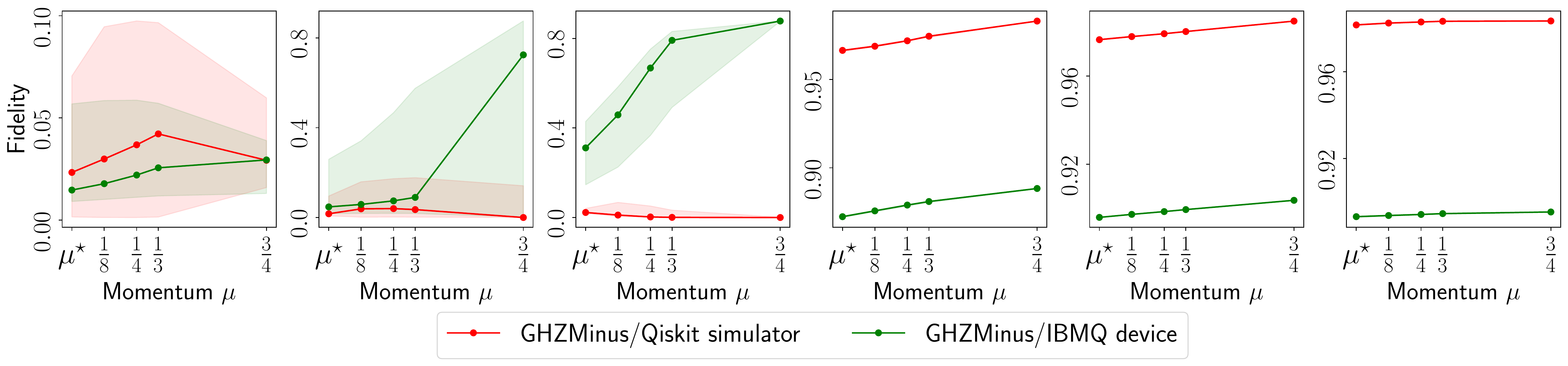}
    \vspace{-0.9cm}
    \caption{
      Fidelity list plots for reconstructing $\texttt{GHZ}_{-}(6)$ circuit using using real measurements from IBM Quantum system experiments and synthetic measurements from Qiskit simulation experiments.
      }
    \label{fig:fidelity_list_6_ghzminus_2048}
  \end{center}
\end{figure*}

\newpage 
\subsection{IBM Quantum system experiments: $\texttt{GHZ}_{-}(6)$ circuit, 8192 \texttt{shots}} 

\begin{figure*}[!h]
  \begin{center}
    \includegraphics[width=1\textwidth]{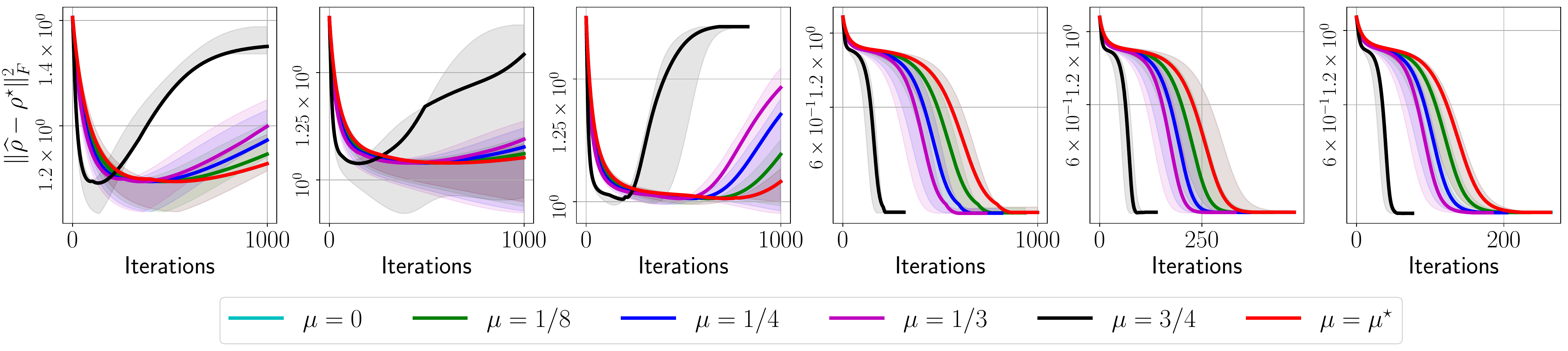}  \vspace{-0.9cm}
    \caption{
      Target error list plots for reconstructing $\texttt{GHZ}_{-}(6)$ circuit using real measurements from IBM Quantum system experiments.
      }
    \label{temp}
  \end{center}
\end{figure*}
\vspace{-0.9cm}
\begin{figure*}[!h]
  \begin{center}
    \includegraphics[width=1\textwidth]{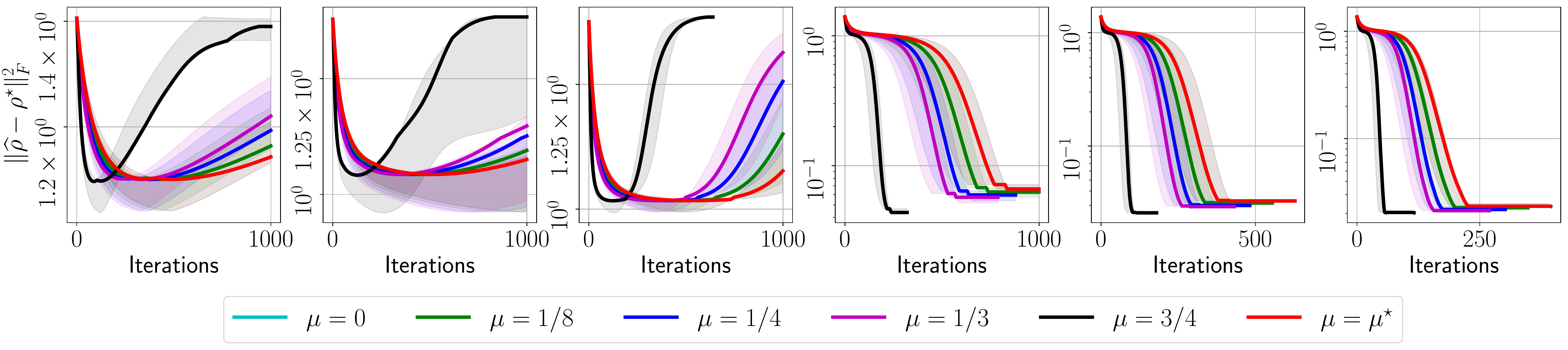}   \vspace{-0.9cm}
    \caption{
      Target error list plots for reconstructing $\texttt{GHZ}_{-}(6)$ circuit using synthetic measurements from IBM's quantum simulator.
      }
    \label{temp}
  \end{center}
\end{figure*}
\vspace{-0.9cm}
\begin{figure*}[!h]
  \begin{center}
    \includegraphics[width=1\textwidth]{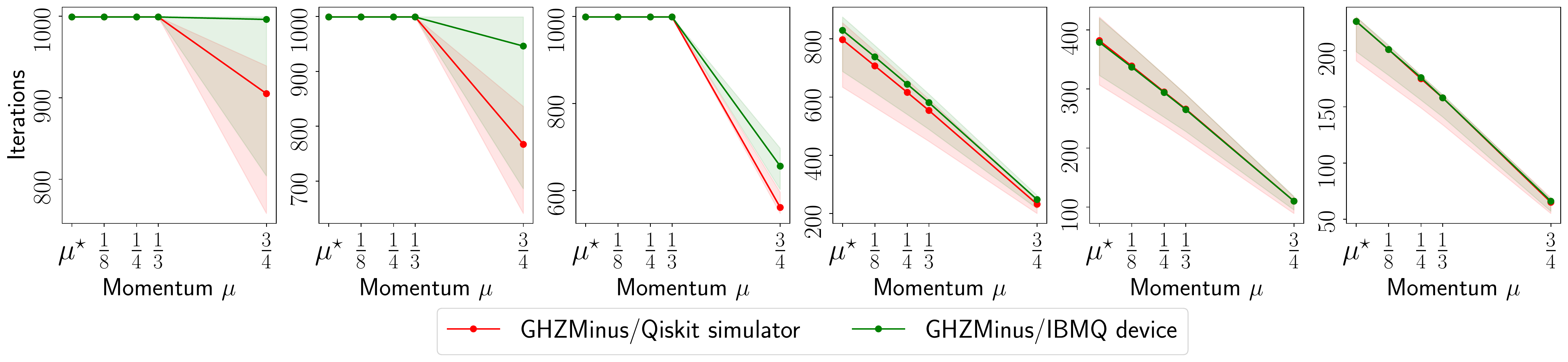}  
    \vspace{-0.9cm}
    \caption{
      Convergence iteration plots for reconstructing $\texttt{GHZ}_{-}(6)$ circuit using using real measurements from IBM Quantum system experiments and synthetic measurements from Qiskit simulation experiments.
      }
    \label{fig:convergence_iteration_6_ghzminus_8192}
  \end{center}
\end{figure*}
\vspace{-0.9cm}
\begin{figure*}[!h]
  \begin{center}
    \includegraphics[width=1\textwidth]{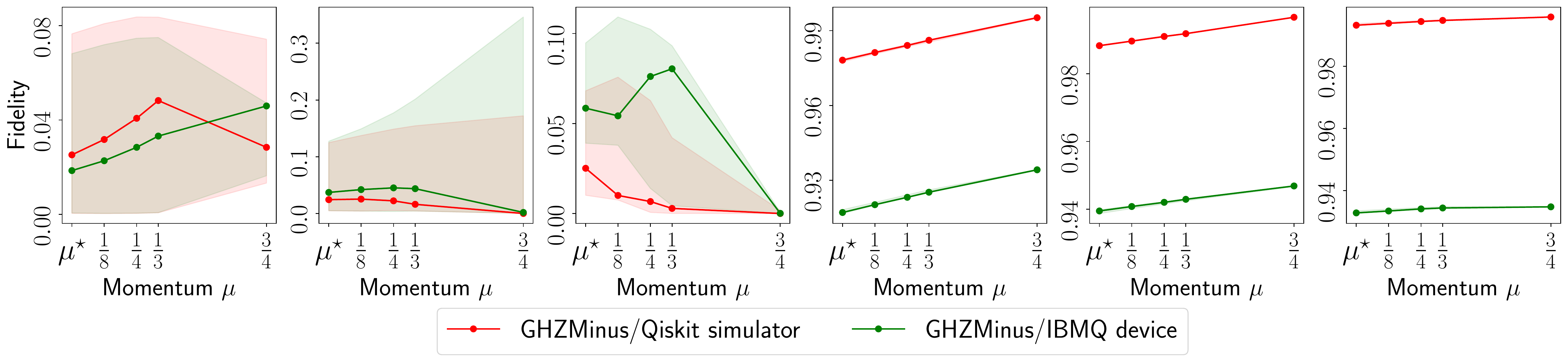}  
    \vspace{-0.9cm}
    \caption{
      Fidelity list plots for reconstructing $\texttt{GHZ}_{-}(6)$ circuit using using real measurements from IBM Quantum system experiments and synthetic measurements from Qiskit simulation experiments.
      }
    \label{fig:fidelity_list_6_ghzminus_8192}
  \end{center}
\end{figure*}

\newpage 
\subsection{IBM Quantum system experiments: $\texttt{GHZ}_{-}(8)$ circuit, 2048 \texttt{shots}} 

\begin{figure*}[!h]
  \begin{center}
    \includegraphics[width=1\textwidth]{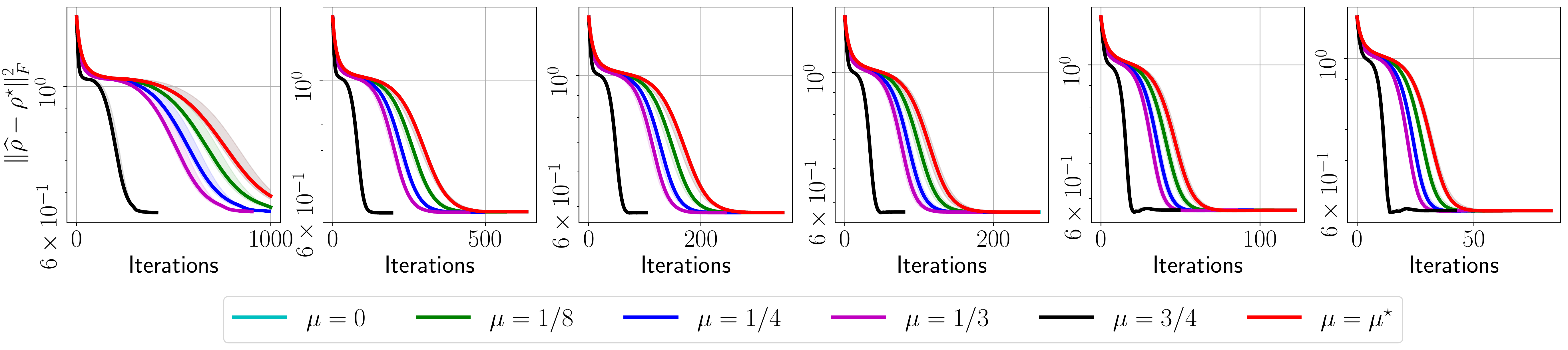}  \vspace{-0.9cm}
    \caption{
      Target error list plots for reconstructing $\texttt{GHZ}_{-}(8)$ circuit using real measurements from IBM Quantum system experiments.
      }
    \label{temp}
  \end{center}
\end{figure*}
\vspace{-0.9cm}
\begin{figure*}[!h]
  \begin{center}
    \includegraphics[width=1\textwidth]{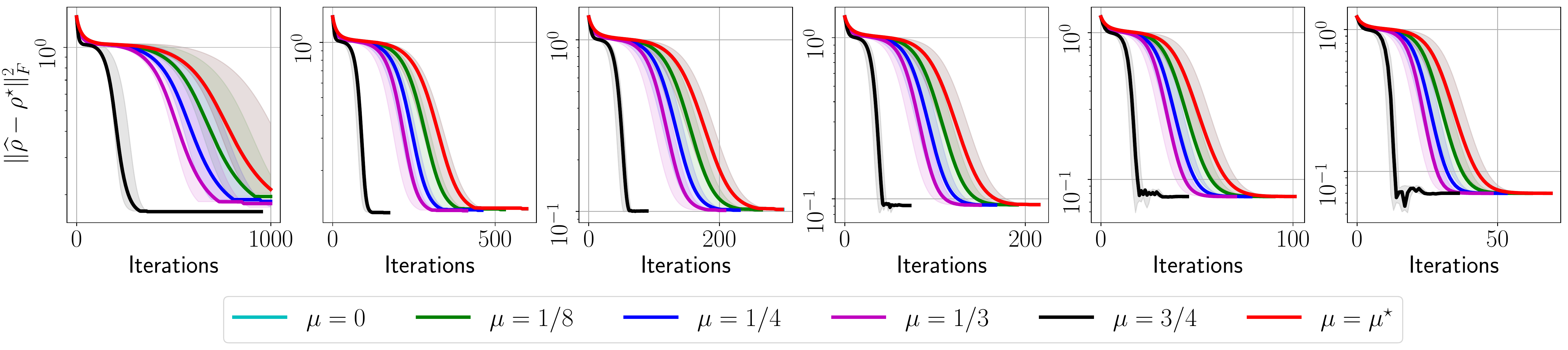}   \vspace{-0.9cm}
    \caption{
      Target error list plots for reconstructing $\texttt{GHZ}_{-}(8)$ circuit using synthetic measurements from IBM's quantum simulator.
      }
    \label{temp}
  \end{center}
\end{figure*}
\vspace{-0.9cm}
\begin{figure*}[!h]
  \begin{center}
    \includegraphics[width=1\textwidth]{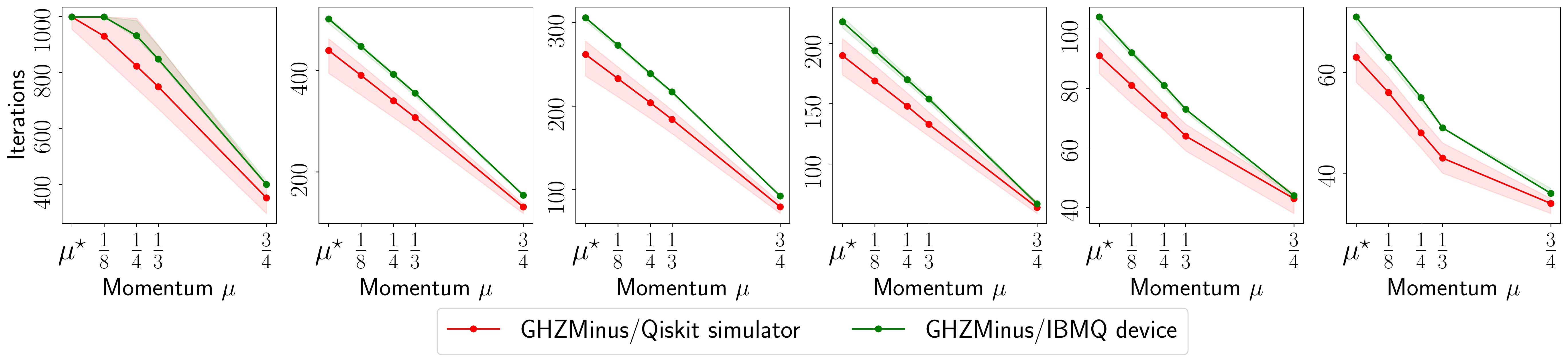}  
    \vspace{-0.9cm}
    \caption{
      Convergence iteration plots for reconstructing $\texttt{GHZ}_{-}(8)$ circuit using using real measurements from IBM Quantum system experiments and synthetic measurements from Qiskit simulation experiments.
      }
    \label{fig:convergence_iteration_8_ghzminus_2048}
  \end{center}
\end{figure*}
\vspace{-0.9cm}
\begin{figure*}[!h]
  \begin{center}
    \includegraphics[width=1\textwidth]{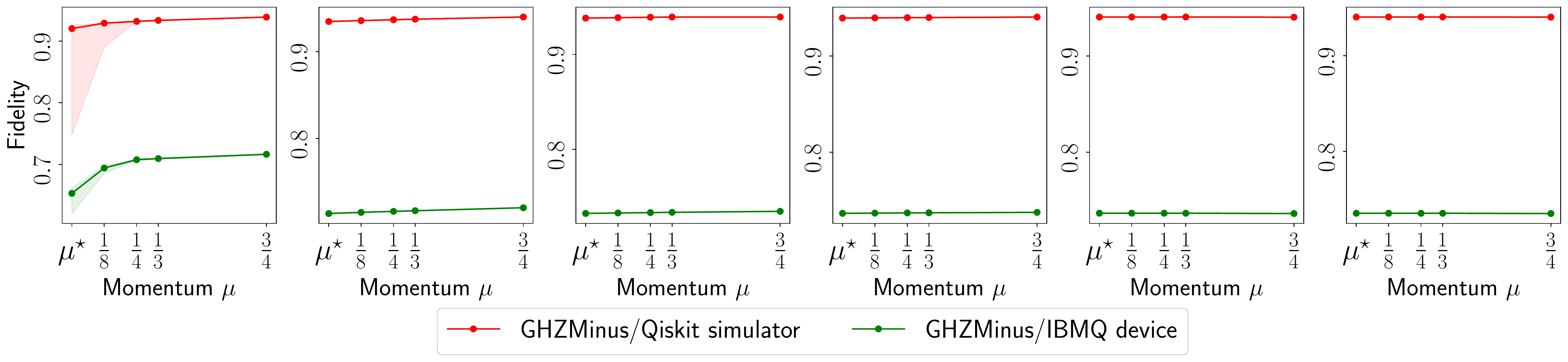}  
    \vspace{-0.9cm}
    \caption{
      Fidelity list plots for reconstructing $\texttt{GHZ}_{-}(8)$ circuit using using real measurements from IBM Quantum system experiments and synthetic measurements from Qiskit simulation experiments.
      }
    \label{fig:fidelity_list_8_ghzminus_2048}
  \end{center}
\end{figure*}

\newpage 
\subsection{IBM Quantum system experiments: $\texttt{GHZ}_{-}(8)$ circuit, 4096 \texttt{shots}} 

\begin{figure*}[!h]
  \begin{center}
    \includegraphics[width=1\textwidth]{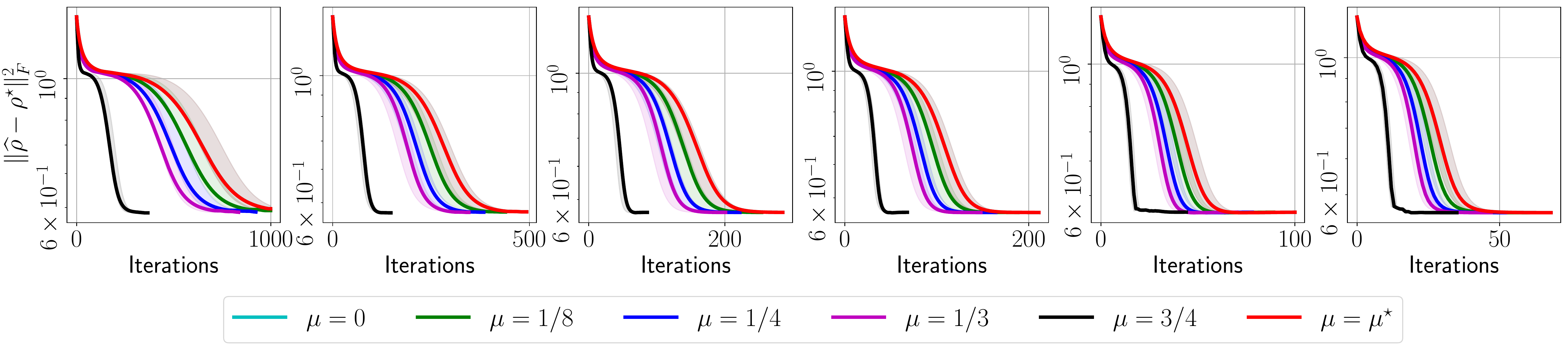}  \vspace{-0.9cm}
    \caption{
      Target error list plots for reconstructing $\texttt{GHZ}_{-}(8)$ circuit using real measurements from IBM Quantum system experiments.
      }
    \label{temp}
  \end{center}
\end{figure*}
\vspace{-0.9cm}
\begin{figure*}[!h]
  \begin{center}
    \includegraphics[width=1\textwidth]{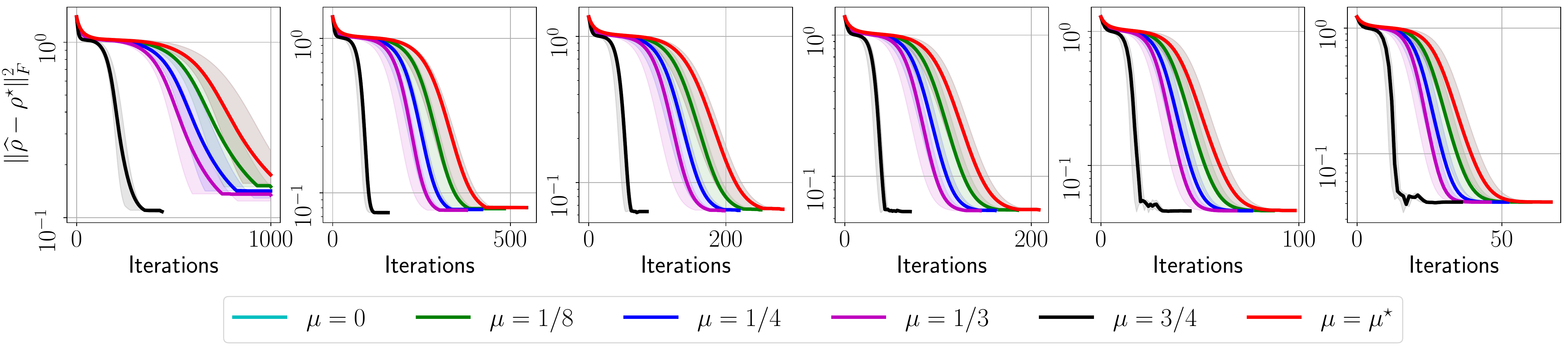}   \vspace{-0.9cm}
    \caption{
      Target error list plots for reconstructing $\texttt{GHZ}_{-}(8)$ circuit using synthetic measurements from IBM's quantum simulator.
      }
    \label{temp}
  \end{center}
\end{figure*}
\vspace{-0.9cm}
\begin{figure*}[!h]
  \begin{center}
    \includegraphics[width=1\textwidth]{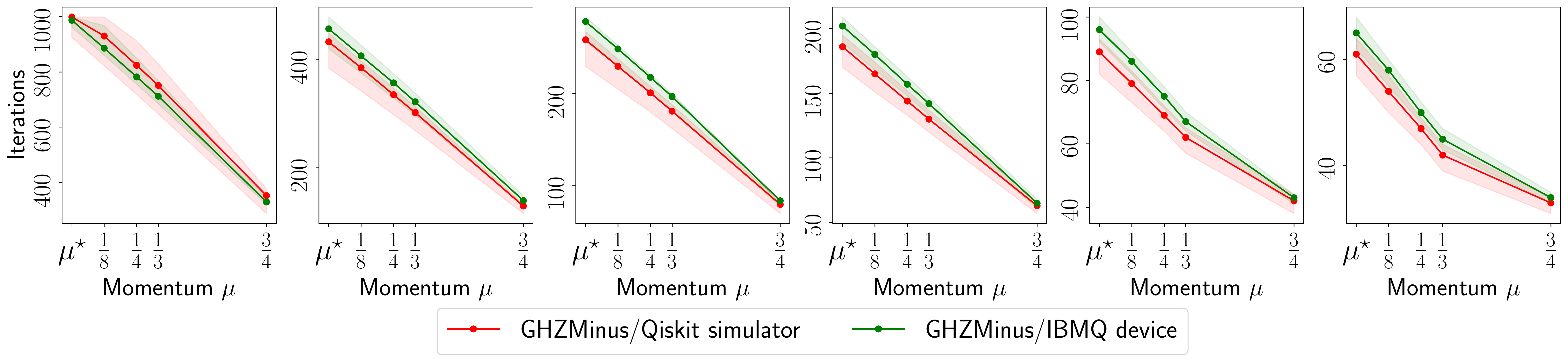}  
    \vspace{-0.9cm}
    \caption{
      Convergence iteration plots for reconstructing $\texttt{GHZ}_{-}(8)$ circuit using using real measurements from IBM Quantum system experiments and synthetic measurements from Qiskit simulation experiments.
      }
    \label{fig:convergence_iteration_8_ghzminus_4096}
  \end{center}
\end{figure*}
\vspace{-0.9cm}
\begin{figure*}[!h]
  \begin{center}
    \includegraphics[width=1\textwidth]{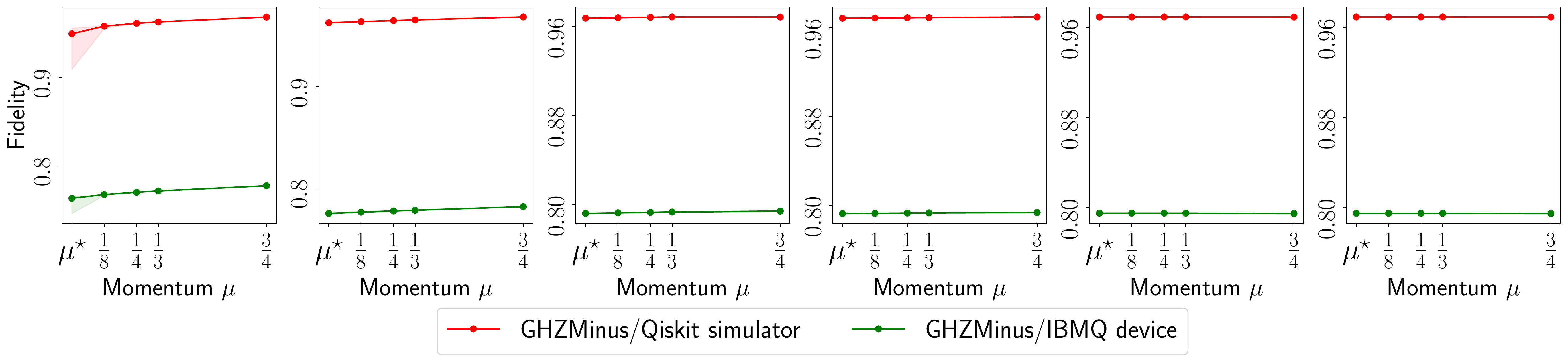}
    \vspace{-0.9cm}
    \caption{
      Fidelity list plots for reconstructing $\texttt{GHZ}_{-}(8)$ circuit using using real measurements from IBM Quantum system experiments and synthetic measurements from Qiskit simulation experiments.
      }
    \label{fig:fidelity_list_8_ghzminus_4096}
  \end{center}
\end{figure*}

\newpage 
\subsection{IBM Quantum system experiments: $\texttt{Hadamard}(6)$ circuit, 8192 \texttt{shots}} 

\begin{figure*}[!h]
  \begin{center}
    \includegraphics[width=1\textwidth]{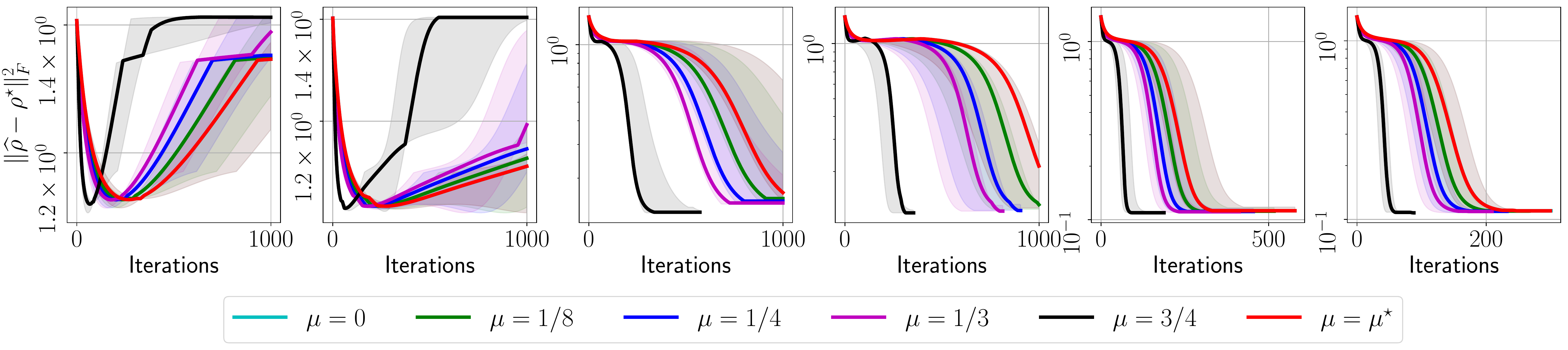}  \vspace{-0.9cm}
    \caption{
      Target error list plots for reconstructing $\texttt{Hadamard}(6)$ circuit using real measurements from IBM Quantum system experiments.
      }
    \label{temp}
  \end{center}
\end{figure*}
\vspace{-0.9cm}
\begin{figure*}[!h]
  \begin{center}
    \includegraphics[width=1\textwidth]{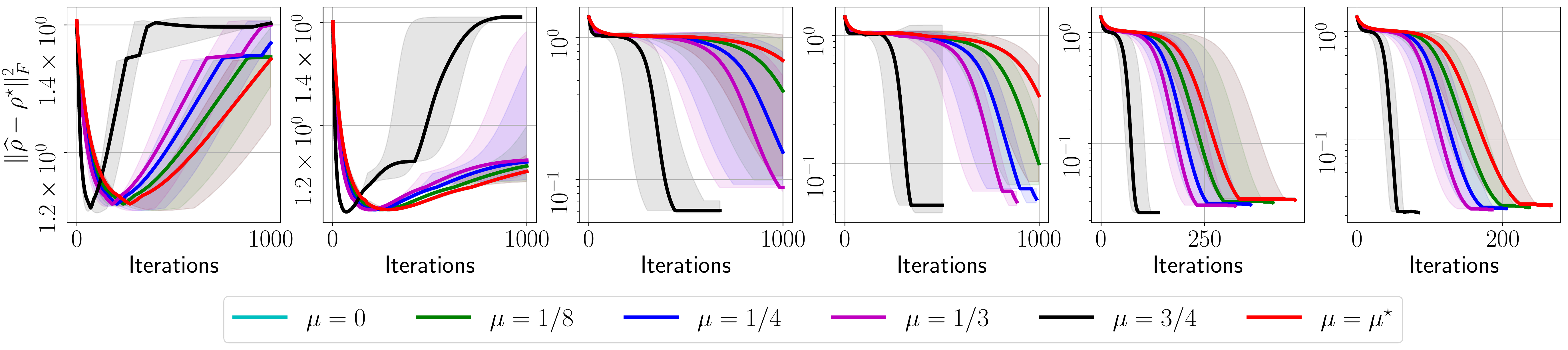}   \vspace{-0.9cm}
    \caption{
      Target error list plots for reconstructing $\texttt{Hadamard}(6)$ circuit using synthetic measurements from IBM's quantum simulator.
      }
    \label{temp}
  \end{center}
\end{figure*}
\vspace{-0.9cm}
\begin{figure*}[!h]
  \begin{center}
    \includegraphics[width=1\textwidth]{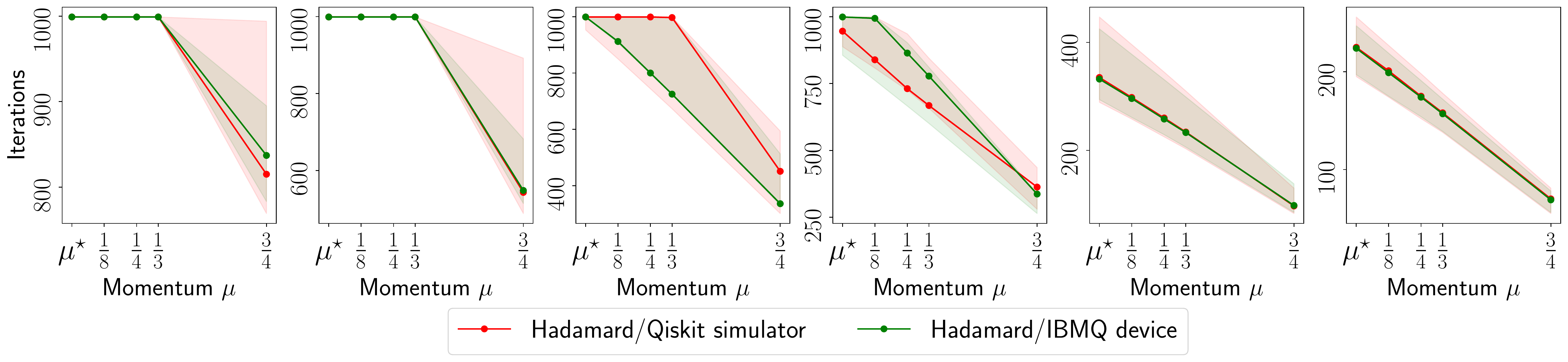}  
    \vspace{-0.9cm}
    \caption{
      Convergence iteration plots for reconstructing $\texttt{Hadamard}(6)$ circuit using using real measurements from IBM Quantum system experiments and synthetic measurements from Qiskit simulation.
      }
    \label{fig:convergence_iteration_6_hadamard_8192}
  \end{center}
\end{figure*}
\vspace{-0.9cm}
\begin{figure*}[!h]
  \begin{center}
    \includegraphics[width=1\textwidth]{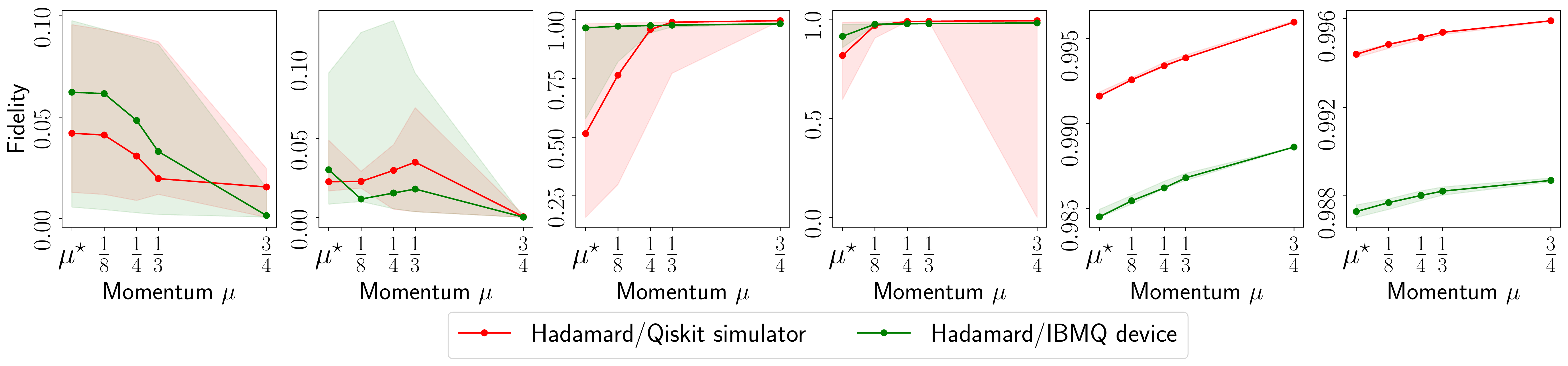}  
    \vspace{-0.9cm}
    \caption{
      Fidelity list plots for reconstructing $\texttt{Hadamard}(6)$ circuit using using real measurements from IBM Quantum system experiments and synthetic measurements from Qiskit simulation experiments.
      }
    \label{fig:fidelity_list_6_hadamard_8192}
  \end{center}
\end{figure*}

\newpage

\subsection{IBM Quantum system experiments: $\texttt{Hadamard}(8)$ circuit, 4096 \texttt{shots}} 

\begin{figure*}[!h]
  \begin{center}
    \includegraphics[width=1\textwidth]{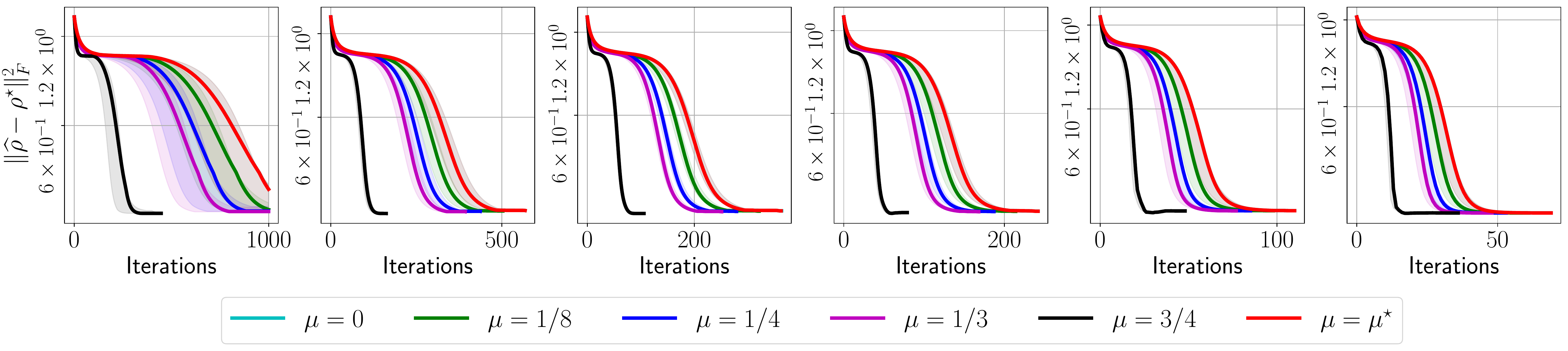}  \vspace{-0.9cm}
    \caption{
      Target error list plots for reconstructing $\texttt{Hadamard}(8)$ circuit using real measurements from IBM Quantum system experiments.
      }
    \label{temp}
  \end{center}
\end{figure*}
\vspace{-0.9cm}
\begin{figure*}[!h]
  \begin{center}
    \includegraphics[width=1\textwidth]{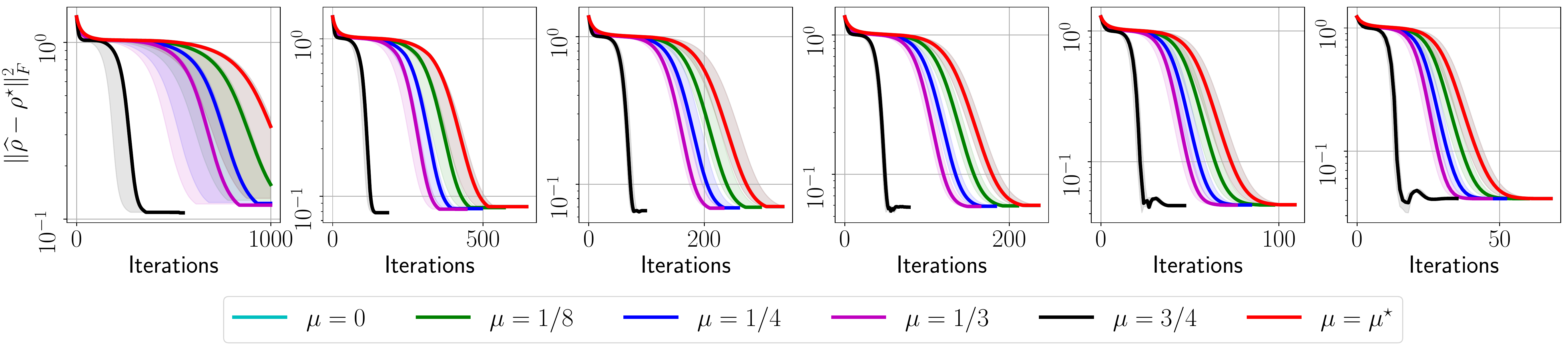}   \vspace{-0.9cm}
    \caption{
      Target error list plots for reconstructing $\texttt{Hadamard}(8)$ circuit using synthetic measurements from IBM's quantum simulator.
      }
    \label{temp}
  \end{center}
\end{figure*}
\vspace{-0.9cm}
\begin{figure*}[!h]
  \begin{center}
    \includegraphics[width=1\textwidth]{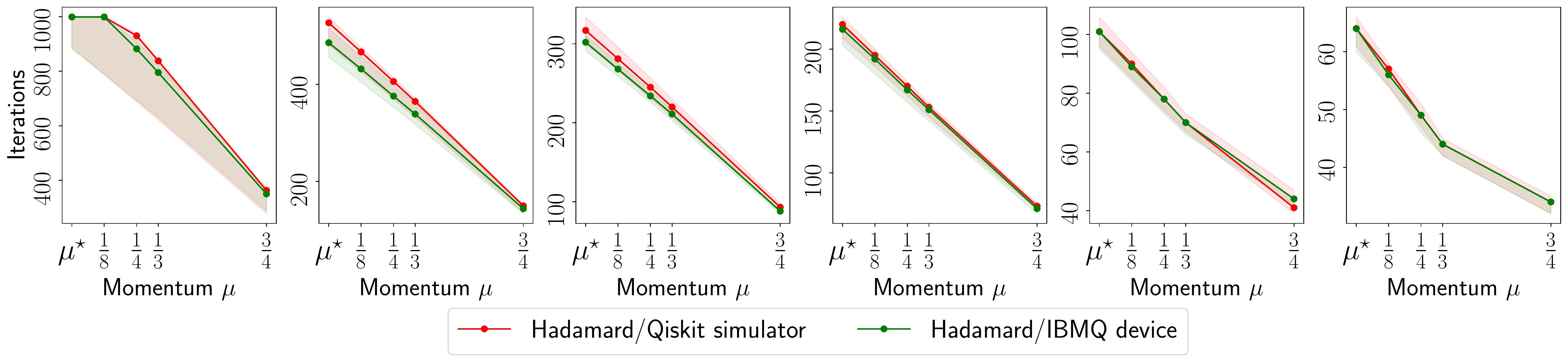}  
    \vspace{-0.9cm}
    \caption{
      Convergence iteration plots for reconstructing $\texttt{Hadamard}(8)$ circuit using using real measurements from IBM Quantum system experiments and synthetic measurements from Qiskit simulation.
      }
    \label{fig:convergence_iteration_8_hadamard_4096}
  \end{center}
\end{figure*}
\vspace{-0.9cm}
\begin{figure*}[!h]
  \begin{center}
    \includegraphics[width=1\textwidth]{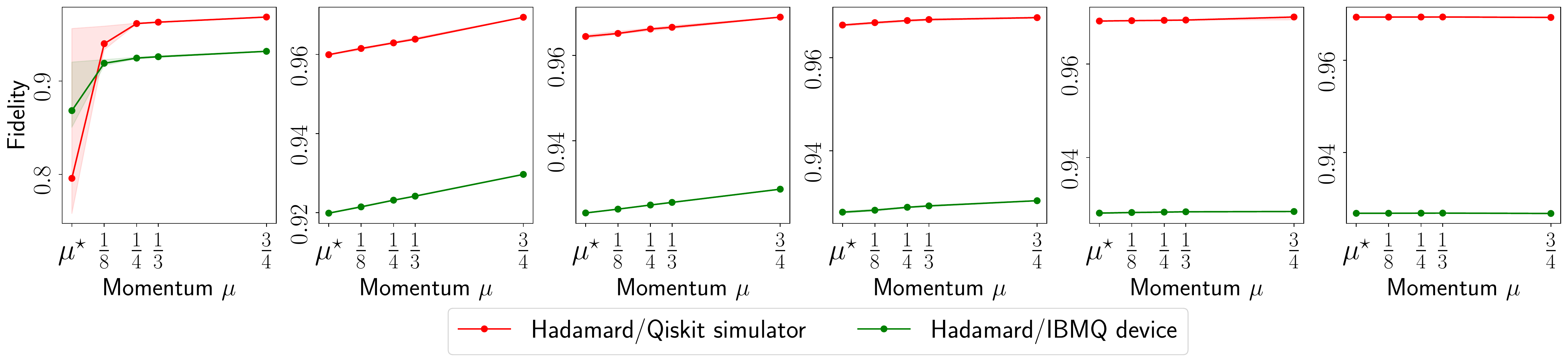}  
    \vspace{-0.9cm}
    \caption{
      Fidelity list plots for reconstructing $\texttt{Hadamard}(8)$ circuit using using real measurements from IBM Quantum system experiments and synthetic measurements from Qiskit simulation experiments.
      }
    \label{fig:fidelity_list_8_hadamard_4096}
  \end{center}
\end{figure*}

\newpage

\subsection{Synthetic experiments for $n = 12$}{\label{sec:synthetic}}
We compare \texttt{MiFGD} with 
$i)$ the \texttt{Matrix ALPS} framework \cite{kyrillidis2014matrix}, \emph{a state of the art projected gradient descent algorithm, and an optimized version of matrix iterative hard thresholding}, operating on the full matrix variable $\rho$, with adaptive step size $\eta$ (we note that this algorithm has outperformed most of the schemes that work on the original space $\rho$; see \cite{kyrillidis2014matrix});
$ii)$ the plain Procrustes Flow/\texttt{FGD} algorithm \cite{tu2016low, bhojanapalli2016dropping, kyrillidis2017provable}, where we use the step size as reported in \cite{bhojanapalli2016dropping}, since the later has reported better performance than vanilla Procrustes Flow.
\emph{We note that the Procrustes Flow/\texttt{FGD} algorithm is similar to our algorithm without acceleration.}
Further, the original Procrustes Flow/\texttt{FGD} algorithm relies on performing many iterations in the original space $\rho$ as an initialization scheme, which is often prohibitive as the problem dimensions grow. 
Both for our algorithm and the plain Procrustes Flow/\texttt{FGD} scheme, we use random initialization.

To properly compare the algorithms in the above list, we pre-select a common set of problem parameters.
We fix the dimension $d = 4096$ (equivalent to $n = 12$ qubits), and the rank of the optimal matrix $\rho^\star \in \mathbb{R}^{d \times d}$ to be $r = 10$ (equivalent to a mixed quantum state reconstruction). 
Similar behavior has been observed for other values of $r$, and are omitted.
We fix the number of observables $m$ to be $m = c \cdot d \cdot r$, where $c \in \{3, 5\}$.
In all algorithms, we fix the maximum number of iterations to 4000, and we use the same stopping criterion: \begin{scriptsize}$\|\rho_{i+1} - \rho_i\|_F / \|\rho_i\|_F \leq \texttt{tol}=10^{-3}$\end{scriptsize}. 
For the implementation of $\texttt{MiFGD}$, we have used the momentum parameter $\mu = \tfrac{2}{3}$, as well as the theoretical $\mu$ value.

\begin{figure*}[ht]
\centering
\includegraphics[width=0.32\textwidth]{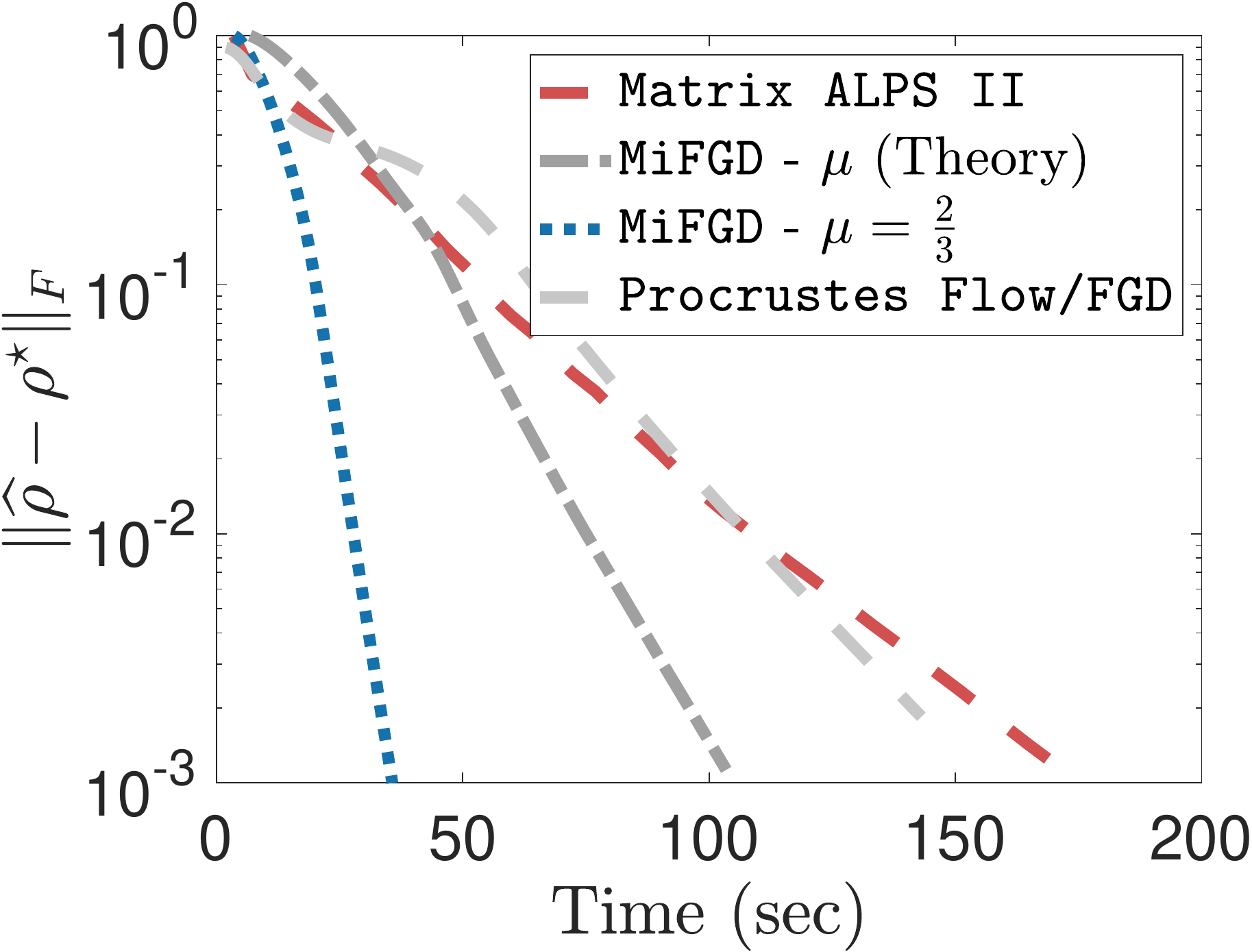} \includegraphics[width=0.31\textwidth]{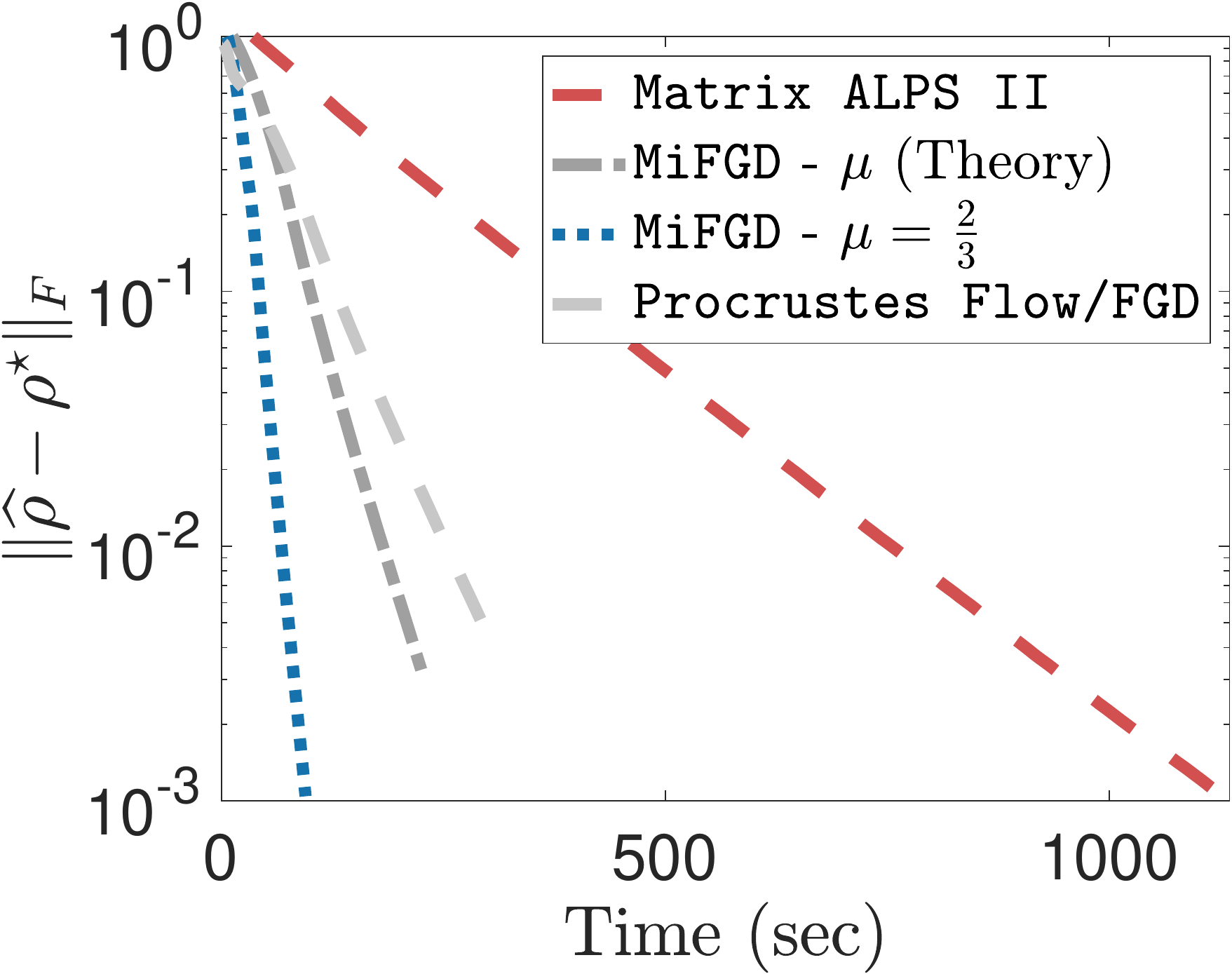}  
\includegraphics[width=0.31\textwidth]{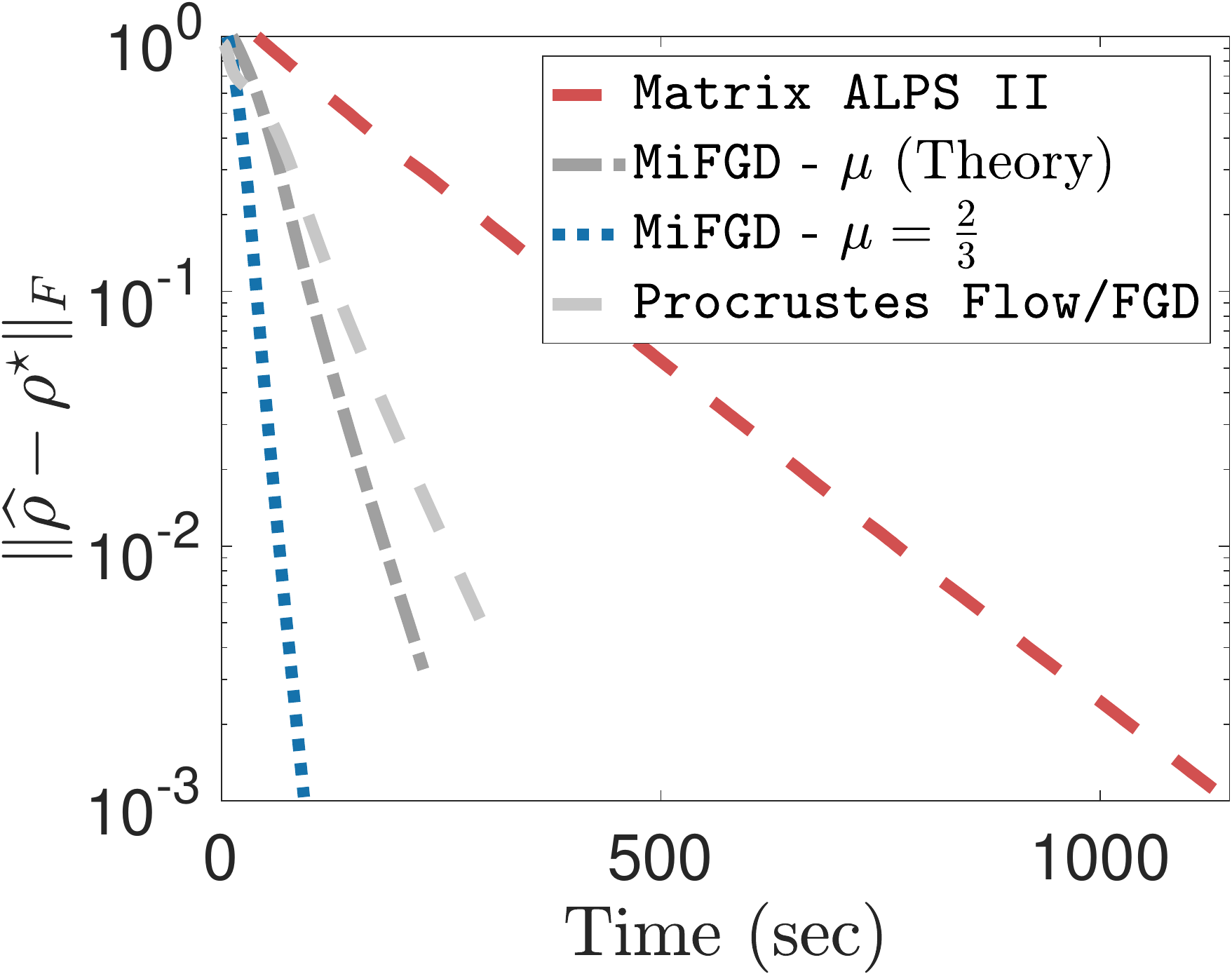} \\ \includegraphics[width=0.32\textwidth]{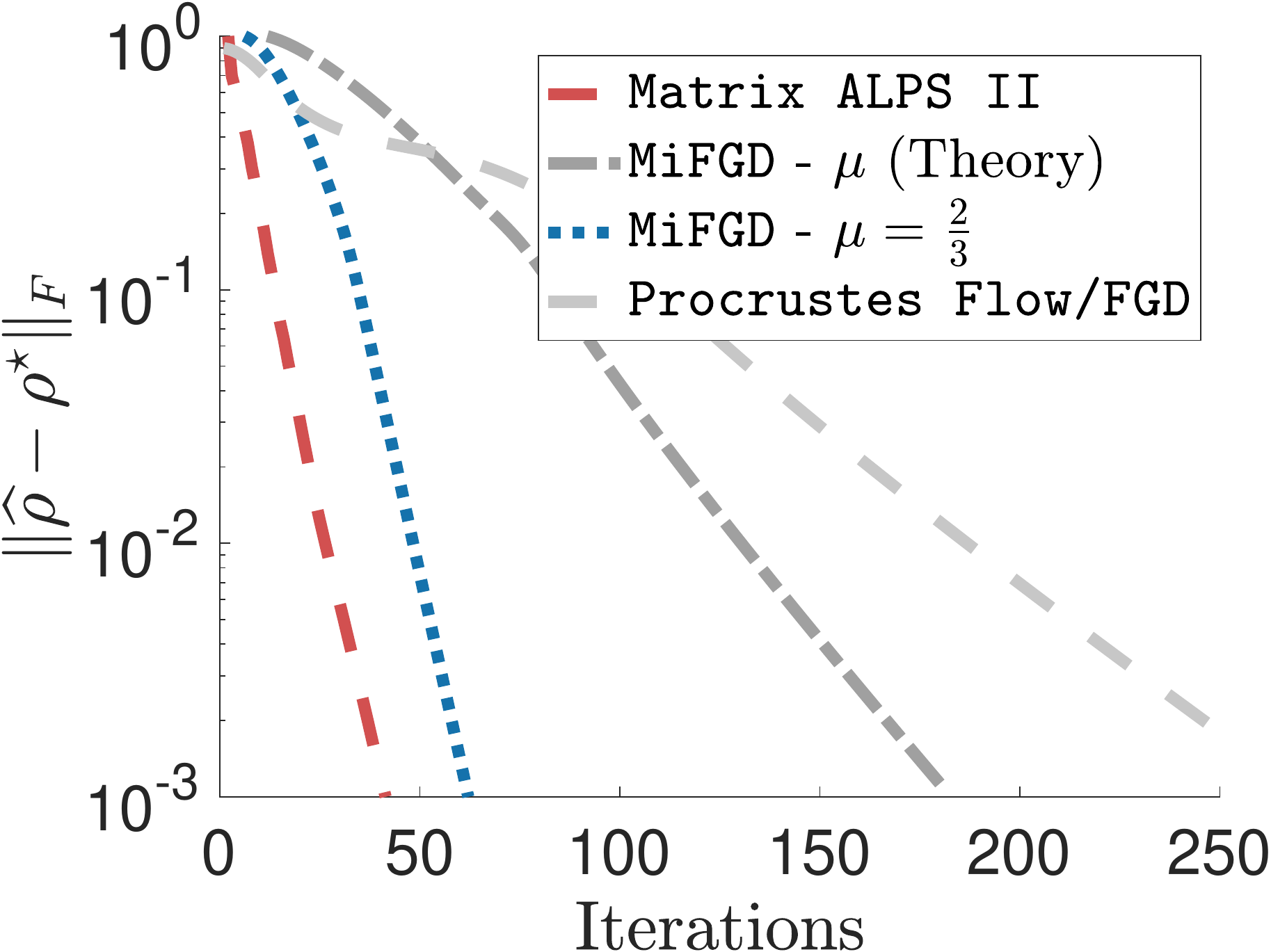} \includegraphics[width=0.32\textwidth]{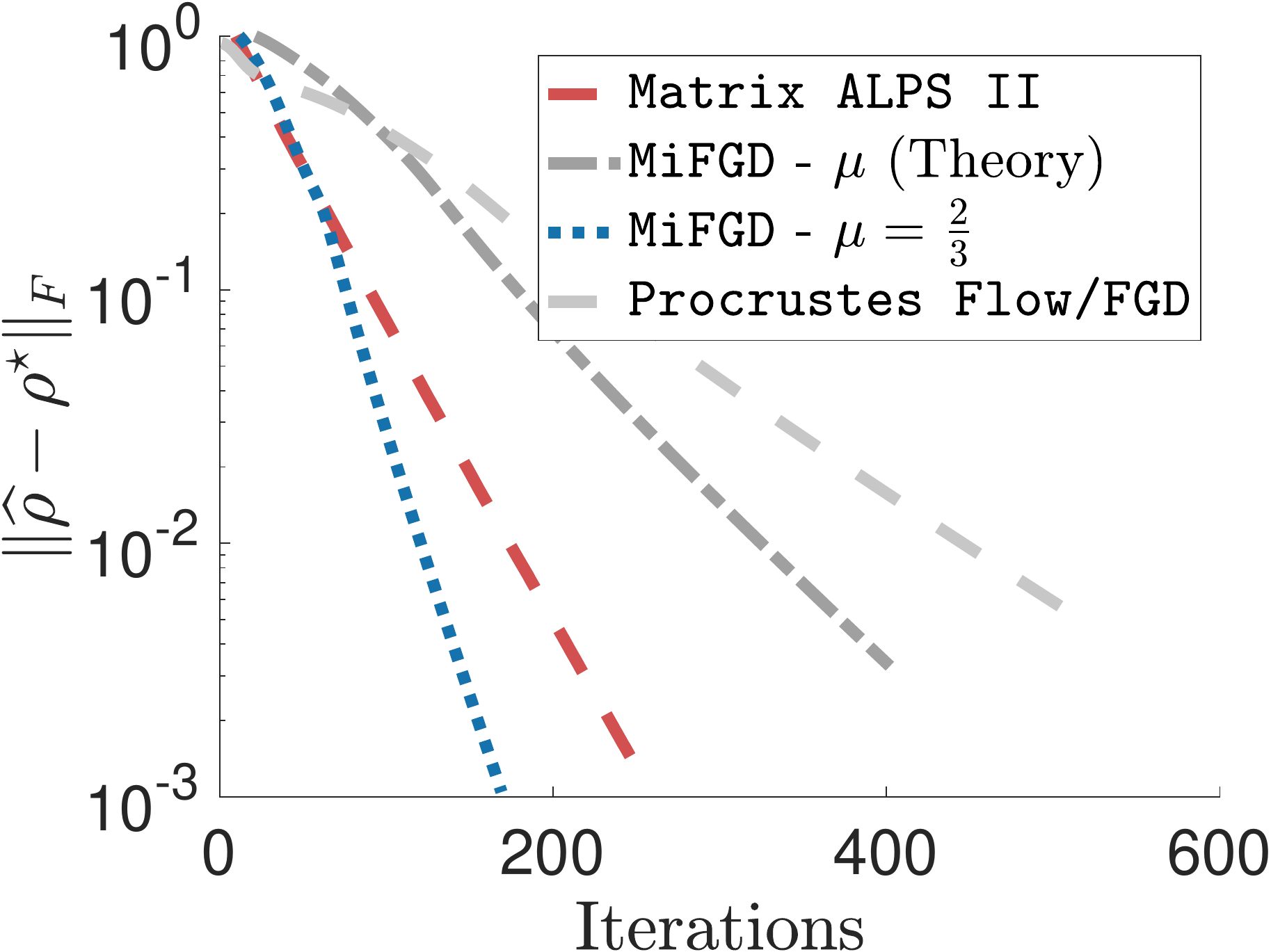} \includegraphics[width=0.32\textwidth]{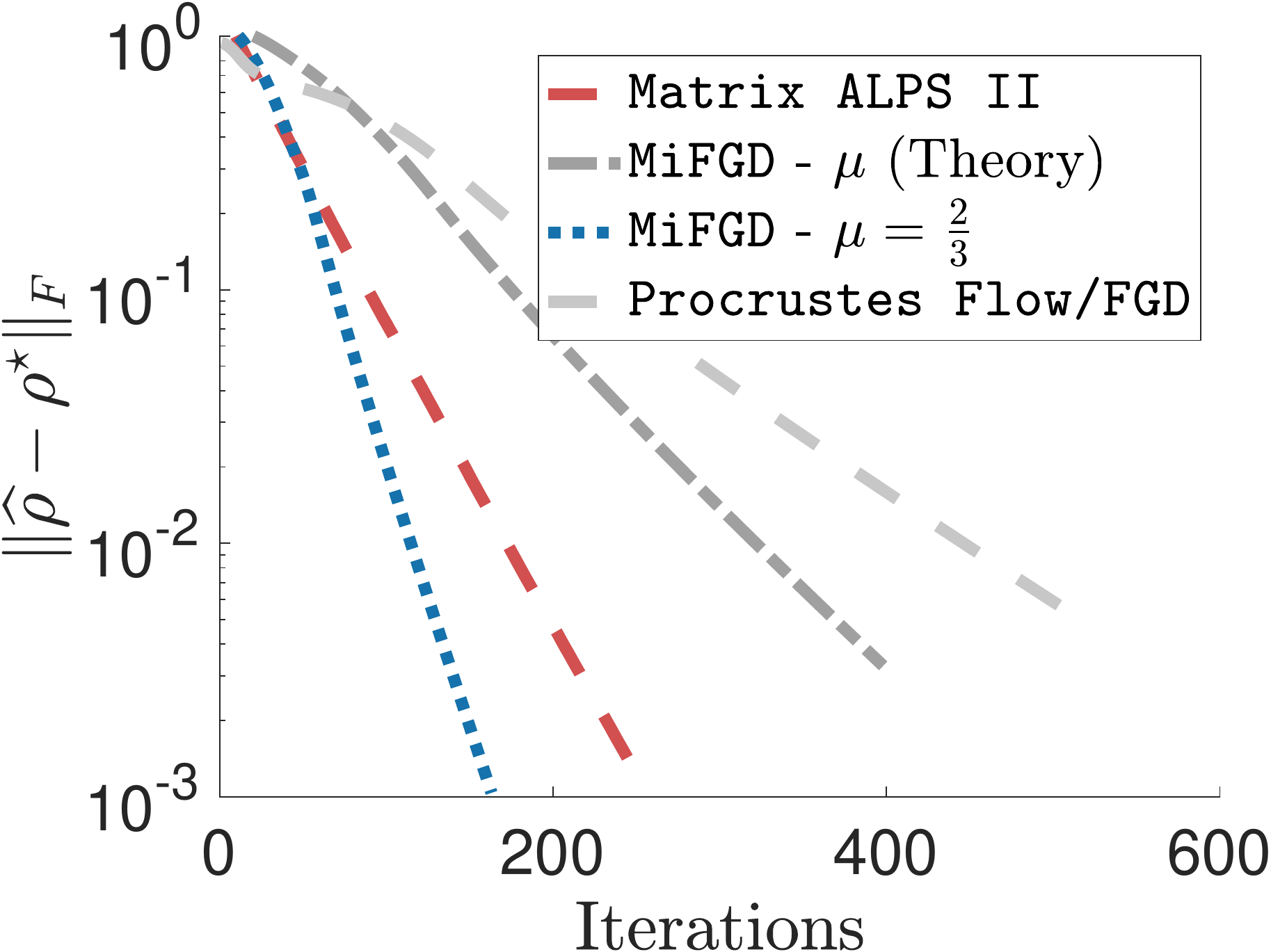} 
 \caption{Synthetic example results on low-rank matrix sensing in higher dimensions (equivalent to $n = 12$ qubits). \textbf{Top row}: Convergence behavior vs. time elapsed. \textbf{Bottom row:} Convergence behavior vs. number of iterations. \textbf{Left panel}: $c = 5$, noiseless case; \textbf{Center panel:} $c = 3$, noiseless case; \textbf{Right panel:} $c = 5$, noisy case, $\|w\|_2 = 0.01$.}
 \label{fig:00}
\end{figure*}

The procedure to generate synthetically the data is as follows:
The observations $y$ are set to
$y = \mathcal{A}\left(\rho^\star\right) + w$ for some noise vector $w$;
while the theory holds for the noiseless case, we show empirically that noisy cases are robustly handled by the same algorithm.
We use permuted and subsampled noiselets for the linear operator $\mathcal{A}$ \cite{waters2011sparcs}. 
The optimal matrix $\rho^\star$ is generated as the multiplication of a tall matrix $U^\star \in \mathbb{R}^{d \times r}$ such that $\rho^\star = U^\star U^{\star \top}$, and $\|\rho^\star \|_F = 1$, without loss of generality. 
The entries of $U^\star$ are drawn i.i.d. from a Gaussian distribution with zero mean and unit variance.
In the noisy case, $w$ has the same dimensions with $y$, its entries are drawn from a zero mean Gaussian distribution with norm $\|w\|_2 = 0.01$.
The random initialization is defined as $U_0$ drawn i.i.d. from a Gaussian distribution with zero mean and unit variance.

The results are shown in Figure \ref{fig:00}.
Some notable remarks: 
$i)$ While factorization techniques might take more iterations to converge compared to non-factorized algorithms, the per iteration time complexity is much less, such that overall, factorized gradient descent converges more quickly in terms of total execution time.
$ii)$ Our proposed algorithm, \emph{even under the restrictive assumptions on acceleration parameter $\mu$}, performs better than the non-accelerated factored gradient descent algorithms, such as Procrustes Flow.
$iii)$ Our theory is conservative: using a much larger $\mu$ we obtain a faster convergence; the proof for less strict assumptions for $\mu$ is an interesting future direction.
In all cases, our findings illustrate the effectiveness of the proposed schemes on different problem configurations.

 \subsection{
 Asymptotic complexity comparison of \texttt{lstsq}, \texttt{CVXPY}, and \texttt{MiFGD}}

{
We first note that \texttt{lstsq} can be only applied to the case we have a full tomographic set of measurements; this makes \texttt{lstsq} algorithm inapplicable in the compressed sensing scenario, where the number of measurements can be significantly reduced. Yet, we make the comparison by providing information-theoretically complete set of measurements to \texttt{lstsq} and \texttt{CVXPY}, as well as to \texttt{MiFGD}, to highlight the efficiency of our proposed method, even in the scenario that is not exactly intended in our work.
Given this, we compare in detail the asymptotic scailing of \texttt{MiFGD} with \texttt{lstsq} and \texttt{CVXPY} below:

\begin{itemize}
    \item \texttt{lstsq} is based on the computation of eigenvalues/eigenvector pairs (among other steps) of a matrix of size equal to the density matrix we want to reconstruct. Based on our notation, the density matrices are denoted as $\rho$ with dimensions $2^n \times 2^n$. Here, $n$ is the number of qubits in the quantum system. Standard libraries for eigenvalue/eigenvector calculations, like LAPACK, reduce a Hermitian matrix to tridiagonal form using the Householder method, which takes overall a $O\left((2^n)^3\right)$ computational complexity. The other steps in the \texttt{lstsq} procedure either take constant time, or $O(2^n)$ complexity. Thus, the actual run-time of an implementation depends on the eigensystem solver that is being used. 
    \item 
    \texttt{CVXPY} is distributed with the open source solvers; for the case of SDP instances, \texttt{CVXPY} utilizes the Splitting Conic Solver (SCS)\footnote{\url{https://github.com/cvxgrp/scs}}, a general numerical optimization package for solving large-scale convex cone problems. 
    SCS applies Douglas-Rachford splitting to a homogeneous embedding of the quadratic cone program. 
    Based on the PSD constraint, this again involves the computation of eigenvalues/eigenvector pairs (among other steps) of a matrix of size equal to the density matrix we want to reconstruct. This takes overall a $O\left((2^n)^3\right)$ computational complexity, \emph{not including the other steps performed within the SCS solver}. This is an iterative algorithm that requires such complexity per iteration. 
    Douglas-Rachford splitting methods enjoy $O(\tfrac{1}{\varepsilon})$ convergence rate in general \cite{he2015convergence, ocpb:16, odonoghue:21}. This leads to a rough $O((2^n)^3 \cdot \tfrac{1}{\varepsilon})$ overall iteration complexity.\footnote{This is an optimistic complexity bound since we have skipped several details within the Douglas-Rachford implementation of \texttt{CVXPY}.}
    \item For \texttt{MiFGD}, and for sufficiently small momentum value, we require $O(\sqrt{\kappa} \cdot \log(\tfrac{1}{\varepsilon}))$ iterations to get close to the optimal value. Per iteration, \texttt{MiFGD} does not involve any expensive eigensystem solvers, but relies only on matrix-matrix and matrix-vector multiplications. In particular, the main computational complexity per iteration origins from the iteration: 
    \begin{align}
U_{i+1} &= Z_{i} - \eta \mathcal{A}^\dagger \left(\mathcal{A}(Z_i Z_i^\dagger) - y\right) \cdot Z_i, \label{eq:MiFGD1}\\
Z_{i+1} &= U_{i+1} + \mu \left(U_{i+1} - U_i\right). \label{eq:MiFGD2}
\end{align}
Here, $U_i, Z_i \in \mathbb{R}^{2^n \times r}$ for all $i$. Observe that $\mathcal{A}(Z_i Z_i^\dagger) \in \mathbb{R}^m$ where each element is computed independently. For an index $j \in [m]$, $(\mathcal{A}(Z_i Z_i^\dagger))_j = \texttt{Tr}(A_j Z_i Z_i^\dagger)$ requires $O((2^n)^2 \cdot r)$ complexity, and thus computing $\mathcal{A}(Z_i Z_i^\dagger) - y$ requires $O((2^n)^2 \cdot r)$ complexity, overall. By definition the adjoing operation $\mathcal{A}^\dagger: \mathbb{R}^m \rightarrow \mathbb{C}^{2^n \times 2^n}$ satisfies: $\mathcal{A}^\dagger(x) = \sum_{i = 1}^m x_i A_i$; thus, the operation $\mathcal{A}^\dagger \left(\mathcal{A}(Z_i Z_i^\dagger) - y\right)$ is still dominated by $O((2^n)^2 \cdot r)$ complexity. Finally, we perform one more matrix-matrix multiplication with $Z_i$, which results into an additional $O((2^n)^2 \cdot r)$ complexity. The rest of the operations involve adding $2^n \times r$ matrices, which does not dominate the overall complexity. Combining the iteration complexity with the per-iteration computational complexity, \texttt{MiFGD} has a $O((2^n)^2 \cdot r \cdot \sqrt{\kappa} \cdot \log(\tfrac{1}{\varepsilon}))$ complexity.
\end{itemize}

Combining the above, we summarize the following complexities:
\begin{align*}
    \underbrace{O((2^n)^3)}_{\texttt{lstsq}} \quad \quad \text{vs} \quad \quad \underbrace{O((2^n)^3 \cdot \tfrac{1}{\varepsilon})}_{\texttt{CVXPY}} \quad \quad \text{vs} \quad \quad \underbrace{O((2^n)^2 \cdot r \cdot \sqrt{\kappa} \cdot \log(\tfrac{1}{\varepsilon}))}_{\texttt{MiFGD}}
\end{align*}

Observe that $i)$ \texttt{MiFGD} has the best dependence on the number of qubits and the ambient dimension of the problem, $2^n$; $ii)$ \texttt{MiFGD} applies to cases that \texttt{lstsq} is inapplicable; $iii)$ \texttt{MiFGD} has a better iteration complexity than other iterative algorithms, while has a better polynomial dependency on $2^n$.
}

\section{Detailed proof of Theorem \ref{thm:00}} 
We first denote $U_{+} \equiv U_{i+1}$, $U \equiv U_i$, $U_- \equiv U_{i - 1}$ and $Z \equiv Z_i$.
Let us start with the following equality. 
For $R_Z \in \mathcal{O}$ as the minimizer of $\min_{R \in \mathcal{O}} \|Z - U^\star R\|_F$,
we have:
\begin{align}
\|U_{+} - U^\star R_Z\|_F^2 &= \|U_{+} - Z + Z - U^\star R_Z\|_F^2 \\
					        &= \|U_{+} - Z\|_F^2 + \|Z - U^\star R_Z\|_F^2  - 2 \langle U_{+} - Z, U^\star R_Z - Z \rangle
\end{align}
The proof focuses on how to bound the last part on the right-hand side.
By definition of $U_{+}$, we get:
\begin{align}
\langle U_{+} - Z, U^\star R_Z - Z \rangle &= \left \langle Z - \eta \mathcal{A}^\dagger\left(\mathcal{A}(ZZ^\dagger) - y \right) Z - Z, U^\star R_Z - Z \right \rangle \\
						          &= \eta \left \langle \mathcal{A}^\dagger\left(\mathcal{A}(ZZ^\dagger) - y \right) Z, Z - U^\star R_Z \right \rangle
\end{align}
Observe the following:
\begin{small}
\begin{align}
\left \langle \mathcal{A}^\dagger\left(\mathcal{A}(ZZ^\dagger) - y \right) Z, Z - U^\star R_Z \right \rangle &= \left \langle \mathcal{A}^\dagger\left(\mathcal{A}(ZZ^\dagger) - y \right), ZZ^\dagger  - U^\star R_Z Z^\dagger \right \rangle \\
																		&= \left \langle \mathcal{A}^\dagger\left(\mathcal{A}(ZZ^\dagger) - y \right), ZZ^\dagger - \tfrac{1}{2} U^\star U^{\star \dagger} + \tfrac{1}{2} U^\star U^{\star \dagger} - U^\star R_Z Z^\dagger \right \rangle \\
																		&= \tfrac{1}{2} \left \langle \mathcal{A}^\dagger\left(\mathcal{A}(ZZ^\dagger) - y \right), ZZ^\dagger - U^\star U^{\star \dagger} \right \rangle \\
																		&\quad \quad + \left \langle \mathcal{A}^\dagger\left(\mathcal{A}(ZZ^\dagger) - y \right), \tfrac{1}{2} (ZZ^\dagger + U^\star U^{\star \dagger}) - U^\star R_Z Z^\dagger \right \rangle \\
																																				&= \tfrac{1}{2} \left \langle \mathcal{A}^\dagger\left(\mathcal{A}(ZZ^\dagger) - y \right), ZZ^\dagger - U^\star U^{\star \dagger} \right \rangle \\
																		&\quad \quad + \tfrac{1}{2} \left \langle \mathcal{A}^\dagger\left(\mathcal{A}(ZZ^\dagger) - y \right), (Z - U^\star R_Z)(Z - U^\star R_Z)^\dagger \right \rangle
\end{align}
\end{small}
By Lemmata \ref{lem:000} and \ref{lem:001}, we have:
\begin{small}
\begin{align}
\|U_+ - U^\star R_Z\|_F^2 &= \|U_{+} - Z\|_F^2 + \|Z - U^\star R_Z\|_F^2  - 2 \langle U_{+} - Z, U^\star R_Z - Z \rangle \\
					       &= \eta^2 \|\mathcal{A}^\dagger\left(\mathcal{A}(ZZ^\dagger) - y \right) Z \|_F^2 + \|Z - U^\star R_Z\|_F^2  \\ 
					       &\quad \quad -\eta \left \langle \mathcal{A}^\dagger\left(\mathcal{A}(ZZ^\dagger) - y \right), ZZ^\dagger - U^\star U^{\star \dagger} \right \rangle \\
					       &\quad \quad \quad \quad - \eta \left \langle \mathcal{A}^\dagger\left(\mathcal{A}(ZZ^\dagger) - y \right), (Z - U^\star R_Z)(Z - U^\star R_Z)^\dagger \right \rangle \\
					       &\leq \eta^2 \|\mathcal{A}^\dagger\left(\mathcal{A}(ZZ^\dagger) - y \right) Z \|_F^2 + \|Z - U^\star R_Z\|_F^2  \\ 
					       &\quad \quad - 1.0656 \eta^2 \left \|\mathcal{A}^\dagger( \mathcal{A}(ZZ^\dagger) - y)  Z \right \|_F^2 - \eta \tfrac{1 - \delta_{2r}}{2} \|U^\star U^{\star \dagger} - ZZ^\dagger\|_F^2 \\
					       &\quad \quad \quad \quad + \eta \Bigg( \theta \sigma_r(\rho^\star) \cdot \|Z - U^\star R_Z\|_F^2 \\ 
					       &\quad \quad \quad \quad \quad \quad + \tfrac{1}{200} \beta^2 \cdot \widehat{\eta} \cdot \tfrac{\left(\tfrac{3}{2} + 2|\mu|\right)^2}{\left(1  - \left(\tfrac{3}{2} + 2|\mu|\right)\tfrac{1}{10^3}\right)^2} \cdot \|\mathcal{A}^\dagger(\mathcal{A}(ZZ^\dagger) - y) \cdot Z\|_F^2\Bigg)
\end{align}
\end{small}

Next, we use the following lemma:
\begin{lemma}\cite[Lemma 5.4]{tu2016low}{\label{lem:tu}}
For any $W, V \in \mathbb{C}^{d \times r}$, the following holds:
\begin{small}
\begin{align}
\|WW^\dagger - VV^\dagger\|_F^2 \geq 2 (\sqrt{2} - 1) \cdot \sigma_r(VV^\dagger) \cdot \min_{R \in \mathcal{O}} \|W - VR\|_F^2.
\end{align}
\end{small}
\end{lemma}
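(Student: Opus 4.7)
The plan is a classical Procrustes-style argument in the spirit of the matrix-factorization literature. First, I absorb the optimal rotation: let $R^\star \in \argmin_{R \in \mathcal{O}} \|W - VR\|_F$ and set $\widetilde{V} := VR^\star$, $\Delta := W - \widetilde{V}$. Because $R^\star R^{\star\dagger} = I$, we have $\widetilde{V}\widetilde{V}^\dagger = VV^\dagger$, so both the left-hand side of the claimed inequality and the quantity $\sigma_r(VV^\dagger)$ are preserved, while the right-hand side becomes $\|\Delta\|_F^2$ by construction.

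Next, I invoke the first-order optimality condition for the orthogonal Procrustes problem on $\mathcal{O}$. Minimizing $\|W - VR\|_F^2$ is equivalent to maximizing $\mathrm{Re}\,\mathrm{tr}(R^\dagger V^\dagger W)$, and the stationarity condition forces $\widetilde{V}^\dagger W = V^\dagger W R^\star$ to coincide with the Hermitian (in fact PSD) factor emerging from the polar decomposition of $V^\dagger W$. Subtracting the Hermitian matrix $P := \widetilde{V}^\dagger \widetilde{V}$ yields two crucial facts: $(i)$ $H := \widetilde{V}^\dagger \Delta$ is Hermitian, and $(ii)$ the PSD constraint $H + P \succeq 0$ holds.

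Third, I expand $\|WW^\dagger - VV^\dagger\|_F^2 = \|\widetilde{V}\Delta^\dagger + \Delta\widetilde{V}^\dagger + \Delta\Delta^\dagger\|_F^2$. Trace cyclicity combined with the Hermiticity of $H$ collapses every cross-term to a real scalar, yielding the algebraic identity
\begin{align*}
\|WW^\dagger - VV^\dagger\|_F^2 \;=\; 2\,\mathrm{tr}(PQ) + 2\|H\|_F^2 + 4\,\mathrm{tr}(HQ) + \|Q\|_F^2,
\end{align*}
where $Q := \Delta^\dagger \Delta$ is PSD with $\mathrm{tr}(Q) = \|\Delta\|_F^2$.

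Finally---and this is where the main obstacle sits---I must lower-bound this expression by $2(\sqrt{2}-1)\,\sigma_r(P)\,\mathrm{tr}(Q)$. The baseline estimate $\mathrm{tr}(PQ) \geq \sigma_r(P)\,\mathrm{tr}(Q)$ is easy; the genuine difficulty is the cross term $\mathrm{tr}(HQ)$, which may be negative. The plan is to use the PSD constraint $H + P \succeq 0$ to obtain $\mathrm{tr}(HQ) \geq -\mathrm{tr}(PQ)$, combine it with a completion of the square in $H$ and $Q$, and reduce the leftover to a one-variable minimization in the scalar $t := \mathrm{tr}(Q)/\sigma_r(P)$. The extremum of that scalar problem sits at $t = \sqrt{2}$, which produces the sharp constant $2(\sqrt{2}-1)$; the extremal configuration $\widetilde{V} = e_1$ and $W = (\sqrt{2}-1)^{1/4}\, e_2$ in $\mathbb{C}^{2\times 1}$ saturates the bound and shows the constant cannot be improved.
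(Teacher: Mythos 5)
First, a point of reference: the paper does not prove this lemma at all --- it is imported verbatim as Lemma~5.4 of the cited Procrustes Flow paper \cite{tu2016low}, so there is no in-paper argument to compare against. Judged on its own, your setup is correct and is exactly the standard route: absorbing the optimal rotation, using Procrustes optimality to conclude that $\widetilde{V}^\dagger W = P + H$ is Hermitian PSD (hence $H$ Hermitian and $H + P \succeq 0$), and the expansion $\|WW^\dagger - VV^\dagger\|_F^2 = 2\,\mathrm{tr}(PQ) + 2\|H\|_F^2 + 4\,\mathrm{tr}(HQ) + \|Q\|_F^2$ all check out.

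The genuine gap is the last step, which is only sketched and does not close as described. The two ingredients you name are each insufficient on their own, and you do not say how they combine: completing the square via $2\|H+Q\|_F^2 \geq 0$ gives $F \geq 2\,\mathrm{tr}(PQ) - \|Q\|_F^2 \geq \mathrm{tr}(Q)\bigl(2\sigma_r(P) - \mathrm{tr}(Q)\bigr)$, which yields the claimed constant only in the regime $\mathrm{tr}(Q) \leq (4-2\sqrt{2})\,\sigma_r(P)$; while the PSD bound $\mathrm{tr}(HQ) \geq -\mathrm{tr}(PQ)$ gives $F \geq 2\|H\|_F^2 - 2\,\mathrm{tr}(PQ) + \|Q\|_F^2$, which is not even manifestly nonnegative. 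Moreover, the "one-variable minimization in $t = \mathrm{tr}(Q)/\sigma_r(P)$" presupposes that the worst case is isotropic; since $P$, $H$, $Q$ need not commute, the constrained minimization over Hermitian $H \succeq -P$ produces $2\,\mathrm{tr}(PQ) - \|Q\|_F^2 + 2\|(P-Q)_-\|_F^2$ (involving the negative part of $P-Q$), and reducing this matrix-valued problem to a scalar one is precisely the missing argument --- it also silently discards the structural coupling $H = \widetilde{V}^\dagger\Delta$, $Q = \Delta^\dagger\Delta$. A correct completion requires either a case split on $\mathrm{tr}(Q)/\sigma_r(P)$ with a separate argument in the large-$\mathrm{tr}(Q)$ regime, or the full eigenvalue analysis of \cite{tu2016low}. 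A minor further error: the extremal configuration is $V = e_1$, $W = (\sqrt{2}-1)^{1/2} e_2$ (so that $\|W\|_F^2 = \sqrt{2}-1$), not $(\sqrt{2}-1)^{1/4} e_2$; with your choice the ratio does not saturate $2(\sqrt{2}-1)$.
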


From Lemma \ref{lem:tu}, the quantity  $\|U^\star U^{\star \dagger} - ZZ^\dagger\|_F^2$ satisfies:
\begin{small}
\begin{align}
\|U^\star U^{\star \dagger} - ZZ^\dagger\|_F^2 \geq 2 (\sqrt{2} - 1) \cdot \sigma_r(\rho^\star) \cdot \min_{R \in \mathcal{O}} \|Z - U^\star R\|_F^2 = 2 (\sqrt{2} - 1) \cdot \sigma_r(\rho^\star) \cdot \|Z - U^\star R_Z\|_F^2 ,
\end{align}
\end{small}
which, in our main recursion, results in:
\begin{small}
\begin{align}
\|U_+ - U^\star R_Z\|_F^2 &\leq \eta^2 \|\mathcal{A}^\dagger\left(\mathcal{A}(ZZ^\dagger) - y \right) Z \|_F^2 + \|Z - U^\star R_Z\|_F^2  \\ 
					       &\quad \quad - 1.0656 \eta^2 \left \|\mathcal{A}^\dagger( \mathcal{A}(ZZ^\dagger) - y)  Z \right \|_F^2 - \eta (\sqrt{2}-1)(1 - \delta_{2r}) \sigma_r(\rho^\star) \|Z - U^\star R_Z\|_F^2 \\
					       &\quad \quad \quad \quad + \eta \Bigg( \theta \sigma_r(\rho^\star) \cdot \|Z - U^\star R_Z\|_F^2 \\ 
					       &\quad \quad \quad \quad \quad \quad + \tfrac{1}{200} \beta^2 \cdot \widehat{\eta} \cdot \tfrac{\left(\tfrac{3}{2} + 2|\mu|\right)^2}{\left(1  - \left(\tfrac{3}{2} + 2|\mu|\right)\tfrac{1}{10^3}\right)^2} \cdot \|\mathcal{A}^\dagger(\mathcal{A}(ZZ^\dagger) - y) \cdot Z\|_F^2\Bigg) \\
					       &\stackrel{(i)}{\leq} \eta^2 \|\mathcal{A}^\dagger\left(\mathcal{A}(ZZ^\dagger) - y \right) Z \|_F^2 + \|Z - U^\star R_Z\|_F^2  \\ 
					       &\quad \quad - 1.0656 \eta^2 \left \|\mathcal{A}^\dagger( \mathcal{A}(ZZ^\dagger) - y)  Z \right \|_F^2 - \eta (\sqrt{2}-1)(1 - \delta_{2r}) \sigma_r(\rho^\star) \|Z - U^\star R_Z\|_F^2 \\
					       &\quad \quad \quad \quad + \eta \Bigg( \theta \sigma_r(\rho^\star) \cdot \|Z - U^\star R_Z\|_F^2 \\ 
					       &\quad \quad \quad \quad \quad \quad + \tfrac{1}{200} \beta^2 \cdot \tfrac{10}{9} \eta \cdot \tfrac{\left(\tfrac{3}{2} + 2|\mu|\right)^2}{\left(1  - \left(\tfrac{3}{2} + 2|\mu|\right)\tfrac{1}{10^3}\right)^2} \cdot \|\mathcal{A}^\dagger(\mathcal{A}(ZZ^\dagger) - y) \cdot Z\|_F^2\Bigg) \\					       
					       &\stackrel{(ii)}{=} \left(1 + \tfrac{1}{200} \beta^2 \cdot \tfrac{10}{9} \cdot \tfrac{\left(\tfrac{3}{2}+ 2|\mu|\right)^2}{\left(1  - \left(\tfrac{3}{2} + 2|\mu|\right)\tfrac{1}{10^3}\right)^2} - 1.0656\right) \eta^2 \|\mathcal{A}^\dagger(\mathcal{A}(ZZ^\dagger) - y) \cdot Z\|_F^2  \\ 
					       &\quad \quad \quad \quad +\left(1 + \eta \theta \sigma_r(\rho^\star) - \eta (\sqrt{2}-1)(1 - \delta_{2r}) \sigma_r(\rho^\star) \right) \|Z - U^\star R_Z\|_F^2 
\end{align}
\end{small}
where $(i)$ is due to Lemma \ref{lem:equiveta}, 
and $(ii)$ is due to the definition of $U_{+}$.

Under the facts that \begin{small}$\mu = \frac{\sigma_r(\rho^\star)^{1/2}}{10^3 \sqrt{\kappa\tau(\rho^\star)}} \cdot \frac{\varepsilon}{4 \cdot \sigma_1(\rho^\star)^{1/2} \cdot r}$\end{small}, for $\varepsilon \in (0, 1)$ user-defined, and $\delta_{2r} \leq \tfrac{1}{10}$, the main constant quantities in our proof so far simplify into:
\begin{align}
\beta = \frac{1 + \left(\tfrac{3}{2} + 2|\mu| \right) \cdot \tfrac{1}{10^3}}{1 - \left(\tfrac{3}{2} + 2|\mu| \right) \cdot \tfrac{1}{10^3}} = 1.003, \quad \text{and} \quad \beta^2 = 1.006,
\end{align}
by Corollary \ref{cor:1}. 
Thus:
\begin{align}
1 + \tfrac{1}{200} \beta^2 \cdot \tfrac{10}{9} \cdot \tfrac{\left(\tfrac{3}{2}+ 2|\mu|\right)^2}{\left(1  - \left(\tfrac{3}{2} + 2|\mu|\right)\tfrac{1}{10^3}\right)^2} - 1.0656 \leq -0.0516,
\end{align}
and our recursion becomes:
\begin{small}
\begin{align}
\|U_+ - U^\star R_Z\|_F^2 &\leq -0.0516 \cdot \eta^2 \cdot \|\mathcal{A}^\dagger(\mathcal{A}(ZZ^\dagger) - y) \cdot Z\|_F^2  \\ 
					       &\quad \quad \quad \quad +\left(1 + \eta \theta \sigma_r(\rho^\star) - \eta (\sqrt{2}-1)(1 - \delta_{2r}) \sigma_r(\rho^\star) \right) \|Z - U^\star R_Z\|_F^2 
\end{align}
\end{small}
Finally, 
\begin{small}
\begin{align}
\theta &= \tfrac{(1 - \delta_{2r}) \left(1 + \left(\tfrac{3}{2} + 2|\mu| \right)\tfrac{1}{10^3}\right)^2}{10^3} +  (1 + \delta_{2r})  \left(2+\left(\tfrac{3}{2} + 2|\mu|\right) \cdot \tfrac{1}{10^3}\right) \left(\tfrac{3}{2} + 2|\mu|\right) \cdot \tfrac{1}{10^3} \\ 
	  &\stackrel{(i)}{=} (1 - \delta_{2r}) \cdot \left( \tfrac{\left(1 + (\tfrac{3}{2} + 2|\mu|)\tfrac{1}{10^3}\right)^2}{10^3} + \kappa \left(2+\left(\tfrac{3}{2} + 2|\mu|\right) \cdot \tfrac{1}{10^3}\right) \left(\tfrac{3}{2} + 2|\mu|\right) \cdot \tfrac{1}{10^3} \right)\\
	  &\leq  0.0047 \cdot (1 - \delta_{2r}).
\end{align}
\end{small}
where $(i)$ is by the definition of $\kappa := \tfrac{1 + \delta_{2r}}{1 - \delta_{2r}} \leq 1.223$ for $\delta_{2r} \leq \tfrac{1}{10}$, by assumption.
Combining the above in our main inequality, we obtain:
\begin{small}
\begin{align}
\|U_+ - U^\star R_Z\|_F^2 &\leq -0.0516 \cdot \eta^2 \cdot \|\mathcal{A}^\dagger(\mathcal{A}(ZZ^\dagger) - y) \cdot Z\|_F^2  \nonumber \\ 
					       &\quad \quad \quad \quad +\left(1 + \eta \sigma_r(\rho^\star) (1 - \delta_{2r}) \cdot (0.0047 - \sqrt{2} + 1)\right) \|Z - U^\star R_Z\|_F^2 \nonumber \\
					       &\leq \left(1 - \tfrac{4\eta \sigma_r(\rho^\star)(1 - \delta_{2r})}{10} \right)\|Z - U^\star R_Z\|_F^2 \label{eq:0000}
\end{align}
\end{small}


{
Taking square root on both sides, we obtain:
\begin{small}
\begin{align}
\|U_+ - U^\star R_Z\|_F &\leq \sqrt{1 - \tfrac{4\eta \sigma_r(\rho^\star)(1 - \delta_{2r})}{10}} \cdot \|Z - U^\star R_Z\|_F
\end{align}
\end{small}
Let us define $\xi = \sqrt{1 - \tfrac{4\eta \sigma_r(\rho^\star)(1 - \delta_{2r})}{10}}$. 
%
%
Using the definitions $Z = U + \mu (U - U_-)$ and $R_{Z} \in \arg\min\limits_{R \in \mathcal{O}} \|Z - U^\star R\|_{F}$, we get
\begin{small}
\begin{align}
\|U_+ - U^\star R_Z\|_F &\leq \xi \cdot \min_{R \in \mathcal{O}} \|Z - U^\star R\|_F = \xi \cdot \min_{R \in \mathcal{O}} \|U + \mu (U - U_-) - U^\star R\|_F \nonumber \\
				&= \xi \cdot \min_{R \in \mathcal{O}} \|U + \mu \left(U - U_-\right)  - (1 - \mu + \mu) U^\star R\|_F \nonumber \\
				&\stackrel{(i)}{\leq} \xi \cdot |1 + \mu| \cdot \min_{R \in \mathcal{O}} \|U - U^\star R \|_F + \xi \cdot |\mu| \cdot \min_{R \in \mathcal{O}}\|U_- - U^\star R\|_F + \xi \cdot |\mu| \cdot r \sigma_1(\rho^\star)^{1/2} \nonumber 
\end{align}
\end{small}
where $(i)$ follows from steps similar to those in Lemma \ref{lem:supp_02}. 
Further observe that $\min_{R \in \mathcal{O}}\|U_+ - U^\star R\|_F \leq \|U_+ - U^\star R_Z\|_F$, thus leading to:
\begin{small}
\begin{align}
\min_{R \in \mathcal{O}}\|U_+ - U^\star R\|_F \leq \xi \cdot|1 + \mu| \cdot \min_{R \in \mathcal{O}} \|U - U^\star R \|_F + \xi \cdot |\mu| \cdot \min_{R \in \mathcal{O}}\|U_- - U^\star R\|_F  + \xi \cdot |\mu| \cdot r \sigma_1(\rho^\star)^{1/2} \label{eq:decrease}
\end{align}
\end{small}
Including two subsequent iterations in a single two-dimensional first-order system, we get the following characterization: 
\begin{small}
\begin{align}
\begin{bmatrix}
\min_{R \in \mathcal{O}} \|U_{i+1} - U^\star R\|_F \\ \min_{R \in \mathcal{O}} \|U_i - U^\star R\|_F
\end{bmatrix} &\leq \begin{bmatrix}
\xi \cdot |1 + \mu| & \xi \cdot  |\mu| \\
1 & 0
\end{bmatrix} \cdot 
\begin{bmatrix}
\min_{R \in \mathcal{O}} \|U_i - U^\star R \|_F \\
\min_{R \in \mathcal{O}} \|U_{i-1} - U^\star R \|_F
\end{bmatrix}
\\
&\quad \quad \quad \quad + 
\begin{bmatrix}
1 \\
0
\end{bmatrix} \cdot \xi \cdot |\mu| \cdot \sigma_1(\rho^\star)^{1/2} \cdot r.
\end{align}
\end{small}
Now, let $x_j = \min_{R \in \mathcal{O}} \|U_j - U^\star R\|_F$. Then, we can write the above relation as 
\begin{align*}
\begin{bmatrix}
x_{i+1} \\ x_i
\end{bmatrix} &\leq \underbrace{\begin{bmatrix}
\xi \cdot |1 + \mu| & \xi \cdot  |\mu| \\
1 & 0
\end{bmatrix}}_{:= A} \cdot 
\begin{bmatrix}
x_i \\
x_{i-1}
\end{bmatrix}
 + 
\begin{bmatrix}
1 \\
0
\end{bmatrix} \cdot \xi \cdot |\mu| \cdot \sigma_1(\rho^\star)^{1/2} \cdot r,
\end{align*}
where we denote the ``contraction matrix'' by $A$.
Taking norms on both sides, we get
\begin{align}
\left\| \begin{bmatrix}
x_{i+1} \\ x_i
\end{bmatrix} \right\|_2 
&\leq \left\| A \cdot 
\begin{bmatrix}
x_i \\
x_{i-1}
\end{bmatrix}
 + 
\begin{bmatrix}
1 \\
0
\end{bmatrix} \cdot \xi \cdot |\mu| \cdot \sigma_1(\rho^\star)^{1/2} \cdot r \right\|_2 \nonumber \\
&\stackrel{(i)}{\leq} \left\| A \cdot 
\begin{bmatrix}
x_i \\
x_{i-1}
\end{bmatrix}
\right\|_2
 + 
 \left\|
\begin{bmatrix}
1 \\
0
\end{bmatrix} \cdot \xi \cdot |\mu| \cdot \sigma_1(\rho^\star)^{1/2} \cdot r \right\|_2 \nonumber\\
&\stackrel{(ii)}{\leq} \left\| A \right\|_2 \cdot 
\left\|
\begin{bmatrix}
x_i \\
x_{i-1}
\end{bmatrix}
\right\|_2
 + 
 \left\|
\begin{bmatrix}
1 \\
0
\end{bmatrix} \cdot \xi \cdot |\mu| \cdot \sigma_1(\rho^\star)^{1/2} \cdot r \right\|_2, \label{eq:one-step-bound}
\end{align}
where $(i)$ is by triangle inequality, and $(ii)$ is by Cauchy–Schwarz inequality.

Therefore, the convergence rate will be determined by the (maximum) eigenvalue of the contraction matrix $A$, which is given by
\begin{align*}
    \lambda_{1, 2} = \frac{\xi \cdot |1+\mu|}{2} \pm \sqrt{ \frac{\xi^2 (1+\mu)^2}{4} + \xi \cdot |\mu| }  \stackrel{(i)}{\implies} \max\{\lambda_1, \lambda_2 \} = \lambda_1 = \frac{\xi \cdot |1+\mu|}{2} + \sqrt{ \frac{\xi^2 (1+\mu)^2}{4} + \xi \cdot |\mu| },
\end{align*}
where $(i)$ follows since every term in $\lambda_{1, 2}$ is positive. 

To show accelerated convergence rate, we want the above eigenvalue (which determines the convergence rate) to be bounded by $1 - \sqrt{ \tfrac{1-\delta_{2r}}{1+\delta_{2r}} } $. To show this, first note that this term is bounded above as follows:
%
\begin{align*}
    \lambda_1 = \frac{\xi \cdot |1+\mu|}{2} + \sqrt{ \frac{\xi^2 (1+\mu)^2}{4} + \xi \cdot |\mu| } &\stackrel{(i)}{\leq} \xi + \sqrt{\xi^2 + \xi} \\
    &\stackrel{(ii)}{\leq} \xi + \sqrt{2\xi} \\
    &\stackrel{(ii)}{\leq} (\sqrt{2} + 1) \sqrt{\xi},
\end{align*}
where $(i)$ is by the conventional bound on momentum: $0 < \mu < 1$, and $(ii)$ is by the relation $\xi^2 \leq \xi \leq \sqrt{\xi}$ for $0 \leq \xi \leq 1$.
Therefore, to show the accelerated rate of convergence, we want the following relation to hold: 
\begin{align}
\label{eq:xi-upperbound}
    (\sqrt{2} + 1)\sqrt{\xi} \leq 1 - \sqrt{ \tfrac{1-\delta_{2r}}{1+\delta_{2r}} } \iff \sqrt{\xi} \leq \frac{ \sqrt{1+\delta_{2r}} - \sqrt{1-\delta_{2r}} }{(\sqrt{2}+1) \sqrt{1+\delta_{2r}}}.
\end{align}
Recalling our definition of $\xi = \sqrt{1 - \tfrac{4\eta \sigma_r(\rho^\star)(1 - \delta_{2r})}{10}}$, the problem boils down to choosing the right step size $\eta$ such that the above inequality on $\xi$ in Eq. \eqref{eq:xi-upperbound} is satisfied. With simple algebra, we can show the following lower bound on $\eta$:
\begin{align*}
    \left[ 1 - \left( \frac{ \sqrt{1+\delta_{2r}} - \sqrt{1-\delta_{2r}} }{(\sqrt{2}+1) \sqrt{1+\delta_{2r}}} \right)^4 \right] \cdot \frac{10}{4 \sigma_r (\rho^\star) (1-\delta_{2r})} \leq \eta
\end{align*}
Finally, the argument inside the $\sqrt{\cdot}$ term of $\xi = \sqrt{1 - \tfrac{4\eta \sigma_r(\rho^\star)(1 - \delta_{2r})}{10}} > 0$ has to be non-negative, yielding the following upper bound on $\eta$:
\begin{align*}
    \eta \leq \frac{10}{4 \sigma_r (\rho^\star) (1-\delta_{2r})}.
\end{align*}
Combining two inequalities, and noting that the term $\left[ 1 - \left( \frac{ \sqrt{1+\delta_{2r}} - \sqrt{1-\delta_{2r}} }{(\sqrt{2}+1) \sqrt{1+\delta_{2r}}} \right)^4 \right]$ is bounded above by 1, we arrive at the following bound on $\eta$:
\begin{align}
\label{eq:step-size-bound}
    \left[ 1 - \left( \frac{ \sqrt{1+\delta_{2r}} - \sqrt{1-\delta_{2r}} }{(\sqrt{2}+1) \sqrt{1+\delta_{2r}}} \right)^4 \right] \cdot \frac{10}{4 \sigma_r (\rho^\star) (1-\delta_{2r})} \leq \eta \leq \frac{10}{4 \sigma_r (\rho^\star) (1-\delta_{2r})}.
\end{align}
In sum, for the specific $\eta$ satisfying the above bound, we have shown that
\begin{align*}
    \lambda_1 = \frac{\xi \cdot |1+\mu|}{2} + \sqrt{ \frac{\xi^2 (1+\mu)^2}{4} + \xi \cdot |\mu| } \leq 1 - \sqrt{\frac{1 - \delta_{2r}}{1 + \delta_{2r}} } 
\end{align*}

Above translates our original recursion in \eqref{eq:one-step-bound} as:
\begin{align}
\left\| 
\begin{bmatrix}
x_{i+1} \\
x_i
\end{bmatrix}  
\right\|_2
&\leq \left\| A \right\|_2 \cdot 
\left\|
\begin{bmatrix}
x_i \\
x_{i-1}
\end{bmatrix}
\right\|_2
 + 
 \left\|
\begin{bmatrix}
1 \\
0
\end{bmatrix} \cdot \xi \cdot |\mu| \cdot \sigma_1(\rho^\star)^{1/2} \cdot r \right\|_2 \nonumber \\
&\leq \left( 1 - \sqrt{ \tfrac{1-\delta_{2r}}{1+\delta_{2r}} } \right) \cdot 
\left\|
\begin{bmatrix}
x_i \\
x_{i-1}
\end{bmatrix}
\right\|_2
 + 
 \left\|
\begin{bmatrix}
1 \\
0
\end{bmatrix} \cdot \xi \cdot |\mu| \cdot \sigma_1(\rho^\star)^{1/2} \cdot r \right\|_2 \nonumber \\
&= \left( 1 - \sqrt{ \tfrac{1-\delta_{2r}}{1+\delta_{2r}} } \right) \cdot 
\left\|
\begin{bmatrix}
x_i \\
x_{i-1}
\end{bmatrix}
\right\|_2
+
\xi \cdot |\mu| \cdot \sigma_1(\rho^\star)^{1/2} \cdot r, 
\label{eq:acc-one-step}
\end{align}
where the last equality is by the definition of $\ell_2$-norm.

Unrolling the recursion in Eq.~\eqref{eq:acc-one-step} for $J$ iterations, we get
\begin{align*}
\left\| 
\begin{bmatrix}
x_{J+1} \\
x_J
\end{bmatrix}  
\right\|_2
&\leq \left( 1 - \sqrt{ \tfrac{1-\delta_{2r}}{1+\delta_{2r}} } \right)^{J+1}  
\left\|
\begin{bmatrix}
x_0 \\
x_{-1}
\end{bmatrix}
\right\|_2
+
\xi \cdot |\mu| \cdot \sigma_1(\rho^\star)^{1/2} \cdot r \cdot \sum_{i=0}^J \left(1 - \sqrt{ \tfrac{1-\delta_{2r}}{1+\delta_{2r}} }\right)^i \\
&= 
\left( 1 - \sqrt{ \tfrac{1-\delta_{2r}}{1+\delta_{2r}} } \right)^{J+1} 
\left\|
\begin{bmatrix}
x_0 \\
x_{-1}
\end{bmatrix}
\right\|_2
+
\xi \cdot |\mu| \cdot \sigma_1(\rho^\star)^{1/2} \cdot r \cdot \left( 1 - \left(1-\sqrt{ \tfrac{1-\delta_{2r}}{1+\delta_{2r}} }\right)^{J+1} \right) \left(1-\sqrt{ \tfrac{1-\delta_{2r}}{1+\delta_{2r}} }\right)^{-1}  \\
&= 
\left( 1 - \sqrt{ \tfrac{1-\delta_{2r}}{1+\delta_{2r}} } \right)^{J+1} 
\left\|
\begin{bmatrix}
x_0 \\
x_{-1}
\end{bmatrix}
\right\|_2 + O(\mu)  
\end{align*}
Finally, computing the $\ell_2$-norm explicitly and resubstituting $x_j = \min_{R \in \mathbb{O}} \|U_j - U^\star R \|_F$, 
we get
\begin{align*}
\min_{R \in \mathcal{O}} \|U_{J+1} - U^\star R\|_F 
&\leq 
    \left(1-\sqrt{ \tfrac{1-\delta_{2r}}{1+\delta_{2r}} }\right)^{J+1} \left( \min_{R \in \mathcal{O}} \|U_0 - U^\star R\|_F^2 + \min_{R \in \mathcal{O}} \|U_{-1} - U^\star R\|_F^2 \right)^{1/2} +  O(\mu). 
\end{align*}
}

\section{Supporting lemmata}

In this section, we present a series of lemmata, used for the main result of the paper.

\begin{lemma}{\label{lem:supp_00}}
Let $U \in \mathbb{C}^{d \times r}$ and $U^\star \in \mathbb{C}^{d \times r}$, such that $\|U - U^\star R\|_F \leq \tfrac{\sigma_r(\rho^\star)^{1/2}}{10^3 \sqrt{\kappa \tau(\rho^\star)}}$ for some $R \in \mathcal{O}$, where $\rho^\star = U^\star U^{\star \dagger}$, $\kappa := \tfrac{1 + \delta_{2r}}{1 - \delta_{2r}} > 1$, for $\delta_{2r} \leq \tfrac{1}{10}$, and $\tau(\rho^\star) := \tfrac{\sigma_1(\rho^\star)}{\sigma_r(\rho^\star)} > 1$. Then:
\begin{align}
\sigma_1(\rho^\star)^{1/2} \left(1  - \tfrac{1}{10^3}\right) &\leq \sigma_1(U) \leq \sigma_1(\rho^\star)^{1/2} \left(1 + \tfrac{1}{10^3} \right) \\
\sigma_r(\rho^\star)^{1/2} \left(1  - \tfrac{1}{10^3}\right) &\leq \sigma_r(U) \leq \sigma_r(\rho^\star)^{1/2} \left(1 + \tfrac{1}{10^3} \right)
\end{align}
\end{lemma}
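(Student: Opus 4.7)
\noindent\textbf{Proof proposal for Lemma \ref{lem:supp_00}.} The plan is to reduce all four claimed bounds to a single application of Weyl's inequality for singular values, combined with the unitary invariance of singular values and some elementary arithmetic involving $\kappa$ and $\tau(\rho^\star)$. The first observation is that for any $R \in \mathcal{O}$, right-multiplication by a unitary preserves singular values, so $\sigma_i(U^\star R) = \sigma_i(U^\star)$ for all $i$. The second observation is that since $\rho^\star = U^\star U^{\star \dagger}$, we have $\sigma_i(U^\star) = \sigma_i(\rho^\star)^{1/2}$ for $i = 1, \ldots, r$.

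Applying Weyl's inequality $|\sigma_i(A) - \sigma_i(B)| \leq \|A - B\|_2 \leq \|A - B\|_F$ with $A = U$ and $B = U^\star R$ (for the specific $R$ provided by the hypothesis), I would obtain
\begin{align*}
\left| \sigma_i(U) - \sigma_i(\rho^\star)^{1/2} \right| \;\leq\; \|U - U^\star R\|_F \;\leq\; \frac{\sigma_r(\rho^\star)^{1/2}}{10^3 \sqrt{\kappa \tau(\rho^\star)}}
\end{align*}
for every $i$. The two-sided bounds on $\sigma_1(U)$ and $\sigma_r(U)$ then follow provided I can show that the right-hand side is bounded by $\sigma_1(\rho^\star)^{1/2}/10^3$ and by $\sigma_r(\rho^\star)^{1/2}/10^3$, respectively. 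For $\sigma_r$, this reduces to the inequality $\sqrt{\kappa \tau(\rho^\star)} \geq 1$, which is immediate since $\kappa > 1$ and $\tau(\rho^\star) \geq 1$. For $\sigma_1$, I would rewrite $\sigma_r(\rho^\star)^{1/2} = \sigma_1(\rho^\star)^{1/2}/\tau(\rho^\star)^{1/2}$, so that the bound becomes $\sigma_1(\rho^\star)^{1/2}/(10^3 \sqrt{\kappa}\, \tau(\rho^\star))$, which is dominated by $\sigma_1(\rho^\star)^{1/2}/10^3$ since $\kappa\, \tau(\rho^\star)^2 \geq 1$.

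Rearranging the resulting additive bound into multiplicative form then yields the four inequalities as stated. I do not expect a real obstacle: the statement is essentially a quantitative continuity property of singular values under a Frobenius-norm perturbation, and the only content of the hypothesis that gets used is that the perturbation radius is simultaneously small relative to both $\sigma_1(\rho^\star)^{1/2}$ and $\sigma_r(\rho^\star)^{1/2}$. This bi-simultaneous smallness is exactly what the normalizing factor $\sqrt{\kappa\tau(\rho^\star)}$ in the denominator of the hypothesis was designed to guarantee, and it is the only place where the specific choice of that denominator (as opposed to, say, $\sqrt{\tau(\rho^\star)}$ alone) actually plays a role.
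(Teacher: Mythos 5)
Your proposal is correct and follows essentially the same route as the paper: both rest on Weyl's singular-value perturbation inequality together with $\|\cdot\|_2\le\|\cdot\|_F$, the identification $\sigma_i(U^\star R)=\sigma_i(U^\star)=\sigma_i(\rho^\star)^{1/2}$, and the observation that the perturbation radius is at most $\sigma_r(\rho^\star)^{1/2}/10^3\le\sigma_1(\rho^\star)^{1/2}/10^3$. The only cosmetic difference is that the paper discards the factor $\sqrt{\kappa\tau(\rho^\star)}\ge1$ immediately and then invokes $\sigma_r(\rho^\star)\le\sigma_1(\rho^\star)$, whereas you carry the factor into the $\sigma_1$ bound via $\sigma_r(\rho^\star)^{1/2}=\sigma_1(\rho^\star)^{1/2}/\tau(\rho^\star)^{1/2}$; the two bookkeeping choices are equivalent.
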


\begin{proof}
By the fact $\|\cdot\|_2 \leq \|\cdot\|_F$ and using Weyl's inequality for perturbation of singular values \cite[Theorem 3.3.16]{horn1990matrix}, we have:
\begin{align}
\left|\sigma_i(U) - \sigma_i(U^\star)\right| \leq \tfrac{\sigma_r(\rho^\star)^{1/2}}{10^3 \sqrt{\kappa \tau(\rho^\star)}} \leq \tfrac{\sigma_r(\rho^\star)^{1/2}}{10^3}, \quad 1 \leq i \leq r.
\end{align}
Then, 
\begin{align}
- \tfrac{\sigma_r(\rho^\star)^{1/2}}{10^3} &\leq \sigma_1(U) - \sigma_1(U^\star) \leq \tfrac{\sigma_r(\rho^\star)^{1/2}}{10^3} \Rightarrow \\
\sigma_1(\rho^\star)^{1/2} - \tfrac{\sigma_r(\rho^\star)^{1/2}}{10^3} &\leq \sigma_1(U) \leq \sigma_1(\rho^\star)^{1/2} + \tfrac{\sigma_r(\rho^\star)^{1/2}}{10^3} \Rightarrow \\
\sigma_1(\rho^\star)^{1/2} \left(1  - \tfrac{1}{10^3}\right) &\leq \sigma_1(U) \leq \sigma_1(\rho^\star)^{1/2} \left(1 + \tfrac{1}{10^3} \right).
\end{align}
Similarly:
\begin{align}
- \tfrac{\sigma_r(\rho^\star)^{1/2}}{10^3} &\leq \sigma_r(U) - \sigma_r(U^\star) \leq \tfrac{\sigma_r(\rho^\star)^{1/2}}{10^3} \Rightarrow \\
\sigma_r(\rho^\star)^{1/2} - \tfrac{\sigma_r(\rho^\star)^{1/2}}{10^3} &\leq \sigma_r(U) \leq \sigma_r(\rho^\star)^{1/2} + \tfrac{\sigma_r(\rho^\star)^{1/2}}{10^3} \Rightarrow \\
\sigma_r(\rho^\star)^{1/2} \left(1  - \tfrac{1}{10^3}\right) &\leq \sigma_r(U) \leq \sigma_r(\rho^\star)^{1/2} \left(1 + \tfrac{1}{10^3} \right).
\end{align}
In the above, we used the fact that $\sigma_i(U^\star) = \sigma_i(\rho^\star)^{1/2}$, for all $i$, and the fact that $\sigma_i(\rho^\star)^{1/2} \geq \sigma_j(\rho^\star)^{1/2}$, for $i \leq j$.
\end{proof}

\begin{lemma}{\label{lem:supp_02}}
Let $U \in \mathbb{C}^{d \times r}, U_{-} \in \mathbb{C}^{d \times r}$, and $U^\star \in \mathbb{C}^{d \times r}$, such that 
\begin{small}$\min_{R \in \mathcal{O}} \|U - U^\star R\|_F \leq \tfrac{\sigma_r(\rho^\star)^{1/2}}{10^3 \sqrt{\kappa\tau(\rho^\star)}}$\end{small} ~and~
\begin{small}$\min_{R \in \mathcal{O}} \|U_- - U^\star R\|_F \leq \tfrac{\sigma_r(\rho^\star)^{1/2}}{10^3 \sqrt{\kappa\tau(\rho^\star)}}$\end{small}, where $\rho^\star = U^\star U^{\star \dagger}$, and \begin{small}$\kappa := \tfrac{1 + \delta_{2r}}{1 - \delta_{2r}} > 1$\end{small}, for \begin{small}$\delta_{2r} \leq \tfrac{1}{10}$\end{small}, and \begin{small}$\tau(\rho^\star) := \tfrac{\sigma_1(\rho^\star)}{\sigma_r(\rho^\star)} > 1$\end{small}.
Set the momentum parameter as \begin{small}$\mu = \frac{\sigma_r(\rho^\star)^{1/2}}{10^3 \sqrt{\kappa\tau(\rho^\star)}} \cdot \frac{\varepsilon}{4 \cdot \sigma_1(\rho^\star)^{1/2} \cdot r}$\end{small}, for $\varepsilon \in (0, 1)$ user-defined.
Then, 
\begin{align}
\|Z - U^\star R_Z\|_F \leq \left(\tfrac{3}{2} + 2|\mu|\right) \cdot \tfrac{\sigma_r(\rho^\star)^{1/2}}{10^3\sqrt{\kappa \tau(\rho^\star)}}.
\end{align}
\end{lemma}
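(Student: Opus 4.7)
The strategy is to expand $Z$ via its definition, apply the triangle inequality with a fixed (non-minimizing) rotation, and then absorb the mismatch between two different optimal rotations via the prescribed scaling of $\mu$. Writing $Z = (1+\mu)U - \mu U_-$ and letting $R_U \in \arg\min_{R\in\mathcal{O}}\|U - U^\star R\|_F$ and $R_{U_-} \in \arg\min_{R\in\mathcal{O}}\|U_- - U^\star R\|_F$, a direct rearrangement around the anchor $U^\star R_U$ gives
\[
Z - U^\star R_U \;=\; (1+\mu)(U - U^\star R_U) \;-\; \mu\,(U_- - U^\star R_U).
\]
Since $R_Z$ realizes the minimum over $\mathcal{O}$, we have $\|Z - U^\star R_Z\|_F \leq \|Z - U^\star R_U\|_F$, and the triangle inequality yields
\[
\|Z - U^\star R_Z\|_F \;\leq\; (1+|\mu|)\,\|U - U^\star R_U\|_F \;+\; |\mu|\,\|U_- - U^\star R_U\|_F.
\]

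The first summand is already bounded by $(1+|\mu|)\,D$ with $D := \tfrac{\sigma_r(\rho^\star)^{1/2}}{10^3\sqrt{\kappa\,\tau(\rho^\star)}}$ by hypothesis. For the second I would split again, introducing the ``correct'' anchor $U^\star R_{U_-}$:
\[
\|U_- - U^\star R_U\|_F \;\leq\; \|U_- - U^\star R_{U_-}\|_F \;+\; \|U^\star(R_{U_-} - R_U)\|_F \;\leq\; D \;+\; 2\sqrt{r}\,\sigma_1(\rho^\star)^{1/2},
\]
where for the cross-rotation piece I would use $\|U^\star(R_{U_-}-R_U)\|_F \leq \|U^\star\|_2\cdot\|R_{U_-}-R_U\|_F$, the fact that both $R_U,R_{U_-}$ are unitary matrices in $\mathbb{C}^{r\times r}$ with $\|R\|_F = \sqrt{r}$, and $\|U^\star\|_2 = \sigma_1(\rho^\star)^{1/2}$.

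Combining, I obtain $\|Z - U^\star R_Z\|_F \leq (1 + 2|\mu|)\,D + 2|\mu|\sqrt{r}\,\sigma_1(\rho^\star)^{1/2}$. Substituting the prescribed $\mu = \tfrac{\sigma_r(\rho^\star)^{1/2}}{10^3\sqrt{\kappa\,\tau(\rho^\star)}}\cdot\tfrac{\varepsilon}{4\sigma_1(\rho^\star)^{1/2}\,r}$, the last piece simplifies to
\[
2|\mu|\sqrt{r}\,\sigma_1(\rho^\star)^{1/2} \;=\; \frac{\varepsilon}{2\sqrt{r}}\,D \;\leq\; \tfrac{1}{2}\,D
\]
for $\varepsilon\in(0,1]$ and $r\geq 1$, exactly filling the gap between $(1+2|\mu|)D$ and the stated bound $(\tfrac{3}{2}+2|\mu|)D$. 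The only genuinely delicate step is the reconciliation of the two different minimizing rotations $R_U$ and $R_{U_-}$; the theorem's tuning of $\mu$ with a $1/r$ factor is chosen precisely so that the cross-rotation slack is absorbed into the $\tfrac{1}{2}D$ budget, and everything else is bookkeeping with the triangle inequality.
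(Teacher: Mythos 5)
Your proposal is correct and follows essentially the same route as the paper's proof: anchor at $R_U$, apply the triangle inequality, re-anchor the $U_-$ term at its own minimizing rotation $R_{U_-}$, bound the cross-rotation term $\|U^\star(R_{U_-}-R_U)\|_F$ via unitarity, and absorb it into the $\tfrac{1}{2}D$ slack using the prescribed $\mu$. Your bound $2\sqrt{r}\,\sigma_1(\rho^\star)^{1/2}$ on the cross-rotation piece is in fact slightly tighter than the paper's $2r\,\sigma_1(\rho^\star)^{1/2}$, but both suffice for the stated conclusion.
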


\begin{proof}
Let $R_{U} \in \arg\min_{R \in \mathcal{O}} \|U - U^\star\|_{F}$ and $R_{U_{-}} \in \arg\min_{R \in \mathcal{O}} \|U_{-} - U^\star R\|_{F}$.
By the definition of the distance function:
\begin{align}
\|Z - U^\star R_Z\|_F &= \min_{R \in \mathcal{O}} \|Z - U^\star R\|_F = \min_{R \in \mathcal{O}} \| U + \mu (U - U_{-}) - U^\star R \|_F \\
				     &= \min_{R \in \mathcal{O}} \|U + \mu (U - U_{-}) - (1 - \mu + \mu) U^\star R\|_F \\
				     &\leq |1 + \mu|\cdot\|U - U^{*}R_{U}\|_{F} + |\mu|\cdot||U_{-} - U^{*}R_{U_{-}}||_{F}\\
				     &= |1 + \mu|\cdot\|U - U^{*}R_{U}\|_{F} + |\mu|\cdot\|U_{-} - U^{*}R_{U} - U^{*}R_{U_{-}} + U^{*}R_{U_{-}}\|_{F}\\
				     &= |1 + \mu|\cdot\|U - U^{*}R_{U}\|_{F} + |\mu|\cdot\|(U_{-} - U^{*}R_{U_{-}}) + U^{*}(R_{U_{-}} - R_{U})\|_{F}\\
				     &\leq |1 + \mu| \cdot \min_{R \in \mathcal{O}} \|U - U^\star R\|_F + |\mu| \cdot \min_{R \in \mathcal{O}} \|U_- - U^\star R\|_F + |\mu| \cdot \|U^\star (R_U - R_{U_-})\|_F \\
				     &\leq |1 + \mu| \cdot \min_{R \in \mathcal{O}} \|U - U^\star R\|_F + |\mu| \cdot \min_{R \in \mathcal{O}} \|U_- - U^\star R\|_F + 2|\mu| \cdot \sigma_1(\rho^\star)^{1/2} r \\
				     &\stackrel{(i)}{\leq} \left(\tfrac{3}{2} + 2|\mu|\right) \cdot \tfrac{\sigma_r(\rho^\star)^{1/2}}{10^3\sqrt{\kappa \tau(\rho^\star)}}
\end{align}
where $(i)$ is due to the fact that $\mu \leq \frac{\sigma_r(\rho^\star)^{1/2}}{10^3 \sqrt{\kappa\tau(\rho^\star)}} \cdot \frac{1}{4 \cdot \sigma_1(\rho^\star)^{1/2} \cdot r}$.
We keep $\mu$ in the expression, but we use it for clarity for the rest of the proof.
\end{proof}

\begin{corollary}{\label{cor:2}}
Let $Z \in \mathbb{C}^{d \times r}$ and $U^\star \in \mathbb{C}^{d \times r}$, such that 
$\|Z - U^\star R\|_F \leq \left(\frac{3}{2} + 2|\mu|\right) \cdot \tfrac{\sigma_r(\rho^\star)^{1/2}}{10^3\sqrt{\kappa \tau(\rho^\star)}}$ for some $R \in \mathcal{O}$, and $\rho^\star = U^\star U^{\star \dagger}$. Then:
\begin{align}
\sigma_1(\rho^\star)^{1/2} \left(1  - \left(\tfrac{3}{2} + 2|\mu|\right)\tfrac{1}{10^3}\right) &\leq \sigma_1(Z) \leq \sigma_1(\rho^\star)^{1/2} \left(1 + \left(\tfrac{3}{2} + 2|\mu|\right)\tfrac{1}{10^3} \right) \\
\sigma_r(\rho^\star)^{1/2} \left(1  - \left(\tfrac{3}{2} + 2|\mu|\right)\tfrac{1}{10^3}\right) &\leq \sigma_r(Z) \leq \sigma_r(\rho^\star)^{1/2} \left(1 + \left(\tfrac{3}{2} + 2|\mu|\right)\tfrac{1}{10^3} \right).
\end{align}
Given that  \begin{small}$\mu = \frac{\sigma_r(\rho^\star)^{1/2}}{10^3 \sqrt{\kappa\tau(\rho^\star)}} \cdot \frac{\varepsilon}{4 \cdot \sigma_1(\rho^\star)^{1/2} \cdot r} \leq \tfrac{1}{10^3}$\end{small}, we get:
\begin{align}
0.998\cdot \sigma_1(\rho^\star)^{1/2} &\leq \sigma_1(Z) \leq 1.0015 \cdot \sigma_1(\rho^\star)^{1/2} \\
0.998\cdot \sigma_r(\rho^\star)^{1/2}  &\leq \sigma_r(Z) \leq 1.0015 \cdot \sigma_r(\rho^\star)^{1/2}.
\end{align}
\end{corollary}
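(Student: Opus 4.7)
The corollary is a direct singular-value perturbation statement and mirrors Lemma~\ref{lem:supp_00} almost verbatim, the only change being that the hypothesized distance bound is scaled by the factor $(\tfrac{3}{2}+2|\mu|)$ supplied by Lemma~\ref{lem:supp_02}. The plan is therefore to replay the Weyl-inequality argument of Lemma~\ref{lem:supp_00} with $Z$ in place of $U$ and with the scaled right-hand side, and then to specialize to the numerical regime $\mu \leq 1/10^3$.

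First I would observe that $R \in \mathcal{O}$ satisfies $RR^\dagger = I_r$, so right multiplication by $R$ is an isometry on $\mathbb{C}^{d\times r}$ and hence $\sigma_i(U^\star R)=\sigma_i(U^\star)=\sigma_i(\rho^\star)^{1/2}$ for every $i \in [r]$, using $\rho^\star = U^\star U^{\star\dagger}$. I would then invoke Weyl's inequality for singular values \cite[Thm.~3.3.16]{horn1990matrix} together with $\|\cdot\|_2 \leq \|\cdot\|_F$ to obtain, for every $i$,
\[
\bigl|\sigma_i(Z) - \sigma_i(\rho^\star)^{1/2}\bigr| \;\leq\; \|Z - U^\star R\|_2 \;\leq\; \|Z - U^\star R\|_F.
\]
Substituting the hypothesis $\|Z-U^\star R\|_F \leq (\tfrac{3}{2}+2|\mu|)\cdot \tfrac{\sigma_r(\rho^\star)^{1/2}}{10^3\sqrt{\kappa\tau(\rho^\star)}}$ and using $\kappa \geq 1$ and $\tau(\rho^\star) \geq 1$ to drop the denominator $\sqrt{\kappa\tau(\rho^\star)}$, I would get
\[
\bigl|\sigma_i(Z) - \sigma_i(\rho^\star)^{1/2}\bigr| \;\leq\; \bigl(\tfrac{3}{2}+2|\mu|\bigr)\cdot \tfrac{\sigma_r(\rho^\star)^{1/2}}{10^3}.
\]
Specializing to $i=1$ and $i=r$, and bounding $\sigma_r(\rho^\star)^{1/2} \leq \sigma_1(\rho^\star)^{1/2}$ inside the $i=1$ inequality (or just keeping $\sigma_r(\rho^\star)^{1/2}$ for the $i=r$ inequality), I can factor out the appropriate $\sigma_i(\rho^\star)^{1/2}$ on the right to match the factored form $\sigma_i(\rho^\star)^{1/2}\bigl(1 \pm (\tfrac{3}{2}+2|\mu|)/10^3\bigr)$ stated in the corollary.

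For the numerical corollary, the hypothesis $\mu \leq 1/10^3$ gives $(\tfrac{3}{2}+2|\mu|)/10^3 \leq (\tfrac{3}{2}+2\cdot 10^{-3})/10^3 \leq 1.502\cdot 10^{-3}$, whence the multiplicative upper constant becomes $1 + 1.502\cdot 10^{-3} \leq 1.0015$ and the lower constant $1 - 1.502\cdot 10^{-3} \geq 0.998$, recovering the two displayed inequalities. There is no genuine obstacle in this proof — it is entirely algebraic once Weyl and the rotation invariance are invoked — the only point that deserves care is the replacement $\sigma_i(U^\star R) = \sigma_i(U^\star)$, which is what makes the perturbation bound about $Z$ versus $U^\star R$ translate directly into a bound about the singular values of $Z$ versus those of $\rho^\star$.
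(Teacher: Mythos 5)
Your proposal is correct and matches the paper's proof, which simply states that the corollary ``follows similar motions as in Lemma~\ref{lem:supp_00}'' --- i.e., exactly the Weyl-inequality-plus-rotation-invariance argument you replay, with the distance bound scaled by $(\tfrac{3}{2}+2|\mu|)$ and the $\sqrt{\kappa\tau(\rho^\star)}$ factor dropped since both exceed $1$. The only cosmetic point is that $1+1.502\cdot 10^{-3}=1.001502$ slightly exceeds the stated $1.0015$, but this rounding is inherited from the paper's own constants and is immaterial.
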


\begin{proof}
The proof follows similar motions as in Lemma \ref{lem:supp_00}.
\end{proof}

\begin{corollary}{\label{cor:01}}
Under the same assumptions of Lemma \ref{lem:supp_00} and Corollary \ref{cor:2}, and given the assumptions on $\mu$, we have:
\begin{align}
\tfrac{99}{100} \cdot \|\rho^\star\|_2 \leq \|ZZ^\dagger\|_2 \leq \tfrac{101}{100} \cdot \|\rho^\star\|_2 \\
\tfrac{99}{100} \cdot \|\rho^\star\|_2 \leq \|Z_0Z_0^\dagger\|_2 \leq \tfrac{101}{100} \cdot \|\rho^\star\|_2 
\end{align}
and 
\begin{align}
\tfrac{99}{101} \cdot \|Z_0Z_0^\dagger\|_2 \leq \|ZZ^\dagger\|_2 \leq \tfrac{101}{99} \cdot \|Z_0Z_0^\dagger\|_2
\end{align}
\end{corollary}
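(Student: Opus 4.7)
\textbf{Proof plan for Corollary \ref{cor:01}.} The strategy is to translate singular value bounds from Corollary \ref{cor:2} and Lemma \ref{lem:supp_00} into spectral norm bounds on the PSD matrices $ZZ^\dagger$ and $Z_0Z_0^\dagger$, using the elementary identity $\|WW^\dagger\|_2 = \sigma_1(W)^2$ for any $W \in \mathbb{C}^{d \times r}$.

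First, I will establish the bound on $\|ZZ^\dagger\|_2$. From Corollary \ref{cor:2}, under the stated assumption on $\mu$, we have $0.998 \cdot \sigma_1(\rho^\star)^{1/2} \leq \sigma_1(Z) \leq 1.0015 \cdot \sigma_1(\rho^\star)^{1/2}$. Squaring both sides and using $\|ZZ^\dagger\|_2 = \sigma_1(Z)^2$ and $\|\rho^\star\|_2 = \sigma_1(\rho^\star)$, I obtain $(0.998)^2 \cdot \|\rho^\star\|_2 \leq \|ZZ^\dagger\|_2 \leq (1.0015)^2 \cdot \|\rho^\star\|_2$. Since $(0.998)^2 = 0.996004 \geq \tfrac{99}{100}$ and $(1.0015)^2 = 1.003002 \leq \tfrac{101}{100}$, the first pair of inequalities follows immediately.

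Next, for $\|Z_0Z_0^\dagger\|_2$, recall that the algorithm sets $Z_0 = U_0$ directly, so no momentum correction is involved. Hence I can apply Lemma \ref{lem:supp_00} directly to $U_0$, which gives $\sigma_1(\rho^\star)^{1/2}(1 - 10^{-3}) \leq \sigma_1(U_0) = \sigma_1(Z_0) \leq \sigma_1(\rho^\star)^{1/2}(1 + 10^{-3})$. Squaring again yields $(0.999)^2 \cdot \|\rho^\star\|_2 \leq \|Z_0Z_0^\dagger\|_2 \leq (1.001)^2 \cdot \|\rho^\star\|_2$. Since $(0.999)^2 \geq \tfrac{99}{100}$ and $(1.001)^2 \leq \tfrac{101}{100}$, the second pair of inequalities follows.

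Finally, the ratio bound relating $\|ZZ^\dagger\|_2$ and $\|Z_0Z_0^\dagger\|_2$ is a direct consequence of chaining the two previous sandwich bounds through $\|\rho^\star\|_2$. Specifically, from $\|Z_0Z_0^\dagger\|_2 \leq \tfrac{101}{100}\|\rho^\star\|_2$ we get $\|\rho^\star\|_2 \geq \tfrac{100}{101}\|Z_0Z_0^\dagger\|_2$, and combined with $\|ZZ^\dagger\|_2 \geq \tfrac{99}{100}\|\rho^\star\|_2$ this gives the lower bound $\|ZZ^\dagger\|_2 \geq \tfrac{99}{101}\|Z_0Z_0^\dagger\|_2$. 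The matching upper bound $\|ZZ^\dagger\|_2 \leq \tfrac{101}{99}\|Z_0Z_0^\dagger\|_2$ is obtained symmetrically. There is no real obstacle here since Corollary \ref{cor:2} and Lemma \ref{lem:supp_00} already did the heavy lifting via Weyl's inequality; the only thing to verify is that the numerical slack from squaring $1 \pm 10^{-3}$ and $1 \pm \left(\tfrac{3}{2} + 2|\mu|\right)\cdot 10^{-3}$ comfortably fits inside the $1 \pm 10^{-2}$ window, which it does given $|\mu| \leq 10^{-3}$.
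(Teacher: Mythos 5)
Your proposal is correct and matches the paper's intended argument: the paper's own proof is a one-line remark that the bounds follow from the quantities in Lemma \ref{lem:supp_00} and Corollary \ref{cor:2}, and you have simply made explicit the squaring step via $\|WW^\dagger\|_2 = \sigma_1(W)^2$ together with the numerical checks. The arithmetic ($0.998^2 \geq \tfrac{99}{100}$, $1.0015^2 \leq \tfrac{101}{100}$, and the chaining through $\|\rho^\star\|_2$ for the ratio bound) is all verified correctly.
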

\begin{proof}
The proof is easily derived based on the quantities from Lemma \ref{lem:supp_00} and Corollary \ref{cor:2}.
\end{proof}

\begin{corollary}{\label{cor:1}}
Let $Z \in \mathbb{C}^{d \times r}$ and $U^\star \in \mathbb{C}^{d \times r}$, such that 
$\|Z - U^\star R\|_F \leq \left(\tfrac{3}{2} + 2|\mu|\right) \cdot \tfrac{\sigma_r(\rho^\star)^{1/2}}{10^3 \sqrt{\kappa\tau(\rho^\star)}}$ for some $R \in \mathcal{O}$, and $\rho^\star = U^\star U^{\star \dagger}$.
Define $\tau(W) = \frac{\sigma_1(W)}{\sigma_r(W)}$. 
Then:
\begin{align}
\tau(ZZ^\dagger) \leq \beta^2 \tau(\rho^\star),
\end{align}
where $\beta := \frac{1 + \left(\tfrac{3}{2} + 2|\mu|\right) \cdot \tfrac{1}{10^3}}{1 - \left(\tfrac{3}{2} + 2|\mu| \right) \cdot \tfrac{1}{10^3}} > 1$.
for $\mu \leq \frac{\sigma_r(\rho^\star)^{1/2}}{10^3 \sqrt{\kappa\tau(\rho^\star)}} \cdot \frac{1}{4 \cdot \sigma_1(\rho^\star)^{1/2} \cdot r}$.
\end{corollary}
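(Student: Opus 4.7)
The plan is to reduce the claim to Corollary~\ref{cor:2}, which already gives tight two-sided bounds on $\sigma_1(Z)$ and $\sigma_r(Z)$ in terms of $\sigma_1(\rho^\star)^{1/2}$ and $\sigma_r(\rho^\star)^{1/2}$ under exactly the same hypothesis on $\|Z - U^\star R\|_F$ as appears in the present corollary. The only extra ingredient needed is the standard identity $\sigma_i(ZZ^\dagger) = \sigma_i(Z)^2$, which holds because $ZZ^\dagger$ is Hermitian PSD so its singular values coincide with its eigenvalues, and those are the squares of the singular values of $Z$ (equivalently, read them off a thin SVD $Z = P\Sigma Q^\dagger$ to get $ZZ^\dagger = P\Sigma^2 P^\dagger$).

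Concretely, first I would invoke Corollary~\ref{cor:2} to extract
\[
\sigma_1(Z) \leq \sigma_1(\rho^\star)^{1/2}\bigl(1 + (\tfrac{3}{2}+2|\mu|)\tfrac{1}{10^3}\bigr),
\qquad
\sigma_r(Z) \geq \sigma_r(\rho^\star)^{1/2}\bigl(1 - (\tfrac{3}{2}+2|\mu|)\tfrac{1}{10^3}\bigr).
\]
Squaring both inequalities and using the identity above gives an upper bound on $\sigma_1(ZZ^\dagger)$ and a lower bound on $\sigma_r(ZZ^\dagger)$, both expressed in terms of $\sigma_1(\rho^\star)$ and $\sigma_r(\rho^\star)$ respectively.

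Taking the ratio then yields
\[
\tau(ZZ^\dagger) \;=\; \frac{\sigma_1(ZZ^\dagger)}{\sigma_r(ZZ^\dagger)} \;\leq\; \frac{\sigma_1(\rho^\star)}{\sigma_r(\rho^\star)}\cdot\left(\frac{1 + (\tfrac{3}{2}+2|\mu|)\tfrac{1}{10^3}}{1 - (\tfrac{3}{2}+2|\mu|)\tfrac{1}{10^3}}\right)^{\!2} \;=\; \beta^2 \,\tau(\rho^\star),
\]
which is exactly the claimed inequality; here one needs to note that the denominator $1 - (\tfrac{3}{2}+2|\mu|)\tfrac{1}{10^3}$ is strictly positive, which is automatic from the standing bound $|\mu| \leq \tfrac{1}{10^3}$ assumed in Corollary~\ref{cor:2}.

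There is no genuine obstacle here: every ingredient is either already in the previous corollary or a one-line eigenvalue identity. The only bookkeeping care required is checking that the hypothesis needed to apply Corollary~\ref{cor:2} (namely the distance bound on $Z$) coincides verbatim with the hypothesis of the present corollary, and that the assumption on $\mu$ is what makes the denominator in $\beta$ positive so that $\beta$ is a well-defined number $> 1$.
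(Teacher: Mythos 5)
Your proposal is correct and follows essentially the same route as the paper, whose proof of this corollary is a one-line appeal to the definition of $\tau(\cdot)$ together with the singular-value bounds of Lemma~\ref{lem:supp_00} and Corollary~\ref{cor:2}; you simply make explicit the squaring step via $\sigma_i(ZZ^\dagger)=\sigma_i(Z)^2$ and the ratio of the resulting bounds. The only minor remark is that positivity of the denominator in $\beta$ requires only $\tfrac{3}{2}+2|\mu|<10^3$, which is far weaker than the bound $|\mu|\leq \tfrac{1}{10^3}$ you cite, but this does not affect the argument.
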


\begin{proof}
The proof uses the definition of the condition number $\tau(\cdot)$ and the results from Lemma \ref{lem:supp_00} and and Corollary \ref{cor:2}.
\end{proof}

\begin{lemma}{\label{lem:equiveta}}
Consider the following three step sizes:
\begin{align}
\eta &= \frac{1}{4\left( (1 + \delta_{2r}) \|Z_0Z_0^\dagger\|_2 + \|\mathcal{A}^\dagger \left(\mathcal{A}( Z_0Z_0^\dagger ) - y\right)\|_2 \right) } \\
\widehat{\eta} &= \frac{1}{4\left( (1 + \delta_{2r}) \|ZZ^\dagger\|_2 + \|\mathcal{A}^\dagger \left(\mathcal{A}( ZZ^\dagger ) - y\right) Q_ZQ_Z^\dagger \|_2 \right) } \\
\eta^\star &= \frac{1}{4\left( (1 + \delta_{2r}) \|\rho^{\star \top}\|_2 + \|\mathcal{A}^\dagger \left(\mathcal{A}( \rho^{\star} ) - y\right) \|_2 \right) }.
\end{align}
Here, $Z_0 \in \mathbb{C}^{d \times r}$ is the initial point, $Z \in \mathbb{C}^{d \times r}$ is the current point, $\rho^\star \in \mathbb{C}^{d \times d}$ is the optimal solution, and $Q_Z$ denotes a basis of the column space of $Z$. 
Then, under the assumptions that $\min_{R \in \mathcal{O}} \|U - U^\star R\|_F \leq \tfrac{\sigma_r(\rho^\star)^{1/2}}{10^3 \sqrt{\kappa\tau(\rho^\star)}}$, 
and $\min_{R \in \mathcal{O}} \|Z - U^\star R\|_F \leq \left( \tfrac{3}{2} + 2|\mu|\right) \cdot \tfrac{\sigma_r(\rho^\star)^{1/2}}{10^3 \sqrt{\kappa \tau(\rho^\star)}}$, and assuming \begin{small}$\mu = \frac{\sigma_r(\rho^\star)^{1/2}}{10^3 \sqrt{\kappa\tau(\rho^\star)}} \cdot \frac{\varepsilon}{4 \cdot \sigma_1(\rho^\star)^{1/2} \cdot r}$\end{small}, for the user-defined parameter $\varepsilon \in (0, 1)$, we have:
\begin{align}
\tfrac{10}{9} \eta \geq \widehat{\eta} \geq \tfrac{10}{10.5} \eta, \quad \text{and} \quad \tfrac{100}{102} \eta^\star \leq \eta \leq \tfrac{102}{100} \eta^\star
\end{align}
\end{lemma}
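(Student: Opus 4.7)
The plan is to show that the two perturbed step sizes $\eta$ and $\widehat{\eta}$ both agree with the ``oracle'' step size $\eta^\star$ up to multiplicative factors very close to $1$, by exploiting $y = \mathcal{A}(\rho^\star)$ together with Corollary \ref{cor:01}. Since $y = \mathcal{A}(\rho^\star)$, the gradient term in $(\eta^\star)^{-1}$ vanishes and one has $(\eta^\star)^{-1} = 4(1+\delta_{2r})\|\rho^\star\|_2$ exactly. The two denominators of $\eta^{-1}$ and $\widehat{\eta}^{-1}$ thus consist of a leading spectral-norm term, which is pinned near $(1+\delta_{2r})\|\rho^\star\|_2$ by Corollary \ref{cor:01}, plus a gradient ``error'' term of the form $\|\mathcal{A}^\dagger\mathcal{A}(WW^\dagger - \rho^\star)\|_2$ (possibly right-multiplied by the rank-$r$ projection $Q_Z Q_Z^\dagger$) that I will need to show is negligible relative to the leading term.

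To control the error term I would first bound $\|WW^\dagger - \rho^\star\|_F$ for $W \in \{Z_0, Z\}$ via the identity $WW^\dagger - \rho^\star = (W - U^\star R)W^\dagger + U^\star R(W - U^\star R)^\dagger$ with $R \in \mathcal{O}$ chosen as the minimizer of the rotation distance, giving $\|WW^\dagger - \rho^\star\|_F \leq (2\|U^\star\|_2 + \|W - U^\star R\|_F)\|W - U^\star R\|_F$. Plugging in the proximity hypotheses (using Lemma \ref{lem:supp_02} for $W = Z$) together with the relation $\sigma_r(\rho^\star)^{1/2}\sigma_1(\rho^\star)^{1/2}/\sqrt{\kappa\tau(\rho^\star)} = \sigma_1(\rho^\star)/(\sqrt{\kappa}\,\tau(\rho^\star))$ then yields $\|WW^\dagger - \rho^\star\|_F = O(\sigma_1(\rho^\star)/10^3)$. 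Next, since $WW^\dagger - \rho^\star$ has rank at most $2r$, a standard duality-of-RIP argument---pair with a rank-one test matrix $uv^\dagger$, apply Cauchy--Schwarz, and invoke $\|\mathcal{A}(\cdot)\|_2 \leq \sqrt{1+\delta_{2r}}\|\cdot\|_F$ on both the rank-$2r$ factor and the rank-$1$ factor---produces $\|\mathcal{A}^\dagger\mathcal{A}(WW^\dagger - \rho^\star)\|_2 \leq (1+\delta_{2r})\|WW^\dagger - \rho^\star\|_F$. The projector $Q_Z Q_Z^\dagger$ on the right only shrinks the variational supremum, so the same bound survives for the term appearing in $\widehat{\eta}^{-1}$.

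Combining these estimates, each of the gradient terms $B_0$ and $B$ is bounded by a small constant times $(1+\delta_{2r})\|\rho^\star\|_2/10^3$, while by Corollary \ref{cor:01} the leading spectral-norm quantities $(1+\delta_{2r})\|Z_0 Z_0^\dagger\|_2$ and $(1+\delta_{2r})\|ZZ^\dagger\|_2$ lie in the tight window $[\tfrac{99}{100}, \tfrac{101}{100}] \cdot (1+\delta_{2r})\|\rho^\star\|_2$; likewise $\|ZZ^\dagger\|_2/\|Z_0 Z_0^\dagger\|_2 \in [\tfrac{99}{101}, \tfrac{101}{99}]$. A direct arithmetic check of the resulting ratios of the full denominators $4((1+\delta_{2r})\|\cdot\|_2 + \cdot)$ then yields $\tfrac{100}{102}\eta^\star \leq \eta \leq \tfrac{102}{100}\eta^\star$ and $\tfrac{10}{10.5}\eta \leq \widehat{\eta} \leq \tfrac{10}{9}\eta$; all four target windows leave ample slack once the perturbations are $O(10^{-3})$.

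The main technical obstacle I anticipate is establishing the spectral-norm bound on $\mathcal{A}^\dagger\mathcal{A}$ applied to a rank-$2r$ matrix, particularly in the presence of the right-hand rank-$r$ projection $Q_Z Q_Z^\dagger$ for $\widehat{\eta}$, where one cannot simply invoke RIP as an isometry since the output is not \emph{a priori} low-rank. The rank-one-testing reduction described above handles this cleanly, and from there the proof reduces to careful bookkeeping of the near-unity constants supplied by Lemma \ref{lem:supp_00}, Corollary \ref{cor:2}, and Corollary \ref{cor:01}.
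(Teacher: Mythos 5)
Your proposal is correct and follows essentially the same route as the paper's proof: both hinge on $\mathcal{A}^\dagger(\mathcal{A}(\rho^\star)-y)=0$, the RIP-smoothness bound $\|\mathcal{A}^\dagger\mathcal{A}(WW^\dagger-\rho^\star)\|_2\leq(1+\delta_{2r})\|WW^\dagger-\rho^\star\|_F$, the decomposition $(W-U^\star R)W^\dagger+U^\star R(W-U^\star R)^\dagger$, the spectral-norm windows of Corollary \ref{cor:01}, and a final arithmetic check. The only (immaterial) difference is that you pin both $\eta$ and $\widehat{\eta}$ to $\eta^\star$ and compose the ratios, whereas the paper relates $\widehat{\eta}$ to $\eta$ directly by inserting $\mathcal{A}^\dagger(\mathcal{A}(Z_0Z_0^\dagger)-y)$ as an intermediate term; both yield the stated windows with slack.
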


\begin{proof}
The assumptions of the lemma are identical to that of Corollary \ref{cor:01}.
Thus, we have:
$\tfrac{99}{100} \cdot \|U^\star\|_2^2 \leq \|Z\|_2^2 \leq \tfrac{101}{100} \cdot \|U^\star\|_2^2$, 
$\tfrac{99}{100} \cdot \|U^\star\|_2^2 \leq \|Z_0\|_2^2 \leq \tfrac{101}{100} \cdot \|U^\star\|_2^2$,
and 
$\tfrac{99}{101} \cdot \|Z_0\|_2^2 \leq \|Z\|_2^2 \leq \tfrac{101}{99} \cdot \|Z_0\|_2^2.$
We focus on the inequality $\widehat{\eta} \geq \tfrac{10}{10.5} \eta$. 
Observe that:
\begin{small}
\begin{align}
\left\|\mathcal{A}^\dagger \left(\mathcal{A}(ZZ^\dagger) - y \right)Q_ZQ_Z^\dagger \right\|_2 &\leq \left\|\mathcal{A}^\dagger \left(\mathcal{A}(ZZ^\dagger) - y \right) \right\|_2 \\
&= \left\|\mathcal{A}^\dagger \left(\mathcal{A}(ZZ^\dagger) - y \right) - \mathcal{A}^\dagger \left(\mathcal{A}(Z_0Z_0^\dagger) - y \right) + \mathcal{A}^\dagger \left(\mathcal{A}(Z_0Z_0^\dagger) - y \right)\right\|_2 \\
&\stackrel{(i)}{\leq} (1 + \delta_{2r}) \left\|ZZ^\dagger - Z_0 Z_0^\dagger \right\|_F + \left \|\mathcal{A}^\dagger \left(\mathcal{A}(Z_0Z_0^\dagger) - y \right)\right\|_2 \\
&\leq (1 + \delta_{2r}) \left\|ZZ^\dagger - U^\star U^{\star \dagger} \right\|_F + (1 + \delta_{2r}) \left\|Z_0 Z_0^\dagger - U^\star U^{\star \dagger} \right\|_F \\
&\quad \quad \quad \quad + \left \|\mathcal{A}^\dagger \left(\mathcal{A}(Z_0Z_0^\dagger) - y \right)\right\|_2
\end{align}
\end{small}
where $(i)$ is due to smoothness via RIP constants of the objective and the fact $\|\cdot\|_2 \leq \|\cdot \|_F$.
For the first two terms on the right-hand side, where $R_Z$ is the minimizing rotation matrix for $Z$, we obtain:
\begin{align}
\|ZZ^\dagger - U^\star U^{\star \dagger}\|_F &= \|ZZ^\dagger - U^\star R_Z Z^\dagger + U^\star R_Z Z^\dagger - U^\star U^{\star \dagger}\|_F \\
&= \|(Z -U^\star R_Z) Z^\dagger + U^{\star} R_Z ( Z - U^\star R_Z)^\dagger\|_F \\
&\leq \|Z\|_2 \cdot \|Z - U^\star R_Z\|_F  + \|U^\star \|_2 \cdot \|Z - U^\star R_Z\|_F \\
&\leq \left(\|Z\|_2 + \|U^\star\|_2 \right) \cdot \|Z - U^\star R_Z\|_F \\ 
&\stackrel{(i)}{\leq} \left( \sqrt{\tfrac{101}{99}} + \sqrt{\tfrac{100}{99}}\right) \|Z_0\|_2 \cdot \|Z - U^\star R_Z\|_F \\
&\stackrel{(ii)}{\leq}  \left( \sqrt{\tfrac{101}{99}} + \sqrt{\tfrac{100}{99}}\right) \|Z_0\|_2 \cdot 0.001 \sigma_r(\rho^\star)^{1/2} \\
&\leq \left( \sqrt{\tfrac{101}{99}} + \sqrt{\tfrac{100}{99}}\right) \cdot 0.001 \cdot \sqrt{\tfrac{100}{99}} \cdot \|Z_0\|_2^2
\end{align}
where $(i)$ is due to the relation of $\|Z\|_2$ and $\|U^\star\|_2$ derived above,
$(ii)$ is due to Lemma \ref{lem:supp_02}.
Similarly:
\begin{align}
\|Z_0Z_0^\dagger - U^\star U^{\star \dagger}\|_F \leq \left( \sqrt{\tfrac{101}{99}} + \sqrt{\tfrac{100}{99}}\right) \cdot 0.001 \cdot \sqrt{\tfrac{100}{99}} \cdot \|Z_0\|_2^2
\end{align}
Using these above, we obtain:
\begin{small}
\begin{align}
\left\|\mathcal{A}^\dagger \left(\mathcal{A}(ZZ^\dagger) - y \right)Q_ZQ_Z^\dagger \right\|_2 &\leq \tfrac{4.1 (1 + \delta_{2r})}{10^3} \|Z_0Z_0^\dagger\|_2 + \left \|\mathcal{A}^\dagger \left(\mathcal{A}(Z_0Z_0^\dagger) - y \right)\right\|_2
\end{align}
\end{small}
Thus:
\begin{align}
\widehat{\eta} &= \frac{1}{4\left( (1 + \delta_{2r}) \|ZZ^\dagger\|_2 + \|\mathcal{A}^\dagger \left(\mathcal{A}( ZZ^\dagger ) - y\right) Q_ZQ_Z^\dagger \|_2 \right) } \\
		        &\geq \frac{1}{4\left( (1 + \delta_{2r}) \tfrac{101}{99} \|Z_0Z_0\|_2 + \right) + \tfrac{4.1 (1 + \delta_{2r})}{10^3} \|Z_0Z_0^\dagger\|_2 + \left \|\mathcal{A}^\dagger \left(\mathcal{A}(Z_0Z_0^\dagger) - y \right)\right\|_2} \\ 
		        &\geq \frac{1}{4\left( \tfrac{10.5}{10} \cdot (1 + \delta_{2r})\|Z_0Z_0^\dagger\|_2 + \|\mathcal{A}^\dagger \left(\mathcal{A}( Z_0Z_0^\dagger ) - y\right) \|_2 \right) } \\ &\geq \tfrac{10}{10.5} \eta
\end{align}
Similarly, one gets $\widehat{\eta} \leq \tfrac{10}{9} \eta$.

For the relation between $\eta$ and $\eta^\star$, we will prove here the lower bound; similar motions lead to the upper bound also.
By definition, and using the relations in Corollary \ref{cor:01}, we get:
\begin{align}
\eta &= \frac{1}{4\left( (1 + \delta_{2r}) \|Z_0Z_0^\dagger\|_2 + \|\mathcal{A}^\dagger \left(\mathcal{A}( Z_0Z_0^\dagger ) - y\right)\|_2 \right) } \\
         &\geq \frac{1}{4\left( (1 + \delta_{2r}) \tfrac{101}{100} \|\rho^{\star \top}\|_2 + \|\mathcal{A}^\dagger \left(\mathcal{A}( Z_0Z_0^\dagger ) - y\right)\|_2 \right) }
\end{align}
For the gradient term, we observe:
\begin{align}
\left\|\mathcal{A}^\dagger \left(\mathcal{A}( Z_0Z_0^\dagger ) - y\right)\right\|_2 &\leq \left\|\mathcal{A}^\dagger \left(\mathcal{A}( Z_0Z_0^\dagger ) - y\right) -\mathcal{A}^\dagger \left(\mathcal{A}( \rho^\star ) - y\right) \right\|_2 + \left\|\mathcal{A}^\dagger \left(\mathcal{A}( \rho^\star ) - y\right)\right\|_2 \\
														     &\stackrel{(i)}{=} \left\|\mathcal{A}^\dagger \left(\mathcal{A}( Z_0Z_0^\dagger ) - y\right) -\mathcal{A}^\dagger \left(\mathcal{A}( \rho^\star ) - y\right) \right\|_2 \\
														     &\stackrel{(ii)}{\leq} (1 + \delta_{2r}) \left\|Z_0Z_0^\dagger - U^\star U^{\star \dagger} \right\|_F \\
														     &\stackrel{(iii)}{\leq} (1 + \delta_{2r}) \left( \|Z_0\|_2 + \|U^\star\|_2 \right) \cdot \|Z - U^\star R_Z\|_F \\
														     &\stackrel{(iv)}{\leq} (1 + \delta_{2r}) \left( \sqrt{\tfrac{101}{100}} + 1\right) \|U^\star\|_2 \cdot 0.001 \cdot \|U^\star\|_2^2 \\
														     &\leq 0.002 \cdot (1 + \delta_{2r}) \|\rho^\star\|_2
\end{align}
where $(i)$ is due to $\left\|\mathcal{A}^\dagger \left(\mathcal{A}( \rho^\star ) - y\right)\right\|_2 = 0$,
$(ii)$ is due to the restricted smoothness assumption and the RIP, 
$(iii)$ is due to the bounds above on $\left\|Z_0Z_0^\dagger - U^\star U^{\star \dagger} \right\|_F$,
$(iv)$ is due to the bounds on $\|Z_0\|_2$, w.r.t. $\|U^\star\|_2$, as well as the bound on $\|Z - U^\star R\|_F$.

Thus, in the inequality above, we get:
\begin{align}
\eta &\geq \frac{1}{4\left( (1 + \delta_{2r}) \tfrac{101}{100} \|\rho^{\star \top}\|_2 + \|\mathcal{A}^\dagger \left(\mathcal{A}( Z_0Z_0^\dagger ) - y\right)\|_2 \right) } \\
         &\geq \frac{1}{4\left( (1 + \delta_{2r}) \tfrac{101}{100} \|\rho^{\star \top}\|_2 + 0.001 \cdot (1 + \delta_{2r}) \|\rho^\star\|_2 + \|\mathcal{A}^\dagger \left(\mathcal{A}( \rho^\star ) - y\right)\|_2 \right) } \\
         &\geq \frac{1}{4\left( (1 + \delta_{2r}) \tfrac{102}{100} \|\rho^{\star \top}\|_2 + \|\mathcal{A}^\dagger \left(\mathcal{A}( \rho^\star ) - y\right)\|_2 \right) } \geq \tfrac{100}{102} \eta^\star
\end{align}
Similarly, one can show that $\frac{102}{100}\eta^{\star} \geq \eta$. 
\end{proof}

\begin{lemma}{\label{lem:000}}
Let $U \in \mathbb{C}^{d \times r}, U_{-} \in \mathbb{C}^{d \times r}$, and $U^\star \in \mathbb{C}^{d \times r}$, such that 
\begin{small}$\min_{R \in \mathcal{O}} \|U - U^\star R\|_F \leq \tfrac{\sigma_r(\rho^\star)^{1/2}}{10^3 \sqrt{\kappa\tau(\rho^\star)}}$\end{small} ~and~
\begin{small}$\min_{R \in \mathcal{O}} \|U_- - U^\star R\|_F  \leq \tfrac{\sigma_r(\rho^\star)^{1/2}}{10^3 \sqrt{\kappa\tau(\rho^\star)}}$\end{small}, 
where $\rho^\star = U^\star U^{\star \dagger}$, and $\kappa := \tfrac{1 + \delta_{2r}}{1 - \delta_{2r}} > 1$, for $\delta_{2r} \leq \tfrac{1}{10}$, and $\tau(\rho^\star) := \tfrac{\sigma_1(\rho^\star)}{\sigma_r(\rho^\star)} > 1$.
By Lemma \ref{lem:supp_02}, the above imply also that: \begin{small}$\|Z - U^\star R_Z\|_F \leq \left(\tfrac{3}{2} + 2|\mu|\right) \cdot \tfrac{\sigma_r(\rho^\star)^{1/2}}{10^3\sqrt{\kappa\tau(\rho^\star)}}$\end{small}.
Then, under RIP assumptions of the mapping $\mathcal{A}$, we have:
\begin{small}
 \begin{align}
\Big \langle &\mathcal{A}^\dagger(\mathcal{A}(ZZ^\dagger) - y), (Z - U^\star R_Z)(Z - U^\star R_Z)^\dagger \Big \rangle \nonumber \\ 
		  &\geq - \Bigg( \theta \sigma_r(\rho^\star) \cdot \|Z - U^\star R_Z\|_F^2 + \tfrac{10.1}{100} \beta^2 \cdot \widehat{\eta} \cdot \tfrac{(1 + 2|\mu|)^2}{\left(1  - \left(1 + 2|\mu|\right)\tfrac{1}{200}\right)^2} \cdot \|\mathcal{A}^\dagger(\mathcal{A}(ZZ^\dagger) - y) \cdot Z\|_F^2\Bigg)
\end{align}
\end{small}
where $$\theta = \tfrac{(1 - \delta_{2r}) \left(1 + (1 + 2|\mu|)\tfrac{1}{200}\right)^2}{10^3} +  (1 + \delta_{2r})  \left(2+\left(1 + 2|\mu|\right) \cdot \tfrac{1}{200}\right) \left(1 + 2|\mu|\right) \cdot \tfrac{1}{200},$$
and $\widehat{\eta}= \tfrac{1}{4((1+\delta_r)\|ZZ^\dagger\|_2+\|\mathcal{A}^\dagger
\left(\mathcal{A}(ZZ^\dagger) - y \right)Q_ZQ_Z^\dagger\|_2)}$.
\end{lemma}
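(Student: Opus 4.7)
The plan is to reduce the inner product to quantities controlled by the RIP of $\mathcal{A}$ together with the closeness of $Z$ to $U^\star R_Z$ guaranteed by Lemma \ref{lem:supp_02}. First I would set $E := Z - U^\star R_Z$ and use the hypothesis $y = \mathcal{A}(\rho^\star) = \mathcal{A}(U^\star R_Z R_Z^\dagger U^{\star\dagger})$ to write
\begin{align*}
ZZ^\dagger - \rho^\star \;=\; EE^\dagger + E(U^\star R_Z)^\dagger + (U^\star R_Z) E^\dagger,
\end{align*}
so that, by the self-adjointness of $\mathcal{A}^\dagger\mathcal{A}$ and the Hermitian symmetry of $EE^\dagger$, the target inner product becomes
\begin{align*}
\bigl\langle \mathcal{A}(EE^\dagger),\,\mathcal{A}(EE^\dagger)\bigr\rangle + 2\,\mathrm{Re}\bigl\langle \mathcal{A}(EE^\dagger),\,\mathcal{A}\bigl((U^\star R_Z)E^\dagger\bigr)\bigr\rangle.
\end{align*}
The first term equals $\|\mathcal{A}(EE^\dagger)\|_2^2$, which is non-negative (and in fact $\ge (1-\delta_{2r})\|EE^\dagger\|_F^2$ by the rank-$r$ RIP), so it only helps. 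All the work is therefore in lower-bounding (i.e.\ controlling the absolute value of) the mixed cross-term.

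Second, I would handle the cross term by inserting the column-space projector $Q_ZQ_Z^\dagger$ of $Z$: since $E$ lies in the span of $Z$ and $U^\star R_Z$, the matrix $(U^\star R_Z)E^\dagger$ has rank at most $2r$, and I can rewrite $(U^\star R_Z)E^\dagger = (U^\star R_Z)(Q_Z Q_Z^\dagger E)^\dagger$ after relating $U^\star R_Z$ to $Z$ via $U^\star R_Z = Z - E$. Splitting $U^\star R_Z = Z - E$ converts the cross term into $\langle \mathcal{A}(EE^\dagger),\,\mathcal{A}(ZE^\dagger)\rangle - \|\mathcal{A}(EE^\dagger)\|_2^2$, the first part being rewritable as $\langle \mathcal{A}^\dagger\mathcal{A}(ZZ^\dagger - \rho^\star)\cdot Z,\,E\rangle$ once we undo the trace inner-product. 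This is the step where the factor $\|\mathcal{A}^\dagger(\mathcal{A}(ZZ^\dagger)-y)\cdot Z\|_F$ enters the bound naturally by Cauchy--Schwarz applied in the $(\cdot)\cdot Z$ direction.

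Third, I would apply Young's inequality with a carefully tuned parameter $\alpha$, so that $|\langle \mathcal{A}^\dagger(\mathcal{A}(ZZ^\dagger)-y)\cdot Z,\,E\rangle|$ is split into an $\alpha\|E\|_F^2$-piece and an $\alpha^{-1}\|\mathcal{A}^\dagger(\mathcal{A}(ZZ^\dagger)-y)\cdot Z\|_F^2$-piece. Choosing $\alpha$ proportional to $\sigma_r(\rho^\star)$ produces the $\theta\sigma_r(\rho^\star)\|E\|_F^2$ term, while the reciprocal factor combined with the definition of $\widehat{\eta}$ produces the prefactor $\tfrac{10.1}{100}\beta^2\widehat{\eta}\cdot\tfrac{(1+2|\mu|)^2}{(1-(1+2|\mu|)/200)^2}$. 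The residual rank-$r$ and rank-$2r$ RIP factors $(1\pm\delta_{2r})$ and the multiplicative $\beta^2$ arise precisely from the singular-value bounds of Corollary \ref{cor:2} applied to convert between $\|Z\|_2$, $\|ZZ^\dagger\|_2$, $\|\rho^\star\|_2$ and from Corollary \ref{cor:1} applied to $\tau(ZZ^\dagger)\le\beta^2\tau(\rho^\star)$.

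The main obstacle will be bookkeeping: the constants in $\theta$ and the rational function of $|\mu|$ must be tracked exactly through three chained inequalities (decomposition, Cauchy--Schwarz with the $Q_ZQ_Z^\dagger$ projection, and Young's inequality). In particular, the $\tfrac{1}{200}$ and $(1+2|\mu|)$ factors come from applying Lemma \ref{lem:supp_00} rather than Lemma \ref{lem:supp_02} on the $U$-side (a tighter bound is available when one of the two factors is $U$ instead of $Z$), while $\tfrac{3}{2}+2|\mu|$ would appear if both factors were $Z$; choosing the ``right'' decomposition between $Z$ and $U$ in each cross term, and paying the correct constant for converting $\|U^\star\|_2 \leftrightarrow \|Z\|_2 \leftrightarrow \|Z_0\|_2$ via Corollary \ref{cor:01} and Lemma \ref{lem:equiveta}, is what yields the exact form stated.
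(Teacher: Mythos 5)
Your overall strategy --- decompose $ZZ^\dagger-\rho^\star$ around $E:=Z-U^\star R_Z$, isolate a non-negative square, and control a cross term via Cauchy--Schwarz and the singular-value corollaries --- is not the paper's route, and it contains a step that does not hold. After substituting $U^\star R_Z=Z-E$ inside the cross term, the operator that appears is $\mathcal{A}^\dagger\mathcal{A}(EE^\dagger)$, not the gradient $\mathcal{A}^\dagger(\mathcal{A}(ZZ^\dagger)-y)=\mathcal{A}^\dagger\mathcal{A}(ZZ^\dagger-\rho^\star)$: concretely, $\langle\mathcal{A}(EE^\dagger),\mathcal{A}(ZE^\dagger)\rangle=\langle\mathcal{A}^\dagger\mathcal{A}(EE^\dagger),ZE^\dagger\rangle$, and this is \emph{not} rewritable as $\langle\mathcal{A}^\dagger\mathcal{A}(ZZ^\dagger-\rho^\star)\cdot Z,E\rangle$, since $EE^\dagger\neq ZZ^\dagger-\rho^\star$ (the terms $E(U^\star R_Z)^\dagger+(U^\star R_Z)E^\dagger$ are missing). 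This is the step your whole bound hinges on: without it, your decomposition never produces the quantity $\|\mathcal{A}^\dagger(\mathcal{A}(ZZ^\dagger)-y)\cdot Z\|_F^2$ appearing on the lemma's right-hand side, which the main theorem's proof then absorbs into the $-1.0656\,\eta^2\|\mathcal{A}^\dagger(\mathcal{A}(ZZ^\dagger)-y)Z\|_F^2$ term supplied by Lemma \ref{lem:001}. Note also that after your substitution the net coefficient of $\|\mathcal{A}(EE^\dagger)\|_2^2$ becomes negative, so the ``it only helps'' remark no longer applies and this fourth-order-in-$\|E\|_F$ quantity must itself be absorbed, which you do not address.

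The paper's proof works differently and avoids the issue. It keeps the gradient $G:=\mathcal{A}^\dagger(\mathcal{A}(ZZ^\dagger)-y)$ intact, writes $\langle G,EE^\dagger\rangle=\langle G\,Q_{\Delta}Q_{\Delta}^\dagger,EE^\dagger\rangle$ for a basis $Q_{\Delta}$ of the span of $E$, applies Von Neumann's trace inequality $|\mathrm{Tr}(AB)|\le\|A\|_2\,\mathrm{Tr}(B)$ for PSD $B$ to obtain $\|G\,Q_{\Delta}Q_{\Delta}^\dagger\|_2\,\|E\|_F^2$, and then splits $Q_{\Delta}$ over the spans of $Z$ and $U^\star$. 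The $Q_{U^\star}$ piece is bounded using $\mathcal{A}^\dagger(\mathcal{A}(\rho^\star)-y)=0$ together with RIP-smoothness; the $Q_Z$ piece is handled by a dichotomy on whether $\|G\,Q_ZQ_Z^\dagger\|_2$ exceeds $(1-\delta_{2r})\sigma_r(ZZ^\dagger)/10^3$, combined with the identity $\sigma_r(ZZ^\dagger)\,\|G\,Q_ZQ_Z^\dagger\|_2^2\le\|G Z\|_F^2$ and the definition of $\widehat{\eta}$ --- that is how the $\widehat{\eta}$-weighted $\|G Z\|_F^2$ term arises, not from a tuned Young-inequality parameter. If you want to repair your argument, either adopt this projector-splitting route, or bound your genuine cross term $\langle\mathcal{A}^\dagger\mathcal{A}(EE^\dagger),EZ^\dagger+ZE^\dagger\rangle$ directly by RIP on rank-$\le 2r$ matrices (it is of order $\|E\|_F^3\|Z\|_2$ and hence absorbable into $\theta\,\sigma_r(\rho^\star)\|E\|_F^2$ given the smallness assumption on $\|E\|_F$), verifying that the resulting constants still support the main theorem's recursion.
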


\begin{proof}
First, denote $\Delta := Z - U^\star R_Z$.
Then:
\begin{small}
 \begin{align}
\Big \langle &\mathcal{A}^\dagger(\mathcal{A}(ZZ^\dagger) - y), (Z - U^\star R_Z)(Z - U^\star R_Z)^\dagger \Big \rangle \nonumber \\ 
&\stackrel{(i)}{=} \left \langle \mathcal{A}^\dagger(\mathcal{A}(ZZ^\dagger) - y) \cdot Q_{\Delta} Q_{\Delta}^\dagger, \Delta_Z \Delta_Z^\dagger \right \rangle \nonumber \\
&\geq -  \left|\text{Tr}\left(\mathcal{A}^\dagger(\mathcal{A}(ZZ^\dagger) - y) \cdot Q_{\Delta} Q_{\Delta}^\dagger \cdot \Delta_Z \Delta_Z^\dagger\right)\right| \nonumber \\ 
&\stackrel{(ii)}{\geq} - \| \mathcal{A}^\dagger(\mathcal{A}(ZZ^\dagger) - y) \cdot Q_{\Delta} Q_{\Delta}^\dagger \|_2 \cdot \text{Tr}( \Delta_Z \Delta_Z^\dagger)  \nonumber \\
&\stackrel{(iii)}{\geq} - \left( \| \mathcal{A}^\dagger(\mathcal{A}(ZZ^\dagger) - y) \cdot Q_{Z} Q_{Z}^\dagger \|_2+  \|\mathcal{A}^\dagger(\mathcal{A}(ZZ^\dagger) - y) \cdot Q_{U^\star} Q_{U^\star}^\dagger \|_2 \right) \|Z - U^\star R_Z\|_F^2 \label{proofsr1:eq_11}
\end{align}
\end{small}
Note that  $(i)$ follows from the fact $\Delta_Z =\Delta_Z Q_{\Delta} Q_{\Delta}^\dagger $, for a matrix $Q$ that spans the row space of $\Delta_Z$, and $(ii)$ follows from $|\text{Tr}(AB)| \leq \|A\|_2 \trace(B)$, for PSD matrix $B$ (Von Neumann's trace inequality~\cite{mirsky1975trace}). 
For the transformation in $(iii)$, we use that fact that the row space of $\Delta_Z$, $\text{\textsc{Span}}(\Delta_Z)$, is a subset of $\text{\textsc{Span}}(Z \cup U^\star)$, as $\Delta_Z$ is a linear combination of $U$ and $U^\star$. 

To bound the first term in equation~\eqref{proofsr1:eq_11}, we observe:
\begin{small}
\begin{align}
&\|\mathcal{A}^\dagger(\mathcal{A}(ZZ^\dagger) - y) \cdot Q_{Z} Q_{Z}^\dagger\|_2 \cdot \|Z - U^\star R_Z\|_F^2 \nonumber \\ 
&\quad \quad \stackrel{(i)}{=} \widehat{\eta} \cdot 4 \Big((1 + \delta_{2r})\| ZZ^\dagger \|_2 \nonumber \\ 
&\quad \quad \quad \quad + \|\mathcal{A}^\dagger(\mathcal{A}(ZZ^\dagger) - y) \cdot Q_{Z} Q_{Z}^\dagger\|_2 \Big) \cdot \|\mathcal{A}^\dagger(\mathcal{A}(ZZ^\dagger) - y)\cdot Q_{Z} Q_{Z}^\dagger\|_2 \cdot \|Z - U^\star R_Z\|_F^2  \nonumber \\
 &\quad \quad =  \underbrace{4 \widehat{\eta} (1 + \delta_{2r}) \|ZZ^\dagger\|_2 \| \mathcal{A}^\dagger(\mathcal{A}(ZZ^\dagger) - y) \cdot Q_{Z} Q_{Z}^\dagger\|_2 \cdot \|Z - U^\star R_Z\|_F^2}_{:=A} \nonumber \\ &\quad \quad \quad \quad \quad \quad \quad \quad + 4 \widehat{\eta} \|\mathcal{A}^\dagger(\mathcal{A}(ZZ^\dagger) - y) \cdot Q_{Z} Q_{Z}^\dagger\|_2^2 \cdot  \|Z - U^\star R_Z\|_F^2 \nonumber
\end{align} 
\end{small}
where $(i)$ is due to the definition of $\widehat{\eta}$.

To bound term $A$, we observe that $\|\mathcal{A}^\dagger(\mathcal{A}(ZZ^\dagger) - y) \cdot Q_{Z} Q_{Z}^\dagger\|_2 \leq \tfrac{(1 - \delta_{2r}) \sigma_r(ZZ^\dagger)}{10^3}$ or $\|\mathcal{A}^\dagger(\mathcal{A}(ZZ^\dagger) - y) \cdot Q_{Z} Q_{Z}^\dagger\|_2 \geq \tfrac{(1 - \delta_{2r}) \sigma_r(ZZ^\dagger)}{10^3}$.
This results into bounding $A$ as follows:
\begin{small}
\begin{align}
&4 \widehat{\eta} (1 + \delta_{2r}) \| ZZ^\dagger \|_2 \|\mathcal{A}^\dagger(\mathcal{A}(ZZ^\dagger) - y) \cdot Q_{Z} Q_{Z}^\dagger\|_2 \cdot \|Z - U^\star R_Z\|_F^2 \\ 
&\quad \quad \leq \max \Big\{\tfrac{4\cdot \widehat{\eta} \cdot (1 + \delta_{2r}) \| ZZ^\dagger \|_2 \cdot (1 - \delta_{2r}) \sigma_r(ZZ^\dagger)}{10^3} \cdot \|Z - U^\star R_Z\|_F^2 , \\ 
&\quad \quad \quad \quad \quad \quad \quad \quad \widehat{\eta} \cdot 4 \cdot 10^3 \kappa \tau(ZZ^\dagger) \|\mathcal{A}^\dagger(\mathcal{A}(ZZ^\dagger) - y) \cdot Q_{Z} Q_{Z}^\dagger\|_2^2 \cdot  \|Z - U^\star R_Z\|_F^2 \Big\} \\
&\quad \quad \leq \tfrac{4\cdot \widehat{\eta} \cdot (1 - \delta_{2r}^2) \| ZZ^\dagger \|_2 \cdot \sigma_r(ZZ^\dagger)}{10^3} \cdot \|Z - U^\star R_Z\|_F^2 \\ 
&\quad \quad \quad \quad \quad \quad \quad \quad + \widehat{\eta} \cdot 4 \cdot 10^3 \kappa \tau(ZZ^\dagger) \|\mathcal{A}^\dagger(\mathcal{A}(ZZ^\dagger) - y) \cdot Q_{Z} Q_{Z}^\dagger\|_2^2 \cdot  \|Z - U^\star R_Z\|_F^2.
\end{align}
\end{small}

Combining the above inequalities, we obtain:
\begin{small}
\begin{align}
\|\mathcal{A}^\dagger(\mathcal{A}(ZZ^\dagger) - y) &\cdot Q_{Z} Q_{Z}^\dagger\|_2 \cdot \|Z - U^\star R_Z\|_F^2 \nonumber \\ 
	&\stackrel{(i)}{\leq}  \tfrac{(1 - \delta_{2r}) \sigma_{r}(ZZ^\dagger)}{10^3} \cdot \|Z - U^\star R_Z\|_F^2 \nonumber \\ 
	&\quad \quad  \quad + (10^3 \kappa \tau(ZZ^\dagger)+1) \cdot  4 \cdot \widehat{\eta} \|\mathcal{A}^\dagger(\mathcal{A}(ZZ^\dagger) - y) \cdot Q_{Z} Q_{Z}^\dagger\|_2^2 \cdot \|Z - U^\star R_Z\|_F^2  \nonumber \\
	&\stackrel{(ii)}{\leq}  \tfrac{(1 - \delta_{2r}) \sigma_{r}(ZZ^\dagger)}{10^3} \cdot \|Z - U^\star R_Z\|_F^2 \nonumber \\
	&\quad \quad  \quad + (10^3 \beta^2 \kappa \tau(\rho^\star)+1) \cdot  4 \cdot \widehat{\eta} \|\mathcal{A}^\dagger(\mathcal{A}(ZZ^\dagger) - y) \cdot Q_{Z} Q_{Z}^\dagger\|_2^2 \cdot \tfrac{\left(\tfrac{3}{2}+2|\mu|\right)^2}{\kappa \tau(\rho^\star)} \tfrac{1}{10^6} \sigma_{r}(\rho^\star)  \nonumber \\
	&\stackrel{(iii)}{\leq}  \tfrac{(1 - \delta_{2r}) \sigma_{r}(ZZ^\dagger)}{10^3} \cdot \|Z - U^\star R_Z\|_F^2 \nonumber \\ 
	&\quad\quad \quad + 4 \cdot 1001 \beta^2 \cdot \widehat{\eta} \cdot \|\mathcal{A}^\dagger(\mathcal{A}(ZZ^\dagger) - y) \cdot Q_{Z} Q_{Z}^\dagger\|_2^2 \cdot \tfrac{\left(\tfrac{3}{2} + 2|\mu|\right)^2}{10^6\left(1  - \left(\tfrac{3}{2} + 2|\mu|\right)\tfrac{1}{10^3}\right)^2}\sigma_{r}(ZZ^\dagger)   \nonumber \\ 
	&\stackrel{(iv)}{\leq}  \tfrac{(1 - \delta_{2r}) \sigma_{r}(ZZ^\dagger)}{10^3} \cdot \|Z - U^\star R_Z\|_F^2 \nonumber \\ 
	&\quad\quad \quad + 4 \cdot 1001 \beta^2 \cdot \widehat{\eta} \cdot \tfrac{\left(\tfrac{3}{2} + 2|\mu|\right)^2}{10^6 \left(1  - \left(\tfrac{3}{2} + 2|\mu|\right)\tfrac{1}{10^3}\right)^2} \cdot \|\mathcal{A}^\dagger(\mathcal{A}(ZZ^\dagger) - y) \cdot Z\|_F^2   \nonumber \\ 
	&\stackrel{(v)}{\leq}  \tfrac{(1 - \delta_{2r}) \left(1 + (\tfrac{3}{2} + 2|\mu|)\tfrac{1}{10^3}\right)^2 \sigma_{r}(\rho^\star)}{10^3} \cdot \|Z - U^\star R_Z\|_F^2 \nonumber \\ 
	&\quad\quad \quad + \tfrac{1}{200} \beta^2 \cdot \widehat{\eta} \cdot \tfrac{\left(\tfrac{3}{2} + 2|\mu|\right)^2}{\left(1  - \left(\tfrac{3}{2} + 2|\mu|\right)\tfrac{1}{10^3}\right)^2} \cdot \|\mathcal{A}^\dagger(\mathcal{A}(ZZ^\dagger) - y) \cdot Z\|_F^2   \nonumber 
\end{align}
\end{small}
where $(i)$ follows from $\widehat{\eta} \leq \tfrac{1}{4 (1+\delta_{2r}) \| ZZ^\dagger \|_2}$, 
$(ii)$ is due to Corollary \ref{cor:1}, bounding $\|Z -  U^\star R_Z\|_F \leq \rho \sigma_{r}(\rho^\star)^{1/2}$, where $\rho := \left(\tfrac{3}{2} + 2|\mu|\right) \tfrac{1}{10^3 \sqrt{\kappa \tau(\rho^\star)}}$ by Lemma \ref{lem:supp_02},
$(iii)$ is due to $(10^3 \beta^2 \kappa \tau(\rho^\star)+1) \leq 1001 \beta^2 \kappa \tau(\rho^\star)$, and by Corollary \ref{cor:2}, 
$(iv)$ is due to the fact $\sigma_{r}(ZZ^\dagger)\|\mathcal{A}^\dagger(\mathcal{A}(ZZ^\dagger) - y) \cdot Q_{Z} Q_{Z}^\dagger\|_2^2 \leq  \| \mathcal{A}^\dagger(\mathcal{A}(ZZ^\dagger) - y) Z\|_F^2$,
and $(v)$ is due to Corollary \ref{cor:2}.

Next, we bound the second term in equation~\eqref{proofsr1:eq_11}:
\begin{align}
\|\mathcal{A}^\dagger(\mathcal{A}(ZZ^\dagger) - y) &\cdot Q_{U^\star} Q_{U^\star}^\dagger\|_2 \cdot \|Z - U^\star R_Z\|_F^2 \nonumber \\ 
&\stackrel{(i)}{\leq}  \|\mathcal{A}^\dagger(\mathcal{A}(ZZ^\dagger) - y) - \mathcal{A}^\dagger(\mathcal{A}(\rho^\star) - y) \|_2 \cdot \|Z - U^\star R_Z\|_F^2 \nonumber \\
&\stackrel{(ii)}{\leq} (1 + \delta_{2r}) \cdot \|ZZ^\dagger - U^\star U^{\star \dagger}\|_F \cdot \|Z - U^\star R_Z\|_F^2 \nonumber \\
&\stackrel{(iii)}{\leq} (1 + \delta_{2r})  (2+\rho) \cdot \rho \cdot \sigma_1(U^\star) \cdot \sigma_r(U^\star) \cdot \|Z - U^\star R_Z\|_F^2 \nonumber \\
&\stackrel{(iv)}{\leq}  (1 + \delta_{2r})  (2+\rho) \left(\tfrac{3}{2} + 2|\mu|\right) \cdot \tfrac{1}{10^3} \sigma_r(\rho^\star) \cdot \|Z - U^\star R_Z\|_F^2 \nonumber \\
&\leq (1 + \delta_{2r})  \left(2+\left(\tfrac{3}{2} + 2|\mu|\right) \cdot \tfrac{1}{10^3}\right) \left(\tfrac{3}{2} + 2|\mu|\right) \cdot \tfrac{1}{10^3} \sigma_r(\rho^\star) \cdot \|Z - U^\star R_Z\|_F^2, \nonumber
\end{align}
where $(i)$ follows from $\|\mathcal{A}^\dagger(\mathcal{A}(ZZ^\dagger) - y) \cdot Q_{U^\star} Q_{U^\star}^\dagger\|_2 \leq \|\mathcal{A}^\dagger(\mathcal{A}(ZZ^\dagger) - y)\|_2$ and $\mathcal{A}^\dagger(\mathcal{A}(\rho^\star) - y) =0$, 
$(ii)$ is due to smoothness of $f$ and the RIP constants, $(iii)$ follows from \cite[Lemma 18]{bhojanapalli2016dropping}, for $\rho = \left(\tfrac{3}{2} + 2|\mu|\right) \cdot \tfrac{1}{10^3 \sqrt{\kappa \tau(\rho^\star)}}$, 
$(iv)$ follows from substituting $\rho$ above, and observing that $\tau(\rho^\star) = \sigma_1(U^\star)^2 / \sigma_r(U^\star)^2 > 1$ and $\kappa = (1+\delta_{2r})/(1-\delta_{2r}) > 1$.

Combining the above we get:
\begin{small}
 \begin{align}
\Big \langle &\mathcal{A}^\dagger(\mathcal{A}(ZZ^\dagger) - y), (Z - U^\star R_Z)(Z - U^\star R_Z)^\dagger \Big \rangle \nonumber \\ 
		  &\geq - \Bigg( \theta \sigma_r(\rho^\star) \cdot \|Z - U^\star R_Z\|_F^2 + \tfrac{1}{200} \beta^2 \cdot \widehat{\eta} \cdot \tfrac{\left(\tfrac{3}{2} + 2|\mu| \right)^2}{\left(1  - \left(\tfrac{3}{2} + 2|\mu|\right)\tfrac{1}{10^3}\right)^2} \cdot \|\mathcal{A}^\dagger(\mathcal{A}(ZZ^\dagger) - y) \cdot Z\|_F^2\Bigg)
\end{align}
\end{small}
where $\theta = \tfrac{(1 - \delta_{2r}) \left(1 + \left(\tfrac{3}{2} + 2|\mu| \right)\tfrac{1}{10^3}\right)^2}{10^3} +  (1 + \delta_{2r})  \left(2+\left(\tfrac{3}{2} + 2|\mu|\right) \cdot \tfrac{1}{10^3}\right) \left(\tfrac{3}{2} + 2|\mu|\right) \cdot \tfrac{1}{10^3}$.
\end{proof}

\begin{lemma}{\label{lem:001}}
Under identical assumptions with Lemma \ref{lem:000}, the following inequality holds: 
\begin{small}
\begin{align}
\Big \langle \mathcal{A}^\dagger( \mathcal{A}(ZZ^\dagger) - y), ZZ^\dagger - U^\star U^{\star \dagger}\Big \rangle \geq 1.1172 \eta \left \|\mathcal{A}^\dagger( \mathcal{A}(ZZ^\dagger) - y)  Z \right \|_F^2 + \tfrac{1 - \delta_{2r}}{2} \|U^\star U^{\star \dagger} - ZZ^\dagger\|_F^2
\end{align}
\end{small}
\end{lemma}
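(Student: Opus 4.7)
The plan is to collapse the inner product into a single squared norm via the identity $y=\mathcal{A}(\rho^\star)$, split that squared norm in half, and match each half against one of the two target terms through the RIP.

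First, set $\Delta := ZZ^\dagger - U^\star U^{\star\dagger}$, which has rank at most $2r$. The identity $y=\mathcal{A}(\rho^\star)$ gives $\mathcal{A}^\dagger(\mathcal{A}(ZZ^\dagger)-y) = \mathcal{A}^\dagger\mathcal{A}(\Delta)$, so
\begin{equation*}
\langle \mathcal{A}^\dagger(\mathcal{A}(ZZ^\dagger)-y),\,ZZ^\dagger - U^\star U^{\star\dagger}\rangle \;=\; \langle \mathcal{A}^\dagger\mathcal{A}(\Delta),\Delta\rangle \;=\; \|\mathcal{A}(\Delta)\|_2^2.
\end{equation*}
Write $\|\mathcal{A}(\Delta)\|_2^2 = \tfrac12\|\mathcal{A}(\Delta)\|_2^2 + \tfrac12\|\mathcal{A}(\Delta)\|_2^2$. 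The RIP lower bound applied to the rank-$2r$ matrix $\Delta$ converts the first half into $\tfrac{1-\delta_{2r}}{2}\|\Delta\|_F^2$, which is exactly the second term in the claim.

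For the remaining half, I would establish the auxiliary bound
\begin{equation*}
\|\mathcal{A}^\dagger(\mathcal{A}(ZZ^\dagger)-y)\,Z\|_F^2 \;\leq\; (1+\delta_r)\,\|ZZ^\dagger\|_2\,\|\mathcal{A}(\Delta)\|_2^2.
\end{equation*}
Setting $G := \mathcal{A}^\dagger\mathcal{A}(\Delta)$, one rewrites $\|GZ\|_F^2 = \langle G,\,GZZ^\dagger\rangle = \langle \mathcal{A}(\Delta),\,\mathcal{A}(GZZ^\dagger)\rangle$. Since $GZZ^\dagger$ has rank at most $r$ (inherited from $ZZ^\dagger$), Cauchy--Schwarz together with RIP on this rank-$r$ matrix give $\langle \mathcal{A}(\Delta),\mathcal{A}(GZZ^\dagger)\rangle \leq \sqrt{1+\delta_r}\,\|\mathcal{A}(\Delta)\|_2\,\|GZZ^\dagger\|_F$. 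The submultiplicative estimate $\|GZZ^\dagger\|_F \leq \|GZ\|_F\,\|Z\|_2$ then cancels one factor of $\|GZ\|_F$, and squaring yields the display.

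To finish, I would combine the auxiliary bound with the step-size choice. Using $\eta \leq \tfrac{1}{4(1+\delta_{2r})\|Z_0Z_0^\dagger\|_2}$, Corollary~\ref{cor:01}'s estimate $\|ZZ^\dagger\|_2 \leq \tfrac{101}{99}\|Z_0Z_0^\dagger\|_2$, and $\delta_r \leq \delta_{2r}$, one verifies
\begin{equation*}
1.1172\,\eta\,(1+\delta_r)\,\|ZZ^\dagger\|_2 \;\leq\; \tfrac{1.1172\cdot 101}{4\cdot 99} \;\approx\; 0.285 \;\leq\; \tfrac12,
\end{equation*}
so the remaining half of $\|\mathcal{A}(\Delta)\|_2^2$ dominates $1.1172\,\eta\,\|\mathcal{A}^\dagger(\mathcal{A}(ZZ^\dagger)-y)\,Z\|_F^2$ with comfortable slack, and adding the two halves gives the stated inequality. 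The main technical subtlety is the auxiliary bound: although $\mathcal{A}^\dagger\mathcal{A}(\Delta)$ itself is not low-rank, its right-multiplication by $ZZ^\dagger$ inherits the rank of $Z$, which is what makes RIP applicable at the correct rank parameter $\delta_r$ without inflating the constants. The rest is careful bookkeeping of the RIP and step-size constants to land on $1.1172$.
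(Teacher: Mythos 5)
Your proof is correct, but it follows a genuinely different route from the paper's. The paper never collapses the inner product into $\|\mathcal{A}(\Delta)\|_2^2$; instead it adds a restricted-smoothness inequality (written at the next iterate $U_{+}$) to a restricted-strong-convexity inequality (written at $U^\star$), then expands $U_{+}U_{+}^\dagger = \big(Z - \eta\,\mathcal{A}^\dagger(\mathcal{A}(ZZ^\dagger)-y)Z\big)\big(\cdot\big)^\dagger$ and controls the perturbation matrix $\Psi = I - \tfrac{\eta}{2}Q_ZQ_Z^\dagger \mathcal{A}^\dagger(\mathcal{A}(ZZ^\dagger)-y)$ via its singular values, landing on the constant $1.0656$ (note the paper's own proof actually delivers $1.0656$, not the $1.1172$ in the lemma statement, and $1.0656$ is what the main theorem consumes; your argument, with slack $0.285 \le \tfrac12$, establishes the stronger stated constant directly). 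Your decomposition --- exploit noiselessness to write the left side as $\|\mathcal{A}(\Delta)\|_2^2$, spend half on the RIP lower bound for the rank-$2r$ matrix $\Delta$, and convert the other half into the gradient-times-$Z$ term via the auxiliary bound $\|GZ\|_F^2 \le (1+\delta_r)\|ZZ^\dagger\|_2\,\|\mathcal{A}(\Delta)\|_2^2$ --- is shorter, avoids the iterate $U_+$ entirely, and makes the role of the step-size bound transparent. One technical point to tighten: your auxiliary bound applies the RIP to $GZZ^\dagger$, which is low-rank but generally not Hermitian, whereas in the Pauli setting $\mathcal{A}$ maps into $\mathbb{R}^m$ only on Hermitian inputs. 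Since $\langle \mathcal{A}(\Delta), \mathcal{A}(GZZ^\dagger)\rangle = \mathrm{Tr}(G\cdot GZZ^\dagger) = \|GZ\|_F^2$ is real, it suffices to replace $GZZ^\dagger$ by its Hermitian part (still rank $\le 2r$, so use $\delta_{2r}$), or to argue through $\|GQ_ZQ_Z^\dagger\|_2$ as the paper does in Lemma~\ref{lem:000}; either fix leaves your constants comfortably within the required slack.
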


\begin{proof}
By smoothness assumption of the objective, based on the RIP assumption, we have:
\begin{small}
\begin{align}
\tfrac{1}{2} \|\mathcal{A}(ZZ^\dagger) - y\|_2^2 &\geq \tfrac{1}{2} \|\mathcal{A}(U_{+} U_{+}^\dagger) - y\|_2^2 \\ 
								   &\quad \quad - \left \langle \mathcal{A}^\dagger( \mathcal{A}(ZZ^\dagger) - y), U_{+} U_{+}^\dagger - ZZ^\dagger \right \rangle - \tfrac{1+\delta_{2r}}{2} \|U_{+} U_{+}^\dagger - ZZ^\dagger\|_F^2 \Rightarrow \\
\tfrac{1}{2} \|\mathcal{A}(ZZ^\dagger) - y\|_2^2 &\geq \tfrac{1}{2} \|\mathcal{A}(U^\star U^{\star \dagger}) - y\|_2^2 \\ 
								   &\quad \quad - \left \langle \mathcal{A}^\dagger( \mathcal{A}(ZZ^\dagger) - y), U_{+} U_{+}^\dagger - ZZ^\dagger \right \rangle - \tfrac{1+\delta_{2r}}{2} \|U_{+} U_{+}^\dagger - ZZ^\dagger\|_F^2								   
\end{align}
\end{small}
due to the optimality $\|\mathcal{A}(U^\star U^{\star \dagger}) - y\|_2^2 = 0 \leq \|\mathcal{A}(VV^\dagger) - y\|_2^2$, for any $V \in \mathbb{C}^{d \times r}$.
Also, by the restricted strong convexity with RIP, we get:
\begin{small}
\begin{align}
\tfrac{1}{2} \|\mathcal{A}(U^\star U^{\star \dagger}) - y\|_2^2 &\geq \tfrac{1}{2} \|\mathcal{A}(ZZ^\dagger) - y\|_2^2 \\ 
								   &\quad \quad + \left \langle \mathcal{A}^\dagger( \mathcal{A}(ZZ^\dagger) - y), U^\star U^{\star \dagger} - ZZ^\dagger \right \rangle + \tfrac{1-\delta_{2r}}{2} \|U^\star U^{\star \dagger} - ZZ^\dagger\|_F^2					   
\end{align}
\end{small}
Adding the two inequalities, we obtain:
\begin{small}
\begin{align}
\left \langle \mathcal{A}^\dagger( \mathcal{A}(ZZ^\dagger) - y), ZZ^\dagger - U^\star U^{\star \dagger}\right \rangle &\geq \left \langle \mathcal{A}^\dagger( \mathcal{A}(ZZ^\dagger) - y), ZZ^\dagger - U_{+} U_{+}^\dagger\right \rangle \\
&\quad \quad - \tfrac{1+\delta_{2r}}{2} \|U_+ U_+^\dagger - ZZ^\dagger \|_F^2 + \tfrac{1 - \delta_{2r}}{2} \|U^\star U^{\star \dagger} - ZZ^\dagger\|_F^2
\end{align}
\end{small}
To proceed we observe:
\begin{small}
\begin{align}
U_+ U_+^\dagger &= \left(Z - \eta \mathcal{A}^\dagger \left(\mathcal{A}(ZZ^\dagger) - y \right)Z \right) \cdot \left(Z - \eta \mathcal{A}^\dagger \left(\mathcal{A}(ZZ^\dagger) - y \right)Z \right)^\dagger \\
		      &= ZZ^\dagger - \eta ZZ^\dagger \cdot \mathcal{A}^\dagger \left(\mathcal{A}(ZZ^\dagger) - y \right) - \eta \mathcal{A}^\dagger \left(\mathcal{A}(ZZ^\dagger) - y \right) \cdot ZZ^\dagger \\
		      & \quad \quad + \eta^2 \mathcal{A}^\dagger \left(\mathcal{A}(ZZ^\dagger) - y \right) \cdot ZZ^\dagger \cdot \mathcal{A}^\dagger \left(\mathcal{A}(ZZ^\dagger) - y \right) \\
		      &\stackrel{(i)}{=} ZZ^\dagger - \left(I - \tfrac{\eta}{2} Q_Z Q_Z^\dagger  \mathcal{A}^\dagger \left(\mathcal{A}(ZZ^\dagger) - y \right) \right) \cdot \eta ZZ^\dagger \cdot \mathcal{A}^\dagger \left(\mathcal{A}(ZZ^\dagger) - y \right) \\ &\quad \quad - \eta \mathcal{A}^\dagger \left(\mathcal{A}(ZZ^\dagger) - y \right) \cdot ZZ^\dagger  \cdot \left(I - \tfrac{\eta}{2} Q_Z Q_Z^\dagger \mathcal{A}^\dagger \left(\mathcal{A}(ZZ^\dagger) - y \right) \right) 
\end{align}
\end{small}
where $(i)$ is due to the fact $\mathcal{A}^\dagger \left(\mathcal{A}(ZZ^\dagger) - y \right) \cdot ZZ^\dagger \cdot \mathcal{A}^\dagger \left(\mathcal{A}(ZZ^\dagger) - y \right) = \mathcal{A}^\dagger \left(\mathcal{A}(ZZ^\dagger) - y \right) \cdot Q_ZQ_Z^\dagger \cdot ZZ^\dagger \cdot Q_ZQ_Z^\dagger \cdot \mathcal{A}^\dagger \left(\mathcal{A}(ZZ^\dagger) - y \right)$, for $Q_Z$ a basis matrix whose columns span the column space of $Z$; also, $I$ is the identity matrix whose dimension is apparent from the context.
%
Thus:
\begin{small}
\begin{align}
\tfrac{\eta}{2} Q_Z Q_Z^\dagger \mathcal{A}^\dagger \left(\mathcal{A}(ZZ^\dagger) - y \right) \preceq \tfrac{10.5}{10} \tfrac{\widehat{\eta}}{2} Q_Z Q_Z^\dagger \mathcal{A}^\dagger \left(\mathcal{A}(ZZ^\dagger) - y \right),
\end{align}
\end{small}
and, hence,
\begin{small}
\begin{align}
I - \tfrac{\eta}{2} Q_Z Q_Z^\dagger \mathcal{A}^\dagger \left(\mathcal{A}(ZZ^\dagger) - y \right) \succeq I - \tfrac{10.5}{10} \tfrac{\widehat{\eta}}{2} Q_Z Q_Z^\dagger \mathcal{A}^\dagger \left(\mathcal{A}(ZZ^\dagger) - y \right).
\end{align}
\end{small}
Define $\Psi = I - \tfrac{\eta}{2} Q_Z Q_Z^\dagger \mathcal{A}^\dagger \left(\mathcal{A}(ZZ^\dagger) - y \right).$
Then, using the definition of $\widehat{\eta}$, we know that $\widehat{\eta} \leq \tfrac{1}{4\|Q_Z Q_Z^\dagger \mathcal{A}^\dagger \left(\mathcal{A}(ZZ^\dagger) - y \right)\|_2}$, and thus:
\begin{small}
\begin{align}
\Psi \succ 0, \quad \sigma_1(\Psi) \leq 1 + \tfrac{21}{160}, \quad \text{and} \quad \sigma_n(\Psi) \geq 1 - \tfrac{21}{160}.
\end{align}
\end{small}
Going back to the main recursion and using the above expression for $U_+ U_+^\dagger$, we have:
\begin{small}
\begin{align}
\Big \langle \mathcal{A}^\dagger( \mathcal{A}(ZZ^\dagger) - y), &ZZ^\dagger - U^\star U^{\star \dagger}\Big \rangle - \tfrac{1 - \delta_{2r}}{2} \|U^\star U^{\star \dagger} - ZZ^\dagger\|_F^2 \\
			&\geq \left \langle \mathcal{A}^\dagger( \mathcal{A}(ZZ^\dagger) - y), ZZ^\dagger - U_{+} U_{+}^\dagger \right \rangle - \tfrac{1+\delta_{2r}}{2} \|U_{+} U_{+}^\dagger - ZZ^\dagger\|_F^2 \\ 
			&\stackrel{(i)}{\geq} 2 \eta \left \langle \mathcal{A}^\dagger( \mathcal{A}(ZZ^\dagger) - y),  \mathcal{A}^\dagger( \mathcal{A}(ZZ^\dagger) - y) \cdot ZZ^\dagger \cdot \Psi \right \rangle \\
			 &\quad \quad - \tfrac{1+\delta_{2r}}{2} \|2\eta  \mathcal{A}^\dagger( \mathcal{A}(ZZ^\dagger) - y) \cdot ZZ^\dagger \cdot \Psi\|_F^2 \\			
			 &\stackrel{(ii)}{\geq} 2\left(1 - \tfrac{21}{160}\right) \eta \left \|\mathcal{A}^\dagger( \mathcal{A}(ZZ^\dagger) - y)  Z \right \|_F^2 \\ 
			 &\quad \quad - 2(1 + \delta_{2r}) \eta^2 \left \|\mathcal{A}^\dagger( \mathcal{A}(ZZ^\dagger) - y)  Z \right \|_F^2 \cdot \|Z\|_2^2 \cdot \|\Psi\|_2^2 \\
			 &\stackrel{(iii)}{\geq} 2\left(1 - \tfrac{21}{160}\right) \eta \left \|\mathcal{A}^\dagger( \mathcal{A}(ZZ^\dagger) - y)  Z \right \|_F^2 \\ 
			 &\quad \quad - 2(1 + \delta_{2r}) \eta^2 \left \|\mathcal{A}^\dagger( \mathcal{A}(ZZ^\dagger) - y)  Z \right \|_F^2 \cdot \|Z\|_2^2 \cdot \left(1 + \tfrac{21}{160}\right)^2 \\
			 &= 2\left(1 - \tfrac{21}{160}\right) \eta \left \|\mathcal{A}^\dagger( \mathcal{A}(ZZ^\dagger) - y)  Z \right \|_F^2 \cdot \left(1 - 2(1 + \delta_{2r}) \eta  \cdot \|Z\|_2^2 \cdot \left(1 + \tfrac{21}{160}\right)^2 \cdot \tfrac{1}{2(1 - \tfrac{21}{160})}\right) \\
			 &\stackrel{(iv)}{\geq} 2\left(1 - \tfrac{21}{160}\right) \eta \left \|\mathcal{A}^\dagger( \mathcal{A}(ZZ^\dagger) - y)  Z \right \|_F^2 \cdot \left(1 - 2(1 + \delta_{2r}) \tfrac{10.5}{10} \widehat{\eta}  \cdot \|Z\|_2^2 \cdot \left(1 + \tfrac{21}{160}\right)^2 \cdot \tfrac{1}{2(1 - \tfrac{21}{160})}\right) \\
			 &\stackrel{(v)}{\geq} 2\left(1 - \tfrac{21}{160}\right) \eta \left \|\mathcal{A}^\dagger( \mathcal{A}(ZZ^\dagger) - y)  Z \right \|_F^2 \cdot \left(1 - \tfrac{10.5}{10} \tfrac{ \left(1 + \tfrac{21}{160}\right)^2}{4(1 - \tfrac{21}{160})}\right) \\
			 &= 1.0656 \eta \left \|\mathcal{A}^\dagger( \mathcal{A}(ZZ^\dagger) - y)  Z \right \|_F^2
\end{align}
\end{small}
where $(i)$ is due to the symmetry of the objective;
$(ii)$ is due to Cauchy-Schwarz inequality and the fact:
\begin{small}
\begin{align}
\Big \langle  \mathcal{A}^\dagger( \mathcal{A}(ZZ^\dagger) - y), & ~\mathcal{A}^\dagger( \mathcal{A}(ZZ^\dagger) - y) \cdot ZZ^\dagger \cdot \Psi \Big \rangle \\ 
											   &= \Big \langle  \mathcal{A}^\dagger( \mathcal{A}(ZZ^\dagger) - y),  \mathcal{A}^\dagger( \mathcal{A}(ZZ^\dagger) - y) \cdot ZZ^\dagger \Big \rangle  \\ &\quad \quad - \tfrac{\eta}{2} \Big \langle  \mathcal{A}^\dagger( \mathcal{A}(ZZ^\dagger) - y),  \mathcal{A}^\dagger( \mathcal{A}(ZZ^\dagger) - y) \cdot ZZ^\dagger \cdot \mathcal{A}^\dagger( \mathcal{A}(ZZ^\dagger) - y) \Big \rangle \\
											   &\stackrel{(i)}{\geq} \Big \langle  \mathcal{A}^\dagger( \mathcal{A}(ZZ^\dagger) - y),  \mathcal{A}^\dagger( \mathcal{A}(ZZ^\dagger) - y) \cdot ZZ^\dagger \Big \rangle  \\ &\quad \quad - \tfrac{10.5}{10} \tfrac{\widehat{\eta}}{2} \Big \langle  \mathcal{A}^\dagger( \mathcal{A}(ZZ^\dagger) - y),  \mathcal{A}^\dagger( \mathcal{A}(ZZ^\dagger) - y) \cdot ZZ^\dagger \cdot \mathcal{A}^\dagger( \mathcal{A}(ZZ^\dagger) - y) \Big \rangle \\
											   &\geq \left( 1 - \tfrac{10.5}{10} \tfrac{\widehat{\eta}}{2} \|Q_Z Q_Z^\dagger \mathcal{A}^\dagger( \mathcal{A}(ZZ^\dagger) - y) \|_2^2 \right) \cdot \left \|\mathcal{A}^\dagger( \mathcal{A}(ZZ^\dagger) - y)  Z \right \|_F^2 \\
											   &\geq \left( 1 - \tfrac{21}{160}\right) \left \|\mathcal{A}^\dagger( \mathcal{A}(ZZ^\dagger) - y)  Z \right \|_F^2
\end{align}
\end{small}
where $(i)$ is due to $\eta \leq \tfrac{10.5}{10} \widehat{\eta}$, and the last inequality comes from the definition of the $\widehat{\eta}$ and its upper bound;
$(iii)$ is due to the upper bound on $\|\Psi\|_2$ above;
$(iv)$ is due to $\eta \leq \tfrac{10.5}{10}\widehat{\eta}$;
$(v)$ is due to $\widehat{\eta} \leq \tfrac{1}{4 (1 + \delta_{2r}) \|ZZ^\dagger\|_2}$.
The above lead to the desiderata:
\begin{small}
\begin{align}
\Big \langle \mathcal{A}^\dagger( \mathcal{A}(ZZ^\dagger) - y), ZZ^\dagger - U^\star U^{\star \dagger}\Big \rangle \geq 1.0656 \eta \left \|\mathcal{A}^\dagger( \mathcal{A}(ZZ^\dagger) - y)  Z \right \|_F^2 + \tfrac{1 - \delta_{2r}}{2} \|U^\star U^{\star \dagger} - ZZ^\dagger\|_F^2
\end{align}
\end{small}
\end{proof}

\end{document}